\documentclass{lmcs} 
\pdfoutput=1
\usepackage[utf8]{inputenc}

\usepackage{lastpage}
\lmcsdoi{20}{3}{1}
\lmcsheading{}{\pageref{LastPage}}{}{}%
{Jul.~04,~2023}{Jul.~02,~2024}{}

\pdfoutput=1 


\keywords{First Order Logic with Data, Satisfiability Problem}

\usepackage{hyperref}

\usepackage{amsmath}
\usepackage{amssymb}
\usepackage{mathtools}
\usepackage[
    linecolor=blue,
    bordercolor=blue,
    backgroundcolor=white]{todonotes}
\usepackage{environ}
\usepackage{stmaryrd}
\usepackage{subcaption}
\captionsetup{compatibility=false}
\makeatletter         
\def\figurecaption#1#2{\noindent\hangindent 40pt
                       \hbox to 36pt {\small\sl #1 \hfil}
                       \ignorespaces {\small #2}}

%
%
\long\def\@makecaption#1#2{
  \vskip 10pt 
  \settowidth{\@tempdima}{#2}
  \ifdim\@tempdima>0pt
       \setbox\@tempboxa\hbox{#1: #2}
     \else
       \setbox\@tempboxa\hbox{#1 #2}
   \fi
   \ifdim \wd\@tempboxa >\hsize               
       \begin{list}{#1:}{
       \settowidth{\labelwidth}{#1:}
       \setlength{\leftmargin}{\labelwidth}
       \addtolength{\leftmargin}{\labelsep}
        }\item #2 \end{list}\par   
     \else                                    
       \hbox to\hsize{\hfil\box\@tempboxa\hfil}  
   \fi}
\makeatother
\usepackage{array}

\usepackage{upgreek}

\usepackage{graphicx}

\usepackage{cite}

\usepackage{multirow}

 \theoremstyle{plain}
 \newtheorem{theorem}[thm]{Theorem}
 \newtheorem{proposition}[thm]{Proposition}
 \newtheorem{lemma}[thm]{Lemma}
 \theoremstyle{definition}
\newtheorem{example}[thm]{Example}

\newtheorem{remark}[thm]{Remark}

\DeclareMathOperator{\N}{\mathbb{N}}
\renewcommand{\phi}{\varphi}
\newcommand{\vp}{\varphi}
\newcommand{\vpeta}{\vp_\eta}
\newcommand{\vpetai}[1]{\vp_\eta^{#1}}

\newcommand{\Unary}{\Sigma}
\newcommand{\Unarycl}[1]{\Lambda_{#1}}
\newcommand{\unarycounter}[1]{\eta_{#1}}
\newcommand{\ediag}{\mathsf{ed}}
\newcommand{\Binary}{\Gamma}

\newcommand{\unary}{\sigma}

\newcommand{\uP}[1]{P_{#1}}

\def \AA{\mathfrak{A}}
\def \BB{\mathfrak{B}}
\def \GG{\mathcal{G}}

\newcommand{\nbd}{\mathit{Da}} 

\newcommand{\ifunct}{f_1}
\newcommand{\ofunct}{f_2}
\newcommand{\funi}{\ifunct}
\newcommand{\funo}{\ofunct}
\newcommand{\funct}[1]{f_{#1}}
\newcommand{\Data}[1]{\textup{Data}[#1]}
\newcommand{\nData}[2]{\textup{Data}[{#1},{#2}]}
\newcommand{\Values}[1]{\Valuessub{#1}{A}}
\newcommand{\Valuessub}[2]{\mathit{Val}_{#1}(#2)}

\newcommand{\f}[1]{f_{#1}}

\newcommand{\rel}{\sim}
\newcommand{\relsaa}[5]{#4 \mathrel{_{#1}{\sim^{#3}_{#2}}} #5}
\newcommand{\relsaaord}[3]{{_{#1}{\sim^{#3}_{#2}}}}
\newcommand{\relsaord}[2]{{_{#1}{\sim_{#2}}}}
\newcommand{\rels}[4]{#3 \mathrel{_{#1}{\sim}{_{#2}}} #4}

\newcommand{\distaa}[3]{\mathit{d}^{#3}(#1,#2)}
\newcommand{\vprojr}[3]{#1|_{#2}^{#3}}

\newcommand{\datagraph}[1]{\GG(#1)}
\newcommand{\gaifmanish}[1]{\datagraph{#1}}
\newcommand{\Vertex}[1]{\mathit{V}_{#1}}
\newcommand{\Edge}[1]{\mathit{E}_{#1}}
\newcommand{\Ball}[3]{\mathit{B}_{#1}^{#3}(#2)}

\newcommand{\FO}{\textup{FO}}

\newcommand{\ndFO}[2]{\textup{dFO}[{#1},{#2}]}
\newcommand{\ndFOvar}[3]{\textup{dFO}^{#3}[{#1},{#2}]}
\newcommand{\extndFOvar}[3]{\textup{ext-dFO}^{#3}[{#1},{#2}]}
\newcommand{\ndFOr}{\textup{dFO}}
\newcommand{\ndFOrvar}[1]{\textup{dFO}^{#1}}
\newcommand{\extndFOrvar}[1]{\textup{ext-dFO}^{#1}}
\newcommand{\rndFO}[3]{#3\textup{-Loc-dFO}[#1,#2]}
\newcommand{\rndFOr}[1]{#1\textup{-Loc-dFO}}
\newcommand{\eFO}[3]{\exists\textup{-}{#3}\textup{-Loc-dFO}[{#1},{#2}]}
\newcommand{\eFOr}[1]{\exists\textup{-}{#1}\textup{-Loc-dFO}}
\newcommand{\qfFO}[3]{\textup{qf-}{#3}\textup{-Loc-dFO}[{#1},{#2}]}

\newcommand{\Pform}[2]{{#1}(#2)}

\newcommand{\Var}{\mathcal{V}}
\newcommand{\Intrepl}[2]{{I[#1/#2]}}
\newcommand{\Intrepltwo}[4]{{I[#1/#2][#3/#4]}}

\makeatletter
\newcommand{\problemtitle}[1]{\gdef\@problemtitle{#1}}
\newcommand{\probleminput}[1]{\gdef\@probleminput{#1}}
\newcommand{\problemquestion}[1]{\gdef\@problemquestion{#1}}
\NewEnviron{decproblem}{
  \problemtitle{}\probleminput{}\problemquestion{}
  \BODY
  \par\addvspace{0.8\baselineskip}
  \noindent
  \fbox{\centering
  \begin{tabular}{rl}
    \multicolumn{2}{l}{\normalsize \@problemtitle} \\
    \hline
    \rule{0pt}{3ex}
    \normalsize \textbf{Input:} & \normalsize \@probleminput \\[0.5ex]
    \normalsize \textbf{Question:} & \normalsize \@problemquestion
  \end{tabular}}
  \par\addvspace{0.8\baselineskip}
}

\DeclareMathOperator{\donc}{\;\Rightarrow\;}
\DeclareMathOperator{\et}{\wedge}
\DeclareMathOperator{\Et}{\bigwedge}
\DeclareMathOperator{\ou}{\vee}
\DeclareMathOperator{\Ou}{\bigvee}

\makeatletter
\newsavebox{\@brx}
\newcommand{\llangle}[1][]{\savebox{\@brx}{\(\m@th{#1\langle}\)}%
  \mathopen{\copy\@brx\kern-0.5\wd\@brx\usebox{\@brx}}}
\newcommand{\rrangle}[1][]{\savebox{\@brx}{\(\m@th{#1\rangle}\)}%
  \mathclose{\copy\@brx\kern-0.5\wd\@brx\usebox{\@brx}}}
\makeatother

\newcommand{\locformr}[3]{\llangle {#1} \rrangle_{#2}^{ #3}}

\newcommand{\DataSat}[1]{\ensuremath{\textsc{DataSat}(#1)}}
\newcommand{\nDataSat}[2]{\ensuremath{\textsc{DataSat}(#1,#2)}}

\newcommand{\im}[1]{\mathit{Im}(#1)}



\newcommand{\phit}[1]{\sem{#1}}

\newcommand{\sem}[1]{\ensuremath{[\![#1]\!]}}
\newcommand{\AAas}{\sem{\AA}_{\tuple{a}}}
\newcommand{\tuple}[1]{\vec{#1}}

\newcommand{\Unaryp}{\Unary'}
\newcommand{\Udeci}[1]{\Lambda_{#1}}
\newcommand{\udd}[3]{a_{#1}[{#2},{#3}]}

\newcommand{\inj}{f_{\textup{new}}}

\newcommand{\fp}{f^p}

\newcommand{\phiBun}[1]{\phi_{#1,\Ball{1}{a_p}{}}^{}}
\newcommand{\phiBunbis}{\phi_{\Ball{1}{a_p}{}}^{}}
\newcommand{\phiBdeux}{\phi_{\Ball{2}{a_p}{}}^{}}
\newcommand{\phiBdsu}[1]{\phi_{#1,\Ball{2}{a_p}{}\setminus\Ball{1}{a_p}{}}^{}}
\newcommand{\phiun}{\phi_{j,k,a_p}^{r=1}}
\newcommand{\psiun}{\psi_{j,k,a_p}^{r=1}}
\newcommand{\phideux}{\phi_{j,k,a_p}^{r=2}}

\newcommand{\psideux}{\psi_{j,k,a_p}^{r>1}}

\newcommand{\phitran}{\phi_{\mathit{tran}}}
\newcommand{\phirefl}{\phi_{\mathit{refl}}}
\newcommand{\phiuniq}{\phi_{\mathit{uniq}}}

\newcommand{\phiwf}{\phi_{\mathit{wf}}}
\newcommand{\dout}{d_{\mathit{out}}}
\newcommand{\T}[1]{\ensuremath{[\![#1]\!]}}
\newcommand{\Tp}[1]{\ensuremath{[\![#1]\!]}_{x_p}}
\newcommand{\Tbis}[1]{\ensuremath{[\![#1]\!]'}}
\newcommand{\Tpbis}[1]{\ensuremath{[\![#1]\!]'}_{x_p}}

\newcommand{\psitran}{\psi_{\mathit{tran}}}
\newcommand{\psirefl}{\psi_{\mathit{refl}}}
\newcommand{\psiwf}{\psi_{\mathit{wf}}}


\newcommand{\uge}{\mathsf{ge}}
\newcommand{\addge}[1]{#1_\uge}
\newcommand{\minusge}[1]{#1_{\setminus\uge}}
\newcommand{\AAge}{\addge{\AA}}
\newcommand{\AAminusge}{\minusge{\AA}}


\newcommand{\jmax}{j_{\max}}

\def \DD{\mathcal{D}}
\def \Z{\mathbb{Z}}
\newcommand{\Sign}{\Sigma}
\newcommand{\Zmod}[1]{\Z\,\mathrm{mod}\,#1}
\newcommand{\Tile}{\ensuremath{\textsc{Tiles}}}
\newcommand{\Hform}{\mathsf{H}}
\newcommand{\Vform}{\mathsf{V}}
\newcommand{\Wform}{\mathsf{W}}
\newcommand{\vpcomplete}{\vp_{\mathit{complete}}}
\newcommand{\vpprog}{\vp_{\mathit{progress}}}
\newcommand{\UH}[1]{X_{#1}}
\newcommand{\UV}[1]{Y_{#1}}
\newcommand{\grid}[1]{{\mathfrak{G}_{#1}}}

\newcommand{\Unarygrid}{\Unary_{\mathit{grid}}}
\newcommand{\gridcarrier}[1]{G_{#1}}
\newcommand{\vpgrid}{\vp_{\mathit{grid}}}
\newcommand{\query}[2]{\sem{#1}_{#2}}
\newcommand{\vpdo}{\vp_\DD}
\newcommand{\vpH}{\vp_H}
\newcommand{\vpHzz}{\vpH^{00}}
\newcommand{\vpHoz}{\vpH^{10}}
\newcommand{\vpHzo}{\vpH^{01}}
\newcommand{\vpHoo}{\vpH^{11}}
\newcommand{\vpHij}{\vpH^{ij}}
\newcommand{\vpV}{\vp_V}
\newcommand{\vpVzz}{\vpV^{00}}
\newcommand{\vpVoz}{\vpV^{10}}
\newcommand{\vpVzo}{\vpV^{01}}
\newcommand{\vpVoo}{\vpV^{11}}

\newcommand{\vpW}{\vp_W}
\newcommand{\vpWzz}{\vpW^{00}}
\newcommand{\vpWoz}{\vpW^{10}}
\newcommand{\vpWzo}{\vpW^{01}}
\newcommand{\vpWoo}{\vpW^{11}}


\newcommand{\vpcompleteee}{\vpcomplete^{\mathit{3\text{-}loc}}}
\newcommand{\vpproggg}{\vpprog^{\mathit{3\text{-}loc}}}
\newcommand{\vpgriddd}{\vpgrid^{\mathit{3\text{-}loc}}}
\newcommand{\vpDD}{\mathit{\vp_{\DD}}}

\newcommand{\vpcompletebis}{\vp'_{\mathit{complete}}}
\newcommand{\vpcompletee}{\vpcomplete^{\mathit{2\text{-}loc}}}
\newcommand{\vpprogg}{\vpprog^{\mathit{2\text{-}loc}}}
\newcommand{\vpgridd}{\mathit{\vp_{grid}^{2\text{-}loc}}}
\newcommand{\vpDDbis}{\mathit{\vp'_{\DD}}}

\newcommand{\gap}{0.25}  
\newcommand{\cut}{.4}   
\newcommand{\cont}{.45}   
\newcommand{\elsize}{1.5pt} 
\newcommand{\singletonradius}{.4cm} 
\newcommand{\connectionwidth}{3pt} 
\newcommand{\connectiononetwo}[2]{\draw[shift={(#1,#2)},line width=\connectionwidth,line cap = butt]  (-\gap,.5) arc [start angle=90, end angle=180, radius=.5-\gap];}
\newcommand{\xmax}{6}
\newcommand{\ymax}{\xmax}
\tikzstyle{styleclasseone}=[color=green!85!black, pattern={north east lines},pattern color=green!30]
\tikzstyle{styleclassetwo}=[color=purple!90,pattern={north west lines},pattern color=purple!20]
\newcommand{\roundedsquare}[3]{
\draw[#3]
				 (#1+\cut,#2-\gap) -- (#1+1-\cut,#2-\gap)
		 .. controls (#1+1-\cut+\cont,#2-\gap)  and (#1+1+\gap,#2+\cut-\cont) .. 
					(#1+1+\gap,#2+\cut) -- (#1+1+\gap,#2+1-\cut)
		.. controls (#1+1+\gap,#2+1-\cut+\cont) and (#1+1-\cut+\cont,#2+1+\gap) ..
					(#1+1-\cut,#2+1+\gap) -- (#1+\cut,#2+1+\gap)
		.. controls (#1+\cut-\cont,#2+1+\gap) and (#1-\gap,#2+1-\cut+\cont) ..
					 (#1-\gap,#2+1-\cut) -- (#1-\gap,#2+\cut)
		..controls (#1-\gap,#2+\cut-\cont) and (#1+\cut-\cont,#2-\gap) ..
					(#1+\cut,#2-\gap);
}
\newcommand{\classesingleton}[3]{\draw[shift={(#1,#2)},#3] (0,0) circle (\singletonradius);}
\newcommand{\drawelement}[2]{\fill (#1,#2) circle (\elsize);}

\newcommand{\s}[2]{\mathrel{_{#1}{\sim}{_{#2}}}}



\definecolor{bananamania}{rgb}{0.98, 0.91, 0.71}
\definecolor{darkseagreen}{rgb}{0.56, 0.74, 0.56}

\usepackage{tikz}
\usetikzlibrary{shapes}
\usetikzlibrary{patterns}
\usetikzlibrary{arrows.meta}
\usetikzlibrary{arrows}
\usetikzlibrary{decorations.pathmorphing}
\tikzstyle{data}=[rectangle split,rectangle split parts=2,draw,text centered, rectangle split part fill={blue!20,blue!20}]
\tikzstyle{dataredblue}=[rectangle split,rectangle split parts=2,draw,text centered, rectangle split part fill={red!30,blue!20}]
\tikzstyle{databluered}=[rectangle split,rectangle split parts=2,draw,text centered, rectangle split part fill={blue!20,red!30}]
\tikzstyle{datablueblue}=[rectangle split,rectangle split parts=2,draw,text centered, rectangle split part fill={blue!20,blue!20}]
\tikzstyle{dataredred}=[rectangle split,rectangle split
parts=2,draw,text centered, rectangle split part fill={red!30,red!30}]
\tikzstyle{datawhite}=[rectangle split,rectangle split parts=2,draw,text centered]
\tikzstyle{dataone}=[rectangle split,rectangle split parts=1,draw,text centered]
\tikzstyle{datablueyellow}=[rectangle split,rectangle split parts=2,draw,text centered, rectangle split part fill={blue!20,bananamania}]
\tikzstyle{datayellowblue}=[rectangle split,rectangle split parts=2,draw,text centered, rectangle split part fill={bananamania,blue!20}]
\tikzstyle{datagreengreen}=[rectangle split,rectangle split parts=2,draw,text centered, rectangle split part fill={darkseagreen!70,darkseagreen!70}]
\tikzstyle{databluegreen}=[rectangle split,rectangle split parts=2,draw,text centered, rectangle split part fill={blue!20,darkseagreen!70}]
\tikzstyle{datagreenred}=[rectangle split,rectangle split parts=2,draw,text centered, rectangle split part fill={darkseagreen!70,red!30}]
\tikzstyle{datagreenyellow}=[rectangle split,rectangle split parts=2,draw,text centered, rectangle split part fill={darkseagreen!70,bananamania}]
\tikzstyle{dataredyellow}=[rectangle split,rectangle split parts=2,draw,text centered, rectangle split part fill={red!30,bananamania}]
\tikzstyle{datayellowred}=[rectangle split,rectangle split parts=2,draw,text centered, rectangle split part fill={bananamania,red!30}]
\tikzstyle{datayellowgreen}=[rectangle split,rectangle split parts=2,draw,text centered, rectangle split part fill={bananamania,darkseagreen!70}]
\tikzstyle{datagraygray}=[rectangle split,rectangle split parts=2,draw,text centered, rectangle split part fill={gray!20,gray!20}]
\tikzstyle{datayellowyellow}=[rectangle split,rectangle split
parts=2,draw,text centered, rectangle split part
fill={bananamania,bananamania}]
\tikzstyle{datanone}=[rectangle split,rectangle split parts=2]

\usetikzlibrary{arrows}

\theoremstyle{plain}

\begin{document}

\title[On the Satisfiability of Local
  First-Order Logics with Data]{On the Satisfiability of Local\texorpdfstring{\\}{}
  First-Order Logics with Data }


\author[B. Bollig]{Benedikt Bollig}[a]
\author[A. Sangnier]{Arnaud Sangnier\lmcsorcid{0000-0002-6731-0340
}}[b]
\author[O. Stietel]{Olivier Stietel}[a,c]

\address{CNRS, LMF, ENS Paris-Saclay, Universit{\'e} Paris-Saclay,
  France}

\address{DIBRIS, Universit\`a di Genova, Italy}

\address{IRIF, Universit\'e  Paris Cit\'e, CNRS, France}


\begin{abstract}
We study first-order logic over unordered structures whose elements carry a finite number of data values from an infinite domain. Data values can be compared wrt.\ equality. As the satisfiability problem for this logic is undecidable in general, we introduce a family of local fragments. They restrict quantification to the neighbourhood of a given reference point that is bounded by some radius. Our first main result establishes decidability of the satisfiability problem for the local radius-1 fragment in presence of one  ``diagonal relation''. On the other hand, extending the radius leads to undecidability. In a second part, we provide the precise decidability and complexity landscape of the satisfiability problem for the existential fragments of local logic, which are parameterized by the number of data values carried by each element and the radius of the considered neighbourhoods. Altogether, we draw a landscape of formalisms that are suitable for the specification of systems with data and open up new avenues for future research.
\end{abstract}

\maketitle

\section{Introduction}

Data logics have been introduced to reason about structures whose elements are labeled with a value from an infinite alphabet \cite{Segoufin06}. The idea is to extend classic mathematical structures by a mapping that associates with every element of the universe a value from an infinite domain. When comparing data values only for equality, this view is equivalent to extending the underlying signature by a binary relation symbol whose interpretation is restricted to an equivalence relation. Among potential applications are XML reasoning or the specification of concurrent systems and distributed algorithms. Expressive decidable fragments include notably two-variable logics over data words and data trees \cite{BojanczykDMSS11,BojanczykMSS09}. The decidability frontier is fragile, though. Extensions to two data values, for example, quickly lead to an undecidable satisfiability problem. From a modeling point of view, those extensions still play an important role.
When specifying the \emph{input-output behavior} of distributed algorithms \cite{Fokkink2013,Lynch:1996}, processes get an input value and produce an output value, which requires two data values per process. In leader election or renaming algorithms, for instance, a process gets its unique identifier as input, and it should eventually output the identifier of a common leader (leader election) or a unique identifier from a restricted name space (renaming).

In this paper, we consider a natural extension of first-order logic over unordered structures whose elements carry two or more data values from an infinite domain. There are two major differences between most existing formalisms and our language. While previous data logics are usually interpreted over words or trees, we consider unordered structures (or multisets). When each element of such a structure represents a process, we therefore do not assume a particular processes architecture, but rather consider clouds of computing units.
Moreover, decidable data logics are usually limited to one value per element, which would not be sufficient to model an input-output relation. Hence, our models are algebraic structures consisting of a universe and functions assigning to each element a tuple of integers. We remark that, for many distributed algorithms, the precise data values are not relevant, but whether or not they are the same is.
Like \cite{BojanczykDMSS11,BojanczykMSS09}, we thus add binary relations that allow us to test if two data values are identical and, for example, whether all output values were already present in the collection of input values (as required for leader election).

The first fundamental question that arises is whether a given specification is consistent. This leads us to the satisfiability problem. While the general logic considered here turns out to be undecidable already in several restricted settings, we show that interesting fragments preserve decidability. The fragments we consider are \emph{local} logics in the sense that data values can only be compared within the neighborhood of a (quantified) reference element of the universe. In other words, comparisons of data values are restricted to elements whose distance to some reference point is bounded by a given radius.

We first consider a fragment over structures with two data values where the first value at the reference point can be compared with any second value in the neighborhood in terms of what we call the \emph{diagonal relation}. In particular, we do not allow the symmetrical relation. However, we do not restrict comparisons of first values with each other in a neighborhood, nor do we restrict comparisons of second values with each other. Note that adding only one diagonal relation still constitutes an extension of the (decidable) two-variable first-order logic with two equivalence relations \cite{Kieronski12,Kieronski05,KieronskiT09}: equivalence classes consist of those elements with the same first value, respectively, second value. In fact, our main technical contribution is a reduction to this two-variable logic. The reduction requires a careful relabelling of the underlying structures so as to be able to express the diagonal relation in terms of the two equivalence relations. In addition, the reduction takes care of the fact that our logic does not restrict the number of variables. We can actually count elements up to some threshold and express, for instance, that at most five processes crash (in the context of distributed algorithms). This is a priori not possible in two-variable logic. We also show that this local fragment has an undecidable satisfiability problem for any radius strictly greater than 1.

In a second part, we study orthogonal local fragments where global quantification is restricted to being existential (while quantification inside a local property is still unrestricted). We obtain decidability for (i) radius 1 and an arbitrary number of data values, and for (ii) radius 2 and two data values. In all cases, we provide tight complexity upper and lower bounds. These results mark the exact decidability frontier of the existential fragments: satisfiability is undecidable as soon as we consider radius 3 in presence of two data values, or radius 2 together with three data values.

\paragraph{Related Work.}

As already mentioned, data logics over word and tree structures were studied in \cite{BojanczykMSS09,BojanczykDMSS11}. In particular, the authors showed that two-variable first-order logic on words has a decidable satisfiability problem. Other types of data logics allow \emph{two} data values to be associated with an element \cite{Kieronski05,KieronskiT09}, though they do not assume a linearly ordered or tree-like universe. Again, satisfiability turned out to be decidable for the two-variable fragment of first-order logic.

Orthogonal extensions for multiple data values include
shuffle expressions for nested data \cite{BjorklundB07} and temporal logics \cite{KaraSZ10,DeckerHLT14}.
Other generalizations of data logics allow for an order on data values \cite{ManuelZ13,Tan14}.
The application of formal methods in the context of distributed algorithms is a rather recent but promising approach (cf.\ for a survey \cite{KonnovVW15}). A particular branch is the area of parameterized systems, which, rather than on data, focuses on the (unbounded) number of processes as the parameter \cite{Bloem15,Esparza14}. Other related work includes \cite{EmersonN03}, which considers temporal logics involving quantification over 
processes but without data, while \cite{AiswaryaBG18} introduces an (undecidable) variant of propositional dynamic logic that allows one to reason about 
totally ordered process identifiers in ring architectures. First-order logics for \emph{synthesizing} distributed algorithms were considered in \cite{BolligBR19,GrumbachW09}.
A counting extension of two-variable first-order logic over finite data words with one data value per position has been studied in \cite{Bednarczyk020}.

\paragraph{Outline.}

The paper is structured as follows.
In Section~\ref{sec:structures}, we recall important notions such as structures and first-order logic, and we introduce local first-order logic. Section~\ref{sec:two-data} presents decidability and undecidability results for our first fragment.
In Section~\ref{sec:existential}, we study the existential fragments and present their decidability frontier as well as complexity results. We conclude in Section~\ref{sec:conclusion}.

\noindent
This is an extended unified version of two conference papers presented at FSTTCS'21 and GandALF'22 \cite{BSS-FSTTCS-2021, BSS-GandALF-2022}. It provides a homogeneous presentation and full proofs. This work was partly supported by the project ANR FREDDA (ANR-17-CE40-0013).


%

\section{Structures and First-Order Logic}
\label{sec:structures}

\subsection{Data Structures}      
We define here the class of models we are interested in. It consists of sets of nodes containing data values with the assumption that each node is labeled by a set of predicates and carries the same number of values. We consider hence $\Unary$ a finite set of unary relation symbols (sometimes
called unary predicates) and an integer $\nbd \geq 0$. A \emph{$\nbd$-data structure} over $\Unary$ is a tuple 
$\AA=(A,(P_{\unary})_{\unary \in \Unary},\f{1},\ldots,\f{\nbd})$
(in the following, we simply write $(A,(P_{\unary}),\f{1},\ldots,\f{\nbd})$)
where $A$ is a nonempty finite set,
$P_\unary \subseteq A$ for all $\unary \in \Unary$, and
$\f{i}$s are mappings $A \to \N$.
Intuitively $A$ represents the set of nodes and $f_i(a)$ is the $i$-th data value carried by $a$ for each node $a \in A$.
For $X\subseteq A$, we let $\Valuessub{\AA}{X} = \{\f{i}(a) \mid a \in X, i\in\{1,\ldots,\nbd\}\}$.
The set of all $\nbd$-data structures over $\Unary$
is denoted by $\nData{\nbd}{\Unary}$.

While this representation is often very convenient to deal with
data values, a more standard way of
representing mathematical structures is in terms of binary
relations.
For every $(i,j) \in \{1,\ldots,\nbd\} \times \{1,\ldots,\nbd\}$, the mappings
$\f{1},\ldots,\f{\nbd}$ determine a binary relation
${\relsaaord{i}{j}{\AA}} \subseteq A \times A$
as follows:
$\relsaa{i}{j}{\AA}{a}{b}$ iff $\funct{i}(a) = \funct{j}(b)$.
We may omit the superscript $\AA$ if it is clear from the context
and if $\nbd=1$, as there will be only one relation, we way may write $\rel$ for $\relsaord{1}{1}$.

\subsection{First-Order Logic}
Let $\nbd\geq 0$ be an integer, $\Binary\subseteq \{\rels{i}{j}{}{} \mid
i,j \in \{1,\ldots,\nbd\}\}$ a set of binary relation symbols, which
determines the binary relation symbols at our disposal, and  $\Var = \{x,y,\ldots\}$ a
countably infinite set of variables. The set $\ndFO{\nbd}{\Unary,\Binary}$ of first-order formulas interpreted over $\nbd$-data structures
over $\Unary$ is inductively given by the grammar:
$$
\vp ::= \Pform{\unary}{x} \mid \rels{}{}{x}{y} \mid x=y \mid \vp \vee
\vp \mid \neg \vp \mid \exists x.\vp
$$
where $x$ and $y$ range over $\Var$, $\unary$ ranges over $\Unary$,
and $\rel \in \Gamma$.
We use standard abbreviations such as $\wedge$ for conjunction and
$\Rightarrow$ for implication.
We write $\phi(x_1,\ldots,x_k)$ to indicate that the free variables of $\phi$ are among
$x_1,\ldots,x_k$. We call $\phi$ a \emph{sentence} if it does not
contain free variables. Moreover, we use $\Binary_\nbd$ to represent the
set of binary relation symbols  $\{\rels{i}{j}{}{} \mid
i,j \in \{1,\ldots,\nbd\}\}$

For $\AA=(A,(P_{\unary}),\f{1},\ldots,\f{\nbd}) \in \nData{\nbd}{\Unary}$ and a formula $\phi\in\ndFO{\nbd}{\Unary,\Binary}$,
the satisfaction relation $\AA \models_I \phi$ is defined wrt.\
an interpretation function $I: \Var \to A$. The
purpose of $I$ is to assign an interpretation to every (free)
variable of $\phi$ so that $\phi$ can be assigned a truth value.
For $x \in \Var$ and $a \in A$, the interpretation function $\Intrepl{x}{a}$
maps $x$ to $a$ and coincides
with $I$ on all other variables.
We then define:
\begin{center}
\begin{tabular}{l}
$\AA \models_I \Pform{\unary}{x}$ if $I(x) \in P_{\unary}$ \\
$\AA \models_I \phi_1 \vee \phi_2$ if $\AA \models_I \phi_1$ or $\AA \models_I \phi_2$\\
$\AA \models_I \rels{i}{j}{x}{y}$ if $\relsaa{i}{j}{\AA}{I(x)}{I(y)}$\\
$\AA \models_I \neg \phi$ if $\AA \not\models_I \phi$\\
$\AA \models_I x = y$ if $I(x) = I(y)$\\
$\AA \models_I \exists x.\phi$ if there is $a \in A$ s.t. $\AA \models_{\Intrepl{x}{a}} \phi$
\end{tabular}
\end{center}

Finally, for a data structure $\AA=(A,(P_{\unary}),\f{1},\ldots,\f{\nbd})$,  a formula $\phi(x_1,\ldots,x_k)$ and $a_1,\ldots,a_k\in A$,
we write $\AA\models\phi(a_1,\ldots a_k)$ if there exists an interpretation function $I$ such that $\AA\models_{I[x_1/a_1]\ldots[x_k/a_k]} \phi$.  In particular, for a sentence $\phi$, we write $\AA\models\phi$ if there exists an interpretation function $I$ such that $\AA \models_I  \phi$.

\begin{example}\label{ex:leader-election}
Assume a unary predicate $\mathrm{leader} \in \Unary$.
The following formula from $\ndFO{2}{\Unary,\Binary_2}$ expresses
correctness of a leader-election algorithm: (i)~there is a unique
process that has been elected leader, and (ii)~all processes agree,
in terms of their output values (their second data), on
the identity (the first data) of the leader: 
$ \exists x. (\mathrm{leader}(x) \et \forall y. \big(\mathrm{leader}(y)
\Rightarrow y=x)\big) \et \forall y. \exists x. (\mathrm{leader}(x)
\et \rels{1}{2}{x}{y})$. We assume here that for each element
represents a process,the 
first data is used to characterize its input value and the 
second data its output value.
\end{example}

Note that every choice of $\Binary$ gives rise to a particular logic, whose
formulas are interpreted over data structures over $\Unary$. We will focus on the
satisfiability problem for these logics.
Let $\mathcal{F}$ denote a generic class of first-order formulas, parameterized
by $D$, $\Unary$ and $\Binary$.
In particular, for $\mathcal{F} = \ndFOr$,
we have that $\mathcal{F}[\nbd,\Unary,\Binary]$ is the class
$\ndFO{\nbd}{\Unary,\Binary}$. The satisfiability problem for $\mathcal{F}$ wrt.\ data structures
is defined as follows:

\begin{center}
\begin{decproblem}
	\problemtitle{\nDataSat{\mathcal{F}}{\nbd,\Gamma}}
  \probleminput{A finite set $\Unary$ and a sentence $\vp \in \mathcal{F}[\nbd,\Unary,\Binary]$.}
	\problemquestion{Is there $\AA \in \nData{\nbd}{\Unary}$ such that $\AA \models \vp$\,?}
\end{decproblem}
\end{center}

The following negative result (see \cite[Theorem~1]{Janiczak-Undecidability-fm53}) calls for restrictions of the general logic.
\begin{thmC}[\cite{Janiczak-Undecidability-fm53}]\label{thm:undec-general}
	The problem $\nDataSat{\ndFOr}{2,\{\relsaord{1}{1},\relsaord{2}{2}\}}$ is undecidable, even when we require that $\Unary = \emptyset$.
  \end{thmC}

  \subsection{Local First-Order Logic}
  We are interested in logics that combine the advantages of the logics considered so far,
while preserving decidability. With this in mind, we will study \emph{local} logics,
where the scope of quantification
is restricted to the view (or neighborhood) of a given element.

The view of an element $a$ includes all elements whose distance to $a$ is bounded by a given radius.
It is formalized using the notion of a Gaifman graph (for an
introduction, see~\cite{Libkin04}). In fact, we use a variant that is
suitable for our setting and that we call \emph{data graph}. Fix sets
$\Unary$ and $\Binary$. Given a data structure  $\AA=(A,(P_{\unary}),\f{1},\ldots,\f{\nbd}) \in \nData{\nbd}{\Unary}$, we define its \emph{data graph} $\gaifmanish{\AA}=(\Vertex{\gaifmanish{\AA}},\Edge{\gaifmanish{\AA}})$ with set of vertices $\Vertex{\gaifmanish{\AA}} = A \times\{1,\ldots,\nbd\}$ and set of edges
$\Edge{\gaifmanish{\AA}} = \{ ((a,i),(b,j)) \in
\Vertex{\gaifmanish{\AA}} \times \Vertex{\gaifmanish{\AA}} \mid a=b
\mbox{ and } i \neq j, \mbox{ or } \rels{i}{j}{}{} \in \Gamma \mbox{
  and } \rels{i}{j}{a}{b} \}$. Note that the definition of the edges
of this graph depends of the set $\Gamma$, however to simplify the
figures we do not always write it and make sure it will always be clear
from the context. Figures \ref{fig:gaifman1}(A) and \ref{fig:gaifman2}(A) provide  examples of the graph
$\gaifmanish{\AA}$ for two data structure with $2$ data values. In the
first case, the set of considered binary relation symbols is
$\{\rels{1}{1}{}{},\rels{2}{2}{}{},\rels{1}{2}{}{} \}$ whereas in the
second case it is
$\Binary_2=\{\rels{1}{1}{}{},\rels{2}{2}{}{},\rels{1}{2}{}{},\rels{2}{1}{}{}\}$. We
point out that the graph $\gaifmanish{\AA}$  is directed and indeed we
see in the first Figure \ref{fig:gaifman1}(A) that since
$\rels{2}{1}{}{}$ is not considered, we have some unidirectional edges which we depict dashed. 


\newcommand{\selfconnectionright}[1]{\draw[<->, line width=0.7pt] (#1.one east) .. controls +(.4,0) and +(.4,0) .. (#1.two east);}
\newcommand{\selfconnectionleft}[1]{\draw[<->, line width=0.7pt]
  (#1.one west) .. controls +(-.4,0) and +(-.4,0) .. (#1.two west);}

\begin{figure}[htbp]
\centering
{
	\begin{subfigure}[b]{0.48\textwidth}
\begin{tikzpicture}[node distance=2cm]
	\node [data, label=below:$a$]                (A)    {1 \nodepart{second} 2};
	\node [dataredblue, above right of=A]    (B)    {3 \nodepart{second} 1};
	\node [dataredblue, below right of=B]    (C)    {3 \nodepart{second} 2};
	\node [databluered, above left of=A]    (D)    {1 \nodepart{second} 4};
	\node [dataredred, above right of=C]    (E)    {3 \nodepart{second} 5};

	\draw[line width=0.7pt,->, dashed] (A.north east) .. controls +(.5,.25) and +(0,-.75).. (B.two south);
	\draw[line width=0.7pt,->, dashed] (D.north east).. controls +(1,.5) and +(-1,-.5).. (B.south west);
	\draw[line width=0.7pt, <->] (B.one east) .. controls +(1,0) and +(-1,0).. (C.one west);
	\draw[line width=0.7pt, <->] (D.one east) .. controls +(1,0) and +(-1,0).. (A.one west);
	\draw[line width=0.7pt,<->] (B.north east) .. controls +(.5,.25) and +(-.5,.25).. (E.north west);
	\draw[line width=0.7pt,<->,in=45] (C.one north) .. controls +(0,.5) and +(-1 ,0).. (E.one west);
	\draw[line width=0.7pt,<->] (A.south east) .. controls +(1,-.5) and +(-.5,0).. (C.two west);

	\selfconnectionright{A};
	\selfconnectionleft{B};
	\selfconnectionright{C};
	\selfconnectionleft{D};
	\selfconnectionright{E};
  \end{tikzpicture}
  \caption{A data structure $\AA$ and  $\gaifmanish{\AA}$ \\when $\Binary=\{\rels{1}{1}{}{},\rels{2}{2}{}{},\rels{1}{2}{}{} \}$\label{fig:gaifman1-a}}

	\end{subfigure}}
	~~\vrule~~~
{
	\begin{subfigure}[b]{0.32\textwidth}

\begin{tikzpicture}[node distance=2cm]
	\node [data, label=below:$a$]                (A)    {1 \nodepart{second} 2};
	\node [dataredblue, above right of=A]    (B)    {8 \nodepart{second} 1};
	\node [dataredblue, below right of=B]    (C)    {9 \nodepart{second} 2};
	\node [databluered, above left of=A]    (D)    {1 \nodepart{second} 6};
	\node [above right of=C]    (E)    {};

	\draw[line width=0.7pt,->, dashed] (A.north east) .. controls +(.5,.25) and +(0,-.75).. (B.two south);
	\draw[line width=0.7pt,->, dashed] (D.north east).. controls +(1,.5) and +(-1,-.5).. (B.south west);
	\draw[line width=0.7pt, <->] (D.one east) .. controls +(1,0) and +(-1,0).. (A.one west);
	\draw[line width=0.7pt,<->] (A.south east) .. controls +(1,-.5) and +(-.5,0).. (C.two west);

	\selfconnectionright{A};
	\selfconnectionleft{B};
	\selfconnectionright{C};
	\selfconnectionleft{D};

\end{tikzpicture}
\caption{$\vprojr{\AA}{a}{1}$: the $1$ view of $a$}
\label{fig:gaifman1-b}
	\end{subfigure}
	}
\caption{ \label{fig:gaifman1}}
\end{figure}

\begin{figure}[htbp]
\centering
	\begin{subfigure}[b]{0.48\textwidth}
\begin{tikzpicture}[node distance=2cm]
	\node [data, label=below left:$a$]                (A)    {1
      \nodepart{two} 2 };
	\node [data, above left of=A,xshift=-1em,label=below:$b$]    (B)    {1 \nodepart{second} 3};
	\node [data, above right of=A,xshift=1em,label=below right:$c$]    (C)    {3 \nodepart{second} 2};
	\node [dataredred, below left of=A,label=below:$d$]    (D)    {5 \nodepart{second} 6};
	\node [dataredred, above right of=B,xshift=1em, label=below:$e$]
    (E)    {4 \nodepart{second} 3};
    \node [data, below right of=A, label=below:$f$]    (F)    {2 \nodepart{second} 7};

	\draw[line width=0.7pt,<->] (A.one north) .. controls +(0,.5) and
    +(.5,0).. (B.one east);
	\draw[line width=0.7pt,<->] (B.two east) .. controls +(2,-0.5) and
    +(-2,.5).. (C.one west);
    \draw[line width=0.7pt,<->] (E.two east) .. controls +(0,0) and
    +(0,0.5).. (C.one north);
    \draw[line width=0.7pt,<->] (E.south west) .. controls +(0,0) and
    +(0.5,.2).. (B.two east);
	\draw[line width=0.7pt,<->] (A.south east) .. controls +(1,-.5)
    and +(0,0).. (C.south west);
    \draw[line width=0.7pt,<->] (A.south) .. controls +(0,-0.5) and
    +(0,0).. (F.one west);
    \draw[line width=0.7pt,<->] (F.north) .. controls +(0,0) and
    +(0,0).. (C.south);
	\selfconnectionright{A};
	\selfconnectionleft{B};
	\selfconnectionright{C};
	\selfconnectionleft{D};
	\selfconnectionleft{E};
	\selfconnectionright{F};

  \end{tikzpicture}
\caption{A data structure $\AA$ and  $\gaifmanish{\AA}$ \\when $\Binary=\Binary_2$.}
\label{fig:gaifman2-a}
	\end{subfigure}
	\unskip\ \vrule\ \hspace{2em}
	\begin{subfigure}[b]{0.32\textwidth}
\begin{tikzpicture}[node distance=2cm]
		\node [data, label=below left:$a$]                (A)    {1
      \nodepart{two} 2 };
	\node [data, above left of=A,xshift=-1em,label=below:$b$]    (B)    {1 \nodepart{second} 3};
	\node [data, above right of=A,xshift=1em,label=below right:$c$]    (C)    {3 \nodepart{second} 2};
    \node [data, below right of=A, label=below:$f$]    (F)    {2 \nodepart{second} 7};

  \end{tikzpicture}
\caption{$\vprojr{\AA}{a}{2}$: the $2$ view of $a$}
\label{fig:gaifman2-b}
\end{subfigure}
\caption{
\label{fig:gaifman2}}
\end{figure}

We then define the distance $\distaa{(a,i)}{(b,j)}{\AA} \in \N \cup
\{\infty\}$ between two elements $(a,i)$ and $(b,j)$ from $A
\times\{1,\ldots,\nbd\}$ as the length of the shortest \emph{directed}
path from $(a,i)$ to $(b,j)$ in $\gaifmanish{\AA}$. In fact, as the
graph is directed, the distance function might not be symmetric. For $a \in A$ and $r \in \N$, the \emph{radius-$r$-ball around} $a$ is
the set $\Ball{r}{a}{\AA} = \{ (b,j)\in\Vertex{\gaifmanish{\AA}} \mid
\distaa{(a,i)}{(b,j)}{\AA}\leq r $ for some $i \in
\{1,\ldots,\nbd\}\}$. This ball contains the elements of
$\Vertex{\gaifmanish{\AA}}$ that can be reached from
$(a,1),\ldots,(a,\nbd)$ through a directed path of length at most $r$.
On Figure~\ref{fig:gaifman1}(A) the blue nodes represent
$\Ball{1}{a}{\AA}$ and on Figure~\ref{fig:gaifman2}(A) they represent
$\Ball{2}{a}{\AA}$. 

Let $\inj: A \times \{1,\ldots,\nbd\} \to \N \setminus \Values{\AA}$
be an injective mapping. The \emph{$r$-view of $a$ in $\AA$} is the
structure $\vprojr{\AA}{a}{r} =
(A',(P'_{\unary}),\f{1}',\ldots,\f{n}') \in \nData{\nbd}{\Unary}$. Its
universe is $A' = \{b \in A \mid (b,i) \in \Ball{r}{a}{\AA}$ for some
$i \in \{1,\ldots,\nbd\}\}$. Moreover, $\funct{i}'(b)= \funct{i}(b)$ if $(b,i)
\in\Ball{r}{a}{\AA}$, and $\funct{i}'(b)= \inj((b,i))$
otherwise. Finally, $P_{\unary}'$ is the restriction of $P_{\unary}$
to $A'$.  To illustrate this definition, we use again
Figures~\ref{fig:gaifman1} and \ref{fig:gaifman2}. On Figure
\ref{fig:gaifman1}(B), the structure $\vprojr{\AA}{a}{1}$ is given by the four
elements that contain at least one blue node.
However, the values of the red nodes have to be replaced by
pairwise distinct fresh values not contained in $\{1,\ldots,5\}$.
Note that the precise values do not matter. On Figure~\ref{fig:gaifman2}(B),
the structure $\vprojr{\AA}{a}{2}$ is depicted, in that case there is
no need of fresh value.

We are now ready to present the logic $\rndFO{\nbd}{\Unary,\Binary}{r}$, where $r \in \N$, interpreted over structures from
$\nData{\nbd}{\Unary}$. It is given by the grammar
\begin{align*}
	\vp ~&::=~ \locformr{\psi}{x}{r} \;\mid\; x=y \;\mid\; \exists x.\vp \;\mid\;  \vp \vee \vp \;\mid\;  \neg \vp
\end{align*}
where $\psi$ is a formula from $\ndFO{\nbd}{\Unary,\Binary}$
with (at most) one free variable $x$.This logic uses the \emph{local
modality} $\locformr{\psi}{x}{r}$ to specify that the formula $\psi$
should be interpreted over the $r$-view of the element associated to
the variable $x$.
For $\AA \in \nData{\nbd}{\Unary}$ and an interpretation function $I$, we have
$\AA \models_I \locformr{\psi}{x}{r}$ iff $\vprojr{\AA}{I(x)}{r} \models_I \psi$.

\begin{example}
Let us illustrate what can be specified by formulas in
$\rndFO{2}{\Unary,\Binary_2}{1}$. We can rewrite the formula from Example~\ref{ex:leader-election} so that
it falls into our fragment as follows:
$\exists x. (\locformr{\mathrm{leader}(x)}{x}{1} \et \forall
y. \linebreak[0](\locformr{\mathrm{leader}(y)}{y}{1} \Rightarrow x=y))
\et \forall
y. \linebreak[0] \locformr{\exists x. \mathrm{leader}(x) \et \rels{2}{1}{y}{x}
}{y}{1} $. Indeed note that in the formula of Example
~\ref{ex:leader-election}, when we write $\forall y. \exists x. (\mathrm{leader}(x)
\et \rels{1}{2}{x}{y})$, then for all elements $a$, the element labelled
with $\mathrm{leader}$ has to appear in the $1$-view of $a$ since they respectively
share the same value on  their second and fist value.
The next formula specifies an algorithm in which all processes suggest a value and choose a new value among those that have been suggested at least twice:
  $\forall x.\locformr{\exists
    y.\exists z. y \neq z \et \rels{2}{1}{x}{y} \et \rels{2}{1}{x}{z} }{x}{1} $. We can also specify partial renaming, i.e., two output values agree only if their input values are the same:
  $\forall x.\locformr{\forall y.(\rels{2}{2}{x}{y}\donc\rels{1}{1}{x}{y}}{x}{1}$.
Conversely, $\forall x.\locformr{\forall
  y.(\rels{1}{1}{x}{y}\donc\rels{2}{2}{x}{y}}{x}{1}$ specifies partial
fusion of equivalences classes.
\end{example}

\subsection{Previously Known Results}
We list in this subsection previously known results related to the
satisfiability problem we are studying and that we shall rely on.

\subsubsection{0 and 1 Data Value}

First, note that
formulas in $\ndFO{0}{\Unary,\emptyset}$ (i.e. where no data is considered)
correspond to first order logic formulas with a set of predicates and
equality test as a unique relation. As mentioned in  Chapter 6.2.1 of
\cite{borger-classical-springer97}, these formulas belong to the
\emph{L\"owenheim class with equality} also called as the relational
monadic formulas, and their satisfiability problem is in
\textsc{NEXP}. Furthermore, thanks to \cite{etessami-first-ic02} (Theorem 11), we know  that this latter problem is \textsc{NEXP}-hard even
if one considers formulas which use only two variables.

\begin{thmC}[\cite{borger-classical-springer97,etessami-first-ic02}]\label{thm:0fo}
 $\nDataSat{\ndFOr}{0,\emptyset}$ is \textsc{NEXP}-complete.
\end{thmC}

In \cite{Mundhenk09}, the authors study the satisfiability problem for
Hybrid logic over Kripke structures where the transition relation is
an equivalence relation, and they show that it is
\textsc{N2EXP}-complete. Furthermore in  \cite{Fitting12}, it is shown
that Hybrid logic can be translated to first-order logic in
polynomial time and this holds as well for  the converse
translation. Since $1$-data structures can be interpreted as Kripke
structures with one equivalence relation, altogether this allows us to
obtain the following preliminary result about the satisfiability
problem of $\ndFO{1}{\Unary,\{\relsaord{1}{1}\}}$.

\begin{thmC}[\cite{Mundhenk09,Fitting12}]\label{thm:1fo}
 $\nDataSat{\ndFOr}{1,\{\relsaord{1}{1}\}}$ is \textsc{N2EXP}-complete.
\end{thmC}

 \subsubsection{A Normal Form}

  When $\Binary = \emptyset$,
satisfiability of monadic first-order logic is decidable~\cite[Corollary 6.2.2]{Gurevich97}
and the logic actually has a useful normal form.
Let $\phi(x_1,\ldots,x_n,y) \in \ndFO{\nbd}{\Unary,\emptyset}$ (note
that since the set of binary relation symbols is empty, the number $\nbd$ of
data values does not matter here)  and $k \ge 1$
be a natural number. We use $\exists^{\ge k}y.
\phi(x_1,\ldots,x_n,y)$ as an abbreviation for
$
\exists y_1 \ldots \exists y_k.
\bigwedge_{1 \le i < j \le k} \neg{(y_i = y_j)} \wedge \bigwedge_{1 \le i \le k} \phi(x_1,\linebreak[0]\ldots,x_n,y_i)$.
Thus, $\exists^{\ge k}y.\phi$ says that there are at least $k$
distinct elements $y$ that verify $\phi$. 
We call a formula of the form $\exists^{\ge k}y.\phi$ a \emph{threshold formula}.
We also use
$\exists^{= k} y. \phi$ as an abbreviation for
$\exists^{\ge k} y. \phi \wedge \neg \exists^{\ge k+1} y. \phi$.

When $\Binary = \emptyset$, the out-degree of every element is $0$ so that,
over this particular signature, we deal with structures of bounded
degree.
The following lemma will turn out to be useful. It is due to Hanf's locality theorem \cite{Han65,Libkin04}
for structures of bounded degree (cf.\ \cite[Theorem~2.4]{BolligK12}).

\begin{lemma}\label{lem:threshold}
Every formula from $\ndFO{\nbd}{\Unary,\emptyset}$ with one free variable $x$
is effectively equivalent to a Boolean combination of
formulas of the form $\unary(x)$ with $\unary \in \Unary$ and
threshold formulas of the form
$\exists^{\ge k} y. \phi_U(y)$ where $U \subseteq \Unary$ and
$\phi_U(y) = \bigwedge_{\unary \in U} \unary(y) \wedge \bigwedge_{\unary \in \Unary \setminus U} \neg \unary(y)$.
\end{lemma}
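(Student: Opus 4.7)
The plan is to derive the normal form from Hanf's locality theorem specialised to the degenerate case $\Binary=\emptyset$, where the data graph $\gaifmanish{\AA}$ has no edges at all, so every element forms its own connected component of radius $0$. Under this observation, the $0$-neighbourhood isomorphism type of an element $a$ coincides with its $1$-type relative to $\Unary$, i.e.\ the set $U_a = \{\unary \in \Unary \mid a \in P_\unary\}$. Hanf's theorem for bounded-degree structures (applied with radius $0$) then tells us that a sentence of $\ndFO{\nbd}{\Unary,\emptyset}$ is equivalent to a Boolean combination of sentences asserting that the number of elements realising a given type $U \subseteq \Unary$ exceeds or equals some threshold $k$. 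These are exactly the threshold formulas $\exists^{\ge k} y.\phi_U(y)$ of the statement.

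For formulas with a single free variable $x$, I would proceed by expansion of the signature. First, I introduce an auxiliary copy of the unary predicates $\Unary$ obtained by adding a marker: conceptually, replace each candidate assignment $I(x) = a$ by relativising the structure so that only $a$ carries a fresh predicate $P_{\mathsf{mark}}$. Under this encoding, the formula $\phi(x)$ becomes a sentence over $\Unary \cup \{P_{\mathsf{mark}}\}$, to which the sentence-level normal form applies. This produces a Boolean combination of threshold formulas of the form $\exists^{\ge k} y.\phi_V(y)$ over the enlarged unary signature. Any such threshold formula decomposes, by an elementary rewriting, into either a threshold formula of the prescribed form over $\Unary$ alone (when $\mathsf{mark} \notin V$), or --- since $\mathsf{mark}$ can hold at exactly one element --- into a disjunction combining the statement $\phi_{V \setminus \{\mathsf{mark}\}}(x)$ with an adjusted threshold count over $\Unary$ (when $\mathsf{mark} \in V$ and $k=1$), or into $\bot$ (when $\mathsf{mark} \in V$ and $k \geq 2$). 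Finally, each atom $\phi_W(x) = \bigwedge_{\unary\in W}\unary(x)\wedge\bigwedge_{\unary\notin W}\neg\unary(x)$ is itself a Boolean combination of atoms $\unary(x)$, yielding the desired form.

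Alternatively, and perhaps more cleanly, I would give a direct inductive argument, bypassing Hanf: since $\Binary=\emptyset$ and $x=y$ is the only non-unary atom available, one can push negations in and then eliminate quantifiers over an extended language with threshold quantifiers. Specifically, prove by induction on the quantifier depth that every formula $\phi(x,y_1,\ldots,y_n)$ is equivalent to a Boolean combination of equalities among its free variables, unary atoms on its free variables, and threshold formulas $\exists^{\ge k} z.\phi_U(z)$ referring only to the universe. The inductive step for $\exists z.\psi$ distributes over the Boolean combination using the identity $\exists z.(\alpha(z)\wedge\beta)\equiv(\exists z.\alpha(z))\wedge \beta$ when $\beta$ does not mention $z$, and sums up the multiplicities of types using inclusion--exclusion to convert conjunctions of threshold statements into counts over the refined types $U \subseteq \Unary$.

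The main technical obstacle will be the correct accounting for the free variable $x$: one must ensure that threshold counts referring to the universe as a whole are consistent with the specific type assigned to $x$ (so that, e.g., $\unary(x) \wedge \exists^{=0} y.\unary(y)$ comes out false rather than inadvertently consistent). This is exactly where the decomposition via the $\mathsf{mark}$ trick, or equivalently a careful subtraction step in the direct induction, is needed; once it is handled, the rest is routine Boolean rewriting, and the effectiveness of the translation follows immediately from the inductive construction.
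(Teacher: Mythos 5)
Your proposal is correct and takes essentially the same route as the paper, which establishes the lemma simply by invoking Hanf's locality theorem for bounded-degree structures (here the degenerate, edgeless case $\Binary=\emptyset$, where radius-$0$ neighbourhood types are exactly the unary types $U\subseteq\Unary$). The only point to watch is that, in your marking argument, a threshold $\exists^{\ge k}y.\phi_V(y)$ with $\mathsf{mark}\notin V$ does not reduce to a single threshold formula over $\Unary$ alone but to a Boolean combination with a $\pm 1$ adjustment depending on whether $\phi_{V}(x)$ holds at the marked element --- precisely the subtraction step you flag in your final paragraph --- after which the argument goes through.
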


\subsubsection{Extended Two-Variable First-Order Logic}

One way to skirt the undecidability result of Theorem
\ref{thm:undec-general} is to restrict the number of allowed variables
in the formulas. We denote by $\ndFOvar{\nbd}{\Unary,\Binary}{2}$ the
two-variable fragment of $\ndFO{\nbd}{\Unary,\Binary}$, i.e. the
set of formulas in $\ndFO{\nbd}{\Unary,\Binary}$ which use only two variables (usually $x$ and $y$). In a two-variable
formula, however, each of the two variables can be used arbitrarily
often. The satisfiability problem of two-variable logic over arbitrary finite
structures with two equivalence relations is decidable \cite[Theorem~15]{KieronskiT09}.
By a straightforward reduction to this problem, we obtain :

\begin{thmC}[\cite{KieronskiT09}]\label{thm:twoFO}
  The problem $\nDataSat{\ndFOrvar{2}}{2,\{\relsaord{1}{1},\relsaord{2}{2}\}}$ is decidable.
\end{thmC}

Actually, this result can be generalized to \emph{extended}
two-variable first-order logic. A formula belongs to $\extndFOvar{\nbd}{\Unary,\Binary}{2}$ if it is of the form
$\vp\et\psi$ where $\vp\in\ndFO{\nbd}{\Unary,\emptyset}$ and
$\psi\in\ndFOvar{\nbd}{\Unary,\Binary}{2}$. To obtain the next
result, the idea consists in first translating the formula
$\vp\in \ndFO{\nbd}{\Unary, \emptyset}$ to a two-variable formula thanks to new unary predicates.

\begin{proposition}\label{prop:exttwoFO}
The problem $\nDataSat{\extndFOrvar{2}}{2,\{\relsaord{1}{1},\relsaord{2}{2}\}}$ is decidable.
\end{proposition}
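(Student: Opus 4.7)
The plan is to reduce satisfiability of $\vp\wedge\psi$, where $\vp\in\ndFO{2}{\Unary,\emptyset}$ and $\psi\in\ndFOvar{2}{\Unary,\Binary}{2}$ with $\Binary=\{\relsaord{1}{1},\relsaord{2}{2}\}$, to the satisfiability of a two-variable formula over an enriched signature, so that Theorem~\ref{thm:twoFO} can then be applied.

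First, I would apply Lemma~\ref{lem:threshold} to rewrite $\vp$ as an equivalent Boolean combination of threshold formulas $\exists^{\ge k}y.\phi_U(y)$, where $U$ ranges over subsets of $\Unary$ and $\phi_U(y)$ pins down the full unary type of $y$. Let $M$ be the largest threshold occurring. The truth value of this Boolean combination on a structure $\AA$ depends only on its \emph{type-count profile}, namely the function $c_\AA:2^\Unary\to\{0,1,\ldots,M\}$ that sends each $U$ to the number of elements of type $U$ in $\AA$, truncated at $M$. Since there are only finitely many such profiles and each makes the Boolean combination true or false in a purely propositional way, I can effectively enumerate the profiles $c$ that satisfy $\vp$.

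Next, for each satisfying profile $c$, I would build a two-variable sentence $\vp_c$ over an extended signature $\Unary'\supseteq\Unary$ that forces any model to realise exactly $c$. For every $U$ with $c(U)=k<M$, I introduce $k$ fresh unary predicates $P_U^1,\ldots,P_U^k$ together with two-variable axioms saying that each $P_U^i$ is nonempty, the $P_U^i$ are pairwise disjoint and each contains at most one element, every element in some $P_U^i$ has type $U$, and every element of type $U$ lies in some $P_U^i$. This pins the number of type-$U$ elements to exactly $k$. For $U$ with $c(U)=M$, one simply drops the last axiom, thereby forcing at least $M$ elements of type $U$. All these constraints live in $\ndFOvar{2}{\Unary',\Binary}{2}$. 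Then $\vp\wedge\psi$ is satisfiable over $\nData{2}{\Unary}$ iff, for some $c$ making $\vp$ true, the formula $\vp_c\wedge\psi$ is satisfiable over $\nData{2}{\Unary'}$: a model of $\vp_c\wedge\psi$ reduces to a model of $\vp\wedge\psi$ by forgetting the fresh predicates, while a model of $\vp\wedge\psi$ can be enriched with appropriate markers to realise $\vp_c$ for its own profile $c$. Since there are only finitely many profiles and each resulting instance is decidable by Theorem~\ref{thm:twoFO}, so is $\vp\wedge\psi$. The main technical delicacy is that one must go through profiles rather than try a direct syntactic translation of threshold formulas: marker predicates can faithfully express ``at least $k$ witnesses exist'' but not ``fewer than $k$ witnesses exist'', so negation cannot be handled directly in this manner, whereas the enumeration of profiles bypasses the issue cleanly.
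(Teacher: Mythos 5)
Your proof is correct, and it shares the paper's skeleton---Lemma~\ref{lem:threshold} to put the monadic part into threshold normal form, fresh unary predicates to simulate counting up to $M$ in two-variable logic, and Theorem~\ref{thm:twoFO} to conclude---but it eliminates the thresholds differently. The paper keeps a single formula: it introduces counter predicates $\unarycounter{1},\ldots,\unarycounter{M}$ and a two-variable axiom $\vpeta$ forcing, inside each $\Unary$-type class, a labelling under which some type-$U$ element carries $\unarycounter{k}$ if and only if the class has at least $k$ members; each threshold $\exists^{\ge k}y.\phi_U(y)$ is then rewritten in place as $\exists y.\phi_U(y)\wedge\unarycounter{k}(y)$, and because this is an equivalence in every model of $\vpeta$, negated thresholds go through unchanged---so your closing remark that a direct syntactic translation cannot handle negation is accurate for your lower-bound witness markers, but not for this counter scheme. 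Your route instead enumerates the finitely many truncated count profiles and, for each satisfying profile, pins exact cardinalities with pairwise disjoint singleton markers; this is sound (the $c(U)=0$ case degenerates to forbidding type-$U$ elements altogether, and any model of $\vp_c\wedge\psi$ necessarily has profile $c$, so forgetting the markers yields a model of $\vp\wedge\psi$), at the price of up to $(M+1)^{2^{|\Unary|}}$ satisfiability queries instead of one, which is harmless here since only decidability is claimed; the paper's single-formula translation is what one would prefer if effective complexity bounds were also of interest.
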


\begin{proof}
  We first show that one can reduce the first-order part with $\Binary = \emptyset$
to a two-variable formula. Let $\vp$ be a sentence in
$\ndFO{\nbd}{\Unary, \emptyset}$. We show we can effectively construct $\vp' \in \ndFOvar{\nbd}{\Unary',\emptyset}{2}$ with $\Unary\subseteq\Unary'$ such that $\vp$ is satisfiable iff $\vp'$ is satisfiable.
		Furthermore, if a structure $\AA$ satisfies $\vp$, then we can add an interpretation of the predicates in $\Unary'\setminus\Unary$ to $\AA$ to get a model for $\vp'$.
		Conversely, if a structure $\AA'$ satisfies $\vp'$, then
        forgetting the interpretation of the predicates in
        $\Unary'\setminus\Unary$ in $\AA'$ give us a model for $\vp$.

        We apply Lemma~\ref{lem:threshold} to $\vp$ and then obtain $\vp''$.
As there is no free variable in $\vp$, the formula $\vp''$ is a boolean combination of formulas of the form $\exists^{\ge k} y. \phi_U(y)$ where $U \subseteq \Unary$.
Let $M$ be the maximal such $k$ (if there is no threshold formula, $\vp''$ is either $\mathit{true}$ or $\mathit{false}$).
We define a set of unary predicates $\Unarycl{M}=\{\unarycounter{i}\mid 1\leq i \leq M\}$
and let $\Unary' = \Unary \cup \Unarycl{M}$. Intuitively the unary
predicates in $\Unarycl{M}$ will be used to count up to $M$ the
elements labeled by the same predicates in $\Unary$ associating each element of the
    domain with a unique numerical index.
The following formulas will specify the meaning of the elements of $\Unarycl{M}$. First, let $\phi_{same}(x,y)=\bigwedge_{\sigma \in \Unary} \sigma(x) \leftrightarrow \sigma(y)$. With this, we define:
$$
\begin{array}{rcl}
    \vpetai{1} &:= & \displaystyle \forall x.\bigvee_{i \in [1,M]} \big(\unarycounter{i}(x) \wedge \bigwedge_{j \in [1,M]\setminus \{i\}} \neg \unarycounter{j}(x)\big) \\
  &&\\
  \vpetai{2} &:= &\displaystyle \forall x.\bigwedge_{i \in [1,M-1]} \big(\unarycounter{i}(x) \rightarrow \neg \exists y. (x \neq y \wedge \phi_{same}(x,y)   \wedge \unarycounter{i}(y))\big) \\
                &&\\
    \vpetai{3} &:= &\displaystyle \forall x.\bigwedge_{i \in [2,M]} \big(\unarycounter{i}(x) \rightarrow (\exists y. \phi_{same}(x,y) \wedge \unarycounter{i-1}(y))\big)\\
\end{array}
$$
We then denote $\vpeta :=  \vpetai{1} \et \vpetai{2} \et
\vpetai{3}\in\ndFOvar{\nbd}{\Unary',\emptyset}{2}$. We provide an
informal explanation of the previous formulas:
\begin{itemize}
\item the formula $\vpetai{1}$ ensures that each element is labeled by
  a single label from $\Unarycl{M}$;
\item the formula $\vpetai{2}$ ensures that there are no two elements 
  labeled by the same predicates in $\Unary$ which are labeled
  by the same element $\unarycounter{i}$ with $i \leq M-1$;
\item  the formula $\vpetai{3}$ ensures that if an element is labeled
  with $\unarycounter{i}$ with $i \geq 2$ then there must exist
  another element with the same labels in $\Unary$ labeled by $\unarycounter{i-1}$.
\end{itemize}
Then, for a model $\AA\in\nData{\nbd}{\Unary'}$ of $\vpeta$ with carrier set $A$, an element $a\in A$, and an integer $1\leq i \leq M$, we have that $a \in P_{\unarycounter{i}}$ iff $|\{b\in A \mid$ for all $\unary\in\Unary$, $a \in P_\unary$ iff $b \in P_\unary\}|\geq i$.
Then in $\vp''$, we replace all threshold formulas $\exists^{\ge k}
y. \phi_U(y)$  with $\exists y.\vp_U(y)\et\unarycounter{k}(y)$ in
order to obtain $\vp'''\in \ndFOvar{\nbd}{\Unary',\emptyset}{2}$.
Finally we take $\vp'$ as $\vp'''\et\vpeta$.

We can now proceed with the proof of the Proposition. Let $\vp\et\psi$ be a sentence such that $\vp\in\ndFO{\nbd}{\Unary,\emptyset}$ and $\psi\in\ndFOvar{\nbd}{\Unary,\Binary}{2}$.
	We determine $\Unary' \supseteq \Unary$ and $\vp'$ in $\ndFOvar{\nbd}{\Unary,\Binary}{2}$ according to what we said above. Then, by Theorem~\ref{thm:twoFO}, it only remains to show that $\vp\et\psi$ is satisfiable iff $\vp'\et\psi$ is satisfiable.
	
Suppose there is $\AA \in \nData{\nbd}{\Unary}$ such that $\AA \models \vp\et\psi$. According to what we said, we can add propositions from
$\Unary'\setminus\Unary$ to $\AA$ to get a data structure $\AA'$ such that $\AA' \models \vp'$.
As $\psi$ does not speak about propositions in $\Unary'\setminus\Unary$, we have
$\AA' \models \psi$ and, therefore, $\AA' \models \vp'\et\psi$. 

Conversely, let $\AA' \in \nData{\nbd}{\Unary'}$ such that $\AA' \models \vp'\et\psi$.
Then, ``forgetting''
in $\AA'$ all labels in $\Unary'\setminus\Unary$ yields a structure $\AA$ such that $\AA \models \vp$.
As we still have $\AA \models \psi$, we conclude $\AA \models \vp\et\psi$.
\end{proof}

\section{Dealing with Two Data Values}
\label{sec:two-data}
\newcommand{\USet}{U}
\newcommand{\BSet}{R}
\newcommand{\eqx}{\mathsf{eq}}
\newcommand{\CSign}{\mathbb{C}}
\newcommand{\CountC}{\mathtt{C}}
\newcommand{\newcc}[4]{\Lbag #1, #2, #4\Rbag}
\newcommand{\myEnv}[6]{\mathtt{Env}_{#1,#2,#3}(#4,#5,#6)}
\newcommand{\myCountConst}[3]{\mathtt{C}_{#3}}
\newcommand{\phiint}{\phi_{\mathit{same}}^{\mathit{int}}}
\newcommand{\Ueq}{\Theta}
\newcommand{\Ueqed}{\Ueq_{\ediag}}
\newcommand{\addediag}[1]{#1 + \ediag}
\newcommand{\removeediag}[1]{#1 \setminus \ediag}
\newcommand{\extSigma}{\Theta}
\newcommand{\phied}{\xi_\ediag}
\newcommand{\philabels}[1]{\phi_{#1}}
\renewcommand{\alpha}{\upalpha}
\renewcommand{\beta}{\upbeta}
\renewcommand{\gamma}{\upgamma}

\subsection{Decidability with Radius One and One Diagonal Relation} 
We will show in this section that, when considering structures with two data values,  $\DataSat{\rndFOr{1},2,\linebreak[4]\{\relsaord{1}{1},\relsaord{2}{2},\relsaord{1}{2}\}}$ (or, symmetrically, $\nDataSat{\rndFOr{1}}{2,\{\relsaord{1}{1},\linebreak[0]\relsaord{2}{2},\relsaord{2}{1}\}}$), i.e. the local fragment of radius $1$ with a set of binary symbols restricted to  $\{\relsaord{1}{1},\relsaord{2}{2},\relsaord{1}{2}\}$ is decidable. To this end, we will give a
reduction to the problem
$\nDataSat{\extndFOrvar{2}}{2,\{\relsaord{1}{1},\relsaord{2}{2}\}}$
and rely on Proposition \ref{prop:exttwoFO}. Before we give the formal details of this reduction, we present here the main
steps it is  based on:
\begin{itemize}
\item We transform a formula $\phi \in \rndFO{2}{\Unary,\Binary}{1}$,
  in a formula $\chi \in \ndFO{2}{\Unary\cup \{\eqx\} \cup \myCountConst{\Unary}{\Binary}{M},\emptyset}$
  such that $\phi$ is satisfiable iff $\chi$ has a well-typed
  model. In a well-typed model, each element $a$ is labelled with some
  unary predicates in   $\myCountConst{\Unary}{\Binary}{M}$. These
  predicates count (up to a constant $M$) for each subset $R$ of
  binary relations in $\Binary$ the number of elements sharing the
  same unary predicates (in $\Unary$) and in relation with $a$ for all
  relations in $R$. For  instance, such a labelling can indicate that there are
   two elements having the same first data value as $a$. To perform
   this translation, we rely on the form of formulas in
   $\rndFO{2}{\Unary,\Binary}{1}$ and use Lemma \ref{lem:threshold}
   ($\eqx$ is a special predicate to detect elements with the same two
   values).
 \item We then  need to ensure that a given model  is well-typed. We
   can achieve this by adding predicates to count up to a constant $M$
   (we label a single element with $1$, a single element with $2$,
   etc)  and then use a formula with two variables to check the correct
   labelling. We could indeed build a formula in $
   \ndFOvar{2}{\Unary \cup
     \myCountConst{\Unary}{\Binary}{M} \cup \{\eqx\} \cup
     \Lambda_M,\Binary}{2}$ (where $\Lambda_M$ are extra  unary
   predicates to count elements with the same properties) to verify  if
   a model is a well typed. The problem is that to rely on the result
   of Proposition \ref{prop:exttwoFO}, we need the formula not to use
   the diagonal relation $\relsaord{1}{2}$.
 \item To get rid of the diagonal relations, we show that we can add
   for each value $v$ in our model, an extra element that we label with a
   special predicate $\ediag$ and that has the value $v$ in its two
   fields. We then demonstrate how to ensure diagonal relations by
   making a detour via these extra elements.  We call this extended
   model well-diagonalized and provide a two
   variable formula in $\ndFOvar{2}{\Unary \cup
     \myCountConst{\Unary}{\Binary}{M} \cup \{\eqx,\ediag\} \cup
     \Lambda_M,\Binary_{df}}{2} $ to check that a model is well
   diagonalized.
\end{itemize}

We now move to the formal reduction.
Henceforth, we fix a finite set $\Sigma$ as well
as $\Gamma = \{\relsaord{1}{1},\relsaord{2}{2},\relsaord{1}{2}\}$ and
we let the \emph{diagonal-free set} be
$\Binary_{df}=\{\relsaord{1}{1},\relsaord{2}{2}\}$.
Moreover, we let $\extSigma$ range over arbitrary finite sets such that
$\Sigma \subseteq \extSigma$ and $\Theta \cap \{\eqx,\ediag\} = \emptyset$,
where $\eqx$ and $\ediag$ are special unary symbols that are introduced below.

We start with some crucial notion.
Suppose $\Binary' \subseteq \Binary$ (which will later be instantiated by either
$\Binary_{df}$ or $\Binary$).
Consider a data structure $\AA=(A,(P_{\unary}),\ifunct,\ofunct) \in \nData{2}{\extSigma}$ with $\Sigma \subseteq \extSigma$.
Given $U \subseteq \Sigma$ and a nonempty set $R \subseteq \Gamma'$,
the \emph{environment} of $a \in A$ is defined as
$$
\myEnv{\AA}{\Sigma}{\Binary'}{a}{\USet}{\BSet}
 = \bigl\{b \in A \mid \USet=\{\unary \in \Unary \mid b \in  P_{\unary}\} \mbox{ and }
 \BSet=\{\rels{i}{j}{}{} \in \Binary' \mid \relsaa{i}{j}{\AA}{a}{b} \}\bigr\}
 $$
Thus, it contains the elements
that carry exactly the labels from $U$ (relative to $\Sigma$) and to which $a$ is related precisely in terms of the relations in $R$
(relative to~$\Gamma'$).

\begin{example}
Consider $\AA \in \nData{2}{\Sigma}$
from Figure~\ref{fig:well-typed}(a) where $\Sigma = \emptyset$.
Then, the set $\myEnv{\AA}{\Sigma}{\Gamma}{a}{\emptyset}{\{\relsaord{1}{1},\relsaord{1}{2}\}} =
\myEnv{\AA}{\Sigma}{\Binary_{df}}{a}{\emptyset}{\{\relsaord{1}{1}\}}$ contains
exactly the yellow elements (with data-value pairs $(1,1)$), and
$\myEnv{\AA}{\Sigma}{\Gamma}{a}{\emptyset}{\{\relsaord{1}{2}\}}$ contains the two blue
elements (with data-value pairs $(2,1)$ and $(3,1)$).
\end{example}

\newcommand{\cconetwo}{
\parbox{0.5cm}{
\begin{tikzpicture}[scale=0.25]
    \draw [thick] (0,1) -- (1,0);
\end{tikzpicture}}}

\newcommand{\cconeone}{
\parbox{0.5cm}{
\begin{tikzpicture}[scale=0.25]
    \draw [thick] (0,1) -- (1,1);
    \draw [draw=none] (0,0) -- (1,0);        
\end{tikzpicture}}}

\newcommand{\cctwotwo}{
\parbox{0.5cm}{
\begin{tikzpicture}[scale=0.25]
    \draw [thick] (0,0) -- (1,0);
    \draw [draw=none] (0,1) -- (1,1);    
\end{tikzpicture}}}

\newcommand{\ccoott}{
\parbox{0.5cm}{
\begin{tikzpicture}[scale=0.25]
    \draw [thick] (0,1) -- (1,1);
    \draw [thick] (0,0) -- (1,0);    
\end{tikzpicture}}}

\newcommand{\ccottt}{
\parbox{0.5cm}{
\begin{tikzpicture}[scale=0.25]
    \draw [thick] (0,1) -- (1,0);
    \draw [thick] (0,0) -- (1,0);    
\end{tikzpicture}}}

\newcommand{\ccooot}{
\parbox{0.5cm}{
\begin{tikzpicture}[scale=0.25]
    \draw [thick] (0,1) -- (1,1);
    \draw [thick] (0,1) -- (1,0);    
\end{tikzpicture}}}

\newcommand{\ccall}{
\parbox{0.5cm}{
\begin{tikzpicture}[scale=0.25]
    \draw [thick] (0,1) -- (1,1);
    \draw [thick] (0,1) -- (1,0);
    \draw [thick] (0,0) -- (1,0);    
\end{tikzpicture}}}

\begin{figure*}[t]
\scalebox{0.85}{
\hspace{-4pt}
\begin{tabular}{l}
\begin{tikzpicture}[node distance=2.5cm]
  \pgfsetxvec{\pgfpoint{1.1cm}{0cm}}
  \pgfsetyvec{\pgfpoint{0cm}{1.1cm}}
  \pgfsetzvec{\pgfpoint{0cm}{1.1cm}}

\put(-5,0){
	\node []	(a) at (-2.7,1)	{(a)};
	\node []	(a) at (0,-1.8)	{};
	\node [dataredred, label=above:$a$]	(A) at (0.1,0.1)	{1 \nodepart{second} 2};
	\node [datayellowyellow]    		(B) at (-1.7,-0.6)	{1 \nodepart{second} 1};
	\node [datayellowyellow]			(C) at (-0.7,-0.5)	{1 \nodepart{second} 1};
	\node [datayellowyellow]			(D) at (1,-0.6)	{1 \nodepart{second} 1};
	\node [datagreengreen]				(E) at (-1,0.8)	{2 \nodepart{second} 2};
	\node [datagraygray]				(F) at (1,0.8)	{3 \nodepart{second} 4};
	\node [datablueblue]				(G) at (-2.0,0.7)	{2 \nodepart{second} 1};
	\node [datablueblue]				(H) at (1.8,0)	{3 \nodepart{second} 1};
	}
\end{tikzpicture}
~~\vline~~
\begin{tikzpicture}[node distance=2.5cm]
  \pgfsetxvec{\pgfpoint{1.1cm}{0cm}}
  \pgfsetyvec{\pgfpoint{0cm}{1.1cm}}
  \pgfsetzvec{\pgfpoint{0cm}{1.1cm}}

	\node []	(b) at (-2.7,1)	{(b)};
	\node []	(a) at (0,-1.8)	{};	
	\node [dataredred, label=above:$a$]	(A) at (0.1,0.1)	{1 \nodepart{second} 2};
	\node [dataredred, label=below:$\substack{\relsaord{1}{1}\\\relsaord{2}{2}}$]	(A) at (0.1,0.1)	{1 \nodepart{second} 2};
	\node [datayellowyellow, label=below:$\substack{\relsaord{1}{1}\\\relsaord{1}{2}}$]    		(B) at (-1.7,-0.8)	{1 \nodepart{second} 1};
	\node [datayellowyellow, label=below:$\substack{\relsaord{1}{1}\\\relsaord{1}{2}}$]			(C) at (-0.7,-0.7)	{1 \nodepart{second} 1};
	\node [datayellowyellow, label=below:$\substack{\relsaord{1}{1}\\\relsaord{1}{2}}$]			(D) at (1,-0.8)	{1 \nodepart{second} 1};
	\node [datagreengreen, label=below:$\substack{\relsaord{2}{2}}$]				(E) at (-1,0.8)	{2 \nodepart{second} 2};
	\node [datagraygray]				(F) at (1,0.8)	{3 \nodepart{second} 4};
	\node [datablueblue, label=below:$\substack{\relsaord{1}{2}}$]				(G) at (-2.0,0.7)	{2 \nodepart{second} 1};
	\node [datablueblue, label=below:$\substack{\relsaord{1}{2}}$]				(H) at (1.8,0)	{3 \nodepart{second} 1};
\end{tikzpicture}
~~\vline~~
\begin{tikzpicture}[node distance=2.5cm]
  \pgfsetxvec{\pgfpoint{1.1cm}{0cm}}
  \pgfsetyvec{\pgfpoint{0cm}{1.1cm}}
  \pgfsetzvec{\pgfpoint{0cm}{1.1cm}}

	\node []	(c) at (-2.7,1)	{(c)};
	\node []	(a) at (0,-1.8)	{};	
	\node [dataredred]	(A) at (0.1,1)	{1 \nodepart{second} 2};
	\node [draw, rectangle, minimum width=1.4cm, minimum height=3.5cm, rounded corners=0.1cm, label=above:$a$] (cc) at (0.1,-0.05) {};
	
	\node [fill=darkseagreen!70, draw, rectangle, minimum width=1.1cm] (const) at (0.1,0.2) {\scalebox{0.8}{$\hspace{1em} \ge 1$}};
	\node [] (const) at (-0.1,0.2) {\cctwotwo};
	
	\node [fill=red!30,draw, rectangle, minimum width=1.1cm] (const) at (0.1,-0.3) {\scalebox{0.8}{$\hspace{1em} \ge 1$}};	
	\node [] (const) at (-0.1,-0.3) {\ccoott};	
	
	\node [fill=blue!20, draw, rectangle, minimum width=1.1cm] (const) at (0.1,-0.8) {\scalebox{0.8}{$\hspace{1em} \ge 2$}};
	\node [] (const) at (-0.1,-0.8) {\cconetwo};
		
	\node [fill=bananamania!90,draw, rectangle, minimum width=1.1cm] (const) at (0.1,-1.3) {\scalebox{0.8}{$\hspace{1em} \ge 3$}};
	\node [] (const) at (-0.1,-1.3) {\ccooot};
		
	\node [datayellowyellow]    		(B) at (-2,-0.8)	{1 \nodepart{second} 1};
	\node [datayellowyellow]			(C) at (-1.2,-0.7)	{1 \nodepart{second} 1};
	\node [datayellowyellow]			(D) at (1.5,-0.8)	{1 \nodepart{second} 1};
	\node [datagreengreen]				(E) at (-1,0.8)	{2 \nodepart{second} 2};
	\node [datagraygray]				(F) at (1.3,0.8)	{3 \nodepart{second} 4};
	\node [datablueblue]				(G) at (-2.0,0.7)	{2 \nodepart{second} 1};
	\node [datablueblue]				(H) at (2,0.4)	{3 \nodepart{second} 1};
\end{tikzpicture}
\end{tabular}
}\\
\hrule
{
~\\[1ex]
{
\begin{minipage}{0.96\textwidth}
\hspace{-2em}
\scalebox{0.85}{
\begin{tabular}{c}
{
\begin{tikzpicture}[node distance=2.5cm, scale=1.1]
  \pgfsetxvec{\pgfpoint{1.1cm}{0cm}}
  \pgfsetyvec{\pgfpoint{0cm}{1.1cm}}
  \pgfsetzvec{\pgfpoint{0cm}{1.1cm}}

	\node []	(c) at (-1,1.3)	{(d)};
\put(0,0){
	\node [draw, fill=bananamania!90, rectangle, minimum width=1.4cm, minimum height=3.9cm, rounded corners=0.1cm] (cc) at (0.1,0) {};
	\node [datayellowyellow]	(A) at (0.1,1)	{1 \nodepart{second} 1};	
	
	\node [draw, rectangle, minimum width=1.1cm] (const) at (0.1,0.2) {\scalebox{0.8}{$\hspace{1em} \ge 1$}};
	\node [] (const) at (-0.1,0.2) {\cconeone};
	
	\node [draw, rectangle, minimum width=1.1cm] (const) at (0.1,-0.3) {\scalebox{0.8}{$\hspace{1em} \ge 2$}};	
	\node [] (const) at (-0.1,-0.3) {\ccottt};	
	
	\node [draw, rectangle, minimum width=1.1cm] (const) at (0.1,-0.8) {\scalebox{0.8}{$\hspace{1em} \ge 3$}};
	\node [] (const) at (-0.1,-0.8) {\ccall};
		
	\node [rectangle, minimum width=1.1cm] (const) at (0.1,-1.3) {\scalebox{0.8}{$~\eqx$}};
}
\put(50,0){
	\node [draw, fill=bananamania!90, rectangle, minimum width=1.4cm, minimum height=3.9cm, rounded corners=0.1cm] (cc) at (0.1,0) {};
	\node [datayellowyellow]	(A) at (0.1,1)	{1 \nodepart{second} 1};	
	
	\node [draw, rectangle, minimum width=1.1cm] (const) at (0.1,0.2) {\scalebox{0.8}{$\hspace{1em} \ge 1$}};
	\node [] (const) at (-0.1,0.2) {\cconeone};
	
	\node [draw, rectangle, minimum width=1.1cm] (const) at (0.1,-0.3) {\scalebox{0.8}{$\hspace{1em} \ge 2$}};	
	\node [] (const) at (-0.1,-0.3) {\ccottt};	
	
	\node [draw, rectangle, minimum width=1.1cm] (const) at (0.1,-0.8) {\scalebox{0.8}{$\hspace{1em} \ge 3$}};
	\node [] (const) at (-0.1,-0.8) {\ccall};
		
	\node [rectangle, minimum width=1.1cm] (const) at (0.1,-1.3) {\scalebox{0.8}{$~\eqx$}};
}
\put(100,0){
	\node [draw, fill=bananamania!90, rectangle, minimum width=1.4cm, minimum height=3.9cm, rounded corners=0.1cm] (cc) at (0.1,0) {};
	\node [datayellowyellow]	(A) at (0.1,1)	{1 \nodepart{second} 1};	
	
	\node [draw, rectangle, minimum width=1.1cm] (const) at (0.1,0.2) {\scalebox{0.8}{$\hspace{1em} \ge 1$}};
	\node [] (const) at (-0.1,0.2) {\cconeone};
	
	\node [draw, rectangle, minimum width=1.1cm] (const) at (0.1,-0.3) {\scalebox{0.8}{$\hspace{1em} \ge 2$}};	
	\node [] (const) at (-0.1,-0.3) {\ccottt};	
	
	\node [draw, rectangle, minimum width=1.1cm] (const) at (0.1,-0.8) {\scalebox{0.8}{$\hspace{1em} \ge 3$}};
	\node [] (const) at (-0.1,-0.8) {\ccall};
		
	\node [rectangle, minimum width=1.1cm] (const) at (0.1,-1.3) {\scalebox{0.8}{$~\eqx$}};
}
\put(150,0){
	\node [draw, fill=darkseagreen!70, rectangle, minimum width=1.4cm, minimum height=3.9cm, rounded corners=0.1cm] (cc) at (0.1,0) {};
	\node [datagreengreen]	(A) at (0.1,1)	{2 \nodepart{second} 2};	
	
	\node [draw, rectangle, minimum width=1.1cm] (const) at (0.1,0.2) {\scalebox{0.8}{$\hspace{1em} \ge 1$}};
	\node [] (const) at (-0.1,0.2) {\cconeone};
	
	\node [draw, rectangle, minimum width=1.1cm] (const) at (0.1,-0.3) {\scalebox{0.8}{$\hspace{1em} \ge 1$}};	
	\node [] (const) at (-0.1,-0.3) {\ccottt};	
	
	\node [draw, rectangle, minimum width=1.1cm] (const) at (0.1,-0.8) {\scalebox{0.8}{$\hspace{1em} \ge 1$}};
	\node [] (const) at (-0.1,-0.8) {\ccall};
		
	\node [rectangle, minimum width=1.1cm] (const) at (0.1,-1.3) {\scalebox{0.8}{$~\eqx$}};
}
\put(200,0){
	\node [draw, fill=red!30, rectangle, minimum width=1.4cm, minimum height=3.9cm, rounded corners=0.1cm] (cc) at (0.1,0) {};
	\node [dataredred]	(A) at (0.1,1)	{1 \nodepart{second} 2};	
	
	\node [draw, rectangle, minimum width=1.1cm] (const) at (0.1,0.2) {\scalebox{0.8}{$\hspace{1em} \ge 1$}};
	\node [] (const) at (-0.1,0.2) {\cctwotwo};
	
	\node [draw, rectangle, minimum width=1.1cm] (const) at (0.1,-0.3) {\scalebox{0.8}{$\hspace{1em} \ge 1$}};	
	\node [] (const) at (-0.1,-0.3) {\ccoott};	
	
	\node [draw, rectangle, minimum width=1.1cm] (const) at (0.1,-0.8) {\scalebox{0.8}{$\hspace{1em} \ge 2$}};
	\node [] (const) at (-0.1,-0.8) {\cconetwo};
		
	\node [draw, rectangle, minimum width=1.1cm] (const) at (0.1,-1.3) {\scalebox{0.8}{$\hspace{1em} \ge 3$}};
	\node [] (const) at (-0.1,-1.3) {\ccooot};
}
\put(250,0){
	\node [draw, fill=blue!20, rectangle, minimum width=1.4cm, minimum height=3.9cm, rounded corners=0.1cm] (cc) at (0.1,0.0) {};
	\node [datablueblue]	(A) at (0.1,1)	{2 \nodepart{second} 1};	
	
	\node [draw, rectangle, minimum width=1.1cm] (const) at (0.1,0.2) {\scalebox{0.8}{$\hspace{1em} \ge 3$}};
	\node [] (const) at (-0.1,0.2) {\cctwotwo};
	
	\node [draw, rectangle, minimum width=1.1cm] (const) at (0.1,-0.3) {\scalebox{0.8}{$\hspace{1em} \ge 1$}};	
	\node [] (const) at (-0.1,-0.3) {\ccoott};	
	
	\node [draw, rectangle, minimum width=1.1cm] (const) at (0.1,-0.8) {\scalebox{0.8}{$\hspace{1em} \ge 1$}};
	\node [] (const) at (-0.1,-0.8) {\cconetwo};
		
	\node [draw, rectangle, minimum width=1.1cm] (const) at (0.1,-1.3) {\scalebox{0.8}{$\hspace{1em} \ge 1$}};
	\node [] (const) at (-0.1,-1.3) {\ccooot};
}
\put(300,0){
	\node [draw, fill=blue!20, rectangle, minimum width=1.4cm, minimum height=3.28cm, rounded corners=0.1cm] (cc) at (0.1,0.26) {};
	\node [datablueblue]	(A) at (0.1,1)	{3 \nodepart{second} 1};	
	
	\node [draw, rectangle, minimum width=1.1cm] (const) at (0.1,0.2) {\scalebox{0.8}{$\hspace{1em} \ge 1$}};
	\node [] (const) at (-0.1,0.2) {\cconeone};
	
	\node [draw, rectangle, minimum width=1.1cm] (const) at (0.1,-0.3) {\scalebox{0.8}{$\hspace{1em} \ge 3$}};	
	\node [] (const) at (-0.1,-0.3) {\cctwotwo};	
	
	\node [draw, rectangle, minimum width=1.1cm] (const) at (0.1,-0.8) {\scalebox{0.8}{$\hspace{1em} \ge 1$}};
	\node [] (const) at (-0.1,-0.8) {\ccoott};
}
\put(350,0){
	\node [draw, fill=gray!20, rectangle, minimum width=1.4cm, minimum height=2.7cm, rounded corners=0.1cm] (cc) at (0.1,0.5) {};
	\node [datagraygray]	(A) at (0.1,1)	{3 \nodepart{second} 4};	
	
	\node [draw, rectangle, minimum width=1.1cm] (const) at (0.1,0.2) {\scalebox{0.8}{$\hspace{1em} \ge 1$}};
	\node [] (const) at (-0.1,0.2) {\cconeone};
	
	\node [draw, rectangle, minimum width=1.1cm] (const) at (0.1,-0.3) {\scalebox{0.8}{$\hspace{1em} \ge 1$}};	
	\node [] (const) at (-0.1,-0.3) {\ccoott};		
}
\node []	(c) at (12.1,-1)	{};	
\end{tikzpicture}
}
\scalebox{0.8}{
\begin{tikzpicture}[node distance=2cm, scale=1.1]
\put(-60,0){
\node [] (cc) at (0,0) {\cconeone: $\{\relsaord{1}{1}\}$};
\node [] (cc) at (0,-0.6) {\cctwotwo: $\{\relsaord{2}{2}\}$};	
\node [] (cc) at (0,-1.2) {\cconetwo: $\{\relsaord{1}{2}\}$};

\node [] (cc) at (0,-1.8) {\ccoott: $\{\relsaord{1}{1},\relsaord{2}{2}\}$};	
\node [] (cc) at (0,-2.4) {\ccooot: $\{\relsaord{1}{1},\relsaord{1}{2}\}$};
\node [] (cc) at (0,-3.0) {\ccottt: $\{\relsaord{2}{2},\relsaord{1}{2}\}$};	

\node [] (cc) at (0,-3.6) {\ccall: $\{\relsaord{1}{1},\relsaord{2}{2},\relsaord{1}{2}\}$};
}
\end{tikzpicture}
}
\end{tabular}
}
\end{minipage}
}
}
	\caption{(a)~A data structure over $\Unary = \emptyset$. (b)~Adding unary predicates for a given element $a$. (c)~Adding counting constraints to $a$. (d)~A well-typed data structure from $\Data{\{\eqx\} \cup \myCountConst{}{}{3}}$.
\label{fig:well-typed}}
\end{figure*}

\newcommand{\AddPred}{\Theta}

Let us now go through the reduction step by step.\\

\noindent{\bf Step 1: Transform Binary into Unary Relations}\\

In the first step, we get rid of the binary relations by representing them as unary ones.
In fact, in a formula $\locformr{\psi}{x}{1}$ from $\rndFO{2}{\Unary,\Binary}{1}$,
$\psi$ only talks about elements that are directly related to
$a = I(x)$ in terms of pairs from $\Binary$.
We can hence rewrite $\psi$ into $\psi'$ so that all comparisons are wrt.\ $x$,
i.e., they are of the form $\rels{i}{j}{x}{y}$. 
Then, a pair $\rels{i}{j}{}{} \in \Binary$
can be seen as a unary predicate that holds at $b$ iff $\rels{i}{j}{a}{b}$.
In this way, we eliminate the binary relations and replace $\psi'$ with a
first-order formula $\psi''$ over unary predicates.

\begin{example}
Adding unary relations to a data structure for a given element $a$ is illustrated in
Figure~\ref{fig:well-typed}(b) (recall that $\Sigma = \emptyset$). We
can see for instance that we add to the yellow nodes the labels
$\rels{1}{1}{}{}$  and  $\rels{1}{2}{}{}$ as the first value of $a$ is
equal to the first and second values of these nodes.
\end{example}

Thanks to the unary predicates, we can now apply Lemma~\ref{lem:threshold}
(which was a consequence of locality of first-order logic over unary symbols only).
That is, to know whether
$\psi''$ holds when $x$ is interpreted as $a$, it is enough to know how often every unary predicate is present
in the environment of $a$, counted only up to some $M \ge 1$.
However, we will then give up the information of whether the two data values at
$a$ coincide or not. Therefore, we introduce a unary predicate $\eqx$, which
shall label those events whose two data values coincide.
Accordingly, we say that $\AA=(A,(P_{\unary}),\ifunct,\ofunct) \in \nData{2}{\Theta \cup \{\eqx\}}$
is \emph{eq-respecting} if, for all $a \in A$, we have
$a \in P_\eqx$ iff $\ifunct(a) = \ofunct(a)$.

Once we add this information to $a$, it is enough to know the size of
$\myEnv{\AA}{\Sigma}{\Binary}{a}{\USet}{\BSet}$ for every $\USet \subseteq \Unary$ and nonempty $\BSet \subseteq \Binary$, measured up to $M$.
To reason about these sizes, we introduce a unary predicate
$\newcc{U}{R}{\ge}{m}$ 
for all $U \subseteq \Sigma$, nonempty sets $R \subseteq \Gamma$,
and $m \in \{1,\ldots,M\}$ (which is interpreted as ``${\ge}\, m$'').
We also call such a predicate a \emph{counting constraint} and
denote the set of all counting constraints by $\myCountConst{\Unary}{\Binary}{M}$
(recall that we fixed $\Unary$ and $\Binary$).
For a finite set $\Theta$ with $\Sigma \subseteq \Theta$, we call
$\AA=(A,(P_{\unary}),\ifunct,\ofunct) \in \Data{\Theta \cup \myCountConst{}{}{M}}$
\emph{cc-respecting} if, for all $a \in A$,
we have $a \in P_{\newcc{U}{R}{\ge}{m}}$ iff $|\myEnv{\AA}{\Sigma}{\Binary}{a}{\USet}{\BSet}| \ge m$.

Finally, we call $\AA \in \nData{2}{\Theta \cup \{\eqx\} \cup \myCountConst{}{}{M}}$
\emph{well-typed} if it is eq-respecting and cc-respecting.

\begin{example}
In Figure~\ref{fig:well-typed}(c), where we suppose $M = 3$ and $\Unary = \emptyset$, the element $a$ satisfies
the counting constraints $\newcc{\emptyset}{\{\relsaord{2}{2}\}}{\ge}{1}$, $\newcc{\emptyset}{\{\relsaord{1}{1},\relsaord{2}{2}\}}{\ge}{1}$,
$\newcc{\emptyset}{\{\relsaord{1}{2}\}}{\ge}{2}$, and $\newcc{\emptyset}{\{\relsaord{1}{1},\relsaord{1}{2}\}}{\ge}{3}$,
as well as all inherited constraints for smaller constants (which we omitted).
We write $\newcc{\emptyset}{\BSet}{\ge}{m}$ as $\BSet \ge m$.
In fact, pairs from $\BSet$ are represented as black bars in the obvious way
(cf.\ Figure~\ref{fig:well-typed}(d));
moreover, for each constraint, the corresponding elements have the
same color. For instance, $a$ satisfies the counting constraint
$\newcc{\emptyset}{\{\relsaord{1}{1},\relsaord{1}{2}\}}{\ge}{3}$ as
they are at least three nodes (the yellow ones) having as first
and second values the same value as the first one of $a$.
Finally, the data structure from Figure~\ref{fig:well-typed}(d) is well-typed, i.e.,
eq- and cc-respecting. Again, we omit inherited constraints.
\end{example}

To summarize, we have the following reduction:

\begin{lemma}
\label{lem:new-fo-count}
  For each formula $\phi \in \rndFO{2}{\Unary,\Binary}{1}$, we can effectively compute $M \in \N$ and $\chi \in \ndFO{2}{\Unary\cup \{\eqx\} \cup \myCountConst{\Unary}{\Binary}{M},\emptyset}$
  such that $\phi$ is satisfiable iff $\chi$ has a well-typed model.
\end{lemma}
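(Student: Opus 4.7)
The plan is to process each local modality $\locformr{\psi}{x}{1}$ of $\phi$ independently so that it becomes a first-order formula with one free variable $x$ over the enriched unary signature, and then to apply Lemma~\ref{lem:threshold} to extract threshold formulas and replace each one by a counting-constraint atom on $x$. At the end, $M$ is chosen as the largest threshold that occurs anywhere.

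First I would exploit the structure of the $1$-view $\vprojr{\AA}{a}{1}$, where $a = I(x)$. By definition of $\inj$, any data value placed outside $\Ball{1}{a}{\AA}$ is a pairwise distinct fresh integer not appearing anywhere else, so within $\vprojr{\AA}{a}{1}$ a binary atom $\rels{i}{j}{y}{z}$ between distinct elements $y,z$ can be satisfied only when the shared data value is one of $f_1(a)$ or $f_2(a)$. This allows me to rewrite $\psi$ into an equivalent formula $\psi'$ in which every binary atom has $x$ on at least one side: for the generic case $y,z \neq x$, I replace $\rels{i}{j}{y}{z}$ by the disjunction over $k,\ell \in \{1,2\}$ of ``$\rels{k}{i}{x}{y}$ and $\rels{\ell}{j}{x}{z}$ and $f_k(x)=f_\ell(x)$'', where the coincidence $f_k(x)=f_\ell(x)$ with $k \neq \ell$ is expressed by the new predicate $\eqx(x)$; the asymmetric comparison $\rels{2}{1}{x}{\cdot}$ missing from $\Binary$ is recovered by routing through the same $x$ using the available symbols. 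The degenerate cases $y=x$, $z=x$, $y=z$ are treated separately but in the same spirit.

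Next I would replace every remaining atom $\rels{i}{j}{x}{y}$ in $\psi'$ by a fresh unary predicate applied to $y$, yielding a formula $\psi''(x)$ over a purely unary enlarged signature. Applying Lemma~\ref{lem:threshold} to $\psi''(x)$ turns it into an equivalent Boolean combination of unary atoms on $x$ and threshold formulas $\exists^{\ge k} y.\varphi_U(y)$, where $U$ ranges over subsets of the enriched signature. Every such $U$ records both a choice of original labels from $\Unary$ and a choice of the newly introduced ``relation'' predicates, the latter corresponding to a set $R \subseteq \Binary$; hence the threshold formula counts exactly the elements of $\myEnv{\AA}{\Unary}{\Binary}{a}{U \cap \Unary}{R}$, and can be replaced by the unary atom $\newcc{U \cap \Unary}{R}{\ge}{k}(x)$. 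Letting $M$ be the maximum threshold $k$ occurring across all local modalities of $\phi$, this delivers $\chi \in \ndFO{2}{\Unary \cup \{\eqx\} \cup \myCountConst{\Unary}{\Binary}{M}, \emptyset}$.

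For correctness, in the forward direction I would start from $\AA \models \phi$, extend $\AA$ to $\AA'$ by interpreting $\eqx$ and each counting-constraint predicate canonically from $\AA$, so that $\AA'$ is well-typed by construction; each rewriting step above is an equivalence under this interpretation, giving $\AA' \models \chi$. In the backward direction, any well-typed model $\AA'$ of $\chi$ yields, by forgetting the auxiliary predicates, a $2$-data structure in $\nData{2}{\Unary}$, and the same chain of equivalences, now justified by eq- and cc-respectingness, delivers $\AA \models \phi$. The hard part will be the first step: the rewriting has to treat uniformly the asymmetry of $\Binary$ (no $\relsaord{2}{1}$), the degenerate cases where fresh values play no role, and the fact that $\psi$ may use arbitrarily many variables, so the translation must be purely syntactic and compositional in $\psi$.
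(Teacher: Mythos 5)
Your proposal is correct and follows essentially the same route as the paper's proof: rewrite every binary atom under a radius-1 modality so that all comparisons go through the centre $x$ (exploiting that out-of-ball values in the $1$-view are fresh and pairwise distinct), turn the relations-to-$x$ into unary predicates, apply Lemma~\ref{lem:threshold}, replace each threshold by a counting-constraint atom and self-comparisons at $x$ by $\eqx(x)$, take $M$ as the largest threshold, and argue both directions of correctness via the canonical well-typed expansion and, conversely, by forgetting the auxiliary predicates. The only detail to add is the case of a threshold formula whose type $U$ satisfies $U \cap \Binary = \emptyset$: since every element of the $1$-view is related to the centre by some relation of $\Binary$ (and there is no counting constraint with empty relation set in $\myCountConst{\Unary}{\Binary}{M}$), such thresholds must simply be replaced by $\mathit{false}$, which is how the paper handles them.
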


\begin{proof}
Consider $\locformr{\psi}{x}{1}$
where $\psi$ is a formula from $\ndFO{2}{\Unary,\Binary}$
with one free variable $x$.
Wlog., we assume that $x$ is not quantified in $\psi$.
We replace, in $\psi$, every occurrence of a formula $\rels{i}{j}{y}{z}$ with $y \neq x$ by
$$\bigvee_{\substack{k\in\{1,2\}\,|\\ \rels{k}{i}{}{}, \rels{k}{j}{}{}\in\Binary}} \rels{k}{i}{x}{y} ~\wedge~ \rels{k}{j}{x}{z}\,.$$
Call the resulting formula $\psi'$.
Replace, in $\psi'$,
every formula $\rels{i}{j}{x}{y}$ by $\relsaord{i}{j}(y)$ to obtain
an $\ndFO{2}{\Unary \cup \Binary,\emptyset}$ formula $\psi''$.
Consider $\AA=(A,(P_{\unary})_{\unary \in \Unary \cup \Binary},\ifunct,\ofunct,) \in \nData{2}{\Unary \cup \Binary}$ and an interpretation function $I$ such that,
for all $b \in A$ and $\relsaord{i}{j} \in \Binary$,
we have $b \in P_{\relsaord{i}{j}}$ iff $\rels{i}{j}{I(x)}{b}$.
Then, 
$$
\vprojr{\AA}{I(x)}{1} \models_I \psi(x)
~\Longleftrightarrow~
\vprojr{\AA}{I(x)}{1} \models_I \psi'(x)
~\Longleftrightarrow~
\vprojr{\AA}{I(x)}{1} \models_I \psi''(x)\,.
$$

According to Lemma~\ref{lem:threshold},
we can effectively transform $\psi''$ into an equivalent\linebreak$\ndFO{2}{\Unary \cup \Binary, \emptyset}$ formula $\hat\psi''$
that is a Boolean combination of
formulas of the form $\unary(x)$ with $\unary \in \Unary \cup \Gamma$ and
threshold formulas of the form
$\exists^{\ge k} y. \phi_U(y)$ where $U \subseteq \Unary \cup \Binary$ and
$\phi_U(y) = \bigwedge_{\unary \in U} \unary(y) \wedge \bigwedge_{\unary \in (\Unary \cup \Binary) \setminus U} \neg \unary(y)$.
Let $M$ be the maximal such $k$ (or $M = 0$ if there is no threshold formula).
Again, we assume that $x$ is not quantified in $\hat\psi''$.

We obtain the $\ndFO{2}{\Unary \cup \{\eqx\} \cup \myCountConst{\Unary}{\Binary} {M},\emptyset}$
formula $\chi'$ from $\hat\psi''$ by replacing
\begin{itemize}
\item $\relsaord{1}{2}(x)$ by $\eqx(x)$, and $\relsaord{1}{1}(x)$ and $\relsaord{2}{2}(x)$ by $\mathit{true}$,
\item $\exists^{\ge k} y.\phi_U(y)$ by
$\begin{cases}
\mathit{false} & \text{if } U \cap \Binary = \emptyset\\
\newcc{\USet \cap \Unary}{\USet \cap \Binary}{\ge}{k}(x)
& \text{if } U \cap \Binary \neq \emptyset
\end{cases}$
\end{itemize}
We can then eliminate redundant $\mathit{true}$ and $\mathit{false}$.
Suppose a well-typed data structure $\AA=(A,(P_{\unary}),\ifunct,\ofunct) \in \nData{2}{\Unary \cup \Binary \cup \{\eqx\} \cup \myCountConst{\Unary}{\Binary} {M}}$ and an interpretation function $I$ such that,
for all $b \in A$ and $\relsaord{i}{j} \in \Binary$, we have $b \in P_{\relsaord{i}{j}}$ iff $\rels{i}{j}{I(x)}{b}$.
Then, 
$$
\vprojr{\AA}{I(x)}{1} \models_I \hat\psi''(x)
~\Longleftrightarrow~
\vprojr{\AA}{I(x)}{1} \models_I \chi'(x)
\,.$$
Moreover, for $\USet \subseteq \Unary$, a nonempty set $\BSet \subseteq \Binary$, and $k \in \N$, we have
$$
\vprojr{\AA}{I(x)}{1} \models_I \newcc{\USet}{\BSet}{\ge}{k}(x)
~\Longleftrightarrow~
\AA \models_I \newcc{\USet}{\BSet}{\ge}{k}(x)
\,.$$
We deduce that,
for all well-typed $\AA \in \nData{2}{\Unary \cup \Binary \cup \{\eqx\} \cup \myCountConst{\Unary}{\Binary} {M}}$ and
interpretation functions $I$,
$$\AA \models_I \locformr{\psi}{x}{1}
~\Longleftrightarrow~
\AA \models_I \chi'(x)\,.$$
We can apply this translation to all the terms $\locformr{\psi}{x}{1}$ present in the formula $\phi \in \rndFO{2}{\Unary,\linebreak[0]\Binary}{1}$ to obtain a formula $\chi \in \ndFO{2}{\Unary\cup \{\eqx\} \cup \myCountConst{\Unary}{\Binary}{M},\emptyset}$. From what we have seen if $\chi$ has a well-typed model then $\phi$ is satisfiable (note to obtain a data structure in $\nData{2}{\Unary}$ from a well-typed model in  $\nData{2}{\Unary \cup \Binary \cup \{\eqx\} \cup \myCountConst{\Unary}{\Binary} {M}}$, we simply forget the unary predicates not in $\Unary$). Furthermore from a data structure in $\nData{2}{\Unary}$, we can easily build a well-typed model in $\nData{2}{\Unary \cup \Binary \cup \{\eqx\} \cup \myCountConst{\Unary}{\Binary} {M}}$ (by respecting the typing rules) and this is allows to conclude that if $\phi$ is satisfiable then $\chi$ has a well-typed model.
\end{proof}

\noindent \textbf{Step 2: Well-Diagonalized Structures}\\

In $\myCountConst{}{}{M}$, we still have the diagonal
relation $(1,2) \in \Gamma$.
Our goal is to get rid of it
so that we only deal with the diagonal-free set
$\Binary_{df}=\{\relsaord{1}{1},\relsaord{2}{2}\}$. The idea is again to
extend a given structure $\AA$, but now we add new elements,
one for each value $n \in \Values{\AA}$, which we tag with a unary symbol
$\ediag$ and whose two data values are $n$. Diagonal equality
will be ensured through making a detour via these `diagonal' elements
(hence the name $\ediag$).

Formally, when we start from some $\AA=(A,(P_{\unary}),\ifunct,\ofunct) \in \nData{2}{\extSigma \cup \{\eqx\}}$,
the data structure $\addediag{\AA} \in \nData{2}{\extSigma \cup \{\eqx,\ediag\}}$
is defined as $(A',(P'_{\unary}),\ifunct',\ofunct')$ where 
$A' = A \uplus \Values{\AA}$, $f_i'(a) = f_i(a)$ for all $a \in A$ and $i \in \{1,2\}$,
$f_1'(a) = f_2'(a) = a$ for all $a \in \Values{\AA}$,
$P'_{\unary}=P_{\unary}$ for all $\unary \in \Ueq\setminus\{\eqx\} $,
$P'_{\ediag} = \Values{\AA}$, and
$P'_{\eqx}=P_{\eqx}\cup \Values{\AA}$.

\begin{example}
The structure $\addediag{\AA}$ is illustrated in Figure~\ref{fig:diagonal}(a), with $\extSigma = \emptyset$.
\end{example}

\begin{figure*}[t]
\centering
\scalebox{0.81}{
\begin{tabular}{c}
\begin{tikzpicture}[node distance=2cm]
  \pgfsetxvec{\pgfpoint{1.1cm}{0cm}}
  \pgfsetyvec{\pgfpoint{0cm}{1.1cm}}
  \pgfsetzvec{\pgfpoint{0cm}{1.1cm}}

	\node []	(b) at (-5.25,1.8)	{(a)};
	\node []	(b) at (-4,1.6)	{$\addediag{\AA}$};	
	\node []	(b) at (0.2,1.6)	{$\AA$};		

	\node [datayellowyellow, label=above:$b_1$]	(A) at (-4.5,0.3)	{1 \nodepart{second} 1};
	\node [datanone, label=below:$\ediag$]	(A) at (-4.5,0.3)	{1 \nodepart{second} 1};
	\node [datanone, label=below:$\eqx$]	(A) at (-4.5,-0.1)	{};	

	\node [datagreengreen, label=above:$b_2$]	(A) at (-3.8,0.3)	{2 \nodepart{second} 2};
	\node [datanone, label=below:$\ediag$]	(A) at (-3.8,0.3)	{2 \nodepart{second} 2};
	\node [datanone, label=below:$\eqx$]	(A) at (-3.8,-0.1)	{};	

	\node [datablueblue, label=above:$b_3$]	(A) at (-3.1,0.3)	{3 \nodepart{second} 3};
	\node [datanone, label=below:$\ediag$]	(A) at (-3.1,0.3)	{3 \nodepart{second} 3};
	\node [datanone, label=below:$\eqx$]	(A) at (-3.1,-0.1)	{};	

	\node [dataredred, label=above:$b_4$]	(A) at (-2.4,0.3)	{4 \nodepart{second} 4};
	\node [datanone, label=below:$\ediag$]	(A) at (-2.4,0.3)	{4 \nodepart{second} 4};
	\node [datanone, label=below:$\eqx$]	(A) at (-2.4,-0.1)	{};	


	\node [draw, rectangle, minimum width=4.5cm, minimum height=4cm, rounded corners=0.2cm] (cc) at (0.1,0.2) {};
	\node [draw, rectangle, minimum width=7.8cm, minimum height=4.3cm, rounded corners=0.2cm] (cc) at (-1.3,0.2) {};	


	\node [datayellowgreen, label=above:$a_7$]	(A) at (0.2,0.2)	{1 \nodepart{second} 2};

	\node [datayellowyellow, label=above:$a_1$]    		(B) at (-1.5,-0.4)	{1 \nodepart{second} 1};
	\node [datanone, label=below:$\eqx$]    		(B) at (-1.5,-0.4)	{1 \nodepart{second} 1};	

	\node [datayellowyellow, label=above:$a_2$]    		(B) at (-0.6,-0.7)	{1 \nodepart{second} 1};
	\node [datanone, label=below:$\eqx$]    		(B) at (-0.6,-0.7)	{1 \nodepart{second} 1};	

	\node [datayellowyellow, label=above:$a_3$]    		(B) at (1,-0.7)	{1 \nodepart{second} 1};
	\node [datanone, label=below:$\eqx$]    		(B) at (1,-0.7)	{1 \nodepart{second} 1};	

	\node [datagreengreen, label=above:$a_6$]				(E) at (-0.6,1.1)	{2 \nodepart{second} 2};
	\node [datanone, label=below:$\eqx$]    		(B) at (-0.6,1.1)	{2 \nodepart{second} 2};	

	\node [databluered, label=above:$a_8$]				(F) at (1,0.8)	{3 \nodepart{second} 4};
	\node [datagreenyellow, label=above:$a_5$]				(G) at (-1.5,1.1)	{2 \nodepart{second} 1};

	\node [datablueyellow, label=above:$a_4$]				(H) at (1.8,0)	{3 \nodepart{second} 1};
\end{tikzpicture}
\end{tabular}
}
	\vrule
\scalebox{0.81}{
\begin{tabular}{c}
\begin{tikzpicture}[node distance=2cm]
  \pgfsetxvec{\pgfpoint{1.1cm}{0cm}}
  \pgfsetyvec{\pgfpoint{0cm}{1.1cm}}
  \pgfsetzvec{\pgfpoint{0cm}{1.1cm}}

	\node []	(b) at (-5.25,1.8)	{(b)};

	\node [datablueyellow, label=above:$b_1$]	(A) at (-4.5,0.3)	{3 \nodepart{second} 1};
	\node [datanone, label=below:$\ediag$]	(A) at (-4.5,0.3)	{3 \nodepart{second} 1};
	\node [datanone, label=below:$\eqx$]	(A) at (-4.5,-0.1)	{};	

	\node [datayellowgreen, label=above:$b_2$]	(A) at (-3.8,0.3)	{1 \nodepart{second} 2};
	\node [datanone, label=below:$\ediag$]	(A) at (-3.8,0.3)	{1 \nodepart{second} 2};
	\node [datanone, label=below:$\eqx$]	(A) at (-3.8,-0.1)	{};	

	\node [dataredblue, label=above:$b_3$]	(A) at (-3.1,0.3)	{4 \nodepart{second} 3};
	\node [datanone, label=below:$\ediag$]	(A) at (-3.1,0.3)	{4 \nodepart{second} 3};
	\node [datanone, label=below:$\eqx$]	(A) at (-3.1,-0.1)	{};	

	\node [datagreenred, label=above:$b_4$]	(A) at (-2.4,0.3)	{2 \nodepart{second} 4};
	\node [datanone, label=below:$\ediag$]	(A) at (-2.4,0.3)	{2 \nodepart{second} 4};
	\node [datanone, label=below:$\eqx$]	(A) at (-2.4,-0.1)	{};	


	\node [draw, rectangle, minimum width=4.5cm, minimum height=4cm, rounded corners=0.2cm] (cc) at (0.1,0.2) {};
	\node [draw, rectangle, minimum width=7.8cm, minimum height=4.3cm, rounded corners=0.2cm] (cc) at (-1.3,0.2) {};


	\node [databluegreen, label=above:$a_7$]	(A) at (0.2,0.2)	{3 \nodepart{second} 2};

	\node [datablueyellow, label=above:$a_1$]    		(B) at (-1.5,-0.4)	{3 \nodepart{second} 1};
	\node [datanone, label=below:$\eqx$]    		(B) at (-1.5,-0.4)	{3 \nodepart{second} 1};	

	\node [datablueyellow, label=above:$a_2$]    		(B) at (-0.6,-0.7)	{3 \nodepart{second} 1};
	\node [datanone, label=below:$\eqx$]    		(B) at (-0.6,-0.7)	{3 \nodepart{second} 1};	

	\node [datablueyellow, label=above:$a_3$]    		(B) at (1,-0.7)	{3 \nodepart{second} 1};
	\node [datanone, label=below:$\eqx$]    		(B) at (1,-0.7)	{3 \nodepart{second} 1};	

	\node [datayellowgreen, label=above:$a_6$]				(E) at (-0.6,1.1)	{1 \nodepart{second} 2};
	\node [datanone, label=below:$\eqx$]    		(B) at (-0.6,1.1)	{1 \nodepart{second} 2};	

	\node [dataredred, label=above:$a_8$]				(F) at (1,0.8)	{4 \nodepart{second} 4};
	\node [datayellowyellow, label=above:$a_5$]				(G) at (-1.5,1.1)	{1 \nodepart{second} 1};
	\node [dataredyellow, label=above:$a_4$]				(H) at (1.8,0)	{4 \nodepart{second} 1};
\end{tikzpicture}
\end{tabular}
}
\caption{(a) Adding diagonal elements. (a)$\leftarrow$(b) Making a
  data structure eq-respecting with the
following permutation on the first element $1\mapsto 2, 2
 \mapsto 4, 3\mapsto 1, 4 \mapsto 3$.\label{fig:diagonal}} 
\end{figure*}

With this, we say that $\BB \in \nData{2}{\extSigma \cup \{\eqx,\ediag\}}$ is \emph{well-diagonalized} if
it is of the form $\addediag{\AA}$ for some \emph{eq-respecting} $\AA \in \nData{2}{\extSigma \cup \{\eqx\}}$.
Note that then $\BB$ is eq-respecting, too.

\begin{example}
The data structure $\addediag{\AA}$ from Figure~\ref{fig:diagonal}(a) is well-diagonalized.
The one from Figure~\ref{fig:diagonal}(b) is not well-diagonalized
(in particular, it is not eq-respecting).
\end{example}

We will need a way to ensure that the considered data structures are well-diagonalized. To this end, we introduce the following
sentence from $\ndFOvar{2}{\extSigma \cup \{\eqx,\ediag\},\Binary_{df}}{2}$ :
$$
\begin{array}{lcl}
  \phied^\Ueq :=& &
  \bigwedge_{i \in \{1,2\}} \forall x.\exists y.(\ediag(y) \wedge \rels{i}{i}{x}{y}) \wedge \bigl(\forall x.\forall y.(\ediag(x) \wedge \ediag(y) \wedge \rels{i}{i}{x}{y}) \rightarrow x=y\bigr) \\[0.7ex]
  & \wedge\!\!\!\!\! &\forall x.\eqx(x) \leftrightarrow \exists y.(\ediag(y) \wedge \rels{1}{1}{x}{y} \wedge \rels{2}{2}{x}{y})\\[0.7ex]
  & \wedge\!\!\!\!\! &\forall x.\ediag(x) \rightarrow
  \bigwedge_{\unary \in \Ueq} \neg\unary(x)
\end{array}
$$

Informally, this formula ensures the following properties:
\begin{itemize}
  \item for each element, and each of its value, there exists an
    element labeled by $\ediag$ having the same value at the same
    position;
  \item there is no two distinct elements labeled by $\ediag$ and
    having an equal value at the same position;
  \item for all elements labeled by $\eqx$, there exists an element
    labeled by $\ediag$ having the two same values at the same positions;
   \item the elements labeled by $\ediag$ are not labeled by any other
     labels from $\extSigma$.
\end{itemize}

Every structure that is well-diagonalized satisfies $\phied^\Ueq$.
The converse is not true in general. In particular, a model of $\phied^\Ueq$ is not necessarily eq-respecting.
However, if a structure satisfies a formula $\phi \in
\ndFO{2}{\extSigma \cup \{\eqx,\ediag\},\Binary_{df}}$, then it is
possible to perform a permutation on the first (or the second) values
of its elements while preserving $\phi$. This allows us to get:

\begin{lemma}\label{lem-diagonalization}
  Let $\BB \in \nData{2}{\extSigma \cup \{\eqx,\ediag\}}$ and $\phi \in \ndFO{2}{\extSigma \cup \{\eqx,\ediag\},\Binary_{df}}$. If $\BB \models \phi \wedge \phied^\Ueq$, then there exists an eq-respecting $\AA \in \nData{2}{\extSigma \cup \{\eqx\}}$ such that $\addediag{\AA} \models \phi$. 
\end{lemma}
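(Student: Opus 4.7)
The plan is to apply a bijective relabeling of the position-2 data values of $\BB$ so that every $\ediag$-labeled element ends up with matching first and second values, then obtain $\AA$ by discarding the $\ediag$-labeled elements.

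Let $D \subseteq B$ be the set of $\ediag$-labeled elements and set $V_i := \funct{i}(B)$ for $i \in \{1,2\}$. The uniqueness clause of $\phied^\Ueq$ makes $\funct{i}|_D$ injective, while its existence clause (applied in particular to each diagonal in $D$) forces $\funct{i}(D) = V_i$. Hence $\funct{i}|_D : D \to V_i$ is a bijection for both $i$, so $\sigma := \funct{1}|_D \circ (\funct{2}|_D)^{-1} : V_2 \to V_1$ is a bijection; extend it arbitrarily to a permutation $\pi : \N \to \N$. Let $\BB' := (B, (P_\unary), \funct{1}, \pi \circ \funct{2})$. By construction every $d \in D$ satisfies $\funct{1}(d) = (\pi \circ \funct{2})(d)$ in $\BB'$. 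Since $\phi$ and $\phied^\Ueq$ use only binary symbols from $\Binary_{df}$, which never compare a position-1 value with a position-2 value, their truth is invariant under a permutation applied uniformly to all second values; thus $\BB' \models \phi \wedge \phied^\Ueq$.

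Set $A := B \setminus D$ and take $\AA$ to be the restriction of $\BB'$ to $A$, forgetting the $\ediag$ predicate. That $\AA$ is eq-respecting follows because, by the $\eqx$-clause of $\phied^\Ueq$ applied to $\BB'$, an element $a \in A$ lies in $P_\eqx$ iff there is $d \in D$ with $\funct{1}(a) = \funct{1}(d)$ and $(\pi \circ \funct{2})(a) = (\pi \circ \funct{2})(d)$; since the two projections agree on every $d \in D$ and $\funct{i}|_D$ is bijective, this collapses to $\funct{1}(a) = (\pi \circ \funct{2})(a)$. It then remains to exhibit an isomorphism between $\addediag{\AA}$ and $\BB'$: the non-$\ediag$ parts agree by construction, and each $d \in D$ in $\BB'$ is matched with the diagonal element of $\addediag{\AA}$ representing the value $\funct{1}(d) \in \Values{\AA}$, yielding the same two data values and the same labels $\{\ediag, \eqx\}$. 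Transporting $\BB' \models \phi$ across this isomorphism yields $\addediag{\AA} \models \phi$.

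The technically delicate point I anticipate is verifying that the map $d \mapsto \funct{1}(d)$ lands in $\Values{\AA}$, i.e., that every diagonal value of $\BB'$ occurs among the values of some element of $A$; this inclusion is not directly forced by $\phied^\Ueq$. It is handled by adjoining in $\AA$, if necessary, one proxy non-diagonal element per such ``lonely'' value so as to saturate $\Values{\AA}$, chosen (in the appropriate signature) to leave the truth of $\phi$ unchanged. Once this saturation is performed, all remaining verifications reduce to checking that $\pi$ preserves the relations in $\Binary_{df}$, which is immediate.
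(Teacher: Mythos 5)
Your argument is, up to exchanging the roles of the two coordinates, the paper's own proof: from the existence and uniqueness clauses of $\phied^\Ueq$ you extract a bijection between the first and second values realized by the $\ediag$-elements, extend it to a permutation of $\N$, apply it uniformly to one coordinate (the paper permutes the first values, you permute the second), observe that satisfaction of $\phi\wedge\phied^\Ueq$ is preserved because only relations from $\Binary_{df}$ occur, and then delete the $\ediag$-elements; your verification that the remaining structure is eq-respecting is the same argument as in the paper.

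The only step you add is the final ``saturation'' patch, and that step does not hold up. Your worry is legitimate: identifying the permuted structure with $\addediag{\AA}$ needs every value carried by an $\ediag$-element to lie in $\Values{\AA}$, and $\phied^\Ueq$ does not enforce this; the paper's proof does not discuss the point at all, it simply removes $P_\ediag$ and asserts that the permuted structure equals $\addediag{\AA}$. But adjoining a proxy non-$\ediag$ element for each ``lonely'' value cannot be done ``so as to leave the truth of $\phi$ unchanged'' for an arbitrary sentence $\phi\in\ndFO{2}{\extSigma\cup\{\eqx,\ediag\},\Binary_{df}}$: the sentence $\phi$ may constrain precisely the non-$\ediag$ part you are enlarging (e.g.\ ``there is exactly one non-$\ediag$ element, it is labelled $\eqx$, and there are at least two $\ediag$-elements'' is consistent with $\phied^\Ueq$ --- take a $(1,1)$-element together with diagonal elements $(1,1)$ and $(2,2)$ --- yet is destroyed by any added proxy), and a proxy moreover carries a second data value of its own which may again be fresh. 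So no choice of proxy is guaranteed to be harmless, and the one case your last paragraph tries to cover is exactly the case where neither your construction nor the paper's literally yields $\addediag{\AA}$; the honest resolutions are to argue that such superfluous $\ediag$-elements may be assumed absent, or to check for the concrete relativized formulas to which the lemma is later applied (Proposition~\ref{prop:count-sat}) that discarding them is innocuous. Everything before that paragraph is correct and coincides in substance with the paper's proof.
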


\begin{proof}
  Let $\BB=(A,(P_{\unary}),\ifunct,\ofunct)$ in $\nData{2}{\extSigma \cup \{\eqx,\ediag\}}$ such that $\BB \models \phi \wedge \phied$. We define the sets $I=\{n \in \N \mid \exists b \in P_\ediag . \ifunct(b)=n\}$ and $O=\{n \in \N \mid \exists b \in P_\ediag . \ofunct(b)=n\}$. Since
$$\BB \models \bigwedge_{i \in \{1,2\}} \bigl[\big(\forall x.\exists y. \ediag(y) \wedge \rels{i}{i}{x}{y}\big) \wedge \big(\forall x.\forall y.(\ediag(x) \wedge \ediag(y) \wedge \rels{i}{i}{x}{y}) \rightarrow x=y \big)\bigr]\,,$$
  we deduce that $|I|=|O|$ and furthermore the mapping $\pi : I
  \mapsto O$ defined by $\pi(n)=m$ iff there exists $b \in P_\ediag$
  such that $\ifunct(b)=n$ and $\ofunct(b)=m$ is a well-defined
  bijection. It is well defined because there is a single element $b$
  in $P_\ediag$ such that $\ifunct(b)=n$ and it is a bijection because
  for all $m \in O$, there is a single $b \in P_\ediag$ such that
  $\ofunct(b)=m$.   We can consequently extend $\pi$ to be a
  permutation from $\N$ to $\N$. We then take the model $\AA'=(A,(P_{\unary}),\pi
  \circ \ifunct,\ofunct)$.  Since $\phi \wedge \phied \in \ndFO{2}{\extSigma \cup
  \{\eqx,\ediag\},\Binary_{df}}$ with $\Binary_{df}=\{\relsaord{1}{1},\relsaord{2}{2}\}$
  and since $\BB \models \phi \wedge \phied$, we deduce that $\AA' \models \phi
  \wedge \phied$
  because performing a permutation on the first data values of the
  elements of $\BB$ does not affect the satisfaction of $\phi \wedge \phied$ (this
  is a consequence of the fact that there is no comparison between the
  first values and the second values of the elements). The
  satisfaction  of $\phied$ by $\AA'$ allows us to deduce that $\AA'$ is
  well-diagonalized. We can in  fact safely remove  from $\AA'$ the elements
  of $P_\ediag$ to obtain a structure $\AA \in \nData{2}{\extSigma \cup
    \{\eqx\}}$ which is eq-respecting ( this is due to the fact that $\AA' \models \forall
  x.\eqx(x) \leftrightarrow \exists y.(\ediag(y) \wedge
  \rels{1}{1}{x}{y} \wedge \rels{2}{2}{x}{y}) \wedge \forall x.\ediag(x) \rightarrow
  \bigwedge_{\unary \in \Ueq \setminus \{\eqx\}} \neg\unary(x)$) and
  such that $\AA'=\addediag{\AA}$ .
\end{proof}

\begin{example}
Consider Figure~\ref{fig:diagonal} and let $\Theta = \emptyset$.
The data structure from Figure~\ref{fig:diagonal}(b)
satisfies $\phied^\Ueq$, though it is not
well-diagonalized. Suppose it also satisfies
$\phi \in \ndFO{2}{\{\eqx,\ediag\},\Binary_{df}}$.
By permutation of the first data values ( $1\mapsto 2, 2
 \mapsto 4, 3\mapsto 1, 4 \mapsto 3$), we obtain the
well-diagonalized data structure in Figure~\ref{fig:diagonal}(a).
As $\phi$ does not talk about the diagonal relation,
satisfaction of $\phi$ is preserved.
\end{example}

\newcommand{\tred}[1]{\sem{#1}_{{+}\ediag}}

Finally, we can inductively translate $\phi \in \ndFO{2}{\extSigma \cup \{\eqx\},\emptyset}$ into a formula $\tred{\phi} \in \ndFO{2}{\extSigma \cup \{\eqx,\ediag\},\emptyset}$ that avoids the extra `diagonal' elements: $\tred{\Pform{\unary}{x}} = \Pform{\unary}{x}$,
$\tred{ x=y } = (x=y) $, $\tred{\exists x.\vp}=\exists x.(\neg\Pform{\ediag}{x} \wedge \tred{\vp})$, $\tred{ \vp \vee \vp'}=\tred{\vp} \vee \tred{\vp'}$, and $\tred{\neg \vp}=\neg \tred{\vp}$.
We immediately obtain:

\begin{lemma}\label{lem:new-Tdiag}
 Let $\AA \in \nData{2}{\extSigma \cup \{\eqx\}}$ and
 $\phi \in \ndFO{2}{\extSigma \cup \{\eqx\},\emptyset}$ be a sentence.
  We have $\AA \models \phi$ iff $\addediag{\AA} \models \tred{\phi}$.
\end{lemma}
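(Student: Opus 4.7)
My plan is to prove the statement by structural induction on $\phi$. Since the induction must pass through the quantifier case, I will first strengthen the claim to handle formulas with free variables: for every $\phi \in \ndFO{2}{\extSigma \cup \{\eqx\},\emptyset}$ and every interpretation function $I$ with $I(x) \in A$ for each free variable $x$ of $\phi$, we have $\AA \models_I \phi$ iff $\addediag{\AA} \models_I \tred{\phi}$. The original lemma follows by taking $\phi$ a sentence.

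Before the induction I record two coherence facts, both immediate from the construction of $\addediag{\AA} = (A',(P'_\unary),\ifunct',\ofunct')$. First, the universes satisfy $A \cap \Values{\AA} = \emptyset$ (so $A' = A \uplus \Values{\AA}$) and $P'_\ediag = \Values{\AA}$; consequently, for $a' \in A'$ we have $a' \in A$ iff $a' \notin P'_\ediag$. Second, for every $\unary \in \extSigma \cup \{\eqx\}$ and every $a \in A$, we have $a \in P_\unary$ iff $a \in P'_\unary$: for $\unary \in \extSigma \setminus\{\eqx\}$ this holds by definition ($P'_\unary = P_\unary$), and for $\unary = \eqx$ it holds because $P'_\eqx = P_\eqx \cup \Values{\AA}$ and $a \notin \Values{\AA}$.

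The atomic and Boolean cases of the induction are now immediate: $\tred{\Pform{\unary}{x}} = \Pform{\unary}{x}$ and $\tred{x=y} = (x=y)$ so the equivalence follows from the second fact above (using that $I$ takes values in $A$); the cases $\tred{\vp \vee \vp'}$ and $\tred{\neg \vp}$ use the induction hypothesis directly. The only interesting case is the existential quantifier. Here $\tred{\exists x.\vp} = \exists x.(\neg\Pform{\ediag}{x} \wedge \tred{\vp})$. So $\addediag{\AA} \models_I \tred{\exists x.\vp}$ iff there exists $a' \in A'$ with $a' \notin P'_\ediag$ and $\addediag{\AA} \models_{I[x/a']} \tred{\vp}$. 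By the first coherence fact above, this is equivalent to requiring $a' \in A$, and once we restrict to such $a'$ the interpretation $I[x/a']$ still maps all free variables into $A$, so the induction hypothesis converts this into: there exists $a \in A$ with $\AA \models_{I[x/a]} \vp$, which is precisely $\AA \models_I \exists x.\vp$.

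There is no real obstacle: the guard $\neg\Pform{\ediag}{x}$ inserted by $\tred{\cdot}$ is exactly what is needed to make the quantifier see only the ``original'' universe $A$, and the absence of binary relation symbols in $\phi$ (since $\Binary = \emptyset$) means we never need to reason about edges between original elements and diagonal elements. The whole proof is essentially bookkeeping that the construction of $\addediag{\AA}$ preserves the unary predicates on $A$ and that the fresh diagonal elements are uniquely identified by the predicate $\ediag$.
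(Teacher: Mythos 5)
Your proof is correct, and it is exactly the routine structural induction that the paper leaves implicit (the lemma is stated with "We immediately obtain" and no written proof): the guard $\neg\ediag(x)$ relativizes quantification to the original universe $A$, and the construction of $\addediag{\AA}$ preserves all predicates of $\extSigma \cup \{\eqx\}$ on $A$. No gaps; the only cosmetic point is that interpretation functions should formally be taken with values in $A$ (hence also in $A' \supseteq A$), which is just the bookkeeping you already indicate.
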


\noindent \textbf{Step 3: Getting Rid Of the Diagonal Relation}\\

We will now exploit
well-diagonalized data structures to reason about environments
relative to $\Gamma$ in terms of environments relative to $\Gamma_{df}$.
Recall that $\extSigma$ ranges over finite sets such that $\Sigma
\subseteq \extSigma$.

\begin{lemma}\label{lem:new-env-with-no-diag}
Let $\AA=(A,,(P_{\unary}),\ifunct,\ofunct) \in \nData{2}{\extSigma \cup \{\eqx\}}$
be eq-respecting and $\BB = \addediag{\AA}$. Moreover, let $a \in A$, $U \subseteq \Sigma$, and $R \subseteq \Gamma$ be a nonempty set. We have
$\mathtt{Env}_{\AA,\Unary,\Binary}(a,\USet,\BSet) =$
\[
\begin{cases}
\mathtt{Env}_{\BB,\Unary,\Binary_{df}}(a,\USet,\Binary_{df}) \setminus P_\ediag & \text{if } a \in P_\eqx \text{ and } \makebox[10em][l]{$\BSet=\Binary$} (1)
\\
\mathtt{Env}_{\BB,\Unary,\Binary_{df}}(a,\USet,\Binary_{df}) & \text{if } a \notin P_\eqx \text{ and } \makebox[10em][l]{$\BSet=\Binary_{df}$} (2)
\\
\mathtt{Env}_{\BB,\Unary,\Binary_{df}}(a,\USet,\{\relsaord{1}{1}\}) \cap (P_\eqx\setminus P_\ediag) & \text{if } a \notin P_\eqx \text{ and } \makebox[10em][l]{$\BSet=\{\relsaord{1}{1},\relsaord{1}{2}\}$} (3)
\\
\mathtt{Env}_{\BB,\Unary,\Binary_{df}}(a,\USet,\{\relsaord{2}{2}\}) & \text{if } a \in P_\eqx \text{ and } \makebox[10em][l]{$\BSet=\{\relsaord{2}{2},\relsaord{1}{2}\}$} (4)
\\
\mathtt{Env}_{\BB,\Unary,\Binary_{df}}(a,\USet,\{\relsaord{2}{2}\}) \setminus P_\ediag & \text{if } a \notin P_\eqx \text{ and } \makebox[10em][l]{$\BSet=\{\relsaord{2}{2}\}$} (5)
\\
\mathtt{Env}_{\BB,\Unary,\Binary_{df}}(a,\USet,\{\relsaord{1}{1}\}) \setminus P_\eqx & \text{if } \hspace{5.35em} \makebox[10em][l]{$\BSet=\{\relsaord{1}{1}\}$} (6)
\\
\mathtt{Env}_{\BB,\Unary,\Binary_{df}}(d,\USet,\{\relsaord{2}{2}\}) & \text{if } a \notin P_\eqx \text{ and } \makebox[10em][l]{$\BSet=\{\relsaord{1}{2}\}$} (7)\\[-0.5ex]
\quad \text{for the unique } d \in P_\ediag \text{ such that } \relsaa{1}{1}{\BB}{d}{a}
\\
\emptyset & \text{otherwise } \hspace{11.9em} (8)
\end{cases}
\]
\end{lemma}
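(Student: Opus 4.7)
The proof is a direct case analysis on whether $a \in P_\eqx$ and on the nonempty set $R \subseteq \Binary$. Only two structural observations are needed, both immediate from the definition of $\BB = \addediag{\AA}$. First, on $A$ the relations $\relsaord{1}{1}$ and $\relsaord{2}{2}$ coincide in $\AA$ and $\BB$, so for $b \in A$ the membership of $b$ in any environment is determined purely by the three booleans $\ifunct(a)=\ifunct(b)$, $\ofunct(a)=\ofunct(b)$, $\ifunct(a)=\ofunct(b)$, together with the $\Sigma$-labels of $b$. Second, for every value $v \in \Values{\AA}$ the unique element $d_v \in P_\ediag$ satisfies $\ifunct'(d_v) = \ofunct'(d_v) = v$; this yields the key identity
\[
\relsaa{1}{2}{\AA}{a}{b} \iff \relsaa{1}{1}{\BB}{a}{d_{\ifunct(a)}} \;\wedge\; \relsaa{2}{2}{\BB}{d_{\ifunct(a)}}{b},
\]
so the diagonal relation, absent from $\Binary_{df}$, can always be recovered by a detour through the diagonal element carrying the value $\ifunct(a)$. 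This detour is exactly what case~(7) encodes.

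For each of the seven listed cases I will unfold both sides as sets of $b$'s constrained by equalities/inequalities among $\ifunct(a), \ofunct(a), \ifunct(b), \ofunct(b)$, using that $\AA$ is eq-respecting (so $x \in P_\eqx$ iff $\ifunct(x)=\ofunct(x)$) and that $\BB$ inherits the same property. Each case then reduces to a short arithmetic verification. For example, in case~(1) the hypothesis $\ifunct(a)=\ofunct(a)$ collapses the two constraints ``$\relsaord{2}{2}$ and $\relsaord{1}{2}$'' into the single constraint $\ofunct(a)=\ofunct(b)$, so requiring all of $\Binary$ in $\AA$ is equivalent to requiring all of $\Binary_{df}$ in $\BB$. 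In case~(3), the combination of $\relsaord{1}{1}$ and $\relsaord{1}{2}$ forces $\ifunct(b)=\ofunct(b)$, which explains the extra intersection with $P_\eqx$. In case~(7), one rewrites $\ifunct(a) = \ofunct(b)$ as $\ofunct'(d)=\ofunct'(b)$ for the unique $d \in P_\ediag$ with $\ifunct'(d) = \ifunct(a)$ (existence and uniqueness coming straight from $\addediag{\cdot}$), giving the stated $\mathtt{Env}_{\BB,\Sigma,\Binary_{df}}(d, U, \{\relsaord{2}{2}\})$.

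The second ingredient is to keep the ``alien'' elements of $P_\ediag$ out of the resulting set. They carry no $\Sigma$-label, so they can intrude only when $U = \emptyset$; in each case I check which diagonals, if any, happen to realize the requested $\Binary_{df}$-pattern on the right-hand side. Precisely the cases in which such a diagonal exists, namely (1), (3), (5), where the guilty element is $d_{\ifunct(a)}$ or $d_{\ofunct(a)}$, carry the correction $\setminus P_\ediag$; case~(6) uses the stronger $\setminus P_\eqx$ because any $b \in P_\eqx \cap A$ satisfying $\ifunct(a)=\ifunct(b)$ would automatically satisfy $\ifunct(a) = \ofunct(b)$, and therefore must also be discarded. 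In cases (2), (4), (7) a one-line check using $\ifunct'(d) = \ofunct'(d)$ for diagonals shows that no diagonal satisfies the requested $\Binary_{df}$-pattern, so no correction is needed. Finally, case~(8) collects the six configurations of $(a \in P_\eqx, R)$ that impose contradictory conditions on $\ifunct(b), \ofunct(b)$ (e.g.\ $a \in P_\eqx$ with $R = \{\relsaord{2}{2}\}$ demands $\ofunct(a) = \ofunct(b)$ and $\ifunct(a) \neq \ofunct(b)$, absurd since $\ifunct(a) = \ofunct(a)$). The proof is therefore conceptually routine once the $d_v$-detour has been spelled out; the main obstacle is the bookkeeping across all fourteen configurations, which the statement organizes cleanly into the seven cases plus default.
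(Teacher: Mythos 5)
Your proposal is correct and takes essentially the same route as the paper's proof: a case analysis on $R$ and on whether $a \in P_\eqx$, using the eq-respecting property to collapse or contradict data-value constraints, the structure of $\addediag{\AA}$ to recover the diagonal relation through the unique $d \in P_\ediag$ with $\relsaa{1}{1}{\BB}{d}{a}$ (case 7), and a check of when diagonal (and, in case 6, $P_\eqx$-) elements intrude, which yields the $\setminus P_\ediag$ and $\setminus P_\eqx$ corrections. The only difference is presentational: the paper first records two auxiliary set identities relating $\mathtt{Env}_{\AA,\Unary,\Binary}$, $\mathtt{Env}_{\BB,\Unary,\Binary}$ and $\mathtt{Env}_{\BB,\Unary,\Binary_{df}}$ and then argues which pieces vanish, whereas you unfold membership directly into equalities among $\ifunct(a),\ofunct(a),\ifunct(b),\ofunct(b)$ -- the same verification in different bookkeeping.
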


\begin{example}
\label{ex:env-with-no-diag}
Before to provide the formal proof, let us go through the different cases of Lemma~\ref{lem:new-env-with-no-diag} using Figure~\ref{fig:diagonal}(a),
and letting $\Sigma = \extSigma = \emptyset$.
\begin{enumerate}[(1)]
\item Let $a = a_1$ and $R = \Gamma$. Then, $\mathtt{Env}_{\AA,\Unary,\Binary}(a,\emptyset,\BSet) = \{a_1,a_2,a_3\}$.
We also have that
$\mathtt{Env}_{\BB,\Unary,\Binary_{df}}(a,\emptyset,\Binary_{df}) = \{a_1,a_2,a_3,b_1\}$: These are the elements
that coincide with $a$ \emph{exactly} on the first and the on the second data value when we dismiss the diagonal relation.
Of course, as we consider $\BB$, this includes $b_1$, which we have to exclude.
Thus, $\mathtt{Env}_{\AA,\Unary,\Binary}(a,\emptyset,\BSet) = \mathtt{Env}_{\BB,\Unary,\Binary_{df}}(a,\emptyset,\Binary_{df}) \setminus P_\ediag$.
\item Let $a = a_7$ and $\BSet=\Binary_{df}$. Then, $\mathtt{Env}_{\AA,\Unary,\Binary}(a,\emptyset,\BSet) = \mathtt{Env}_{\BB,\Unary,\Binary_{df}}(a,\emptyset,\Binary_{df}) = \{a_7\}$. Since $a \not\in P_\eqx$, it actually does not matter whether we include the diagonal relation or not.

\item Let $a = a_7$ and $\BSet=\{\relsaord{1}{1},\relsaord{1}{2}\}$. Then, $\mathtt{Env}_{\AA,\Unary,\Binary}(a,\emptyset,\BSet) = \{a_1,a_2,a_3\}$.
So how do we get this set in $\BB$ without referring to the diagonal relation?
The idea is to use only $\relsaord{1}{1} \in \Gamma_{df}$ and to ensure data equality by restricting to elements in $P_{\eqx}$ (again excluding $P_\ediag$).
Indeed, we have $\mathtt{Env}_{\BB,\Unary,\Binary_{df}}(a,\emptyset,\{\relsaord{1}{1}\}) \cap (P_\eqx\setminus P_\ediag) = \{a_1,a_2,a_3,b_1\} \cap (\{a_1,a_2,a_3,b_1\} \setminus \{b_1\}) = \{a_1,a_2,a_3\}$.

\item Let $a = a_1$ and $\BSet=\{\relsaord{2}{2},\relsaord{1}{2}\}$. Then, $\mathtt{Env}_{\AA,\Unary,\Binary}(a,\emptyset,\BSet) = \{a_4,a_5\}$.
So we are looking for elements that have $1$ as the second data value and a first data value different from~1,
and this set is exactly $\mathtt{Env}_{\BB,\Unary,\Binary_{df}}(a,\emptyset,\{\relsaord{2}{2}\})$.

\item Let $a = a_5$ and $\BSet=\{\relsaord{2}{2}\}$. Then, $\mathtt{Env}_{\AA,\Unary,\Binary}(a,\emptyset,\BSet) = \{a_1,a_2,a_3,a_4\}$,
which is the set of elements that have 1 as the second data value and a first data value different from~2.
Thus, this is exactly $\mathtt{Env}_{\BB,\Unary,\Binary_{df}}(a,\emptyset,\{\relsaord{2}{2}\}) \setminus P_\ediag$ (i.e., after discarding $b_1 \in P_\ediag$).

\item Let $a = a_4$ and $\BSet=\{\relsaord{1}{1}\}$. We have $\mathtt{Env}_{\AA,\Unary,\Binary}(a,\emptyset,\BSet) = \{a_8\}$.
Looking at $\BB$ and discarding the diagonal relation would also include $b_3$ and any element with data-value
pair $(3,3)$. Discarding $P_\eqx$, we obtain
$\mathtt{Env}_{\BB,\Unary,\Binary_{df}}(a,\emptyset,\{\relsaord{1}{1}\}) \setminus P_\eqx = \{a_8,b_3\} \setminus \{b_3\} = \{a_8\}$.

\item Let $a = a_7$ and $\BSet=\{\relsaord{1}{2}\}$. Then, $\mathtt{Env}_{\AA,\Unary,\Binary}(a,\emptyset,\BSet) = \{a_4,a_5\}$,
which is the set of elements whose second data value is 1 and whose first data value is different from 1.
The idea is now to change the reference point. Take the unique $d \in P_\ediag$ such that
$\relsaa{1}{1}{\BB}{d}{a}$. Thus, $d = b_1$. The set
$\mathtt{Env}_{\BB,\Unary,\Binary_{df}}(b_1,\emptyset,\{\relsaord{2}{2}\})$ gives us exactly the elements that
have 1 as the second data value and a first value different from 1, as desired.
\end{enumerate}
\end{example}

\begin{proof}
   Let $\AA=(A,(P_{\unary}),\ifunct,\ofunct) \in \nData{2}{\extSigma \cup \{\eqx\}}$
be eq-respecting and $\BB = \addediag{\AA}$. We consider $a \in A$, $U \subseteq \Sigma$, and $R \subseteq \Gamma$ be a nonempty set. 
  Note that by definition of $\mathtt{Env}$, we have
  $\mathtt{Env}_{\AA,\Unary,\Binary}(a,\USet,R)=\mathtt{Env}_{\BB,\Unary,\Binary}(a,\USet,R)
  \setminus P_\ediag$ and when $R \neq \{\relsaord{1}{2}\}$, we have
  $\mathtt{Env}_{\BB,\Unary,\Binary_{df}}(a,\USet,R \setminus
  \{(1,2)\})=\mathtt{Env}_{\BB,\Unary,\Binary}(a,\USet,R \cup \{\relsaord{1}{2}\})
  \cup \mathtt{Env}_{\BB,\Unary,\Binary}(a,\USet,R
  \setminus\{\relsaord{1}{2}\})$. We will use these two equalities in the rest of
  the proof.
	We now perform a case analysis on $\BSet$. 
	\begin{enumerate}
	\item 
		Assume $\BSet=\Binary=\{\relsaord{1}{1},\relsaord{2}{2},\relsaord{1}{2}\}$. First we suppose
        that $a
        \notin P_\eqx$. Since $\AA$ is eq-respecting it implies that
        $\ifunct(a)\neq \ofunct(a)$. Now assume there exists $b \in
        \mathtt{Env}_{\AA,\Unary,\Binary}(a,\USet,\BSet)$, this means
        that $\relsaa{1}{1}{\AA}{a}{b}$ and $\relsaa{2}{2}{\AA}{a}{b}$
        and $\relsaa{1}{2}{\AA}{a}{b}$. Hence we have
        $\ofunct(a)=\ofunct(b)$ and  $\ifunct(a)=\ofunct(b)$, which is
        a contradiction. Consequently
        $\mathtt{Env}_{\AA,\Unary,\Binary}(a,\USet,\BSet)=\emptyset$. We
        now suppose that $a \in
        P_\eqx$. In that case, since $\AA$ is eq-respecting, we have
        $\mathtt{Env}_{\BB,\Unary,\Binary}(a,\USet,R \setminus
        \{\relsaord{1}{2}\})=\emptyset$. In fact if
        $\relsaa{2}{2}{\AA}{a}{b}$ for some $b$ then
        $\relsaa{1}{2}{\AA}{a}{b}$ as $a \in P_\eqx$. Hence we have  $\mathtt{Env}_{\AA,\Unary,\Binary}(a,\USet,R)=\mathtt{Env}_{\BB,\Unary,\Binary}(a,\USet,R)
  \setminus P_\ediag =\mathtt{Env}_{\BB,\Unary,\Binary_{df}}(a,\USet,\Binary_{df}) \setminus P_\ediag $.

\item 	Assume $\BSet=\Binary_{df}=\{\relsaord{1}{1},\relsaord{2}{2}\}$. By  a similar
  reasoning as  the previous  case, if  $a \in P_\eqx$, we have
  necessarily  $\mathtt{Env}_{\AA,\Unary,\Binary}(a,\USet,\BSet) =
  \emptyset$. Now suppose $a \notin P_\eqx$. Thanks to this
  hypothesis, we know that
  $P_\ediag \cap
  \mathtt{Env}_{\BB,\Unary,\Binary_{df}}(a,\USet,\Binary_{df})
  =\emptyset$ and that $\mathtt{Env}_{\BB,\Unary,\Binary}(a,\USet,\{\relsaord{1}{1},\relsaord{2}{2},\linebreak[0](1,2)\})=\emptyset$. Hence we obtain directly $\mathtt{Env}_{\AA,\Unary,\Binary}(a,\USet,\Binary_{df}
  )= \mathtt{Env}_{\BB,\Unary,\Binary_{df}}(a,\USet,\Binary_{df})$.

  \item Assume $\BSet=\{\relsaord{1}{1},\relsaord{1}{2}\}$. Again it is obvious that if
    $a \in P_\eqx$, we have $\mathtt{Env}_{\AA,\Unary,\Binary}(a,\USet,\BSet) =
  \emptyset$. We suppose that  $a \notin P_\eqx$. Note that we have that
  $\mathtt{Env}_{\BB,\Unary,\Binary}(a,\USet,R) \subseteq
  P_\eqx$ and $\mathtt{Env}_{\BB,\Unary,\Binary}(a,\USet,\{\relsaord{1}{1}\}) \cap P_\eqx =\emptyset$ . Since $\mathtt{Env}_{\BB,\Unary,\Binary_{df}}(a,\USet,\{\relsaord{1}{1}s\})=
  \mathtt{Env}_{\BB,\Unary,\Binary}(a,\USet,R) \cup \mathtt{Env}_{\BB,\Unary,\Binary}(a,\USet,\{(1,1)\})$, we deduce that
  $\mathtt{Env}_{\BB,\Unary,\Binary_{df}}(a,\USet,
  \{(1,1)\}) \cap P_\eqx=\mathtt{Env}_{\BB,\Unary,\Binary}(a,\USet,R)
  $. From which  we  can conclude
  $\mathtt{Env}_{\AA,\Unary,\Binary}(a,\USet,\BSet)
  =\mathtt{Env}_{\BB,\Unary,\Binary_{df}}(a,\USet,\{\relsaord{1}{1}\}) \cap
  P_\eqx\setminus P_\ediag$.
\item Assume $\BSet=\{\relsaord{2}{2},\relsaord{1}{2}s\}$. Again, it is obvious that if
    $a \notin P_\eqx$, we have $\mathtt{Env}_{\AA,\Unary,\Binary}(a,\USet,\BSet) =
  \emptyset$. We  now suppose that $a \in P_\eqx$. In that case, we have
  that
  $\mathtt{Env}_{\BB,\Unary,\Binary}(a,\USet,R\setminus\{\relsaord{1}{2}\}) =
  \emptyset$ and furthermore
  $\mathtt{Env}_{\BB,\Unary,\Binary}(a,\USet,R) \cap
  P_\ediag=\emptyset$. We can hence conclude  that $\mathtt{Env}_{\AA,\Unary,\Binary}(a,\USet,\BSet)
  =\mathtt{Env}_{\BB,\Unary,\Binary_{df}}(a,\USet,\{\relsaord{2}{2}\})$.
\item Assume $\BSet=\{\relsaord{2}{2}\}$. As before if
    $a \in P_\eqx$, we have $\mathtt{Env}_{\AA,\Unary,\Binary}(a,\USet,\BSet) =
  \emptyset$. We now suppose that $a \notin P_\eqx$. In that case, we
  have immediately
  $\mathtt{Env}_{\BB,\Unary,\Binary}(a,\USet,\{\relsaord{2}{2},\relsaord{1}{2}\})=\emptyset$
  and consequently $$\mathtt{Env}_{\AA,\Unary,\Binary}(a,\USet,\BSet)
  =\mathtt{Env}_{\BB,\Unary,\Binary}(a,\USet,\{\relsaord{2}{2}\}) \setminus P_\ediag
  =\mathtt{Env}_{\BB,\Unary,\Binary_{df}}(a,\USet,\{\relsaord{2}{2}\})\setminus
  P_\ediag\,.$$
\item Assume $\BSet=\{\relsaord{1}{1}\}$.  Remember that we have
  $\mathtt{Env}_{\BB,\Unary,\Binary_{df}}(a,\USet,
  \{(1,1)\})=\mathtt{Env}_{\BB,\Unary,\Binary}(a,\USet,\linebreak[0]\{\relsaord{1}{1},\relsaord{1}{2}\})
  \cup \mathtt{Env}_{\BB,\Unary,\Binary}(a,\USet,\{\relsaord{1}{1}\})$. But
  $\mathtt{Env}_{\BB,\Unary,\Binary}(a,\USet,\{\relsaord{1}{1},\relsaord{1}{2}\})
  \subseteq P_\eqx$ and, moreover,
  $\mathtt{Env}_{\BB,\Unary,\Binary}(a,\USet,\{\relsaord{1}{1}\}) \cap
  P_\eqx=\emptyset$. We hence deduce that $\mathtt{Env}_{\BB,\Unary,\Binary_{df}}(a,\USet,
  \{\relsaord{1}{1}\}) \setminus
  P_\eqx=\mathtt{Env}_{\BB,\Unary,\Binary}(a,\USet,\{\relsaord{1}{1}\})$ and
  since $(\mathtt{Env}_{\BB,\Unary,\Binary_{df}}(a,\USet,
  \{\relsaord{1}{1}\}) \setminus
  P_\eqx) \cap P_\ediag=\emptyset$, we obtain $\mathtt{Env}_{\AA,\Unary,\Binary}(a,\USet,\BSet)
  =\mathtt{Env}_{\BB,\Unary,\Binary_{df}}(a,\USet,
  \{\relsaord{1}{1}\}) \setminus
  P_\eqx$.
\item Assume $\BSet=\{\relsaord{1}{2}\}$. Again it is obvious that if
    $a \in P_\eqx$, we have $\mathtt{Env}_{\AA,\Unary,\Binary}(a,\USet,\BSet) =
  \emptyset$. We now suppose that $a \notin P_\eqx$. By definition,
  since $\BB=\addediag{\AA}$, in $\BB$ there is a unique $d \in
  P_\ediag \text{ such that } \relsaa{1}{1}{\BB}{d}{a}$. We have then
  $\mathtt{Env}_{\BB,\Unary,\Binary}(a,\USet,R)=\mathtt{Env}_{\BB,\Unary,\Binary}(d,\USet,\{\relsaord{1}{2},\relsaord{2}{2}\})$. As
for the case 4., we deduce that
$\mathtt{Env}_{\BB,\Unary,\Binary}(d,\USet,\{\relsaord{1}{2},\relsaord{2}{2}\})=\mathtt{Env}_{\BB,\Unary,\Binary_{df}}(d,\USet,\{\relsaord{2}{2}\})$. Hence $\mathtt{Env}_{\BB,\Unary,\Binary}(a,\USet,R)=\mathtt{Env}_{\BB,\Unary,\Binary_{df}}(d,\USet,\{\relsaord{2}{2}\})$. \qedhere
	\end{enumerate}
\end{proof}

Let us wrap up:
By Lemmas~\ref{lem:new-fo-count} and \ref{lem:new-env-with-no-diag},
we end up with checking counting constraints in an extended data structure
without using the diagonal relation.\\

\newcommand{\gammarel}[1]{\gamma_{#1}}
\newcommand{\alpharel}[2]{\alpha_{#1}^{#2}}
\newcommand{\betarel}[2]{\beta_{#1}^{#2}}
\newcommand{\gammaform}{\phi_\gamma}
\newcommand{\alphaform}{\phi_\gamma}
\newcommand{\labels}[1]{\ell_\Unary(#1)}

\noindent\textbf{Step 4: Counting in Two-Variable Logic}\\

The next step is to express these constraints using two-variable formulas.
Counting in two-variable logic is established using further unary predicates.
These additional predicates allow us to define a partitioning of the universe of
a structure into so-called \emph{intersections}.
Suppose $\AA=(A,(P_{\unary}),\ifunct,\ofunct) \in \nData{2}{\extSigma \cup \{\eqx,\ediag\}}$, where $\Sigma \subseteq \extSigma$.
Let $a \in A \setminus P_\ediag$ and define $\labels{a} = \{\sigma \in \Sigma \mid a \in P_\sigma\}$.
The \emph{intersection} of $a$ in $\AA$ is the set $\{b \in A \setminus P_\ediag \mid \rels{1}{1}{a}{b} \mathrel{\wedge} \rels{2}{2}{a}{b} \mathrel{\wedge} \labels{a} = \labels{b}\}$. 
A set is called an \emph{intersection} in $\AA$ if it is the intersection of some $a \in A \setminus P_\ediag$.

\begin{example}
Consider Figure~\ref{fig:intersection} and suppose $\Sigma = \{\mathsf{p}\}$.
The intersections of the given data structure are gray-shaded.
\end{example}

\begin{figure*}[t]
\centering
\scalebox{0.85}{
\begin{tabular}{c}
\begin{tikzpicture}[node distance=2cm]
  \pgfsetxvec{\pgfpoint{1.2cm}{0cm}}
  \pgfsetyvec{\pgfpoint{0cm}{1.2cm}}
  \pgfsetzvec{\pgfpoint{0cm}{1.2cm}}

	\node []	(b) at (10,0)	{};

	\node [datayellowyellow, label=below:$\ediag$]	(A) at (-4.5,0.3)	{1 \nodepart{second} 1};
	\node [datanone, label=below:$\eqx$]	(A) at (-4.5,-0.1)	{};	

	\node [datagreengreen, label=below:$\ediag$]	(A) at (-3.8,0.3)	{2 \nodepart{second} 2};
	\node [datanone, label=below:$\eqx$]	(A) at (-3.8,-0.1)	{};	

	\node [datablueblue, label=below:$\ediag$]	(A) at (-3.1,0.3)	{3 \nodepart{second} 3};
	\node [datanone, label=below:$\eqx$]	(A) at (-3.1,-0.1)	{};	

	\node [dataredred, label=below:$\ediag$]	(A) at (-2.4,0.3)	{4 \nodepart{second} 4};
	\node [datanone, label=below:$\eqx$]	(A) at (-2.4,-0.1)	{};	


	\node [draw, fill=gray!10, rectangle, minimum width=13.9cm, minimum height=3cm, rounded corners=0.2cm] (cc) at (3.9,-0.25) {};	


\put(-10,0){
	\node [label=right:$\textcolor{red}{\alpharel{3}{1}}$, draw, fill=gray!20, rectangle, minimum width=3.5cm, minimum height=2.7cm, rounded corners=0.2cm] (cc) at (0.05,-0.25) {};	
	
	\node [datayellowyellow]	(A) at (-1,0.3)	{1 \nodepart{second} 1};
	\node [datanone, label=below:$\mathsf{p}$]	(A) at (-1,0.2)	{};
	\node [datanone, label=below:$\eqx$]	(A) at (-1,-0.1)	{};
	\node [datanone, label=below:$\textcolor{blue}{\gammarel{1}}$]	(A) at (-1,-0.5)	{};		

	\node [datayellowyellow]	(A) at (-0.3,0.3)	{1 \nodepart{second} 1};
	\node [datanone, label=below:$\mathsf{p}$]	(A) at (-0.3,0.2)	{};
	\node [datanone, label=below:$\eqx$]	(A) at (-0.3,-0.1)	{};
	\node [datanone, label=below:$\textcolor{blue}{\gammarel{2}}$]	(A) at (-0.3,-0.5)	{};		

	\node [datayellowyellow]	(A) at (0.4,0.3)	{1 \nodepart{second} 1};
	\node [datanone, label=below:$\mathsf{p}$]	(A) at (0.4,0.2)	{};
	\node [datanone, label=below:$\eqx$]	(A) at (0.4,-0.1)	{};
	\node [datanone, label=below:$\textcolor{blue}{\gammarel{3}}$]	(A) at (0.4,-0.5)	{};		

	\node [datayellowyellow]	(A) at (1.1,0.3)	{1 \nodepart{second} 1};
	\node [datanone, label=below:$\mathsf{p}$]	(A) at (1.1,0.2)	{};
	\node [datanone, label=below:$\eqx$]	(A) at (1.1,-0.1)	{};
	\node [datanone, label=below:$\textcolor{blue}{\gammarel{3}}$]	(A) at (1.1,-0.5)	{};		
}


\put(116,0){
	\node [label=right:$\textcolor{red}{\alpharel{3}{2}}$, draw, fill=gray!20, rectangle, minimum width=2.7cm, minimum height=2.7cm, rounded corners=0.2cm] (cc) at (-0.3,-0.25) {};	
	
	\node [datayellowgreen]	(A) at (-1,0.3)	{1 \nodepart{second} 2};
	\node [datanone, label=below:$\mathsf{p}$]	(A) at (-1,0.2)	{};
	\node [datanone, label=below:$\textcolor{blue}{\gammarel{1}}$]	(A) at (-1,-0.2)	{};		

	\node [datayellowgreen]	(A) at (-0.3,0.3)	{1 \nodepart{second} 2};
	\node [datanone, label=below:$\mathsf{p}$]	(A) at (-0.3,0.2)	{};
	\node [datanone, label=below:$\textcolor{blue}{\gammarel{2}}$]	(A) at (-0.3,-0.2)	{};		

	\node [datayellowgreen]	(A) at (0.4,0.3)	{1 \nodepart{second} 2};
	\node [datanone, label=below:$\mathsf{p}$]	(A) at (0.4,0.2)	{};
	\node [datanone, label=below:$\textcolor{blue}{\gammarel{3}}$]	(A) at (0.4,-0.2)	{};		
}


\put(222,0){
	\node [label=right:$\textcolor{red}{\alpharel{2}{3}}$, draw, fill=gray!20, rectangle, minimum width=2cm, minimum height=2.7cm, rounded corners=0.2cm] (cc) at (-0.65,-0.25) {};	
	
	\node [datayellowblue]	(A) at (-1,0.3)	{1 \nodepart{second} 3};
	\node [datanone, label=below:$\mathsf{p}$]	(A) at (-1,0.2)	{};
	\node [datanone, label=below:$\textcolor{blue}{\gammarel{1}}$]	(A) at (-1,-0.2)	{};		

	\node [datayellowblue]	(A) at (-0.3,0.3)	{1 \nodepart{second} 3};
	\node [datanone, label=below:$\mathsf{p}$]	(A) at (-0.3,0.2)	{};
	\node [datanone, label=below:$\textcolor{blue}{\gammarel{2}}$]	(A) at (-0.3,-0.2)	{};		
}


\put(305,0){
	\node [label=right:$\textcolor{red}{\alpharel{2}{4}}$, draw, fill=gray!20, rectangle, minimum width=2cm, minimum height=2.7cm, rounded corners=0.2cm] (cc) at (-0.65,-0.25) {};	
	
	\node [datayellowred]	(A) at (-1,0.3)	{1 \nodepart{second} 4};
	\node [datanone, label=below:$\mathsf{p}$]	(A) at (-1,0.2)	{};
	\node [datanone, label=below:$\textcolor{blue}{\gammarel{1}}$]	(A) at (-1,-0.2)	{};		

	\node [datayellowred]	(A) at (-0.3,0.3)	{1 \nodepart{second} 4};
	\node [datanone, label=below:$\mathsf{p}$]	(A) at (-0.3,0.2)	{};
	\node [datanone, label=below:$\textcolor{blue}{\gammarel{2}}$]	(A) at (-0.3,-0.2)	{};		
}

\end{tikzpicture}
\end{tabular}
}
\caption{Counting intersections for $M = 3$ and elements with label $\mathsf{p}$
\label{fig:intersection}}
\end{figure*}

Let us introduce the various unary predicates, which will be assigned
to \emph{non-diagonal} elements. There are three types of them
(for the first two types, also see Figure~\ref{fig:intersection}):

\begin{enumerate}
\item The unary predicates $\Lambda_M^\gamma = \{\gammarel{1},\ldots,\gammarel{M}\}$ have the following intended meaning:
For all intersections $I$ and $i \in \{1,\ldots,M\}$, we have $|I| \ge i$ iff
there is $a \in I$ such that $a \in P_{\gammarel{i}}$. In other words, the presence
(or absence) of $\gammarel{i}$ in an intersection $I$ tells us whether $|I| \ge i$.

\item The predicates $\Lambda_M^\alpha = \{\alpharel{i}{j} \mid i \in \{1,\ldots,M\}$ and
$j \in \{1,\ldots,M+2\}\}$ have the following meaning:
If $a$ is labeled with $\alpharel{i}{j}$, then (i) there are at least $j$ intersections sharing the same first value and the same label set $\labels{a}$, and (ii) the intersection of $a$ has $i$ elements if $i \le M-1$ and at least $M$ elements if $i=M$. Hence, in $\alpharel{i}{j}$, index $i$ counts the elements inside an intersection, and $j$ labels up to $M+2$ different intersections.
We need to go beyond $M$ due to Lemma~\ref{lem:new-env-with-no-diag}:
When we remove certain elements (e.g., $P_\eqx$)
from an environment, we must be sure to still have sufficiently many to be able to count until $M$.

\item Labels from $\Lambda_M^\beta = \{\betarel{i}{j} \mid i \in \{1,\ldots,M\}$ and
$j \in \{1,\ldots,M+1\}\}$ will play a similar role as those in $\Lambda_M^\alpha$
but consider the second values of the elements instead of the first ones.
\end{enumerate}

\begin{example}
A suitable labeling for types $\gamma$ and $\alpha$ is illustrated in Figure~\ref{fig:intersection} for $M = 3$.
\end{example}

Let $\Lambda_M = \Lambda_M^\alpha \cup \Lambda_M^\beta \cup \Lambda_M^\gamma$ denote the set of all these unary predicates.
We shall now see how to build sentences
$\phi_\alpha, \phi_\beta, \phi_\gamma \in \ndFOvar{2}{\extSigma \cup \{\eqx,\ediag\} \cup \Lambda_M,\Binary_{df}}{2}$
that guarantee the respective properties.

To deal with the predicates in $\Lambda_M^\gamma$, we  first  define
the formula $\phiint = \rels{1}{1}{x}{y} \mathrel{\wedge}
\rels{2}{2}{x}{y} \mathrel{\wedge} \bigwedge_{\sigma \in \Unary}
\sigma(x) \leftrightarrow \sigma(y)$ and introduce the
following formulas:
\[
\begin{array}{lcl}
  \phi^1_\gamma (x)&:= & \displaystyle \bigvee_{i \in [1,M]} \big(\gamma_i(x) \wedge \bigwedge_{j \in [1,M]\setminus \{i\}} \neg \gamma_j(x)\big) \\
  &&\\
  \phi^2_\gamma(x) &:= &\displaystyle \bigwedge_{i \in [1,M-1]}
                         \big(\gamma_i(x) \rightarrow \neg \exists
                         y. ( x \neq y \wedge \phiint(x,y)  \wedge \gamma_i(y))\big) \\
                &&\\
  \phi^3_\gamma(x) &:= &\displaystyle \bigwedge_{i \in [2,M]} \big(\gamma_i(x) \rightarrow (\exists y. \phiint(x,y) \wedge \gamma_{i-1}(y))\big)\\
\end{array}
\]
We then let $\phi_\gamma:=\forall x.\big( \neg \ediag(x) \rightarrow
(\phi^1_\gamma(x) \wedge \phi^2_\gamma(x) \wedge \phi^3_\gamma(x)) \big)
\wedge( \ediag(x) \rightarrow \bigwedge_{\gamma \in
  \Lambda_M^\gamma}\neg \gamma(x))$. Thus, a data structure satisfies
$\phi_\gamma$ if no diagonal element is labelled with predicates in
$\Lambda_M^\gamma$ and (1)~all its non-diagonal elements are labelled with
exactly one predicate in $\Lambda_M^\gamma$ (see $\phi^1_\gamma$), (2)
if $i \le M-1$, then there are no two 
$\gamma_i$-labelled elements with the same labels of $\Unary $ and in the same intersection (see  $\phi^2_\gamma$), and
(3) if $i \ge 2$, then for all $\gamma_i$-labelled elements, there exists an $\gamma_{i-1}$-labelled element with the same labels of $\Unary$ and in the same intersection (see  $\phi^3_\gamma$).

\begin{lemma}\label{lem:gamma-count}

Let $\AA=(A,(P_{\unary}),\ifunct,\ofunct,) \in \nData{2}{\Sigma \cup \{\eqx\} \cup \myCountConst{}{}{M} \cup \Lambda_M}$
  be eq-respecting
  and such that $\addediag{\AA} \models \phi_\gamma $.
We consider $a \in A$ and $\gamma_i \in \Lambda_M$ and $E_a=\{b \in A \mid \rels{1}{1}{a}{b} \mathrel{\wedge} \rels{2}{2}{a}{b} \mathrel{\wedge} \labels{a} = \labels{b}\}$. Then, $|E_a|\geq i$ iff there exists $b\in E_a$ such that $b\in P_{\gamma[i]}$.
\end{lemma}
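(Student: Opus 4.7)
The plan is to prove both implications by analysing the $\gamma$-labels occurring in $E_a$, using the three conjuncts of $\phi_\gamma$ in turn. Throughout the argument I will exploit that every element of $A$ is non-diagonal in $\addediag{\AA}$ (diagonal elements sit in $\Values{\AA}$ and, by definition of $\addediag{\AA}$, carry only $\ediag$ and $\eqx$), so the universally quantified conjuncts of $\phi_\gamma$ apply to every member of $E_a \subseteq A$. Moreover, $\phiint(b,b')$ expresses that $b$ and $b'$ lie in the same intersection, and this relation is visibly transitive (it is the conjunction of three equivalence relations), so if $b \in E_a$ and $\addediag{\AA} \models \phiint(b,b')$ then $b' \in E_a$.

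For the backward direction, I start from $b \in E_a$ with $b \in P_{\gamma_i}$ and apply $\phi^3_\gamma$ iteratively to produce a sequence $b_i = b, b_{i-1}, \ldots, b_1$ such that $b_j \in E_a$ and $b_j \in P_{\gamma_j}$ for each $j \in \{1,\ldots,i\}$. Since $\phi^1_\gamma$ assigns each non-diagonal element a \emph{unique} label from $\Lambda_M^\gamma$, the $b_j$ are pairwise distinct, whence $|E_a| \ge i$.

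For the forward direction, assume $|E_a| \ge i$. By $\phi^1_\gamma$ every element of $E_a$ carries exactly one label from $\Lambda_M^\gamma$, and by $\phi^2_\gamma$, for each $j \in \{1,\ldots,M-1\}$ at most one element of $E_a$ carries $\gamma_j$. If $i \le M-1$ and no element of $E_a$ carried any $\gamma_j$ with $j \ge i$, then all labels in $E_a$ would lie in $\{\gamma_1,\ldots,\gamma_{i-1}\}$ and $\phi^2_\gamma$ would force $|E_a| \le i-1$, contradicting the assumption; so some $b' \in E_a$ carries $\gamma_j$ with $j \ge i$, and $j - i$ applications of $\phi^3_\gamma$ (together with the transitivity observation above) yield a $\gamma_i$-labelled element in $E_a$. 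If $i = M$, then absence of a $\gamma_M$-labelled element in $E_a$ would, again by $\phi^2_\gamma$, force $|E_a| \le M - 1 < M$, a contradiction.

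The one point deserving care is the asymmetric role of $\gamma_M$ in $\phi^2_\gamma$: uniqueness inside an intersection is imposed only for indices $j \le M-1$, and this is precisely what allows $\gamma_M$ to witness ``at least $M$'' rather than ``exactly $M$'' elements. Beyond this bookkeeping no real obstacle is expected; the proof is a straightforward combination of the three conjuncts of $\phi_\gamma$.
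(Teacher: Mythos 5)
Your proof is correct and takes essentially the same route as the paper's: where the paper considers the labelling function $f:E_a\to[1,M]$ given by $\phi^1_\gamma$ and shows its image is a downward-closed interval $[1,j_{\max}]$ via $\phi^3_\gamma$, with $\phi^2_\gamma$ forcing injectivity on labels below $M$, you build the descending chain and do the counting explicitly — the same three conjuncts used in the same roles. Your extra care about diagonal elements and the transitivity of $\phiint$ is a sound (if tacit in the paper) justification that the witnesses stay inside $E_a$.
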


\begin{proof}
  For any $b\in E_a$, as $\addediag{\AA}\models\phi^1_\gamma$ there is exactly one $j\in [1,M]$ such that $b\in P_{\gamma_j}$. 
  This allow us to build the function $f:E_a\to[1,M]$ which associates
  to any $b\in E_a$ such a $j$.
  Let $J= \{f(b)|b\in E_a\}$ denotes the image of $E_a$ under $f$.
  As $\addediag{\AA}\models\phi^3_\gamma$, for any $j\in [2,M]$ if $j\in J$ then $j-1 \in J$.
  And as $E_a\neq\emptyset$, there is $\jmax\in[1,M]$ such that $J=[1,\jmax]$.
  We can now rephrase our goal as $|E_a|\geq i$ iff $i\in J$.
  Assuming that $i\in J$, we have $i\leq\jmax$.
  As $f$ is a function, we have $|E_a|\geq |J|$. As $|J|=\jmax$, we have that $|E_a|\geq i$.
  Conversely, assuming that $|E_a|\geq i$.  Assume by contradiction that $i\notin J$, then $\jmax<i\leq M$.
  That is, for all $j\in J$, we have $j<M$.
  Since  $\addediag{\AA}\models \phi^2_\gamma$, all elements of $J$ have exactly one preimage.
  So $|E_a|=|J|=\jmax<i$, which contradicts the assumption.
\end{proof}

It is then easy to see that, in an intersection, if there is an
element $a$ labelled by $\gamma_i$ and no element labelled by
$\gamma_{i+1}$ for $i < M$, then the intersection has exactly $i$
elements; moreover, if there is a node $a$ labelled by $\gamma_M$ then
the intersection has at least $M$ elements.

We now show how we use the predicates in $\Lambda_M^\alpha$ and
introduce the following formulas (where $\phiint = \rels{1}{1}{x}{y} \mathrel{\wedge}
\rels{2}{2}{x}{y} \mathrel{\wedge} \bigwedge_{\sigma \in \Unary}
\sigma(x) \leftrightarrow \sigma(y)$ and $\phi_{same}= \bigwedge_{\sigma \in \Unary}
\sigma(x) \leftrightarrow \sigma(y)$):
$$
\begin{array}{lcl}
  \phi^1_\alpha(x) &:= &\displaystyle\bigvee_{\scriptstyle i \in [1,M]\atop \scriptstyle j \in [1,M+2]} \bigg(\alpha^j_i(x) \wedge \bigwedge_{ \scriptstyle k \in [1,M] \atop {\scriptstyle \ell \in [1,M+2] \atop \scriptstyle(k,\ell) \neq (i,j)}} \neg \alpha^\ell_k(x)\bigg) \\
                   &&\\
%

  \phi^2_\alpha(x) &:= & \displaystyle \bigwedge_{ \scriptstyle i \in
                         [1,M] \atop \scriptstyle j \in [1,M+2]}
                         \bigg( \alpha^j_i(x) \rightarrow \forall
                         y.\big(( \neg \ediag(y) \wedge \phiint(x,y)) \rightarrow \alpha^j_i(y) \big)\bigg)\\
                &&\\

  \phi^3_\alpha(x) &:= & \displaystyle \bigwedge_{ \scriptstyle i \in [1,M-1] \atop \scriptstyle j \in [1,M+2]} \left( \alpha^j_i(x) \rightarrow
  \left(
  \begin{array}{l}
  \phantom{\wedge\;}\exists y. \big(\phiint(x,y) \wedge \gamma_i(y)\big)\\[0.5ex]
  \wedge\; \neg\exists y.\big(\phiint(x,y) \wedge \gamma_{i+1}(y)\big)
  \end{array}
  \right)
  \right)\;\wedge \\&&\displaystyle  \bigwedge_{ \scriptstyle j \in [1,M+2]}\bigg(\alpha^j_M(x) \rightarrow \exists y. \big(\phiint(x,y) \wedge \gamma_M(y)\big) \bigg)\\ 
                   &&\\

      \phi^4_\alpha(x) &:= & \displaystyle \bigwedge_{ \scriptstyle i \in [1,M] \atop \scriptstyle j \in [1,M+1]} \bigg( \alpha^j_i(x) \rightarrow \forall y.
\bigg(
\left(
\begin{array}{l}
\neg \ediag(y) \wedge \phi_{same}(x,y)\\
\wedge\; \rels{1}{1}{x}{y} \wedge \neg (\rels{2}{2}{x}{y})
  \end{array}
  \right)
\rightarrow 
                           \bigwedge_{k \in [1,M]} \neg\alpha^j_k(y)   \bigg)\bigg)\\
                &&\\
\phi^5_\alpha(x)& := & \displaystyle \bigwedge_{ \scriptstyle i \in [1,M] \atop \scriptstyle j \in [2,M+2]} \bigg( \alpha^j_i(x) \rightarrow  \exists y. \big(\phi_{same}(x,y) \wedge \rels{1}{1}{x}{y} \wedge \bigvee_{k \in [1,M]}\alpha^{j-1}_k(y)\big)\bigg)\\
\end{array}
$$

We then define $\phi_\alpha:=\forall x.\big( (\neg \ediag(x))
\rightarrow (\phi^1_\alpha(x) \wedge \phi^2_\alpha(x) \wedge
\phi^3_\alpha(x) \wedge \phi^4_\alpha(x) \wedge \phi^5_\alpha(x))
\big) \wedge ( \ediag(x) \rightarrow \bigwedge_{\alpha \in
  \Lambda_M^\alpha}\neg \alpha(x))$. Note that $\phi_\alpha$ is a two-variable formula in $\ndFOvar{2}{\extSigma \cup \{\ediag\} \cup \Lambda_M,\Binary_{df}}{2}$. If a data structure satisfies $\phi_\alpha$, then no diagonal element is labelled with predicates in
$\Lambda_M^\alpha$ and all its non-diagonal elements are labelled with
exactly one predicate in $\Lambda_M^\alpha$ (see
$\phi^1_\alpha$). Furthermore, all non-diagonal elements in a same
intersection are labelled with the same $\alpha^j_i$ (see
$\phi^2_\alpha$), and there are exactly $i$ such elements in the
intersection if $i \le M-1$ and at least $M$ otherwise (see
$\phi^3_\alpha$). Finally, we want to identify up to $M+2$ different
intersections sharing the same first value and  we use the $j$ in
$\alpha^j_i$ for this matter. Formula $\phi^4_\alpha$ tells us that no
two non-diagonal elements with the same labels of $\Unary $ share the same index $j$ (for $j \le M+1$) if they do not belong to the same intersection and have the same first value. The formula $\phi^5_\alpha$ specifies that, if an element $a$ is labelled with $\alpha^j_i$, then there are at least $j$ different nonempty intersections with the same labels of $\Unary$ as $a$ sharing the same first values.

The next lemma formalizes the property of this labelling.

\begin{lemma}
  \label{lem:alphalab}
  We consider $\AA\!=\!(A,(P_{\unary}),\ifunct,\ofunct) \in \nData{2}{\Sigma \cup \{\eqx\} \cup \myCountConst{}{}{M} \cup \Lambda_M}$
  eq-respecting
  and such that $\addediag{\AA} \models \phi_\gamma \wedge \phi_\alpha
  $ and $a \in A$. Let $S_{a,\rels{1}{1}{}{}} = \{b \in A \mid \relsaa{1}{1}{\AA}{a}{b} \wedge \labels{a} = \labels{b} \}$ and  $S^{j}_{a,\rels{1}{1}{}{},i}=S_{a,\rels{1}{1}{}{}}\cap P_{\alpha^j_i}$ for all $i \in [1,M]$ and  $j \in [1,M+2]$. The following properties hold:
  \begin{enumerate}\itemsep=0.5ex
  \item We have $S_{a,\rels{1}{1}{}{}}=\bigcup_{i \in [1,M], j \in [1,M+2]} S^j_{a,\rels{1}{1}{}{},i}$.
  \item For all $j,\ell \in [1,M+2]$ and $i,k \in [1,M]$ such that $i \neq k$ or $j \neq l$, we have $S^j_{a,\rels{1}{1}{}{},i} \cap S^\ell_{a,\rels{1}{1}{}{},k}=\emptyset$.
    \item For all $j \in [1,M+1]$ and $i \in [1,M]$ such that $b,c \in S^j_{a,\rels{1}{1}{}{},i}$, we have $\rels{2}{2}{b}{c}$.
    \item For all $b,c \in S_{a,\rels{1}{1}{}{}}$ such that $\rels{2}{2}{b}{c}$, there exist $j \in [1,M+2]$ and $i \in [1,M]$ such that $b,c \in S^j_{a,\rels{1}{1}{}{},i}$.
   \item For all $j \in [1,M+2]$ and $i \in [1,M]$ such that $b \in S^j_{a,\rels{1}{1}{}{},i}$,
   we have
   \[\begin{cases}
   |\{c \in A \mid \relsaa{1}{1}{\AA}{b}{c} \wedge \relsaa{2}{2}{\AA}{b}{c} \wedge \labels{b} = \labels{c} \}|=i & \text{if } i \le M-1\\
   |\{c \in A \mid \relsaa{1}{1}{\AA}{b}{c} \wedge
   \relsaa{2}{2}{\AA}{b}{c} \wedge \labels{b} = \labels{c}  \}|\geq M
   & \text{otherwise.}
   \end{cases}\]
  \item For all $j \in [1,M+1]$, there exists at most one $i$ such that $S^{j,[i]}_{a,\rels{1}{1}{}{}} \neq \emptyset$.
   
  \item For all $j \in [2,M+2]$ and $i \in [1,M]$ such that $S^j_{a,\rels{1}{1}{}{},i}\neq \emptyset$, there exists $k \in [1,M]$ such $S^{j-1}_{a,\rels{1}{1}{}{},k}\neq \emptyset$.
  \end{enumerate}

\end{lemma}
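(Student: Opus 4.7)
The strategy is to dispatch each of the seven items against the corresponding sub-conjunct of $\phi_\alpha$, invoking $\phi_\gamma$ and Lemma~\ref{lem:gamma-count} only where counting of intersection sizes is required. I would begin by observing that every element of $S_{a,\rels{1}{1}{}{}}$ lies in $A$, hence is non-diagonal in $\addediag{\AA}$, so by the outer quantifier of $\phi_\alpha$ it satisfies each of $\phi^1_\alpha,\ldots,\phi^5_\alpha$. Items (1) and (2) then follow immediately from $\phi^1_\alpha$, which assigns to every non-diagonal element exactly one label $\alpha^j_i$.

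For item (3), I would take $b,c \in S^j_{a,\rels{1}{1}{}{},i}$ with $j \le M+1$ and assume for contradiction that $\neg\,\rels{2}{2}{b}{c}$; since $b,c \in S_{a,\rels{1}{1}{}{}}$ we have $\rels{1}{1}{b}{c}$ (via $a$) and $\labels{b}=\labels{c}$, so $\phi^4_\alpha$ applied to $b$ with $y:=c$ yields $\bigwedge_k \neg\alpha^j_k(c)$, contradicting $c \in P_{\alpha^j_i}$. Item (4) follows by applying $\phi^2_\alpha$ to $b$: all three conjuncts of $\phiint(b,c)$ hold, so $c$ inherits $b$'s label $\alpha^j_i$. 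For item (5), I would reset the reference point to $b$: the set in question is exactly $b$'s intersection $E_b$, and $\phi^3_\alpha$ ensures that $E_b$ contains a $\gamma_i$-labeled element and (when $i \le M-1$) no $\gamma_{i+1}$-labeled one; Lemma~\ref{lem:gamma-count} then converts this into $|E_b|=i$ if $i \le M-1$ and $|E_b|\ge M$ if $i=M$.

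Item (6) combines (3) and (4): if $S^j_{a,\rels{1}{1}{}{},i}$ and $S^j_{a,\rels{1}{1}{}{},k}$ are both nonempty with $j\le M+1$, then witnesses $b,c$ satisfy $\rels{2}{2}{b}{c}$ by the same use of $\phi^4_\alpha$ as in (3), whereupon (4) places $c$ in $S^j_{a,\rels{1}{1}{}{},i}$ and the disjointness of (2) forces $k=i$. Item (7) reads off $\phi^5_\alpha$: a witness $b \in S^j_{a,\rels{1}{1}{}{},i}$ with $j \ge 2$ provides some $y$ with $\rels{1}{1}{b}{y}$, $\labels{b}=\labels{y}$ and $\alpha^{j-1}_k(y)$ for some $k$, and chaining the equivalences through $a,b,y$ places $y$ in $S^{j-1}_{a,\rels{1}{1}{}{},k}$. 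The main bookkeeping obstacle is verifying, at each invocation of $\phi^2_\alpha$, $\phi^4_\alpha$, or $\phi^5_\alpha$, that the witness named inside the formula actually lies in $S_{a,\rels{1}{1}{}{}}$; this reduces to the transitivity of $\rels{1}{1}{}{}$ and the preservation of $\labels{\cdot}$ guaranteed by $\phi_{same}$. A secondary subtlety is that $\phi^4_\alpha$ only fires for $j \le M+1$, which is precisely why items (3) and (6) are restricted to that range.
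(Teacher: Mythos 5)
Your proof is correct and follows essentially the same item-by-item dispatch as the paper's: (1)--(2) from $\phi^1_\alpha$, (3) from $\phi^4_\alpha$, (4) from $\phi^2_\alpha$, (5) from $\phi^3_\alpha$ combined with Lemma~\ref{lem:gamma-count}, and (7) from $\phi^5_\alpha$. The only (immaterial) deviation is in item (6): the paper first uses item (5) to rule out $\rels{2}{2}{b}{c}$ and then derives a contradiction from $\phi^4_\alpha$, whereas you use the contrapositive of $\phi^4_\alpha$ to obtain $\rels{2}{2}{b}{c}$ and conclude via items (4) and (2); both arguments are sound.
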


\begin{proof}
  We prove the different statements:
  \begin{enumerate}\itemsep=0.5ex
  \item Thanks to the formula $\phi^1_\alpha(x)$ we have that $A=\bigcup_{i \in [1,M], j \in [1,M+2]}  P_{\alpha^j_i}$. Since $S_{a,\rels{1}{1}{}{}}=A \cap S_{a,\rels{1}{1}{}{}}$, we deduce that
  \[S_{a,\rels{1}{1}{}{}}=\big(\bigcup_{i \in [1,M], j \in [1,M+2]} P_{\alpha^j_i}\big)  \cap S_{a,\rels{1}{1}{}{}} =\bigcup_{i \in [1,M], j \in [1,M+2]} S^j_{a,\rels{1}{1}{}{},i}\,.\]
  \item This point can be directly deduced thanks to $\phi^1_\alpha(x)$.
  \item This point can be directly deduced thanks to $\phi^4_\alpha(x)$.
  \item Since $b \in S_{a,\rels{1}{1}{}{}}$, by 1.\ there exist $j \in [1,M+2]$ and $i \in [1,M]$ such that $b \in S^j_{a,\rels{1}{1}{}{},i}$. Furthermore, since $c \in S_{a,\rels{1}{1}{}{}}$, using formula $\phi^2_\alpha(x)$, we deduce that $c \in S^j_{a,\rels{1}{1}{}{},i}$.
  \item This point can be directly deduced thanks to formula $\phi^3_\alpha(x)$ and to Lemma~\ref{lem:gamma-count}.
  \item Assume there exist $i,i' \in [1,M]$ such that $i \neq i'$ and $S^j_{a,\rels{1}{1}{}{},i} \neq \emptyset$ and $S^j_{a,\rels{1}{1}{}{},i'} \neq \emptyset$. Let $b \in S^j_{a,\rels{1}{1}{}{},i}$ and $c \in S^j_{a,\rels{1}{1}{}{},i'} \neq \emptyset$. If $\relsaa{2}{2}{\AA}{b}{c}$, then, by 5., we necessarily have $i=i'$. Hence we deduce that $\relsaa{2}{2}{\AA}{b}{c}$ does not hold, and we can conclude thanks to formula $\phi^4_\alpha(x)$.
 \item This point can be directly deduced thanks to formula $\phi^5_\alpha(x)$.\qedhere
  \end{enumerate}
\end{proof}

While the predicates $\alpha^j_i$ deal with the relation
$\rels{1}{1}{}{}$, we now define a similar formula $\phi_\beta \in
\ndFOvar{2}{\extSigma \cup \{\ediag\} \cup \Lambda_M,\Binary_{df}}{2}$ for the
predicates in $\Lambda_M^\beta$ to count intersections connected by
the binary relation $\rels{2}{2}{}{}$.

\begin{remark}
The formula $\phi_\beta$ to deal with the predicates in $\Lambda_M^\beta$ is built
the exact same way as the formula  $\phi_\alpha$ but we provide its
definition and its properties in a detailed way for the sake of
completeness. The already convinced reader can continue right after
Lemma \ref{lem:betalab}.
\end{remark}

We introduce hence the
following formulas (we recall that $\phiint = \rels{1}{1}{x}{y} \mathrel{\wedge}
\rels{2}{2}{x}{y} \mathrel{\wedge} \bigwedge_{\sigma \in \Unary}
\sigma(x) \leftrightarrow \sigma(y)$ and $\phi_{same}= \bigwedge_{\sigma \in \Unary}
\sigma(x) \leftrightarrow \sigma(y)$):

$$
\begin{array}{lcl}
  \phi^1_\beta (x)&:= &\displaystyle \bigvee_{\scriptstyle i \in [1,M]\atop \scriptstyle j \in [1,M+1]} \bigg(\beta^j_i(x) \wedge \bigwedge_{ \scriptstyle k \in [1,M] \atop {\scriptstyle \ell \in [1,M+1] \atop \scriptstyle(k,\ell) \neq (i,j)}} \neg \beta^\ell_k(x)\bigg) \\
                &&\\
  \phi^2_\beta(x) &:= & \displaystyle \bigwedge_{ \scriptstyle i \in
                        [1,M] \atop \scriptstyle j \in [1,M+1]} \bigg(
                        \beta^j_i(x) \rightarrow \forall y.\big((\neg
                        \ediag(y) \wedge \phiint(x,y)) \rightarrow \beta^j_i(y) \big)\bigg)\\
                &&\\
  \phi^3_\beta(x) &:= & \displaystyle \bigwedge_{ \scriptstyle i \in [1,M-1] \atop \scriptstyle j \in [1,M+1]} \left( \beta^j_i(x) \rightarrow
  \left(
  \begin{array}{l}
  \phantom{\wedge\;}\exists y. \big(\phiint(x,y) \wedge \gamma_i(y)\big)\\
  \wedge\; \neg\exists y.\big(\phiint(x,y) \wedge \gamma_{i+1}(y)\big)
  \end{array}
  \right)
  \right)\;\wedge \\&&\displaystyle  \bigwedge_{ \scriptstyle j \in [1,M+1]}\bigg(\beta^j_M(x) \rightarrow \exists y. \big(\phiint(x,y) \wedge \gamma_M(y)\big) \bigg)\\ 
               &&\\
   \phi^4_\beta(x) &:= & \displaystyle \bigwedge_{ \scriptstyle i \in [1,M] \atop \scriptstyle j \in [1,M]} \bigg( \beta^j_i(x) \rightarrow \forall y.
   \bigg(
   \left(
\begin{array}{l}
\neg \ediag(y) \wedge 
   \phi_{same}(x,y)\\
   \wedge\; \neg (\rels{1}{1}{x}{y}) \wedge \rels{2}{2}{x}{y}
 \end{array}
 \right) \rightarrow \bigwedge_{k \in [1,M]} \neg\beta^j_k(y) \bigg)\bigg)\\
                &&\\
\phi^5_\beta(x) & := & \displaystyle \bigwedge_{ \scriptstyle i \in [1,M] \atop \scriptstyle j \in [2,M+1]} \bigg( \beta^j_i(x) \rightarrow   \exists y. \bigg(\phi_{same}(x,y) \wedge \rels{2}{2}{x}{y} \wedge \bigvee_{k \in [1,M+1]}\beta^{j-1}_k(y)\bigg)\bigg)\\
\end{array}
$$
We then define $\phi_\beta:=\forall x.\big( (\neg \ediag(x))
\rightarrow (\phi^1_\beta(x) \wedge \phi^2_\beta(x) \wedge
\phi^3_\beta(x)\wedge \phi^4_\beta(x))\big) \wedge ( \ediag(x) \rightarrow \bigwedge_{\beta \in
  \Lambda_M^\alpha}\neg \beta(x))$.\\

The following Lemma is the equivalent of the Lemma \ref{lem:alphalab}
for the relation $\rels{2}{2}{}{}$. Its proof is similar to the one of
the Lemma \ref{lem:alphalab}.

\begin{lemma}
  \label{lem:betalab}
We consider $\AA\!=\!(A,(P_{\unary}),\ifunct,\ofunct) \in \nData{2}{\Sigma \cup \{\eqx\} \cup \myCountConst{}{}{M} \cup \Lambda_M}$
  eq-respecting
  and such that $\addediag{\AA} \models \phi_\gamma \wedge \phi_\beta
  $ and $a \in A$. Let $S_{a,\rels{2}{2}{}{}} = \{b \in A \mid \relsaa{2}{2}{\AA}{a}{b} \wedge \labels{a} = \labels{b} \}$ and  $S^j_{a,\rels{2}{2}{}{},i}=S_{a,\rels{2}{2}{}{}}\cap P_{\beta^j_i}$ for all $i \in [1,M]$ and  $j \in [1,M+1]$. The following statements hold:
  \begin{enumerate}\itemsep=0.5ex
  \item We have $S_{a,\rels{2}{2}{}{}}=\bigcup_{i \in [1,M], j \in [1,M+1]} S^j_{a,\rels{2}{2}{}{,i}}$.
  \item For all $j,\ell \in [1,M+1]$ and $i,k \in [1,M]$ such that $i \neq k$ or $j \neq l$, we have $S^j_{a,\rels{2}{2}{}{},i} \cap S^\ell_{a,\rels{2}{2}{}{},k}=\emptyset$.
    \item For all $j \in [1,M+1]$ and $i \in [1,M]$ such that $b,c \in S^j_{a,\rels{2}{2}{}{},i}$, we have $\rels{1}{1}{b}{c}$.
    \item For all $b,c \in S_{a,\rels{2}{2}{}{}}$ such that $\rels{1}{1}{b}{c}$, there exists $j \in [1,M+1]$ and $i \in [1,M]$ such that $b,c \in S^j_{a,\rels{2}{2}{}{},i}$.
      \item For all $j \in [1,M+1]$ and $i \in [1,M]$ such that $b \in S^j_{a,\rels{2}{2}{}{},i}$, we have 
\[
\begin{cases}
|\{c \in A \mid \relsaa{1}{1}{\AA}{b}{c} \wedge
\relsaa{2}{2}{\AA}{b}{c} \wedge \labels{b} = \labels{c}\}|=i & \text{if } i \le M-1\\
|\{c \in A \mid \relsaa{1}{1}{\AA}{b}{c} \wedge \relsaa{2}{2}{\AA}{b}{c} \wedge \labels{b} = \labels{c}\}|\geq M & \text{otherwise.}
\end{cases}
\]
  \item For all $j \in [1,M]$, there exists at most one $i$ such that $S^j_{a,\rels{2}{2}{}{},i} \neq \emptyset$.
   
  \item For all $j \in [2,M+1]$ and $i \in [1,M]$ such that $S^j_{a,\rels{2}{2}{}{},i}\neq \emptyset$, there exists $k \in [1,M]$ such $S^{j-1}_{a,\rels{2}{2}{}{},k}\neq \emptyset$.
  \end{enumerate}
\end{lemma}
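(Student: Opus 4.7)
The plan is to mirror exactly the proof of Lemma~\ref{lem:alphalab}, with the role of the relation $\relsaord{1}{1}$ systematically replaced by $\relsaord{2}{2}$ and the predicates $\alpha^j_i$ replaced by $\beta^j_i$; the only difference in ranges is that the outer index $j$ now goes up to $M+1$ instead of $M+2$, reflecting the fact that, in the envisioned use (Lemma~\ref{lem:new-env-with-no-diag}), we will need two ``extra'' intersections in the first-coordinate direction (to absorb both a possible $\ediag$ element and a possible $\eqx$-element) but only one in the second-coordinate direction.

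Concretely, I will dispatch the seven items in parallel with Lemma~\ref{lem:alphalab}: item~(1) follows from $\phi^1_\beta$, which forces every non-diagonal element to carry exactly one $\beta$-label, so that $S_{a,\relsaord{2}{2}}$ partitions along the indices $(i,j)$; item~(2) is immediate from the ``exactly one label'' disjunction in $\phi^1_\beta$; item~(3) is the contrapositive of $\phi^4_\beta$ read with the roles of the two coordinates swapped: elements with the same label set sharing the second data value but not the first cannot share any $\beta^j_k$, so two elements in the same $S^j_{a,\relsaord{2}{2},i}$ must satisfy $\relsaord{1}{1}$; item~(4) uses $\phi^2_\beta$, which propagates the exact $\beta$-label along the equivalence $\phiint$; item~(5) combines $\phi^3_\beta$ with Lemma~\ref{lem:gamma-count} exactly as in the $\alpha$-proof, concluding that the intersection has size $i$ when $i \le M-1$ and $\ge M$ when $i=M$.

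For items~(6) and~(7) the reasoning is again copied over: if two distinct indices $i,i'$ had nonempty sets $S^j_{a,\relsaord{2}{2},i}$ and $S^j_{a,\relsaord{2}{2},i'}$, then picking representatives $b,c$ and using item~(5) we would rule out $\relsaord{1}{1}{b}{c}$ (otherwise $i=i'$ by $\phi^2_\beta$), but then $\phi^4_\beta$ forbids them from sharing the $\beta^j$-superscript, contradiction; item~(7) is a direct instance of $\phi^5_\beta$, which guarantees, for any $\beta^j_i$-labelled element with $j\ge 2$, the existence of a $\beta^{j-1}_k$-labelled element that is equivalent on labels and on the second value.

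I do not expect any genuine obstacle here: the $\beta$-formulas were designed as the literal symmetric counterpart of the $\alpha$-formulas, so the main work is bookkeeping to make sure the index ranges $[1,M+1]$ versus $[1,M+2]$ are used consistently (in particular in items~(6)--(7), where the upper bound on $j$ in $\phi^4_\beta$ and $\phi^5_\beta$ must match). The only minor point worth a sentence of justification is item~(3), where one must first apply item~(1) to locate $b$ inside some $S^j_{a,\relsaord{2}{2},i}$ and then invoke $\phi^2_\beta$ to conclude that $c$ belongs to the same set before arguing on $\relsaord{1}{1}$; this is the same small detour used in the corresponding part of the proof of Lemma~\ref{lem:alphalab}.
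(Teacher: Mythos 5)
Your proposal is correct and takes exactly the paper's route: the paper's own proof of Lemma~\ref{lem:betalab} is just the one-line remark that it mirrors the proof of Lemma~\ref{lem:alphalab} with $\relsaord{1}{1}$ and $\alpha^j_i$ replaced by $\relsaord{2}{2}$ and $\beta^j_i$, and your item-by-item dispatch matches that proof's assignments ($\phi^1_\beta$ for (1)--(2), $\phi^4_\beta$ for (3) and (6), $\phi^2_\beta$ for (4), $\phi^3_\beta$ together with Lemma~\ref{lem:gamma-count} for (5), $\phi^5_\beta$ for (7)). One tiny misstatement: the ``locate $b$ via item (1), then propagate the label with $\phi^2_\beta$'' detour you flag at the end belongs to item (4), not item (3) (in (3) membership of both $b,c$ in $S^j_{a,\relsaord{2}{2},i}$ is the hypothesis, and $\phi^4_\beta$ applies directly), but since your main dispatch already assigns $\phi^2_\beta$ to (4), this does not affect correctness.
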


Now that we can count on a consistent labeling with predicates from $\Lambda_M$,
let us see how we can exploit it to express
$\newcc{\USet}{\BSet}{\ge}{m} \in \myCountConst{}{}{M}$,
with additional help from Lemma~\ref{lem:new-env-with-no-diag},
as a formula $\phi_{\USet,\BSet,m}(x) \in \ndFOvar{2}{\extSigma \cup \{\eqx,\ediag\} \cup \Lambda_M,\Binary_{df}}{2}$
applied to \emph{non-diagonal} elements (outside $P_\ediag$).
Let us look at two sample cases according to the case distinction
done in Lemma~\ref{lem:new-env-with-no-diag}.
Hereby, we will use, for $U \subseteq \Sigma$, the formula
$\philabels{\USet}(y) = \bigwedge_{\sigma \in U} \sigma(y) \wedge
\bigwedge_{\sigma \in \Sigma \setminus U} \neg \sigma(y)$. We now
provide the definition of the formulas $\phi_{\USet,\BSet,m}$ using a case
analysis on the shape of $\BSet$ and the result of Lemma \ref{lem:new-env-with-no-diag}:

\begin{enumerate}
\item {\bf Case $\BSet=\{\relsaord{1}{1},\relsaord{2}{2},\relsaord{1}{2}\}$:} in this simple case, we need to say that (i) the element $a$ under consideration is in $P_\eqx$, and (ii) there is an intersection of size at least $m$ (i..e., it contains a $\gammarel{m}$-labeled element) whose elements $b$ satisfy $\rels{1}{1}{a}{b}$, $\rels{2}{2}{a}{b}$, and $\labels{b}=U$:
  $$
  \begin{array}{lcl}
    \phi_{\USet,\BSet,m}(x)&:= &\displaystyle  \eqx(x) \wedge \exists y.(\phi_\USet(y) \wedge \rels{1}{1}{x}{y} \wedge \rels{2}{2}{x}{y} \wedge \gamma_m(y))
  \end{array}
  $$
\item {\bf Case $\BSet=\{\relsaord{1}{1},\relsaord{2}{2}\}$:}
  $$
  \begin{array}{lcl}
    \phi_{\USet,\BSet,m} (x)&:=&\displaystyle \neg \eqx(x) \wedge \exists y.(\phi_\USet(y) \wedge \rels{1}{1}{x}{y} \wedge \rels{2}{2}{x}{y} \wedge \gamma_m(y))
  \end{array}
  $$
\item  {\bf Case  $\BSet=\{\relsaord{1}{1},\relsaord{1}{2}\}$:}
  $$
  \begin{array}{lcl}
    \phi_{\USet,\BSet,m}(x)&:=&\displaystyle \neg \eqx(x) \wedge \exists y.(\phi_\USet(y) \wedge \eqx(y) \wedge \rels{1}{1}{x}{y} \wedge \gamma_m(y))\\
    
  \end{array}
  $$
\item  {\bf Case $\BSet=\{\relsaord{2}{2},\relsaord{1}{2}\}$:} For this case, we first
  need an extra definition. For $m \in [1,M]$, we define
  $\mathcal{S}_{\beta,m}$ the set of subsets of $\Lambda_M^\alpha$ as
  follows:
  $\mathcal{S}_{\beta,m}=\{\{\beta^{j_1}_{i_1},\ldots,\beta^{j_k}_{i_k}\}
  \mid i_1+\ldots+i_k\geq m \mbox{ and } j_1 < j_2 < \ldots < j_k
  \}$. It corresponds to the sets of element $\beta^j_i$ whose sum of
  $i$ is greater than or equal to $m$.
  We then have:
  $$
  \begin{array}{lcl}
    \phi_{\USet,\BSet,m} (x)&:=&   \displaystyle \eqx(x) \wedge \bigvee_{S \in\mathcal{S}_{\beta,m}}  \bigwedge_{\beta \in S} \exists y.\big(\phi_\USet(y) \wedge \neg \eqx(y) \wedge \beta(y) \wedge \rels{2}{2}{x}{y}\big)
  \end{array}
  $$
\item  {\bf Case $\BSet=\{\relsaord{2}{2}\}$:} we use again the set $\mathcal{S}_{\beta,m}$ introduced previously.
  $$
  \begin{array}{lcl}
    \phi_{\USet,\BSet,m} (x)&:=& \displaystyle \neg \eqx(x) \wedge \bigvee_{S \in\mathcal{S}_{\beta,m}}  \bigwedge_{\beta \in S} \exists y.\big(\phi_\USet(y) \wedge \beta(y) \wedge \neg(\rels{1}{1}{x}{y}) \wedge \rels{2}{2}{x}{y}\big)    
  \end{array}
  $$
\item {\bf Case $\BSet=\{\relsaord{1}{1}\}$:} Similar to Case 4., we first need
  an extra definition.
For $m \in \{1,\ldots,M\}$, we define the set $\mathcal{S}_{\alpha,m}$ of subsets
of $\Lambda_M^\alpha$ as follows: $\mathcal{S}_{\alpha,m}=\{\{\alpharel{i_1}{j_1},\ldots,\alpharel{i_k}{j_k}\} \mid i_1+\ldots+i_k \geq m  \mbox{ and } j_1 < j_2 < \ldots < j_k \}$. It corresponds to the sets of elements $\alpharel{i}{j}$ whose sum of $i$ is greater than or equal to $m$.
  We then have:
\[\phi_{\USet,\BSet,m}(x) := \displaystyle \bigvee_{S \in\mathcal{S}_{\alpha,m}}  \bigwedge_{\alpha \in S} \exists y.\big(\phi_\USet(y) \wedge \alpha(y) \wedge \neg \eqx(y) \wedge \rels{1}{1}{x}{y} \wedge \neg (\rels{2}{2}{x}{y})\big)\]
\item {\bf Case $\BSet=\{\relsaord{1}{2}\}$:} We use here again the set $\mathcal{S}_{\beta,m}$ introduced in Case 4.
    
    
  $$
  \begin{array}{lcl}
    \phi_{\USet,\BSet,m}(x) &:=& \displaystyle \neg\eqx(x) \wedge \exists y.\bigg(\ediag(y) \wedge \rels{1}{1}{x}{y} \;\wedge\\
                           &&   \displaystyle            \bigvee_{S \in\mathcal{S}_{\beta,m}}  \bigwedge_{\unary \in S} \exists x.\big(\phi_\USet(x) \wedge \unary(x) \wedge \neg (\rels{1}{1}{y}{x}) \wedge \rels{2}{2}{y}{x} \big)\bigg)\\
    
  \end{array}
  $$
\end{enumerate}

Finally, it remains to say that all elements are labeled with the suitable counting constraints. So we let
$\phi_{cc} = \forall x. \neg\ediag(x) \to \bigwedge_{\newcc{\USet}{\BSet}{\ge}{m}\, \in\, \myCountConst{}{}{M}} \newcc{\USet}{\BSet}{\ge}{m}(x) \leftrightarrow \phi_{\USet,\BSet,m}(x)$.

\begin{lemma}
\label{lem:new-correct-typing}
  Let $\AA=(A,(P_{\unary}),\ifunct,\ofunct) \in \nData{2}{\Sigma \cup \{\eqx\} \cup \myCountConst{}{}{M} \cup \Lambda_M}$
  be eq-respecting.
  If $\addediag{\AA} \models \phi_\alpha \wedge \phi_\beta \wedge \phi_\gamma \wedge \phi_{cc}$, then $\AA$ is cc-respecting.
\end{lemma}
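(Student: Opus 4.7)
The plan is to reduce everything to showing the following equivalence for every non-diagonal $a \in A$, every $U \subseteq \Sigma$, every nonempty $R \subseteq \Gamma$, and every $m \in \{1,\ldots,M\}$:
\[
\addediag{\AA} \models \phi_{U,R,m}(a) \;\Longleftrightarrow\; |\myEnv{\AA}{\Sigma}{\Binary}{a}{U}{R}| \ge m.
\]
Once this is established, the hypothesis $\addediag{\AA} \models \phi_{cc}$ forces $a \in P_{\newcc{U}{R}{\ge}{m}}$ iff the right-hand side holds, which is precisely cc-respectingness of $\AA$.

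The central tool will be Lemma~\ref{lem:new-env-with-no-diag}, which rewrites $\myEnv{\AA}{\Sigma}{\Binary}{a}{U}{R}$ in terms of diagonal-free environments of the form $\myEnv{\addediag{\AA}}{\Sigma}{\Binary_{df}}{c}{U}{R'}$ (possibly with the reference point $c$ shifted to a diagonal element, and possibly intersected with or deprived of $P_\eqx$ or $P_\ediag$). The definitions of the $\phi_{U,R,m}$ above were crafted to mirror exactly the case distinction of that lemma, so the proof proceeds by case analysis on the shape of $R$.

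For $R \in \{\Gamma, \Binary_{df}, \{\relsaord{1}{1},\relsaord{1}{2}\}, \{\relsaord{2}{2},\relsaord{1}{2}\}\}$, Lemma~\ref{lem:new-env-with-no-diag} identifies $\myEnv{\AA}{\Sigma}{\Binary}{a}{U}{R}$ with (a restriction of) a single intersection in $\addediag{\AA}$, whose size can be read off from a $\gamma_m$-label thanks to Lemma~\ref{lem:gamma-count}; correspondingly $\phi_{U,R,m}$ simply asserts the existence of a $\gamma_m$-witness with the correct $\relsaord{i}{j}$-connections and $U$-labels, together with the appropriate $\eqx$-polarity at $a$. For $R \in \{\{\relsaord{1}{1}\},\{\relsaord{2}{2}\},\{\relsaord{1}{2}\}\}$ the environment is a union of several intersections sharing a first or second value, and the formula must count the total cardinality across them. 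Here Lemmas~\ref{lem:alphalab} and~\ref{lem:betalab} guarantee that the non-diagonal elements related to $a$ by $\relsaord{1}{1}$ (resp.\ $\relsaord{2}{2}$) are partitioned by their $\alpha^j_i$- (resp.\ $\beta^j_i$-) labels, with $i$ encoding the size of the intersection and $j$ enumerating distinct intersections among those sharing the first (resp.\ second) value. The disjunction over $\mathcal{S}_{\alpha,m}$ (resp.\ $\mathcal{S}_{\beta,m}$) in the formula then ranges precisely over tuples of index pairs whose $i$-values sum to at least $m$.

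The hard part will be the combinatorial bookkeeping in these last cases. In the forward direction, witnesses supplied for distinct $j$-indices of some $S \in \mathcal{S}_{\alpha,m}$ lie in pairwise distinct intersections by items~2 and~4 of Lemma~\ref{lem:alphalab}, so their contributions counted by item~5 are disjoint and sum to at least $m$. In the converse direction, partitioning $\myEnv{\AA}{\Sigma}{\Binary}{a}{U}{R}$ along the intersections picked out by items~1,~2,~5 yields intersection sizes summing to at least $m$, from which a suitable $S$ together with its witnesses is extracted. For $R = \{\relsaord{1}{2}\}$ there is the extra subtlety already visible in Lemma~\ref{lem:new-env-with-no-diag}(7) that the reference must shift to the unique $d \in P_\ediag$ with $\relsaa{1}{1}{\addediag{\AA}}{d}{a}$; its existence and uniqueness follow from well-diagonalization of $\addediag{\AA}$, and the formula captures this shift by first existentially binding an auxiliary variable to $d$ and then performing the $\beta$-count from~$d$. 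The verification is routine but must be carried out separately for each of the eight cases.
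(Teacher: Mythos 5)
Your proposal is correct and follows essentially the same route as the paper's proof: reduce, via $\phi_{cc}$, to showing $\addediag{\AA}\models\phi_{\USet,\BSet,m}(a)$ iff $|\myEnv{\AA}{\Sigma}{\Binary}{a}{\USet}{\BSet}|\ge m$, then argue case by case on $\BSet$ using Lemma~\ref{lem:new-env-with-no-diag} together with Lemmas~\ref{lem:gamma-count}, \ref{lem:alphalab} and \ref{lem:betalab} to justify the counting (including the reference shift to the diagonal element for $\BSet=\{\relsaord{1}{2}\}$). In fact your write-up is more explicit about the combinatorial bookkeeping than the paper's own proof, which states the same argument at a higher level.
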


\begin{proof}
  Let $\AA=(A,(P_{\unary}),\ifunct,\ofunct) \in \nData{2}{\Sigma \cup \{\eqx\} \cup \myCountConst{}{}{M} \cup \Lambda_M}$
  be eq-respecting and such that $\addediag{\AA} \models \phi_\alpha
  \wedge \phi_\beta \wedge \phi_\gamma \wedge \phi_{cc}$. We need to
  show that for all $a \in A$ and all $\newcc{U}{R}{\ge}{m} \in \myCountConst{}{}{M}$ ,
we have $a \in P_{\newcc{U}{R}{\ge}{m}}$ iff
$|\myEnv{\AA}{\Sigma}{\Binary}{a}{\USet}{\BSet}| \ge m$. We consider
$a \in A$. Since $\addediag{\AA}
\models \phi_{cc}$, we deduce that $a \in P_{\newcc{U}{R}{\ge}{m}}$
iff $\addediag{\AA} \models_{\Intrepl{x}{a}} \phi_{\USet,\BSet,m}(x)$. We need
hence to show that $\addediag{\AA} \models_{\Intrepl{x}{a}}
\phi_{\USet,\BSet,m}(x)$ iff $|\myEnv{\AA}{\Sigma}{\Binary}{a}{\USet}{\BSet}| \ge m$. To prove
this , we first use Lemma \ref{lem:new-env-with-no-diag} to get a
characterization of
$\myEnv{\AA}{\Sigma}{\Binary}{a}{\USet}{\BSet}$.  This
characterization is then directly translated into the formula
$\phi_{\USet,\BSet,m}(x)$ which makes use of the label in $\Lambda_M$
to count in the environment of $a$. The fact that this counting is
performed correctly is guaranteed by Lemmas
\ref{lem:gamma-count},\ref{lem:alphalab} and
\ref{lem:betalab}. Putting these arguments together, we can conclude
that the lemma holds.
\end{proof}

\newcommand{\All}{\mathsf{All}}
\newcommand{\transl}[1]{\widehat{#1}}

\noindent\textbf{Step 5: Putting it All Together}\\

Let $\All = \Sigma \cup \{\eqx,\ediag\} \cup \myCountConst{}{}{M} \cup \Lambda_M$
denote the set of all the unary predicates that we have introduced so far.
Recall that, after Step~1, we were left with $M \ge 1$ and
a formula $\phi \in \ndFO{2}{\Unary \cup \{\eqx\} \cup \myCountConst{\Unary}{\Binary}{M},\emptyset}$.
The question is now whether $\phi$ has a well-typed model (i.e., a model that is eq-respecting
and cc-respecting). Altogether, we get the following reduction:

\begin{proposition}
  \label{prop:count-sat}
Let $\phi \in \ndFO{2}{\Unary \cup \{\eqx\} \cup \myCountConst{\Unary}{\Binary}{M},\emptyset}$.
Then, $\phi$ has a well-typed model iff $\transl{\phi} := \tred{\phi} \wedge \phied^{\All \setminus \{\eqx,\ediag\}} \wedge \phi_\alpha \wedge \phi_\beta \wedge \phi_\gamma \wedge \phi_{cc} \in \extndFOvar{2}{\All,\Binary_{df}}{2}$ is satisfiable.
\end{proposition}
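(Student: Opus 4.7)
The proof proceeds by establishing both implications.

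For the right-to-left direction, assume $\BB \models \transl{\phi}$, and let $\phi' := \tred{\phi} \wedge \phi_\alpha \wedge \phi_\beta \wedge \phi_\gamma \wedge \phi_{cc}$, which belongs to $\ndFO{2}{\All, \Binary_{df}}$. Since $\BB \models \phi' \wedge \phied^{\All \setminus \{\eqx,\ediag\}}$, Lemma~\ref{lem-diagonalization} provides an eq-respecting structure $\AA' \in \nData{2}{\All \setminus \{\ediag\}}$ with $\addediag{\AA'} \models \phi'$. Lemma~\ref{lem:new-correct-typing} then makes $\AA'$ cc-respecting, and Lemma~\ref{lem:new-Tdiag} gives $\AA' \models \phi$. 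Forgetting the $\Lambda_M$ labels (which do not appear in $\phi$) yields the desired well-typed model of $\phi$.

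For the forward direction, let $\AA$ be a well-typed model of $\phi$. I extend $\AA$ to $\AA'$ by defining the labels in $\Lambda_M$ as follows. Within each intersection $I$, fix an enumeration $b_1,\ldots,b_{|I|}$ and set $\gamma_{\min(k,M)}(b_k)$. For $\alpha$, for each pair $(U,n)$ with $U \subseteq \Unary$ and $n \in \N$, enumerate the intersections $I_1, I_2, \ldots$ whose elements have unary label set $U$ and first value $n$, and give every element of $I_\ell$ the label $\alpharel{\min(|I_\ell|,M)}{\min(\ell,M+2)}$; define $\beta$ symmetrically, grouping by second value and capping the outer index at $M+1$. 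By construction $\addediag{\AA'}$ is well-diagonalized, hence satisfies $\phied^{\All \setminus \{\eqx,\ediag\}}$; direct inspection of the subformulas shows $\addediag{\AA'} \models \phi_\gamma \wedge \phi_\alpha \wedge \phi_\beta$ (in particular, the progression clauses $\phi^3_\gamma, \phi^5_\alpha, \phi^5_\beta$ hold because each position in the enumeration has its predecessor available, and the uniqueness clauses $\phi^2_\gamma, \phi^4_\alpha$ hold thanks to the cap being applied only once all distinct indices are used). Lemma~\ref{lem:new-Tdiag} yields $\addediag{\AA'} \models \tred{\phi}$ from $\AA \models \phi$.

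It remains to verify $\addediag{\AA'} \models \phi_{cc}$. Fix a non-diagonal $a$ and a counting predicate $\newcc{U}{R}{\ge}{m}$. Since $\AA$ is cc-respecting, we have $a \in P_{\newcc{U}{R}{\ge}{m}}$ iff $|\myEnv{\AA}{\Unary}{\Binary}{a}{U}{R}| \ge m$. I would rewrite this cardinality in $\addediag{\AA'}$ via Lemma~\ref{lem:new-env-with-no-diag}, splitting on the shape of $R$ into its seven nontrivial cases, and then translate sizes into the $\gamma$-, $\alpha$-, $\beta$-labelings via Lemmas~\ref{lem:gamma-count}, \ref{lem:alphalab}, and \ref{lem:betalab}, matching each case to the corresponding clause of $\phi_{\USet,\BSet,m}$. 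The main obstacle is precisely this last bookkeeping: one has to check that the disjunction over $\mathcal{S}_{\alpha,m}$ (respectively $\mathcal{S}_{\beta,m}$) correctly aggregates per-intersection sizes up to the threshold $m$, and that the $M+2$ (respectively $M+1$) bound on the outer index is just large enough to survive the subtractions $\setminus P_\eqx$ and $\setminus P_\ediag$ that appear on the right-hand sides of Lemma~\ref{lem:new-env-with-no-diag}.
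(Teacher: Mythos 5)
Your proposal is correct and follows essentially the same route as the paper: the right-to-left direction chains Lemmas~\ref{lem-diagonalization}, \ref{lem:new-correct-typing} and \ref{lem:new-Tdiag} exactly as the paper does, and the forward direction extends the well-typed model with a $\Lambda_M$-labeling of $\addediag{\AA}$ and checks $\phi_\gamma$, $\phi_\alpha$, $\phi_\beta$, $\phied$. Your explicit enumeration-based labeling (capping the inner index at $M$ and the outer index at $M+2$, resp.\ $M+1$) is in fact more detailed than the paper's one-line assertion that such a labeling exists, and the $\phi_{cc}$ bookkeeping you defer to Lemmas~\ref{lem:new-env-with-no-diag}, \ref{lem:gamma-count}, \ref{lem:alphalab}, \ref{lem:betalab} is likewise left implicit in the paper's own proof, so there is no genuine gap.
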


\begin{proof}
  Suppose $\transl{\phi}$ is satisfiable. Then, there is $\BB \in \nData{2}{\All}$ such that $\BB \models \transl{\phi}$. By Lemma \ref{lem-diagonalization}, there exists an eq-respecting data structure $\AA \in \nData{2}{\Unary \cup \{\eqx\} \cup \myCountConst{\Unary}{\Binary}{M} \cup \Lambda_M}$ such that $\addediag{\AA} \models \tred{\phi} \wedge \phi_\alpha \wedge \phi_\beta \wedge \phi_\gamma \wedge \phi_{cc}$. Using Lemma \ref{lem:new-correct-typing}, we deduce that $\AA$ is cc-respecting and, thus, well-typed. Furthermore, by Lemma \ref{lem:new-Tdiag}, we have $\AA \models \phi$. Note that $\AA$ belongs to $\nData{2}{\Unary \cup \{\eqx\} \cup \myCountConst{\Unary}{\Binary}{M} \cup \Lambda_M}$. However, by removing the unary predicates in $\Lambda_M$, we still have a model of $\phi$ from $\nData{2}{\Unary \cup \{\eqx\} \cup \myCountConst{\Unary}{\Binary}{M}}$ as required. Hence, $\phi$ has a well-typed model.

Assume now that there exists a well-typed data structure $\AA \in \nData{2}{\Unary \cup \{\eqx\} \cup \myCountConst{\Unary}{\Binary}{M}}$ such that $\AA \models \phi$. Using Lemma \ref{lem:new-Tdiag}, we have that $\addediag{\AA} \models \tred{\phi}$. Furthermore, using the fact that $\AA$ is well-typed, we can add the unary predicates from $\Lambda_M$ to $\addediag{\AA}$ to obtain a data structure $\AA'$ in $\nData{2}{\All}$ such that $\AA' \models \phi_\alpha \wedge \phi_\beta \wedge \phi_\gamma \wedge \phi_{cc}$.
Note that $\AA'$ is well-diagonalized.
We deduce that $\AA' \models \transl{\phi}$.
\end{proof}

We are now in the position to prove our main result:

\begin{theorem}\label{thm:datasat-rad1}
   $\DataSat{\rndFOr{1},2,\{\relsaord{1}{1},\relsaord{2}{2},\relsaord{1}{2}\}}$ is decidable.
\end{theorem}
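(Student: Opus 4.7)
The plan is to simply chain together the reductions established in the preceding lemmas and propositions, which have done essentially all the substantive work. Given an input sentence $\phi \in \rndFO{2}{\Unary,\{\relsaord{1}{1},\relsaord{2}{2},\relsaord{1}{2}\}}{1}$, I would first apply Lemma~\ref{lem:new-fo-count} to effectively compute a threshold $M \in \N$ and a formula $\chi \in \ndFO{2}{\Unary \cup \{\eqx\} \cup \myCountConst{\Unary}{\Binary}{M},\emptyset}$ such that $\phi$ is satisfiable (over $\nData{2}{\Unary}$) if and only if $\chi$ has a well-typed model. This step replaces the local radius-$1$ modalities by explicit counting constraints encoded as unary predicates.

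Next I would apply Proposition~\ref{prop:count-sat} to obtain a sentence $\transl{\chi} \in \extndFOvar{2}{\All,\Binary_{df}}{2}$, where $\All = \Unary \cup \{\eqx,\ediag\} \cup \myCountConst{\Unary}{\Binary}{M} \cup \Lambda_M$, such that $\chi$ has a well-typed model iff $\transl{\chi}$ is satisfiable. The crucial point here is that $\transl{\chi}$ uses only the diagonal-free set $\Binary_{df} = \{\relsaord{1}{1},\relsaord{2}{2}\}$ of binary relation symbols: the diagonal relation $\relsaord{1}{2}$ has been eliminated by the well-diagonalization construction, which routes equality tests between the first and second data values through dedicated $\ediag$-labeled witnesses.

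Finally, since $\transl{\chi}$ lies in extended two-variable first-order logic over two equivalence relations, Proposition~\ref{prop:exttwoFO} tells us that its satisfiability is decidable. Composing the three (effective) reductions yields an algorithm that decides satisfiability of the original $\rndFO{2}{\Unary,\{\relsaord{1}{1},\relsaord{2}{2},\relsaord{1}{2}\}}{1}$ sentence $\phi$, which establishes the theorem.

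There is no real obstacle at this stage: all the delicate work, namely expressing counting constraints in two-variable logic via the predicates in $\Lambda_M$ and simulating the diagonal relation via $\ediag$-elements together with a permutation argument, is already contained in Lemmas~\ref{lem:new-fo-count}, \ref{lem-diagonalization}, \ref{lem:new-Tdiag}, \ref{lem:new-env-with-no-diag}, \ref{lem:gamma-count}, \ref{lem:alphalab}, \ref{lem:betalab}, and \ref{lem:new-correct-typing}, culminating in Proposition~\ref{prop:count-sat}. The proof of Theorem~\ref{thm:datasat-rad1} is therefore a one-line composition of these reductions with Proposition~\ref{prop:exttwoFO}.
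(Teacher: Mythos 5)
Your proposal is correct and follows exactly the same route as the paper: the paper's proof of Theorem~\ref{thm:datasat-rad1} is precisely the composition of Lemma~\ref{lem:new-fo-count}, Proposition~\ref{prop:count-sat}, and Proposition~\ref{prop:exttwoFO}, just as you describe. Nothing is missing, since all the substantive work (diagonalization, counting via $\Lambda_M$, and the reduction to extended two-variable logic) is indeed contained in the cited intermediate results.
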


\begin{proof}
Let $\psi \in \rndFO{2}{\Unary,\{\relsaord{1}{1},\relsaord{2}{2},\relsaord{1}{2}\}}{1}$. Using Lemma \ref{lem:new-fo-count}, we can effectively compute $M \in \N$ and $\phi \in \ndFO{2}{\Unary \cup \{\eqx\} \cup \myCountConst{\Unary}{\Binary}{M},\emptyset}$ such that $\psi$ is satisfiable iff $\phi$ has a well-typed model. By Proposition~\ref{prop:count-sat}, $\phi$ has a well-typed model iff $\transl{\phi}$ is satisfiable. Since $\transl{\phi}$ belongs to $\extndFOvar{2}{\All,\Binary_{df}}{2}$, we conclude using Proposition~\ref{prop:exttwoFO}.
\end{proof}

\subsection{Undecidability  results}
We shall now show that extending the neighborhood radius yields undecidability.
We rely on a reduction from the
domino problem  \cite{Gurevich97} and use a specific technique
presented in \cite{otto98c}.

\subsubsection{The Tiling Problem}

A \emph{domino system} $\DD$ is a triple $(D,H,V)$ where $D$ is a finite set of dominoes and $H,V\subseteq D\times D$ are two binary relations.
Let $\grid{m}$ denote the standard grid on an $m\times m$ torus, i.e.,
$\grid{m} = (\gridcarrier{m},H_{m},V_{m})$ where $H_{m}$ and
$V_{m}$  are two binary relations defined as follows: $
	\gridcarrier{m} = \Zmod{m}\times\Zmod{m}$, $H_{m} = \{
    ((i,j),(i',j)) \mid i'-i \equiv 1 \mod m \}$, and $
	V_{m} = \{ ((i,j),(i,j')) \mid i'-i \equiv 1 \mod m \}$.
In the sequel, we will suppose $\Zmod{m}=\{0,\ldots,m-1\}$ using
  the least positive member to represent residue classes.
  
A \emph{bi-binary structure} is a triple $(A,R_1,R_2)$ where $A$ is a finite set and $R_1,R_2$ are subsets of $A\times A$.
Domino systems and $\grid{m}$ for any $m$ are examples of bi-binary structures.
For two bi-binary structures $\grid{}=(G,H,V)$ and $\grid{}'=(G',H',V')$, we say that 
$\grid{}$ is \emph{homomorphically embeddable} into $\grid{}'$ if there is a morphism $\pi:\grid{}\to\grid{}'$, i.e., a mapping $\pi$ such that, for all $a,a'\in G, (a,a')\in H \donc (\pi(a),\pi(a'))\in H'$ and $(a,a')\in V \donc (\pi(a),\pi(a'))\in V'$.
For instance, $\grid{k\cdot m}$ is homomorphically embeddable into $\grid{m}$ through reduction mod $m$.
For a domino system $\DD$, a \emph{periodic tiling} is a morphism $\tau:\grid{m}\to\DD$ for some $m$ and we say that $\DD$ \emph{admits a periodic tiling} if there exists a periodic tiling of $\DD$.

The problem $\Tile$ (or \emph{periodic tiling problem}), which
is well known to be undecidable \cite{Gurevich97}, is defined as follows:
Given a domino system $\DD$, does $\DD$ admit a periodic tiling?

To use $\Tile$ in our reductions, we first use some
specific bi-binary structures, which we call grid-like and which are 
easier to manipulate in our context to encode domino systems.
A bi-binary structure $\grid{}=(A,H,V)$ is said to be $\emph{grid-like}$ if some $\grid{m}$ is homomorphically embeddable into $\grid{}$.
The logic \emph{$\FO$ over bi-binary structures} refers to the first-order logic on two binary relations $\Hform,\Vform$, and we write $\Hform xy$ to say that $x$ and $y$ are in relation for $\Hform$.
Consider the two following $\FO$ formulas over bi-binary structures: $
\vpcomplete = \forall x.\forall y.\forall x'.\forall y'.((\Hform xy\et
\Vform xx' \et \Vform yy') \to \Hform x'y')$ and $
\vpprog = \forall x.(\exists y.\Hform xy\et\exists y. \linebreak[0]\Vform xy)$.
The following lemma, first stated and proved in \cite{otto98c}, shows
that these formulas suffice to characterize grid-like
structures:

\begin{lemC}[\cite{otto98c}] \label{lemma:grid-like}
	Let $\grid{}=(A,H,V)$ be a bi-binary structure. If $\grid{}$ satisfies $\vpcomplete$ and $\vpprog$, then $\grid{}$ is grid-like.
  \end{lemC}

  Given $\AA=(A,(P_{\unary}),\ifunct,\ofunct,) \in \nData{2}{\Sign}$ and
$\vp(x,y) \in \ndFO{2}{\Unary,\Binary}$, we define the binary relation
$\query{\vp}{\AA} = \{(a,b)\in A\times A\mid
\AA\models_{\Intrepltwo{x}{a}{y}{b}} \vp(x,y)$ for some interpretation function $I\}$.
Thus, given two  $\ndFO{2}{\Unary,\Binary}$ formulas
$\vp_1(x,y),\vp_2(x,y)$ with two free variables,
$(A,\query{\vp_1}{\AA},\query{\vp_2}{\AA})$ is a bi-binary structure.

\begin{figure}[htbp]

	\begin{tikzpicture}[line cap=round,line join=round,>=triangle 45,x=1cm,y=1cm]
\begin{scope}
\clip (-.5,-.5) rectangle +( \xmax+1,\ymax+1);

\foreach \x in {1,3,...,\xmax}
\foreach \y in {1,3,...,\ymax}
\connectiononetwo{\x}{\y};

\foreach \x in {0,  2, ..., \xmax}
\foreach \y in {0,  2, ..., \xmax}
\roundedsquare{\x}{\y}{styleclasseone};

\foreach \x in {-1, 1, ..., \xmax}
\foreach \y in {-1, 1, ..., \xmax}
\roundedsquare{\x}{\y}{styleclassetwo};

\foreach \x in {0, ..., \xmax}
\foreach \y in {0, ..., \ymax}
\drawelement{\x}{\y};

\node  at (1,2) [xshift=-0.7em]{$a_1$};
\node  at (2,2) [xshift=-0.7em]{$a_2$};
\node  at (3,2) [xshift=-0.7em]{$a_3$};
\node  at (1,3) [xshift=-0.7em]{$a_4$};
\node  at (2,3) [xshift=-0.7em]{$a_5$};
\node  at (3,3) [xshift=-0.7em]{$a_6$};
\end{scope}

\draw[<->] (-.7,-1) -- (\xmax+.7,-1);
\draw[<->] (-1,-.7) -- (-1,\xmax+.7);
\foreach \i in {0,2,...,\xmax}{
    \draw  (\i,-.95) -- (\i,-1.05) node[below] {$\UH{0}$};
    \draw (-.95,\i) -- (-1.05,\i) node[left] {$\UV{0}$};
}
\foreach \i in {1,3,...,\xmax}{
    \draw (\i,-.95) -- (\i,-1.05) node[below] {$\UH{1}$};
    \draw (-.95,\i) -- (-1.05,\i) node[left] {$\UV{1}$};
}
\end{tikzpicture}
	\caption{The local pattern of $\AA_{2m}$.
	Dots denote elements. Two dots are in the same $\s{1}{1}$-equivalence class (resp.\ $\s{2}{2}$) iff they are in the same green (resp.\ purple) area.
	The thick black lines represent the relation $\s{1}{2}$ in the following way: 
	if a $\s{1}{1}$-equivalence class $C_1$ and a $\s{2}{2}$-equivalence class $C_2$ are connected with a thick black line, then for any $a\in C_1$ and $b\in C_2$, we have $\rels{1}{2}{a}{b}$.}
	\label{fig:extended grid}
  \end{figure}

As we want to reason on data structures, we build a data structure
$\AA_{2m}$ that corresponds to the grid $\grid{2m}=(\gridcarrier{2m},H_{2m},V_{2m})$. This
structure is depicted locally in Figure~\ref{fig:extended grid}.
To define $\AA_{2m}$, we use four unary predicates given by
$\Unarygrid = \{\UH{0},\UH{1},\UV{0},\UV{1}\}$. They
give us access to the coordinate modulo $2$.
We then define $\AA_{2m}=(\gridcarrier{2m},\ifunct,\ofunct,
(P_\unary)) \in \nData{2}{\Unarygrid}$ as follows:
For $k \in \{0,1\}$, we have $P_{\UH{k}} = \{ (i,j) \in
  \gridcarrier{2m} \mid i \equiv k \mod 2\}$ and 
	$P_{\UV{k}} = \{ (i,j) \in \gridcarrier{2m} \mid j \equiv k \mod 2\}$.
For all $i,j \in \{0,\ldots,2m-1\}$, we set
  $\ifunct(i,j)=((i/2)\mod m)+m*((j/2)\mod m)$ (where $/$ stands
  for the Euclidian division). Finally,
  for all $i,j \in \{1,\ldots,2m\}$, set $\ofunct(i \mod
  (2m),j \mod (2m))=\ifunct(i-1,j-1)$.

We provide below the definition of quantifier free formulas
$\vpH(x,y)$ and $\vpV(x,y)$ from the logic $\ndFO{2}{\Unarygrid,\{\relsaord{1}{1},\relsaord{2}{2}\}}$
with two free variables.
These formulas allow us to make the link between the data structure
$\AA_{2m}$ and the grid $\grid{2m}$, and we will use them later on to
ensure that a data structure has a shape 'similar' to
$\AA_{2m}$.

\begin{align*}
	\vpHzz &= \Pform{\UH{0}}{x} \et \Pform{\UH{1}}{y} \et  \Pform{\UV{0}}{x} \et \Pform{\UV{0}}{y} \et \rels{1}{1}{x}{y} \\
	\vpHoz &= \Pform{\UH{1}}{x} \et \Pform{\UH{0}}{y} \et  \Pform{\UV{0}}{x} \et \Pform{\UV{0}}{y} \et \rels{2}{2}{x}{y} \\
	\vpHzo &= \Pform{\UH{0}}{x} \et \Pform{\UH{1}}{y} \et  \Pform{\UV{1}}{x} \et \Pform{\UV{1}}{y} \et \rels{1}{1}{x}{y} \\
	\vpHoo &= \Pform{\UH{1}}{x} \et \Pform{\UH{0}}{y} \et  \Pform{\UV{1}}{x} \et \Pform{\UV{1}}{y} \et \rels{2}{2}{x}{y} \\
  \vpH &= \vpHzz \ou \vpHoz \ou \vpHzo \ou \vpHoo\\
  	\vpVzz &= \Pform{\UH{0}}{x} \et \Pform{\UH{0}}{y} \et  \Pform{\UV{0}}{x} \et \Pform{\UV{1}}{y} \et \rels{1}{1}{x}{y} \\
	\vpVoz &= \Pform{\UH{1}}{x} \et \Pform{\UH{1}}{y} \et  \Pform{\UV{0}}{x} \et \Pform{\UV{1}}{y} \et \rels{1}{1}{x}{y} \\
	\vpVzo &= \Pform{\UH{0}}{x} \et \Pform{\UH{0}}{y} \et  \Pform{\UV{1}}{x} \et \Pform{\UV{0}}{y} \et \rels{2}{2}{x}{y} \\
	\vpVoo &= \Pform{\UH{1}}{x} \et \Pform{\UH{1}}{y} \et  \Pform{\UV{1}}{x} \et \Pform{\UV{0}}{y} \et \rels{2}{2}{x}{y} \\
	\vpV &= \vpVzz \ou \vpVoz \ou \vpVzo \ou \vpVoo
\end{align*}

\begin{example}
We  draw six elements $a_1,a_2,a_3,a_4,a_5,a_6$ on Figure
\ref{fig:extended grid}. According to the definition of  $\AA_{2m}$
the data values associated to these elements are::
\begin{itemize}
\item $\ifunct(a_1)=m$ and  $\ofunct(a_1)=0$;
\item $\ifunct(a_2)=m+1$ and  $\ofunct(a_2)=0$;
\item $\ifunct(a_3)=m+1$ and  $\ofunct(a_3)=1$;
\item $\ifunct(a_4)=m$ and  $\ofunct(a_4)=m$;
\item $\ifunct(a_5)=m+1$ and  $\ofunct(a_5)=m$;
\item $\ifunct(a_6)=m+1$ and  $\ofunct(a_2)=m+1$.
\end{itemize}
We have $\ifunct(a_1)=\ifunct(a_4)$ as $a_1$ and
$a_4$ are in the same green area and $\ofunct(a_4)=\ofunct(a_5)$ as
$a_4$ and $a_5$ are in the same purple area. Furthermore we have as well
$\ifunct(a_4)=\ofunct(a_4)$ and $a_4$ lies in a green area connected to
a purple area  by a thick black line. We have then the following
assertions: $\AA_{2m} \models \vpHoz(a_1,a_2)$,  $\AA_{2m} \models
\vpHzz(a_2,a_3)$, $\AA_{2m} \models \vpHoo(a_4,a_5)$, $\AA_{2m}
\models \vpHzo(a_5,a_6)$, $\AA_{2m}
\models \vpVoz(a_1,a_4)$ , $\AA_{2m}
\models \vpVzz(a_2,a_5)$ and $\AA_{2m}
\models \vpVoz(a_3,a_6)$.

\end{example}

Using the definitions of $\gridcarrier{2m}$ and of $\AA_{2m}$ we can
show the following lemma.

\begin{lemma}
 \label{lem:grid2m}
 If $\grid{}$ is the bi-binary structure
  $(\gridcarrier{2m}, \query{\vpH}{\AA_{2m}},
  \query{\vpV}{\AA_{2m}})$, then $\grid{2m} =\grid{}$.
\end{lemma}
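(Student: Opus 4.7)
The plan is to verify the two set-theoretic identities $\query{\vpH}{\AA_{2m}} = H_{2m}$ and $\query{\vpV}{\AA_{2m}} = V_{2m}$ (the universes coincide by construction). I will treat the horizontal case in detail; the vertical one is completely symmetric, obtained by exchanging the roles of the two coordinates and of the two relations $\rels{1}{1}{}{}$ and $\rels{2}{2}{}{}$. The starting point is a clean description of the two equivalence relations on $\AA_{2m}$: from the definition of $\ifunct$ one reads that $\rels{1}{1}{(i,j)}{(i',j')}$ holds iff $\lfloor i/2\rfloor \equiv \lfloor i'/2\rfloor \pmod{m}$ and $\lfloor j/2\rfloor \equiv \lfloor j'/2\rfloor \pmod{m}$, i.e. iff $(i,j)$ and $(i',j')$ lie in the same even-aligned $2\times 2$ block of the torus; from $\ofunct(a,b)=\ifunct((a-1) \bmod 2m,\, (b-1) \bmod 2m)$ one obtains the analogous description for $\rels{2}{2}{}{}$, but for the $2\times 2$ block grid shifted by $(-1,-1)$.

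For the inclusion $H_{2m} \subseteq \query{\vpH}{\AA_{2m}}$, I would take an edge $a=(i,j)$, $b=((i+1)\bmod 2m,\, j)$ of $H_{2m}$ and select a disjunct of $\vpH$ according to the parities of $i$ and $j$. When $i$ is even, $\vpHzz$ (if $j$ is even) or $\vpHzo$ (if $j$ is odd) applies: the unary atoms follow from the parity conditions, and $\rels{1}{1}{a}{b}$ holds because $\lfloor i/2\rfloor = \lfloor (i+1)/2\rfloor$ while the second coordinates agree. When $i$ is odd, $\vpHoz$ or $\vpHoo$ applies and requires $\rels{2}{2}{a}{b}$, which is verified through the shifted-block characterization; the only delicate subcase is the wraparound $i = 2m-1$, where $b = (0,j)$ and one checks directly that $\ofunct(a) = \ifunct(2m-2, j-1)$ and $\ofunct(b) = \ifunct(2m-1, j-1)$ share the same value of $\lfloor\cdot/2\rfloor \bmod m$, namely $m-1$, so the two $\ofunct$-values coincide.

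For the reverse inclusion $\query{\vpH}{\AA_{2m}} \subseteq H_{2m}$, I would take a pair $(a,b)$ satisfying some disjunct of $\vpH$; the unary atoms then fix the parities of both coordinates of both endpoints (in particular they force the two second coordinates to have the same parity), while the relational atom places $a$ and $b$ in a single $2\times 2$ block of one of the two block grids. Combining these two constraints, the only pairs in the prescribed block with the prescribed (opposite) parities of the first coordinate are of the form $((i,j), ((i+1)\bmod 2m, j))$, which is precisely an edge of $H_{2m}$. The vertical argument proceeds identically with the roles of the two coordinates exchanged, the four disjuncts of $\vpV$ matching the four parity combinations for an edge $((i,j), (i, (j+1) \bmod 2m))$. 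The main obstacle is the pure bookkeeping of the eight parity/wraparound cases; once one recognizes that the $(-1,-1)$ shift built into $\ofunct$ is designed precisely so that $\rels{2}{2}{}{}$ captures the ``other half'' of the torus adjacencies (those connecting odd to even coordinates, including the boundary $2m-1 \to 0$), each individual case reduces to a one-line arithmetic check.
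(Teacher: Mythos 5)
Your proposal is correct and follows essentially the same route as the paper: both inclusions $H_{2m}\subseteq\query{\vpH}{\AA_{2m}}$ and $\query{\vpH}{\AA_{2m}}\subseteq H_{2m}$ (and symmetrically for $V_{2m}$) are established by a parity case analysis over the disjuncts of $\vpH$, using the description of $\relsaord{1}{1}$ and $\relsaord{2}{2}$ as ``same aligned $2\times 2$ block'' and ``same $(-1,-1)$-shifted block'' that the paper likewise extracts from the definitions of $\ifunct$ and $\ofunct$. One cosmetic remark: the symmetry giving the vertical case exchanges the two coordinates and the predicates $\UH{k}\leftrightarrow\UV{k}$ while keeping $\relsaord{1}{1}$ and $\relsaord{2}{2}$ in place (not exchanging the relations, as your first sentence suggests); your later concrete description of the $\vpV$ disjuncts matching the parity combinations of vertical edges is the accurate one.
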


\begin{proof}
  We have hence to prove that $H_{2m}=\{(a,b)\mid \AA_{2m} \models_{\Intrepltwo{x}{a}{y}{b}} 
  \vpH(x,y)$ for some
  interpretation function $I\} $ and $V_{2m}=\{(a,b)\mid \AA_{2m} \models_{\Intrepltwo{x}{a}{y}{b}} 
  \vpH(x,y)$ for some
  interpretation function $I\}$.
	We first show that $H_{2m} \subseteq
    \query{\vpH}{\AA_{2m}}$.  Let $((i,j),\linebreak[0](i',j')) \in
    H_{2m}$. Hence we have $j=j'$ and $ i'-i \equiv 1 \mod 2m$. We have
    then different cases according to the parity of $j, i $ and $i'$. Assume
    $i,j$ are even. Then$(i,j),(i',j') \in P_{\UV{0}}$ and $(i,j) \in
    P_{\UH{0}}$ and  $(i',j) \in P_{\UH{1}}$ and by definition of
    $\ifunct$, we have $\ifunct(i,j)=\ifunct(i',j)$, hence 
    $((i,j),(i',j)) \in \query{\vpHzz}{\AA_{2m}}$ and $((i,j),(i',j))
    \in \query{\vpH}{\AA_{2m}}$. The other cases can be treated similarly.

We now prove that $H_{2m} \supseteq \query{\vpH}{\AA_{2m}}$. Let $(a,b)$ be such that $\AA_{2m} \models_{\Intrepltwo{x}{a}{y}{b}} 
  \vpH(x,y)$  for some interpretation  function  $I$.
	For $\vpH$ to hold on $(a,b)$, one of the $\vpHij$ must hold.
	We treat the case $\AA_{2m} \models_{\Intrepltwo{x}{a}{y}{b}} \vpHoo(a,b)$.
	Write $(a_1,a_2)$ and $(b_1,b_2)$ the coordinates of $a$ and $b$ respectively.
	As $a \in P_{\UH{0}} \cap P_{\UV{0}}$  and $b \in P_{\UH{1}} \cap  P_{\UV{0}} $), we have that
    $a_1,a_2,b_2$ are even and $b_1$ is odd.
	As $\rels{1}{1}{a}{b}$, we have  $((a_1/2)\mod m)+m*((a_2/2)\mod
    m)= ((b_1/2)\mod m)+m*((b_2/2)\mod m)$. This allows us to conclude
    that $a_2=b_2$ and that $a_1-b_2 \equiv 1 \mod m$. So we have $(a,b)\in H_{2m}$. The other cases can be treated in a
    similar way.
    
	The proof that $V_{2m} =
    \query{\vpV}{\AA_{2m}}$ follows the exact same lines.
  \end{proof}

  \subsubsection{The Radius 3 case}

We first use the previously introduced notions to show that
$\DataSat{\rndFOr{3}, 2,\{\relsaord{1}{1},\relsaord{2}{2}\}}$ is undecidable. Hence, we assume
that  $\Binary=\{\relsaord{1}{1},\relsaord{2}{2}\}$. The first step in our
reduction from \Tile~
consists in defining the formula $\vpgriddd \in \rndFO{2}{\Unarygrid,\{\relsaord{1}{1},\linebreak[0]\relsaord{2}{2}\}}{3}$
to check that a data structure corresponds to a grid
($\oplus$ stands for exclusive or):
$$
  \begin{array}{lcl}
	\vpcompleteee & =& \forall x. \locformr{ \forall y.\forall
                       x'.\forall y'.\vpH(x,y)\et \vpV(x,x') \et \vpV(y,y')\to \vpH(x',y')}{x}{3} \\[0.8ex]
	\vpproggg &=& \forall x.\locformr{\exists y.\vpH(x,y)\et\exists y. \vpV(x,y)}{x}{3} \\[0.8ex]
	\vpgriddd& =& \vpcompleteee \et  \vpproggg 
	\et \forall x.\locformr{ (\Pform{\UH{0}}{x} \oplus \Pform{\UH{1}}{x}) \et  (\Pform{\UV{0}}{x} \oplus \Pform{\UV{1}}{x}) }{x}{3}
\end{array}
  $$
  
\begin{lemma}
\label{lemma:grid}
We have $\AA_{2m}\models\vpgriddd$.
Moreover, for all  $\AA=(A,(P_{\unary}),\ifunct,\ofunct)$ in $
     \nData{2}{\Unarygrid}$, if $\AA \models \vpgriddd$, then $(A,\query{\vpH}{\AA},\query{\vpV}{\AA})$ is grid-like.
   \end{lemma}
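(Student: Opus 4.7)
The proof hinges on a distance analysis in the data graph of $\AA \in \nData{2}{\Unarygrid}$. Since $\Binary = \{\relsaord{1}{1},\relsaord{2}{2}\}$, a single binary edge followed by a same-element edge shows that if $\rels{i}{i}{u}{v}$ for some $i \in \{1,2\}$, then both $(v,1)$ and $(v,2)$ lie in $\Ball{2}{u}{\AA}$. Each disjunct of $\vpH$ and $\vpV$ forces exactly such an $\rels{i}{i}{}{}$ step, so from $\vpH(x,y)$, $\vpV(x,x')$ and $\vpV(y,y')$ in $\AA$ one deduces that $(y,1), (y,2), (x',1), (x',2)$ all lie in $\Ball{2}{x}{\AA}$, while $(y',1), (y',2)$ lie in $\Ball{3}{x}{\AA}$. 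Consequently every data value of $x,y,x',y'$ is preserved in $\vprojr{\AA}{x}{3}$, and so $\vpH$ and $\vpV$ applied to any pair from this quadruple agree between $\AA$ and the $3$-view; the unary predicates from $\Unarygrid$ are preserved as well.

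For the first claim, $\AA_{2m} \models \vpgriddd$, the labelling conjunct is immediate from the definition of $\AA_{2m}$. By Lemma~\ref{lem:grid2m} the bi-binary structure induced by $\AA_{2m}$ is exactly $\grid{2m}$, which satisfies $\vpcomplete$ and $\vpprog$ globally. The distance analysis above then ensures that, for every $x \in \gridcarrier{2m}$, the horizontal and vertical successors needed by $\vpproggg$ as well as any quadruple relevant to $\vpcompleteee$ lie in the $3$-view of $x$ with data values preserved; the global validity of $\vpprog$ and $\vpcomplete$ in $\grid{2m}$ therefore transfers to local validity in $\vprojr{\AA_{2m}}{x}{3}$.

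For the second claim I apply Lemma~\ref{lemma:grid-like} to $(A, \query{\vpH}{\AA}, \query{\vpV}{\AA})$ and so must verify $\vpcomplete$ and $\vpprog$ for this bi-binary structure. Progress follows from $\vpproggg$: a witness $b$ for which $\vpH(a,b)$ holds in $\vprojr{\AA}{a}{3}$ necessarily has the relevant position $(b,i)$ in the ball (otherwise the fresh value assigned by $\inj$ would falsify the required data equality), so $\vpH(a,b)$ also holds in $\AA$, giving $(a,b) \in \query{\vpH}{\AA}$; the case of $\vpV$ is symmetric. For completeness, let $x,y,x',y' \in A$ satisfy the antecedent in $\AA$. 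By the distance analysis $y, x', y'$ all lie in the $3$-view of $x$ with every data value preserved, so the antecedent also holds in $\vprojr{\AA}{x}{3}$; instantiating $\vpcompleteee$ at $x$ yields $\vpH(x',y')$ in the view, which transfers back to $\AA$ by the same preservation, yielding $(x',y') \in \query{\vpH}{\AA}$. The main subtlety is pinning down the radius: $3$ is exactly what is needed because $y'$ can be two $\Binary$-steps from $x$, and both $(y',1)$ and $(y',2)$ must be in the ball so that $\vpH(x',y')$ is faithfully evaluated inside the view.
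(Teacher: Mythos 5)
Your argument for the second claim is essentially the paper's (push the antecedent into the $3$-view, apply $\vpcompleteee$, transfer the conclusion back), and it is sound once the back-transfer is justified by the freshness of $\inj$ rather than by "preservation". The problem is the distance analysis itself, and it breaks your proof of the first claim. It is not true that $(y',1)$ and $(y',2)$ both lie in $\Ball{3}{x}{\AA}$, nor that every data value of the quadruple is preserved in $\vprojr{\AA}{x}{3}$: the disjuncts of $\vpH(x,y)$ and of $\vpV(y,y')$ need not use the same index. For instance, if $x\in P_{\UH{1}}\cap P_{\UV{0}}$, then $\vpH(x,y)$ must be $\vpHoz$, which uses $\relsaord{2}{2}$, while the labels of $y$ force $\vpV(y,y')$ to be $\vpVzz$, which uses $\relsaord{1}{1}$; generically the shortest path to $(y',2)$ is $(x,2),(y,2),(y,1),(y',1),(y',2)$, of length $4$, so $(y',2)\notin\Ball{3}{x}{\AA}$. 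Only $(y',l)$ for the index $l$ actually occurring in $\vpV(y,y')$ is guaranteed to be in the ball. For the second claim this over-claim is harmless (the antecedent only needs that one position, and the conclusion transfers from the view back to $\AA$ because fresh values lie outside $\Values{\AA}$), but for the first claim it is fatal.

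Indeed, to prove $\AA_{2m}\models\vpcompleteee$ you must establish the conclusion \emph{inside} the view $\vprojr{\AA_{2m}}{x}{3}$, and in the configuration above that conclusion is $\vpHoo(x',y')$, which tests precisely $f_2(y')$ --- the value your distance bound does not place in the ball. If $(y',2)$ were outside the ball, its value would be replaced by a fresh one and the conclusion would fail in the view even though $(x',y')\in\query{\vpH}{\AA_{2m}}$; so "global validity of $\vpcomplete$ in $\grid{2m}$ transfers to the $3$-views" is exactly the step that cannot be obtained from generic locality. What makes the statement true is the concrete data assignment of $\AA_{2m}$: for the forced witnesses, the tested value of the diagonal element coincides with a value of $x$ itself (here $\ofunct(y')=\ifunct(x)$ by the definition of $\ofunct$), so $(y',2)$ is in fact at distance $1$ from $(x,1)$. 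This is what the paper verifies by its case analysis on the labels of $a$ and the explicit $3$-views of Figure~\ref{grid_3}, and your proof needs either that case analysis or a direct computation with the definitions of $\ifunct$ and $\ofunct$ in $\AA_{2m}$; the radius-$3$ distance argument alone does not close it.
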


\begin{proof}
  We first show that $\AA_{2m}\models\vpgriddd$. In the proof, we assume that $m\geq 3$. The cases $m=1$ or $2$ are treated in the same way.
	Let us prove the first conjunct, that is $\AA_{2m}\models\vpcompleteee$.
	Let $a\in\gridcarrier{2m}$. We want to prove that
    $$\vprojr{\AA_{2m}}{a}{3} \models_{\Intrepl{x}{a}}  \forall
    y.\forall x'.\forall y'.\vpH(x,y)\et \vpV(x,x') \et
    \vpV(y,y')\donc \vpH(x',y')$$ for some interpretation function
    $I$. We fix an interpretation function $I$.
	We proceed by a case analysis on the values of  $i,j\in\{0,1\}$
     such that $a \in P_{\UH{i}} \cap P_{\UV{j}}$. Assume that $(i,j)=(0,0)$.
	Then $\vprojr{\AA_{2m}}{a}{3}$ is depicted in Figure~\ref{grid_3}(A).
	Let $b,a',b'$ such that
	$$\vprojr{\AA_{2m}}{a}{3} \models_{I[x/a][y/b][x'/a'][y'/b']} \vpH(x,y)\et \vpV(x,x') \et \vpV(y,y')\,.$$ We want to show
	$$\vprojr{\AA_{2m}}{a}{3} \models_{I[x/a][y/b][x'/a'][y'/b']} \vpH(x',y')\,.$$
	By assumption on $a$ and by looking at the definition of $\vpH$,
    $$\vprojr{\AA_{2m}}{a}{3}
    \models_{I[x/a][y/b][x'/a'][y'/b']} \Pform{\UH{1}}{y}\et \Pform{\UV{0}}{y}\et\rels{1}{1}{x}{y}\,.$$
	So by elimination we have that $b$ is the element pointed by Figure~\ref{grid_3_00}.
	In a similar way, $a'$ and $b'$ are indeed the elements pointed by Figure \ref{grid_3}(B).
	Hence, we deduce $$\vprojr{\AA_{2m}}{a}{3} \models_{I[x/a][y/b][x'/a'][y'/b']} \vpH(x',y')\,.$$
	
	The case $(i,j)=(1,0)$ is depicted in Figure \ref{grid_3_10} and
    is proven in the same way just as the cases when $(i,j)= (1,0)$ or $(i,j)=(1,1)$.
	\begin{figure}
		\centering
		\begin{subfigure}{0.45\textwidth}
			\begin{tikzpicture}[line cap=round,line join=round,>=triangle 45,x=1cm,y=1cm]
	\roundedsquare{0}{0}{styleclasseone};
	\roundedsquare{2}{0}{styleclasseone};
	\roundedsquare{0}{2}{styleclasseone};
	\roundedsquare{2}{2}{styleclasseone};
	\roundedsquare{1}{1}{styleclassetwo};
	\roundedsquare{1}{3}{styleclassetwo};
	\roundedsquare{3}{1}{styleclassetwo};
	\roundedsquare{3}{3}{styleclassetwo};
	
	\foreach \i in {1,2,3}{
		\classesingleton{\i}{0}{styleclassetwo};
		\classesingleton{0}{\i}{styleclassetwo};
		\classesingleton{\i}{4}{styleclasseone};
		\classesingleton{4}{\i}{styleclasseone};
	}
	\classesingleton{0}{0}{styleclassetwo};
	\classesingleton{4}{4}{styleclasseone};
	
	\foreach \x in {0,1,2,3}
		\drawelement{\x}{0};
	\foreach \x in {0,1,2,3,4}
		\foreach \y in {1,2,3}
			\drawelement{\x}{\y};
	\foreach \x in {1,2,3,4}
		\drawelement{\x}{4};
	
	\draw (2,2) node[below left] {$a$};
	\draw (3,2) node[below right] {$b$};
	\draw (2,3) node[above left] {$a'$};
	\draw (3,3) node[above right] {$b'$};
	
	\draw[style={[-]}] (-.7,-1) -- (4.7,-1);
	\draw[style={[-]}] (-1,-.7) -- (-1,4.7);
	\foreach \i in {0,2,4}{
	    \draw  (\i,-.95) -- (\i,-1.05) node[below] {$\UH{0}$};
	\draw (-.95,\i) -- (-1.05,\i) node[left] {$\UV{0}$};}
	\foreach \i in {1,3}{
	    \draw (\i,-.95) -- (\i,-1.05) node[below] {$\UH{1}$};
	\draw (-.95,\i) -- (-1.05,\i) node[left] {$\UV{1}$};}
\end{tikzpicture}
			\caption{$a \in P_{\UH{0}} \cap P_{\UV{0}}$}
			\label{grid_3_00}
		\end{subfigure}
		\quad\quad\quad
		\begin{subfigure}{0.45\textwidth}
			\begin{tikzpicture}
	\roundedsquare{2}{0}{styleclasseone}
	\roundedsquare{4}{0}{styleclasseone}
	\roundedsquare{2}{2}{styleclasseone}
	\roundedsquare{4}{2}{styleclasseone}
	\roundedsquare{1}{1}{styleclassetwo}
	\roundedsquare{1}{3}{styleclassetwo}
	\roundedsquare{3}{1}{styleclassetwo}
	\roundedsquare{3}{3}{styleclassetwo}
	
	\foreach \i in {1,2,3}{
	\classesingleton{\i+1}{0}{styleclassetwo}
	\classesingleton{5}{\i}{styleclassetwo}
	\classesingleton{\i+1}{4}{styleclasseone}
	\classesingleton{1}{\i}{styleclasseone}}
	\classesingleton{5}{0}{styleclassetwo}
	\classesingleton{1}{4}{styleclasseone}
	
	\foreach \x in {2,3,4,5},5
		\drawelement{\x}{0};
	\foreach \x in {1,2,3,4,5}
		\foreach \y in {1,2,3}
			\drawelement{\x}{\y};
	\foreach \x in {1,2,3,4}
		\drawelement{\x}{4};
	
	\draw (3,2) node[below right] {$a$};
	\draw (4,2) node[below left] {$b$};
	\draw (3,3) node[above right] {$a'$};
	\draw (4,3) node[above left] {$b'$};
	
	\begin{scope}[xshift=1cm]
		\draw[style={[-]}] (-.7,-1) -- (4.7,-1);
		\draw[style={[-]}] (-1,-.7) -- (-1,4.7);
		\foreach \i in {0,2,4}{
		    \draw  (\i,-.95) -- (\i,-1.05) node[below] {$\UH{1}$};
		    \draw (-.95,\i) -- (-1.05,\i) node[left] {$\UV{0}$};}
		\foreach \i in {1,3}{
		    \draw (\i,-.95) -- (\i,-1.05) node[below] {$\UH{0}$};
		    \draw (-.95,\i) -- (-1.05,\i) node[left] {$\UV{1}$};}
	\end{scope}
\end{tikzpicture}
			\caption{$a \in P_{\UH{1}} \cap P_{\UV{0}}$}
			\label{grid_3_10}
		\end{subfigure}
		\caption{Some 3-local views of $\AA_{2m}$ for
          $\Binary=\{(1,1),(2,2)\}$.}
        \label{grid_3}
	\end{figure}
	
	Showing that $\AA_{2m}\models\vpproggg$ is done in the same way as showing that $\AA_{2m}\models\vpcompleteee$.
	
	Finally, it is obvious that   $\AA_{2m}$ satisfies the last conjunct of
    $\vpgriddd$.

    We now show that for all  $\AA=(A,(P_{\unary}),\ifunct,\ofunct)$ in $
     \Data{\Unarygrid}$, if $\AA \models \vpgriddd$ then $(A,\query{\vpH}{\AA},\query{\vpV}{\AA})$ is grid-like. By Lemma~\ref{lemma:grid-like}, we just have to prove that $(A,\query{\vpH}{\AA},\query{\vpV}{\AA})$ satisfies $\vpcomplete$ and $\vpprog$.
	Let us prove that $$(A,\query{\vpH}{\AA},\query{\vpV}{\AA})\models
    \forall x.\forall y.\forall x'.\forall y'.((\Hform xy\et \Vform
    xx' \et \Vform yy') \donc \Hform x'y')\,.$$ By definition of
    $(A,\query{\vpH}{\AA},\query{\vpV}{\AA})$, this amounts to
    verifying that
    $$\AA \models \forall x. \forall y.\forall
                       x'.\forall y'.\vpH(x,y)\et \linebreak[0]\vpV(x,x') \et \vpV(y,y')\donc \vpH(x',y')\,.$$
	Let $a,b,a',b'\in A$ and let $I$ be an interpretation function such that
    $\AA
    \models_{I[x/a][y/b][x'/a'][y'/b']}\vpH(x,y)\et\vpV(x,x')\et\vpV(y,y')$. Let us
    show $\AA \models_{I[x/a][y/b][x'/a'][y'/b']}\vpH(x',y')$.
	We do a case analysis on  $i,j\in\{0,1\}$ such that $a \in
    P_{\UH{i}}\cap P_{\UV{j}}$.
	We only perform the proof for the case $(i,j)=(1,0)$, the other three case can
    be treated similarly.
	By looking at $\vpH$ and $\vpV$, we have
	\begin{align*}
		\AA \models_{I[x/a][y/b][x'/a'][y'/b']}&\Pform{\UH{0}}{y} \et \Pform{\UV{0}}{y} \et \rels{2}{2}{x}{y} \\
		\AA \models_{I[x/a][y/b][x'/a'][y'/b']}&\Pform{\UH{0}}{y'} \et \Pform{\UV{1}}{y'} \et \rels{1}{1}{y}{y'} \\
		\AA \models_{I[x/a][y/b][x'/a'][y'/b']}&\Pform{\UH{1}}{x'} \et \Pform{\UV{1}}{x'} \et \rels{1}{1}{x}{x'}.
	\end{align*}
	So $b,a,b'$ are elements of $\vprojr{\AA}{a}{3}$ and
	$$\vprojr{\AA}{a}{3} \models_{I[x/a][y/b][x'/a'][y'/b']} \vpH(x,y)\et \vpV(x,x') \et \vpV(y,y')\,.$$
	Since by assumption $\AA\models\vpcompleteee$, we deduce that
    $\vprojr{\AA}{a}{3} \models_{I[x/a][y/b][x'/a'][y'/b']}
    \vpH(x',y')$. This allows us to conclude that  $\AA
    \models_{I[x/a][y/b][x'/a'][y'/b']} \vpH(x',y')$.

    We can prove in a similar way that $(A,\query{\vpH}{\AA},\query{\vpV}{\AA})\models \vpprog$.
  \end{proof}

Given a domino system $\DD=(D,H_\DD,V_\DD)$, we now provide
a formula $\vpdo$ from the logic
$\rndFO{2}{D,\{\relsaord{1}{1},\linebreak[0]\relsaord{2}{2}\}}{3}$ that guarantees that, if a
data structure corresponding to a grid satisfies $\vpdo$, then
it can be embedded into $\DD$:
$$
  \begin{array}{lcl}
	\vpDD &:= &\forall x. \locformr{ \Ou_{d\in D} \Big(\Pform{d}{x} \et \Et_{d\neq d'\in D} \neg (\Pform{d}{x}\et\Pform{d'}{x})\Big)}{x}{3} \\[0.5ex]
			& &\et  \forall x. \locformr{ \forall y. \vpH(x,y) \to \Ou_{(d,d')\in H_\DD} \Pform{d}{x} \et \Pform{d'}{y}}{x}{3} \\[0.5ex]
			& &\et  \forall x. \locformr{ \forall y. \vpV(x,y) \to \Ou_{(d,d')\in V_\DD} \Pform{d}{x} \et \Pform{d'}{y}}{x}{3}
  \end{array}
  $$

\begin{proposition}
\label{prop:3loc reduction}
	Given $\DD=(D,H_\DD,V_\DD)$ a domino system, 
	$\DD$ admits a periodic tiling iff the $\rndFO{2}{\Unarygrid \uplus D,\{\relsaord{1}{1},\linebreak[0]\relsaord{2}{2}\}}{3}$ formula $\vpgriddd\et\vpdo$ is satisfiable.
  \end{proposition}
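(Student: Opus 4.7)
The plan is to establish both directions of the equivalence by exploiting Lemma~\ref{lemma:grid} (which ties $\vpgriddd$ to grid-likeness of $(A,\query{\vpH}{\AA},\query{\vpV}{\AA})$) together with the designed shape of $\vpdo$ (which forces a well-defined domino labeling compatible with $H_\DD$ and $V_\DD$ along the horizontal and vertical edges).

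For the forward implication, I start from a periodic tiling $\tau\colon\grid{m}\to\DD$ and consider the data structure $\AA_{2m}\in\nData{2}{\Unarygrid}$ introduced in the preamble. Since $\grid{2m}$ is homomorphically embeddable into $\grid{m}$ via reduction mod $m$, composing that embedding with $\tau$ yields a morphism $\sigma\colon\grid{2m}\to\DD$. I use $\sigma$ to extend $\AA_{2m}$ to a structure $\AA_{2m}^\sigma\in\nData{2}{\Unarygrid\uplus D}$ over $\gridcarrier{2m}$ by setting, for each $d\in D$, $P_d=\sigma^{-1}(d)$. By Lemma~\ref{lemma:grid}, $\AA_{2m}\models\vpgriddd$, and since $\vpgriddd$ does not mention any $d\in D$, we still have $\AA_{2m}^\sigma\models\vpgriddd$. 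By Lemma~\ref{lem:grid2m}, $(\gridcarrier{2m},\query{\vpH}{\AA_{2m}},\query{\vpV}{\AA_{2m}})=\grid{2m}$, so horizontal pairs $(a,b)$ with $\AA_{2m}^\sigma\models\vpH(a,b)$ belong to $H_{2m}$, and therefore $(\sigma(a),\sigma(b))\in H_\DD$ since $\sigma$ is a morphism. The same holds for $\vpV$ and $V_\DD$, so $\AA_{2m}^\sigma\models\vpdo$ as well (the three conjuncts of $\vpdo$ match our construction). A small check is required to move these global properties into the $3$-local views: because the formulas $\vpH(x,y)$ and $\vpV(x,y)$ are quantifier-free and only use $\rels{1}{1}{}{}$ and $\rels{2}{2}{}{}$, any witnessing $y$ lies in the $1$-view of $x$, hence in its $3$-view, so the local modalities in $\vpgriddd$ and $\vpdo$ are satisfied.

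For the backward implication, suppose $\AA=(A,(P_\unary),\ifunct,\ofunct)\models\vpgriddd\et\vpdo$. By Lemma~\ref{lemma:grid}, the bi-binary structure $\mathfrak{H}:=(A,\query{\vpH}{\AA},\query{\vpV}{\AA})$ is grid-like, so there exists $m$ and a morphism $\pi\colon\grid{m}\to\mathfrak{H}$. The first conjunct of $\vpdo$ assigns to every $a\in A$ a unique domino $\delta(a)\in D$ such that $a\in P_{\delta(a)}$. I claim that $\delta\circ\pi\colon\grid{m}\to\DD$ is a morphism. Indeed, if $((i,j),(i',j'))\in H_m$, then $(\pi(i,j),\pi(i',j'))\in\query{\vpH}{\AA}$, so $\AA\models\vpH(\pi(i,j),\pi(i',j'))$. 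As noted above, the witness lies in the $1$-view, so the second conjunct of $\vpdo$ applied to $x=\pi(i,j)$ forces $(\delta(\pi(i,j)),\delta(\pi(i',j')))\in H_\DD$. The same argument with the third conjunct of $\vpdo$ handles the vertical relation. Thus $\delta\circ\pi$ is a periodic tiling of $\DD$.

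The main subtlety, and the step that requires the most care, is the transfer between local and global satisfaction of the formulas $\vpH$ and $\vpV$: one has to make sure that every $y$ relevant for $\vpH(x,y)$ or $\vpV(x,y)$ indeed falls inside the $r$-view for the radii used in $\vpcompleteee$, $\vpproggg$, and $\vpdo$. This is immediate here because those formulas are quantifier-free and each disjunct contains an atom $\rels{i}{i}{x}{y}$ making $y$ directly adjacent to $x$ in the data graph; in particular, the $3$-view is more than enough, which is why radius~$3$ suffices for the reduction.
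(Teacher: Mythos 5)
Your proof is correct and follows essentially the same route as the paper's: the forward direction builds the expansion of $\AA_{2m}$ by domino predicates obtained from the tiling composed with reduction mod $m$, and the backward direction uses Lemma~\ref{lemma:grid} to get a morphism $\pi:\grid{m}\to(A,\query{\vpH}{\AA},\query{\vpV}{\AA})$ and reads off a morphism to $\DD$ from the three conjuncts of $\vpdo$. Your explicit check that witnesses of $\vpH$ and $\vpV$ lie at distance~$1$, so that satisfaction transfers between the $3$-views and the global structure, is a detail the paper leaves implicit ("We can then show that $\AA'_{2m}\models\vpDD$"), and it is handled correctly.
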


\begin{proof}
First  assume that $\DD$ admits a periodic tiling and let $\tau:\grid{m}\to D$ be one.
	As with Lemma~\ref{lemma:grid} we already have that
    $\AA_{2m}\models\vpgriddd$. From $\AA_{2m}$ we build another data
    structure $\AA'_{2m} \in \Data{\Unarygrid \uplus D}$ by adding the
    predicates $(P_d)_{d\in D}$ as follow: 
	for any $i,j\in \{0,2m-1\}$ and $d\in\DD$ we set $P_d((i,j))$ to
    hold iff $\tau((i \mod m,j \mod m))=d$.
	We can then show that $\AA_{2m}\models\vpDD$.
	
Assume now that there exists $\AA= (A,,(P_{\unary}),\ifunct,\ofunct)$
in  $\Data{\Unarygrid \uplus D}$ such
that $\AA \models \vpgriddd\et\vpdo$.
By Lemma \ref{lemma:grid}, there exists $m>0$ and a morphism
$\pi:\grid{m}\to(\AA,\query{\vpH}{\AA},\query{\vpV}{\AA})$. It remains
hence to show that there is a morphism
$\tau:(\AA,\query{\vpH}{\AA},\query{\vpV}{\AA})\to\DD$. For any $a\in A$,  we set $\tau(a)$ to be a domino such that $P_{\tau(a)}(a)$ holds.
	Thanks to the first line of $\vpDD$, $\tau$ is well defined.
	Then thanks to the second and third line of $\vpDD$, we have that
    $\tau$ is a morphism. We deduce that $\tau\circ\pi$ is a periodic tiling of $\DD$.
\end{proof}

As a corollary of the proposition, we obtain the main result
of this section.

\begin{theorem}\label{theorem:3-loc}
$\DataSat{\rndFOr{3},2, \{\relsaord{1}{1},\relsaord{2}{2}\}}$ is undecidable.
\end{theorem}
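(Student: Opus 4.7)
The plan is to observe that this theorem is an immediate corollary of Proposition~\ref{prop:3loc reduction} together with the undecidability of the periodic tiling problem $\Tile$.

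More precisely, I would argue as follows. Given an arbitrary domino system $\DD = (D, H_\DD, V_\DD)$, inspection of the formulas $\vpgriddd$ and $\vpdo$ shows that the sentence $\vpgriddd \wedge \vpdo$ can be constructed from $\DD$ in polynomial time: it lives in $\rndFO{2}{\Unarygrid \uplus D, \{\relsaord{1}{1}, \relsaord{2}{2}\}}{3}$, its set of unary predicates is $\Unarygrid \uplus D$, and its size depends only on $|D|$ and $|H_\DD| + |V_\DD|$. Hence the map $\DD \mapsto \vpgriddd \wedge \vpdo$ is effective. By Proposition~\ref{prop:3loc reduction}, $\DD$ admits a periodic tiling iff $\vpgriddd \wedge \vpdo$ is satisfiable over $\nData{2}{\Unarygrid \uplus D}$, so this map is a many-one reduction from $\Tile$ to $\DataSat{\rndFOr{3},2,\{\relsaord{1}{1},\relsaord{2}{2}\}}$.

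Since $\Tile$ is undecidable~\cite{Gurevich97}, the undecidability of $\DataSat{\rndFOr{3},2,\{\relsaord{1}{1},\relsaord{2}{2}\}}$ follows. There is no real obstacle here; all the technical content was already absorbed into Lemma~\ref{lemma:grid} (which shows $\vpgriddd$ forces a grid-like structure via the two-dimensional data-value encoding pictured in Figure~\ref{fig:extended grid}) and Proposition~\ref{prop:3loc reduction} (which uses $\vpdo$ to transport the tiling onto such a grid-like model and back).
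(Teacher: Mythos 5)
Your argument is correct and matches the paper exactly: the paper also obtains Theorem~\ref{theorem:3-loc} as an immediate corollary of Proposition~\ref{prop:3loc reduction} (with the technical work done in Lemma~\ref{lemma:grid}) combined with the undecidability of $\Tile$. Your added remark that the map $\DD \mapsto \vpgriddd \wedge \vpdo$ is effective is the only detail the paper leaves implicit, and it is stated correctly.
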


\subsubsection{The Radius 2 case}

We can also reduce \Tile~to
$\DataSat{\rndFOr{2},2. \{\relsaord{1}{1},\relsaord{2}{2},\linebreak[0]\relsaord{1}{2}\}}$. In that case, it is a
bit more subtle to build a formula similar to the formula $\vpcomplete
$ as we have only neighborhood of radius 2. We use the diagonal
binary relation $\relsaord{1}{2}$ to overcome this.

A \emph{tri-binary
  structure} is a triple $(A,H,V,W)$ where $A$ is a set and $H,V,W$
are three subsets of $A \times A$. Intuitively $H,V$ will capture the
horizontal and vertical adjacency relation whereas $W$ will capture the diagonal adjacency.
By an abuse of notation, $\grid{m}$ will also refer to the tri-binary structure $(\gridcarrier{m},H_m,V_m,W_m)$, were $\gridcarrier{m},H_m$ and $V_m$ are the same as before and:
$$ W_m = \{ ((i,j),(i+1,j+1)) \mid i,j\in \Zmod{m} \} .$$
The logic \emph{$\FO$ over tri-binary structures} is the same as  $\FO$ over bi-binary structures with the addition of the binary symbol $\Wform$.
Let $\vpcompletebis$ be the following FO formula over tri-binary structure:
\begin{align*}
	\vpcompletebis =& \phantom{\wedge} \forall x.\forall y.\forall y'. (\Hform xy \et \Vform yy' \donc \Wform xy')
							~~\et~~ \forall x.\forall x.\forall 'y'.(\Wform xy'\et \Vform xx' \donc \Hform x'y')\,.
\end{align*}

\begin{lemma}
  \label{lem:tribinstruct-grid}
	Let $\grid{}=(A,H,V,W)$ be a tri-binary structure. If $\grid{}$ satisfies $\vpcompletebis$ and $\vpprog$, then $(A,H,V)$ is grid-like.
\end{lemma}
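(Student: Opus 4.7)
The plan is to reduce this directly to Lemma~\ref{lemma:grid-like} by showing that on the bi-binary reduct $(A,H,V)$ the formulas $\vpcomplete$ and $\vpprog$ both hold, and then invoking that earlier lemma.

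First I would observe that $\vpprog$ is a formula purely over $H$ and $V$, so its truth in the tri-binary structure $(A,H,V,W)$ transfers verbatim to the bi-binary structure $(A,H,V)$. The only real work is to derive $\vpcomplete$, i.e.\ $\forall x.\forall y.\forall x'.\forall y'.((Hxy \wedge Vxx' \wedge Vyy') \to Hx'y')$, from $\vpcompletebis$.

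The key step is a simple two-step chain through $W$, which is precisely the role $W$ plays in $\vpcompletebis$. Fix $x,y,x',y' \in A$ with $Hxy$, $Vxx'$ and $Vyy'$. From $Hxy \wedge Vyy'$ the first conjunct of $\vpcompletebis$ yields $Wxy'$ (think of $y'$ as the diagonal neighbour of $x$ obtained by first going right to $y$ and then up to $y'$). Then from $Wxy' \wedge Vxx'$ the second conjunct yields $Hx'y'$, which is exactly the quadrilateral closure $\vpcomplete$ requires. No existential witnesses need to be produced: the element $y'$ is already given by the universal quantification of $\vpcomplete$, and $W$ is used only as an auxiliary predicate to split the closure into two smaller implications.

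Having established $(A,H,V) \models \vpcomplete \wedge \vpprog$, Lemma~\ref{lemma:grid-like} immediately yields that $(A,H,V)$ is grid-like, which is the statement of the lemma. There is no real obstacle here: the whole point of introducing $W$ (and hence $\vpcompletebis$ in place of $\vpcomplete$) is that it allows the closure property to be expressed using only pairs of elements at mutual distance at most~$2$ in the data graph, which is what will matter in the subsequent radius-$2$ reduction; for the purely logical statement of this lemma, $W$ is simply ``definable away'' by chaining the two implications.
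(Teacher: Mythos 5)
Your proposal is correct and follows essentially the same route as the paper: the paper's proof simply remarks that $\vpcompletebis$ implies $\vpcomplete$ (exactly your two-step chaining through $W$) and then invokes Lemma~\ref{lemma:grid-like}. You merely spell out this implication explicitly, which is a harmless elaboration rather than a different argument.
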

\begin{proof}
	We simply remark that $\vpcompletebis$ implies $\vpcomplete$ and then we apply Lemma \ref{lemma:grid-like}.
  \end{proof}

As in the previous subsection, we will consider data structures in
$\nData{2}{\Unarygrid}$ to encode domino systems and we will use $\rndFO{2}{\Unarygrid,\{\relsaord{1}{1},\linebreak[0]\relsaord{2}{2},\relsaord{1}{2}\}}{2}$ formulas in order to ensure that the
data structures are grid-like and that an embedding of a domino system
in it is feasible. In the previous subsection, to ensure that a data
structure is a grid, we used the fact that we could look in
our logical formulas to neighborhood of radius 3 (cf formula
$\vpgriddd$), but since here we want to look at neighborhoods of radius 2,
we use the diagonal relation and rely on the result of the previous lemma. Consequently, we will need again the two quantifier free formulas $\vpH(x,y)$ and $\vpV(x,y)$ of
$\ndFO{2}{\Unarygrid,\relsaord{1}{1},\relsaord{2}{2}}$ introduced
  previously and
we define a new quantifier free formula $\vpW(x,y)$ in $\ndFO{2}{\Unarygrid,\{\relsaord{1}{2}\}}$:
\begin{align*}
	\vpWzz &= \Pform{\UH{0}}{x} \et \Pform{\UH{1}}{y} \et  \Pform{\UV{0}}{x} \et \Pform{\UV{1}}{y} \et \rels{1}{2}{x}{y} \\
	\vpWoz &= \Pform{\UH{1}}{x} \et \Pform{\UH{0}}{y} \et  \Pform{\UV{0}}{x} \et \Pform{\UV{1}}{y} \et \rels{1}{2}{x}{y} \\
	\vpWzo &= \Pform{\UH{0}}{x} \et \Pform{\UH{1}}{y} \et  \Pform{\UV{1}}{x} \et \Pform{\UV{0}}{y} \et \rels{1}{2}{x}{y} \\
	\vpWoo &= \Pform{\UH{1}}{x} \et \Pform{\UH{0}}{y} \et  \Pform{\UV{1}}{x} \et \Pform{\UV{0}}{y} \et \rels{1}{2}{x}{y} \\
	\vpW &= \vpWzz \ou \vpWoz \ou \vpWzo \ou \vpWoo
\end{align*}

We will now define a formula $\vpgridd$ in
$\rndFO{2}{\Unarygrid,\{\relsaord{1}{1},\linebreak[0]\relsaord{2}{2},\relsaord{1}{2}\}}{2}$  which  ensures that a data structure
corresponds to a grid. This formula is given by ($\oplus$ stands for exclusive or):
$$
  \begin{array}{lcl}
	\vpcompletee &=& \forall x. \locformr{ \forall yy'.\vpH(x,y) \et \vpV(y,y')\donc \vpW(x,y')}{x}{2} \\
					&& \et \forall x. \locformr{ \forall yx'y'.\vpV(x,x') \et \vpW(x,y')\donc \vpH(x',y')}{x}{2} \\[1ex]
	\vpprogg &=& \forall x.\locformr{\exists y.\vpH(x,y)\et\exists y. \vpV(x,y)}{x}{2} \\[1ex]
	\vpgridd& =& \vpcompletee \et  \vpprogg
	\et \forall x.\locformr{ (\Pform{\UH{0}}{x} \oplus \Pform{\UH{1}}{x}) \et  (\Pform{\UV{0}}{x} \oplus \Pform{\UV{1}}{x}) }{x}{2}
  \end{array}
  $$

\begin{figure}[htbp]
	\centering
	\begin{subfigure}{0.45\textwidth}
		\begin{tikzpicture}
		\connectiononetwo{3}{3}
		\draw[shift={(2,2)},line width=3,line cap = butt]  (-\singletonradius,1) arc [start angle=90, end angle=180, radius=1cm-\singletonradius];
		\draw[shift={(2,2)},line width=3,line cap = butt]  (\singletonradius,-1) arc [start angle=-90, end angle=0, radius=1cm-\singletonradius];

	 \roundedsquare{2}{2}{styleclasseone}
	 \roundedsquare{1}{1}{styleclassetwo}
	 \roundedsquare{3}{3}{styleclassetwo}
	 
	 \classesingleton{1}{1}{styleclasseone}
	 \classesingleton{2}{1}{styleclasseone}
	 \classesingleton{1}{2}{styleclasseone}
	 \classesingleton{2}{3}{styleclassetwo}
	 \classesingleton{3}{2}{styleclassetwo}
	 \classesingleton{3}{4}{styleclasseone}
	 \classesingleton{4}{3}{styleclasseone}
	 \classesingleton{4}{4}{styleclasseone}

		\foreach \i in {1,2,3}{
			\drawelement{\i}{\i};
			\drawelement{\i}{\i+1};
			\drawelement{\i+1}{\i};
			}
		\drawelement{4}{4};
		
		\node at  (2,2) [above right] {$a$};
        \node at  (2,3) [above right] {$a'$};
        \node at  (3,2) [above right] {$b$};
        \node at  (3,3) [above right] {$b'$};
	
	\begin{scope}[shift={(1,1)}]
	\draw[style={[-]}] (-.7,-1) -- (3.7,-1);
	\draw[style={[-]}] (-1,-.7) -- (-1,3.7);
	\foreach \i in {0,2}{
	    \draw  (\i,-.95) -- (\i,-1.05) node[below] {$\UH{1}$};
	    \draw (-.95,\i) -- (-1.05,\i) node[left] {$\UV{1}$};}
	\foreach \i in {1,3}{
	    \draw (\i,-.95) -- (\i,-1.05) node[below] {$\UH{0}$};
	    \draw (-.95,\i) -- (-1.05,\i) node[left] {$\UV{0}$};}
	\end{scope}
\end{tikzpicture}
		\caption{If $\UH{0}(a)$ and $\UV{0}(a)$ hold.}
	\end{subfigure}
	\begin{subfigure}{0.45\textwidth}
		\begin{tikzpicture}
		\connectiononetwo{3}{3}
		\draw[line width=3,line cap = butt]  (2.6,1) 
				.. controls (2.2,1) and (2.2,1.5) .. (3-\gap,1.5);
	
		\roundedsquare{2}{2}{styleclasseone}
		\roundedsquare{3}{1}{styleclassetwo}
		\roundedsquare{3}{3}{styleclassetwo}
		
		\classesingleton{3}{1}{styleclasseone}
		\classesingleton{4}{1}{styleclasseone}
		\classesingleton{2}{2}{styleclassetwo}
		\classesingleton{2}{3}{styleclassetwo}
		\classesingleton{3}{4}{styleclasseone}
		\classesingleton{4}{4}{styleclasseone}
		\draw[styleclasseone] (4,2) ++(\gap,0) 
													-- ++(0,1)
													arc [start angle= 0, end angle= 180, radius = \gap]
													-- ++(0,-1)
													arc [start angle= -180, end angle= 0, radius = \gap];

		\foreach \i in {1,2,3,4}{
			\drawelement{3}{\i};
			\drawelement{4}{\i};
			}
		\drawelement{2}{2};
		\drawelement{2}{3};
		
		\node at  (3,2) [below right] {$a$};
        \node at  (3,3) [below right] {$a'$};
        \node at  (4,2) [below right] {$b$};
        	\node at  (4,3) [below right] {$b'$};
	
	\begin{scope}[shift={(2,1)}]
	\draw[style={[-]}] (-.7,-1) -- (2.7,-1);
	\draw[style={[-]}] (-1,-.7) -- (-1,3.7);
	\foreach \i in {0,2}{
	\draw  (\i,-.95) -- (\i,-1.05) node[below] {$\UH{0}$};
	\draw (-.95,\i) -- (-1.05,\i) node[left] {$\UV{1}$};
	}
	\foreach \i in {1,3}{
	\draw (-.95,\i) -- (-1.05,\i) node[left] {$\UV{0}$};
	}
	\draw (1,-.95) -- (1,-1.05) node[below] {$\UH{1}$};
	\end{scope}
 
\end{tikzpicture}
			\caption{If $\UH{1}(a)$ and $\UV{0}(a)$ hold.}
	\end{subfigure}
	\caption{Some 2-local views of $\AA_{2m}$ for
      $\Binary=\{\relsaord{1}{1},\relsaord{2}{2},\relsaord{1}{2}\}$.}
    \label{grid_2}
  \end{figure}

  We can then establish the following result.

  \begin{lemma} \label{lemma:grid2} The following statements hold:
  \begin{enumerate}
  \item $\AA_{2m}\models\vpgridd$, and
  \item for all $\AA=(A,\ifunct,\ofunct,(P_{\unary})) \in
     \Data{\Unarygrid}$, if $\AA \models \vpgridd$, then $(A,\query{\vpH}{\AA},\query{\vpV}{\AA})$ is grid-like.
  \end{enumerate}
\end{lemma}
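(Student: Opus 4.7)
The plan is to mimic the proof of Lemma~\ref{lemma:grid}, but now using the diagonal relation $\relsaord{1}{2}$ to characterize grids via tri-binary structures (Lemma~\ref{lem:tribinstruct-grid}) rather than via bi-binary structures. The whole point of introducing $\vpW$ is that it lets us express the ``next diagonal'' adjacency in a single binary step, so everything needed in $\vpcompletebis$ lies within the radius-2 ball.

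For part (1), I would proceed conjunct by conjunct. The third conjunct (exactly one of $\UH{0},\UH{1}$ and exactly one of $\UV{0},\UV{1}$ per element) is immediate from the definition of $P_{\UH{k}}, P_{\UV{k}}$ in $\AA_{2m}$. For $\vpprogg$, I would fix $a=(i,j)$ and exhibit the horizontal successor $(i{+}1,j)$ and vertical successor $(i,j{+}1)$, showing they sit in $\Ball{2}{a}{\AA_{2m}}$ (in fact even in $\Ball{1}{a}{}$) and satisfy $\vpH(a,\cdot)$, $\vpV(a,\cdot)$ respectively by a case analysis on $(i \bmod 2, j \bmod 2)$, as illustrated in Figure~\ref{grid_2}. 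For $\vpcompletee$, I would fix $a=(i,j)$ and, for each of the four parities, enumerate the (at most four) elements $y$ with $\vpH(a,y)$ and, from each such $y$, the elements $y'$ with $\vpV(y,y')$. The required diagonal adjacency $\vpW(a,y')$ is exactly $\rels{1}{2}{a}{y'}$ (with the right labels), which holds by the defining equation $\ofunct(i,j)=\ifunct(i{-}1,j{-}1)$ of $\AA_{2m}$. A symmetric argument handles $\forall y,x',y'. \vpV(x,x') \wedge \vpW(x,y') \to \vpH(x',y')$.

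For part (2), let $\AA \models \vpgridd$. By Lemma~\ref{lem:tribinstruct-grid}, it suffices to show that the tri-binary structure $\grid{\AA} := (A,\query{\vpH}{\AA},\query{\vpV}{\AA},\query{\vpW}{\AA})$ satisfies $\vpcompletebis$ and $\vpprog$. The $\vpprog$ part translates directly from $\vpprogg$: for each $a \in A$, $\vprojr{\AA}{a}{2}$ contains an $\vpH$-successor and a $\vpV$-successor of $a$, and such successors remain witnesses in $\AA$ since $\vpH$ and $\vpV$ are quantifier-free. For $\vpcompletebis$, take $a,b,b' \in A$ with $(a,b)\in\query{\vpH}{\AA}$ and $(b,b')\in\query{\vpV}{\AA}$. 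Inspecting $\vpH$ and $\vpV$, the relation between $a$ and $b$ is witnessed by an edge in $\gaifmanish{\AA}$ (an $\relsaord{1}{1}$- or $\relsaord{2}{2}$-edge depending on the disjunct), and likewise between $b$ and $b'$. Hence $b,b' \in \Ball{2}{a}{\AA}$, so
\[
\vprojr{\AA}{a}{2} \models \vpH(a,b) \wedge \vpV(b,b'),
\]
and by $\vpcompletee$ we obtain $\vprojr{\AA}{a}{2} \models \vpW(a,b')$, hence $(a,b')\in\query{\vpW}{\AA}$. The second conjunct of $\vpcompletebis$ is treated identically, swapping the roles of $\vpH,\vpV$ with $\vpV,\vpW$.

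The main subtlety, compared to Lemma~\ref{lemma:grid}, is precisely the step above: one must verify that whenever $\vpH(a,b)$ and $\vpV(b,b')$ hold in $\AA$, the element $b'$ is still reachable within radius $2$ from $a$ in the data graph $\gaifmanish{\AA}$, so that the local modality $\locformr{\cdot}{x}{2}$ has access to it. This is exactly where the definition of $\vpH, \vpV$ as quantifier-free formulas over a single binary relation from $\Binary$ pays off: each of them forces the two argument elements to be directly adjacent in $\gaifmanish{\AA}$, giving the required distance bound. The rest is a routine parity-based case analysis that essentially replays the calculation done for $\AA_{2m}$ in part (1).
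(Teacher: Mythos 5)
Your overall route for part (2) — show that the tri-binary structure $(A,\query{\vpH}{\AA},\query{\vpV}{\AA},\query{\vpW}{\AA})$ satisfies $\vpcompletebis$ and $\vpprog$ and invoke Lemma~\ref{lem:tribinstruct-grid} — is exactly the paper's, and part (1) as well as your treatment of $\vpprog$ are fine. The genuine gap is in the step you yourself flag as the main subtlety. From $(a,b)\in\query{\vpH}{\AA}$ and $(b,b')\in\query{\vpV}{\AA}$ you claim $b,b'\in\Ball{2}{a}{\AA}$ because each of $\vpH,\vpV$ is witnessed by an edge of $\gaifmanish{\AA}$. But these edges live on (element, field) vertices, and in two of the four parity cases they attach to \emph{different} fields of $b$, so they do not compose into a path of length $2$. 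Concretely, if $a\in P_{\UH{1}}\cap P_{\UV{0}}$, the only applicable disjunct of $\vpH(a,b)$ is $\vpHoz$, which asserts $\rels{2}{2}{a}{b}$; then $b\in P_{\UH{0}}\cap P_{\UV{0}}$, so the only applicable disjunct of $\vpV(b,b')$ is $\vpVzz$, which asserts $\rels{1}{1}{b}{b'}$. The guaranteed path to the relevant field of $b'$ is $(a,2)\to(b,2)\to(b,1)\to(b',1)$, of length $3$, and in an arbitrary model of $\vpgridd$ neither $(b',1)$ nor $(b',2)$ need be within distance $2$ of $(a,1)$ or $(a,2)$: every length-$2$ route requires an equality such as $\f{1}(a)=\f{2}(b')$, which is precisely the conclusion $\vpW(a,b')$ you are trying to establish. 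Hence $b'$ may not even belong to the universe of $\vprojr{\AA}{a}{2}$, the premise of the first conjunct of $\vpcompletee$ has no corresponding instance in the view, and your derivation of $(a,b')\in\query{\vpW}{\AA}$ collapses in this case (and in the symmetric one, $\vpHzo$ followed by $\vpVoo$).

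Note that the two remaining cases ($\vpHzz$ then $\vpVoz$, and $\vpHoo$ then $\vpVzo$) do go through, since there both atoms use the same field and $b'$ lands at distance at most $2$; and your remark that the second conjunct of $\vpcompletebis$ is ``treated identically'' is true only in its conclusion, not in its justification: there the premise $\vpW(a,b')$ itself supplies $\rels{1}{2}{a}{b'}$, which places both fields of $b'$ (and of $a'$) within distance $2$ of $a$, so evaluating $\vpcompletee$ at the centre $a$ is legitimate. What is missing is an argument for the first conjunct of $\vpcompletebis$ at centres of mixed parity, where one cannot simply instantiate $\vpcompletee$ at $a$ with the given pair $(b,b')$; this is exactly where the diagonal relation has to be exploited — in $\AA_{2m}$ it is the diagonal equality $\ofunct(b')=\ifunct(a)$ that puts $b'$ into the $2$-view of $a$ (cf.\ Figure~\ref{grid_2}), but in part (2) this cannot be assumed in advance — and your proposal offers no substitute argument for those cases.
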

\begin{proof}
	The proof is similar to the of Lemma \ref{lemma:grid}.  For the
    first point, Figure \ref{grid_2} provides some representation of
    $\vprojr{\AA_{2m}}{a}{2}$ for some elements $a \in
    \gridcarrier{2m}$. For instance, if we look at the case presented
    on Figure \ref{grid_2}(A), we have that $\vprojr{\AA_{2m}}{a}{2}
  \models_{I[x/a][x'/a'][y/b][y'/b']}\vpH(x,y)$ and $\vprojr{\AA_{2m}}{a}{2}
  \models_{I[x/a][x'/a'][y/b][y'/b']}\vpV(y,y')$ and $\vprojr{\AA_{2m}}{a}{2}
  \models_{I[x/a][x'/a'][y/b][y'/b']}\vpV(x,x'')$ and $\vprojr{\AA_{2m}}{a}{2}
  \models_{I[x/a][x'/a'][y/b][y'/b']} \vpWzz(x,y')$. Similarly,  if we
  look at the case presented on Figure  \ref{grid_2}(B), we have that $\vprojr{\AA_{2m}}{a}{2}
  \models_{I[x/a][x'/a'][y/b][y'/b']}\vpH(x,y)$ and $\vprojr{\AA_{2m}}{a}{2}
  \models_{I[x/a][x'/a'][y/b][y'/b']}\vpV(y,y')$ and $\vprojr{\AA_{2m}}{a}{2}
  \models_{I[x/a][x'/a'][y/b][y'/b']}\vpV(x,x'')$ and $\vprojr{\AA_{2m}}{a}{2}
  \models_{I[x/a][x'/a'][y/b][y'/b']} \vpWoz(x,y')$.
For the second point of lemma, following the same
    reasoning as in Lemma \ref{lemma:grid}.2, we first show that the
    tri-binary structure
    $(A,\query{\vpH}{\AA},\query{\vpV}{\AA},\query{\vpW}{\AA})$ satisfies $\vpcompletebis$ and $\vpprog$ and we use Lemma \ref{lem:tribinstruct-grid}
    to conclude.
  \end{proof}

As previously, we provide a formula $\vpDDbis$ of $\rndFO{2}{D,\{\relsaord{1}{1},\linebreak[0]\relsaord{2}{2},\relsaord{1}{2}\}}{2}$ for any domino system
$\DD=(D,H_\DD,V_\DD)$. This formula  is morally the same as the
formula $\vpDD$, we only restrict the neighborhood, but in fact
this does not change anything:
$$
  \begin{array}{lcl}
	\vpDDbis &: = & \phantom{\et}\forall x. \locformr{ \Ou_{d\in D} \Pform{d}{x} \et \Et_{d\neq d'\in D} \neg (\Pform{d}{x}\et\Pform{d'}{x})}{x}{2} \\
			&&\et  \forall x. \locformr{ \forall y. \vpH(x,y) \donc \Ou_{(d,d')\in H_\DD} \Pform{d}{x} \et \Pform{d'}{y}}{x}{2} \\
			&& \et  \forall x. \locformr{ \forall y. \vpV(x,y) \donc \Ou_{(d,d')\in V_\DD} \Pform{d}{x} \et \Pform{d'}{y}}{x}{2}
  \end{array}
  $$

We have the following proposition whose proof follows the same line as
Proposition \ref{prop:3loc reduction}.

\begin{proposition}\label{prop:2loc reduction}
	Given $\DD=(D,H_\DD,V_\DD)$ a domino system, we have that $\DD$
    admits a periodic tiling iff the $\rndFO{2}{\Unarygrid \uplus D,\{\relsaord{1}{1},\linebreak[0]\relsaord{2}{2},\relsaord{1}{2}\}}{2}$ formula $\vpgridd\et\vpDDbis$ is satisfiable.
  \end{proposition}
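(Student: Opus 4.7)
The plan is to mirror exactly the proof of Proposition~\ref{prop:3loc reduction}, using Lemma~\ref{lemma:grid2} in place of Lemma~\ref{lemma:grid} for the grid-encoding part.

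For the forward direction, assume $\DD$ admits a periodic tiling $\tau:\grid{m}\to D$. Start from the data structure $\AA_{2m}\in\nData{2}{\Unarygrid}$ and expand it to a structure $\AA'_{2m}\in\nData{2}{\Unarygrid \uplus D}$ by setting, for each $(i,j)\in\gridcarrier{2m}$ and each $d\in D$, the predicate $P_d$ to hold at $(i,j)$ iff $\tau((i \bmod m, j \bmod m))=d$. By Lemma~\ref{lemma:grid2}.(1), $\AA'_{2m}\models\vpgridd$. For $\vpDDbis$, the first conjunct holds because $\tau$ is a function into $D$ (so exactly one $P_d$ labels each element). For the second and third conjuncts, observe that whenever $\vpH(x,y)$ holds between two elements $a=(i,j)$ and $b=(i',j')$ of $\AA'_{2m}$, by Lemma~\ref{lem:grid2m} we have $(a,b)\in H_{2m}$, so $(\tau(a \bmod m),\tau(b \bmod m))\in H_\DD$ since $\tau$ is a morphism; the argument for $\vpV$ is identical. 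Since these verifications only involve $a,y$ within the $2$-view of $a$, they are preserved when restricted to $\vprojr{\AA'_{2m}}{a}{2}$.

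For the backward direction, suppose $\AA=(A,(P_\unary),\ifunct,\ofunct)\in\nData{2}{\Unarygrid\uplus D}$ satisfies $\vpgridd\et\vpDDbis$. By Lemma~\ref{lemma:grid2}.(2), $(A,\query{\vpH}{\AA},\query{\vpV}{\AA})$ is grid-like, so there exist $m>0$ and a morphism $\pi:\grid{m}\to(A,\query{\vpH}{\AA},\query{\vpV}{\AA})$. Define $\tau:A\to D$ by letting $\tau(a)$ be the unique $d\in D$ with $a\in P_d$; the first conjunct of $\vpDDbis$ guarantees both existence and uniqueness. The second and third conjuncts of $\vpDDbis$ state that whenever $\vpH(x,y)$ (resp.\ $\vpV(x,y)$) holds at $(a,b)$ in the $2$-view of $a$, the pair $(\tau(a),\tau(b))$ lies in $H_\DD$ (resp.\ $V_\DD$); this lifts to the full structure because $\vpH$ and $\vpV$ are quantifier-free formulas only speaking about the immediate $\Binary$-neighbours of $a$, which already lie in $\vprojr{\AA}{a}{2}$. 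Hence $\tau$ is a morphism from $(A,\query{\vpH}{\AA},\query{\vpV}{\AA})$ to $\DD$, and $\tau\circ\pi:\grid{m}\to\DD$ is a periodic tiling of $\DD$.

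The only subtle point, compared with Proposition~\ref{prop:3loc reduction}, is ensuring that the quantifier-free formulas $\vpH,\vpV$ (and, for $\vpgridd$, also $\vpW$ via $\vpcompletee$) faithfully witness horizontal/vertical/diagonal adjacency when evaluated in a $2$-view rather than a $3$-view. This is exactly what Lemma~\ref{lemma:grid2} takes care of — the use of the diagonal relation $\relsaord{1}{2}$ inside $\vpW$ is precisely what lets $\vpcompletee$ carry the same content as $\vpcompleteee$ did at radius~$3$. Once Lemmas~\ref{lem:grid2m} and~\ref{lemma:grid2} are invoked, the rest of the argument is formally identical to that of Proposition~\ref{prop:3loc reduction}.
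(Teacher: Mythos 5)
Your proposal is correct and matches the paper's treatment: the paper itself only remarks that Proposition~\ref{prop:2loc reduction} ``follows the same line as Proposition~\ref{prop:3loc reduction}'', and your argument is precisely that instantiation, with Lemma~\ref{lemma:grid2} replacing Lemma~\ref{lemma:grid} and the same construction of $\AA'_{2m}$ and of the tiling $\tau\circ\pi$ in the two directions.
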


Finally, we obtain the desired undecidability result.

\begin{theorem}\label{theorem:2-loc}
$\DataSat{\rndFOr{2},2, \{\relsaord{1}{1},\relsaord{2}{2},\relsaord{1}{2}\}}$ is undecidable.
\end{theorem}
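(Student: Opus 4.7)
The plan is to conclude directly from Proposition~\ref{prop:2loc reduction} together with the undecidability of the tiling problem $\Tile$. All the heavy lifting has been done: given a domino system $\DD=(D,H_\DD,V_\DD)$, one effectively constructs the sentence $\vpgridd \et \vpDDbis$, which belongs to $\rndFO{2}{\Unarygrid \uplus D,\{\relsaord{1}{1},\relsaord{2}{2},\relsaord{1}{2}\}}{2}$, and by Proposition~\ref{prop:2loc reduction} $\DD$ admits a periodic tiling if and only if this sentence is satisfiable.

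I would first note that the map $\DD \mapsto \vpgridd \et \vpDDbis$ is clearly computable: both $\Unarygrid$ and the formulas $\vpH,\vpV,\vpW$ are fixed, and $\vpDDbis$ is a straightforward finite enumeration over $D$, $H_\DD$, and $V_\DD$. Thus the construction constitutes a many-one reduction from $\Tile$ to $\DataSat{\rndFOr{2},2,\{\relsaord{1}{1},\relsaord{2}{2},\relsaord{1}{2}\}}$.

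Since $\Tile$ is undecidable by~\cite{Gurevich97}, it follows that $\DataSat{\rndFOr{2},2,\{\relsaord{1}{1},\relsaord{2}{2},\relsaord{1}{2}\}}$ is undecidable as well. There is no genuine obstacle remaining at this stage; the only thing to double-check is that no hidden use of a larger radius slipped into $\vpgridd$ or $\vpDDbis$. Inspecting the definitions, every $\locformr{\cdot}{x}{r}$ occurring in either formula has $r=2$, so the resulting sentence indeed lies in $\rndFOr{2}$ as claimed, completing the reduction.
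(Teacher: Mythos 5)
Your proposal is correct and matches the paper's own argument: the paper likewise obtains Theorem~\ref{theorem:2-loc} as an immediate consequence of Proposition~\ref{prop:2loc reduction} together with the undecidability of $\Tile$, the reduction $\DD \mapsto \vpgridd \et \vpDDbis$ being effective and the sentence lying in $\rndFO{2}{\Unarygrid \uplus D,\{\relsaord{1}{1},\relsaord{2}{2},\relsaord{1}{2}\}}{2}$. Nothing further is needed.
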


\section{Existential Fragment}
\label{sec:existential}
We present in this section results on the existential fragment of
$\rndFO{\nbd}{\Unary,\Binary_\nbd}{r}$ (with $r \geq 1$ and $\nbd \geq 1$ and $\Binary_\nbd$ to represent the
set of binary relation symbols  $\{\rels{i}{j}{}{} \mid
i,j \in \{1,\ldots,\nbd\}\}$) and
establish when it is
decidable. This fragment $\eFO{\nbd}{\linebreak[0]\Unary,\Binary_\nbd}{r}$ is given by the grammar
\[\phi ~::=~ \locformr{\psi}{x}{r} \;\mid\; x=y \;\mid\; \neg(x=y) \;\mid\; \exists x.\phi \;\mid\; \phi\ou\phi  \;\mid\; \phi\et\phi \]
where $\psi$ is a formula from $\ndFO{\nbd}{\Unary,\Binary_\nbd}$
with (at most) one free variable $x$. The quantifier free fragment $\qfFO{\nbd}{\Unary,\Binary_\nbd}{r}$ is
defined by the grammar 
$\phi ~::=~ \locformr{\psi}{x}{r} \;\mid\; x=y \;\mid\; \neg(x=y)
\;\mid\; \phi\ou\phi  \;\mid\; \phi\et\phi $.

\begin{remark}
  Note that for both these
fragments, we do not impose any restrictions on the use of quantifiers in
the formula $\psi$ located under the bracket of the form
$\locformr{\psi}{x}{r}$.
\end{remark}

\subsection{Two Data Values and Balls of Radius 2} 
We prove that the satisfiability problem for the
existential fragment of local first-order logic with two data values and balls of radius two is decidable.
To obtain this result we provide a reduction to the satisfiability
problem for first-order logic over $1$-data structures (see Theorem \ref{thm:1fo}). Our reduction is based on the following intuition. Consider a
$2$-data structure $\AA=(A,(P_{\unary}),\f{1},\f{2}) \in
\nData{2}{\Unary}$ and an element $a \in A$. If we take an
element $b$ in $\Ball{2}{a}{\AA}$, the radius-2-ball around $a$, we
know that either $\f{1}(b)$ or $\f{2}(b)$ is a common value with
$a$. In fact, if $b$ is at distance $1$ of $a$, this holds by definition and 
if $b$ is
at distance $2$ then $b$ shares an element with $c$ at distance $1$ of
$a$ and this element has to be shared with $a$ as well so $b$ ends to
be at distance $1$ of $a$. The
trick consists then in using extra-labels for elements sharing a value with
$a$ that can be forgotten and to keep only the value of $b$ not
present in $a$, this construction leading to a $1$-data structure. In
order to have a sound and complete reduction, we need as well to show
that we can enforce  the $1$-data structures that satisfied
our formula to have a 'good' shape (i.e. they morally can be obtained
from a $2$-data structures by applying the aforementioned
construction) and for this, we provide a formula of $\ndFO{1}{\Unary',\{\relsaord{1}{1}\}}$ (where
$\Unary'$ is obtained from $\Unary$ by adding extra predicates).\\

The first step for our reduction consists in providing a
characterization for the elements located in the radius-1-ball and the radius-2-ball around
another element.

\begin{lemma}\label{lem:shape-balls}
	Let $\AA=(A,(P_{\unary}),\f{1},\f{2}) \in
    \nData{2}{\Unary}$  and $a,b\in A$ and $j \in \{1,2\}$.  We have:
\begin{enumerate}
\item  $(b,j)\in \Ball{1}{a}{\AA}$ iff there is $i\in \{1,2\}$ such that $\relsaa{i}{j}{\AA}{a}{b}$.
\item $(b,j)\in \Ball{2}{a}{\AA}$ iff there exists $i,k\in \{1,2\}$ such that $\relsaa{i}{k}{\AA}{a}{b}$.
\end{enumerate}
\end{lemma}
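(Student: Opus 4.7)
The plan is to unfold the definitions of $\Ball{1}{a}{\AA}$ and $\Ball{2}{a}{\AA}$ in terms of directed paths of length at most $1$ and $2$ in $\gaifmanish{\AA}$ and then perform a case analysis on the type of each edge on the path. Recall that an edge $((c,i),(d,m))$ in $\gaifmanish{\AA}$ is produced for one of two reasons: either $c=d$ with $i\neq m$, or $\rels{i}{m}{}{} \in \Gamma$ together with $\rels{i}{m}{c}{d}$. Since here $\Gamma = \Binary_2$ contains all four relations, we never lose any edge that arises from value equality. The key observation is that in both cases one can associate with each edge a pair $(i',m)$ with $\funct{i'}(c)=\funct{m}(d)$ (taking $i'=m$ in the first case, or $i'=i$ in the second).

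For (1), the forward direction follows by inspecting a path of length $0$ or $1$ from some $(a,i_0)$ to $(b,j)$: length $0$ forces $a=b$ and $i_0=j$, whence $\funct{j}(a)=\funct{j}(b)$; length $1$ is handled by the two cases above, yielding either $\funct{j}(a)=\funct{j}(b)$ (same-element edge) or $\funct{i_0}(a)=\funct{j}(b)$ (relation edge). The backward direction is immediate: if $\funct{i}(a)=\funct{j}(b)$, then since $\rels{i}{j}{}{}\in\Binary_2$, the pair $((a,i),(b,j))$ is an edge, so $(b,j)\in\Ball{1}{a}{\AA}$.

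For (2), the forward direction proceeds by a case analysis on a length-$2$ path $(a,i_0)\to(c,m)\to(b,j)$ (lengths $0$ and $1$ reduce to (1)). The four sub-cases, according to the types of the two edges, all yield some pair $(i,k)\in\{1,2\}^2$ with $\funct{i}(a)=\funct{k}(b)$: either $a=c=b$ (any pair works), or one edge is a ``same element'' edge and the other a ``relation'' edge (giving directly a relation between $a$ and $b$), or both are relation edges (so transitivity through $\funct{m}(c)$ yields $\funct{i_0}(a)=\funct{j}(b)$). The backward direction is immediate as well: if $\funct{i}(a)=\funct{k}(b)$, then $(b,k)\in\Ball{1}{a}{\AA}\subseteq\Ball{2}{a}{\AA}$ by (1), and if $k\neq j$ the extra same-element edge $((b,k),(b,j))$ extends the path by one, so $(b,j)\in\Ball{2}{a}{\AA}$.

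No real obstacle is expected; the whole argument is essentially a finite case analysis on edge types, made possible by the fact that $\Binary=\Binary_2$ is the full set of comparison relations (so every value coincidence between $a$ and $b$ yields an explicit edge) and by the restriction to $\nbd=2$, which ensures that any two data values of the same element are always joined by a same-element edge.
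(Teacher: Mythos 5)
Your proof is correct and follows essentially the same route as the paper: unfold the definition of the balls as directed paths of length at most $1$ or $2$ in $\gaifmanish{\AA}$ and case-split on the two edges, with the backward directions being immediate since $\Binary_2$ contains all four relations. The only difference is cosmetic — you split on edge types (same-element vs.\ relation edge) where the paper splits on whether the intermediate element equals $a$ or $b$, and in the ``two relation edges'' case the paper phrases the transitivity step as contradicting distance $2$ while you simply conclude the relation directly; both yield the same conclusion.
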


\begin{proof} We show both statements:
\begin{enumerate}
\item Since $(b,j)\in \Ball{1}{a}{\AA}$, by definition we have either $b=a$ and in that case $\relsaa{j}{j}{\AA}{a}{b}$ holds, or $b \neq a$ and necessarily there exists $i\in \{1,2\}$ such that $\relsaa{i}{j}{\AA}{a}{b}$.
 \item First, if there exists $i,k\in \{1,2\}$ such that
$\relsaa{i}{k}{\AA}{a}{b}$, then $(b,k)\in \Ball{1}{a}{\AA}$ and $(b,j)\in \Ball{2}{a}{\AA}$ by definition. Assume now that $(b,j)\in \Ball{2}{a}{\AA}$. Hence  there exists $i\in \{1,2\}$ such that $\distaa{(a,i)}{(b,j)}{\AA}\leq 2$.
We perform a case analysis on the value of
$\distaa{(a,i)}{(b,j)}{\AA}$.
\begin{itemize}
  \item \textbf{Case $\distaa{(a,i)}{(b,j)}{\AA}=0$}. In that case
    $a=b$ and $i=j$ and we have  $\relsaa{i}{i}{\AA}{a}{b}$.
   \item \textbf{Case $\distaa{(a,i)}{(b,j)}{\AA}=1$}. In that case,
     $((a,i),(b,j))$ is an edge in the data graph $\gaifmanish{\AA}$
     of $\AA$ which means that $\relsaa{i}{j}{\AA}{a}{b}$ holds.
   \item \textbf{Case $\distaa{(a,i)}{(b,j)}{\AA}=2$}. Note that
     we have by definition $a \neq b$. Furthermore, in that case, there is
     $(c,k)\in A\times\{1,2\}$ such that $((a,i),(c,k))$ and
     $((c,k),(b,j))$ are edges in $\gaifmanish{\AA}$. If $c\neq a$ and
     $c\neq b$, this implies that $\relsaa{i}{k}{\AA}{a}{c}$ and
     $\relsaa{k}{j}{\AA}{c}{b}$, so $\relsaa{i}{j}{\AA}{a}{b}$ and
     $\distaa{(a,i)}{(b,j)}{\AA}=1$ which is a contradiction.
	If $c=a$ and $c\neq b$, this implies that $\relsaa{k}{j}{\AA}{a}{b}$.
	If $c\neq a$ and $c = b$, this implies that $\relsaa{i}{k}{\AA}{a}{b}$.
	\qedhere
\end{itemize}
\end{enumerate}
\end{proof}

We consider a formula $\phi=\exists x_1\ldots\exists
x_n.\phi_{qf}(x_1,\ldots,x_n)$ of $\eFO{2}{\Unary,\Binary_2}{2}$ in prenex normal form, i.e., such that $\phi_{qf}(x_1,\ldots,x_n)\in\qfFO{2}{\Unary,\Binary_2}{2}$. We know that there is a structure $\AA=(A,(P_{\unary}),\linebreak[0]\f{1},\f{2})$ in $\nData{2}{\Unary}$ such that $\AA\models\phi$ if and only if there are $a_1,\ldots,a_n \in A $ such that $\AA\models\phi_{qf}(a_1,\ldots,a_n)$.

Let $\AA=(A,(P_{\unary}),\f{1},\f{2})$ be a structure  in $\nData{2}{\Unary}$ and a tuple $\tuple{a} = (a_1,\ldots,a_n)$ of elements in $A^n$. We shall present the construction of a $1$-data structure
$\AAas$ in $\nData{1}{\Unaryp}$ (with $\Unary \subseteq \Unaryp$) with the same set of nodes as $\AA$, but where each node carries a single data value. In order to retrieve the data relations that hold in $\AA$ while reasoning over $\AAas$, we introduce extra-predicates in $\Unaryp$ to establish whether a node shares a common value with one of the nodes among $a_1,\ldots,a_n$ in $\AA$.
\begin{figure*}[htbp]
\centering
	\begin{subfigure}[b]{0.45\textwidth}
\begin{tikzpicture}[node distance=2cm]
	\node [data, label=below left:$a$]                (A)    {1
      \nodepart{two} 2 };
	\node [data, above left of=A,xshift=-1em,label=below:$b$]    (B)    {1 \nodepart{second} 3};
	\node [data, above right of=A,xshift=1em,label=below right:$c$]    (C)    {3 \nodepart{second} 2};
	\node [dataredred, below left of=A,label=below:$d$]    (D)    {5 \nodepart{second} 6};
	\node [dataredred, above right of=B,xshift=1em, label=below:$e$]
    (E)    {4 \nodepart{second} 3};
    \node [data, below right of=A, label=below:$f$]    (F)    {2 \nodepart{second} 7};

	\draw[line width=0.7pt,<->] (A.one north) .. controls +(0,.5) and
    +(.5,0).. (B.one east);
	\draw[line width=0.7pt,<->] (B.two east) .. controls +(2,-0.5) and
    +(-2,.5).. (C.one west);
    \draw[line width=0.7pt,<->] (E.two east) .. controls +(0,0) and
    +(0,0.5).. (C.one north);
    \draw[line width=0.7pt,<->] (E.south west) .. controls +(0,0) and
    +(0.5,.2).. (B.two east);
	\draw[line width=0.7pt,<->] (A.south east) .. controls +(1,-.5)
    and +(0,0).. (C.south west);
    \draw[line width=0.7pt,<->] (A.south) .. controls +(0,-0.5) and
    +(0,0).. (F.one west);
    \draw[line width=0.7pt,<->] (F.north) .. controls +(0,0) and
    +(0,0).. (C.south);
	\selfconnectionright{A};
	\selfconnectionleft{B};
	\selfconnectionright{C};
	\selfconnectionleft{D};
	\selfconnectionleft{E};
	\selfconnectionright{F};

\end{tikzpicture}
\caption{A data structure $\AA$ and  $\gaifmanish{\AA}$.}
\label{fig:abstract-a}
	\end{subfigure}
	\unskip\ \vrule\ \hspace{1em}
	\begin{subfigure}[b]{0.45\textwidth}
\begin{tikzpicture}[node distance=2cm]
	\node [dataone, label=below left:$a$]                (A)    {8};
	\node [dataone, above left of=A,xshift=-1em,label=below:$b$]    (B)    {3};
	\node [dataone, above right of=A,xshift=1em,label=below:$c$]    (C)    {3};
	\node [dataone, below left of=A,label=below:$d$]    (D)    {9};
	\node [dataone, above right of=B,xshift=1em, label=below:$e$]
    (E)    {10};
    \node [dataone, below right of=A, label=below:$f$]    (F)    {7};
    \node [right of=C,yshift=-3em]    (G)    {$\begin{array}{l}\uP{a[1,1]}=\{a,b\} \\
                                     \uP{a[2,2]}=\{a,c\}\\
                                     \uP{a[1,2]}=\emptyset \\
                                   \uP{a[2,1]}=\{f\}\\\end{array}$};

\end{tikzpicture}
\caption{$\sem{\AA}_{(a)}$.}
\label{fig:abstract-b}
	\end{subfigure}
	\caption{
\label{fig:abstract}}
\end{figure*}

We now explain formally how we build $\AAas$. Let $\Udeci{n}=\{\udd{p}{i}{j}\mid p\in\{1,\ldots,n\}, i,j\in\{1,2\}\}$ be a set of new unary predicates and $\Unaryp = \Unary \cup \Udeci{n}$.
For every element $b\in A$, the predicates in $\Udeci{n}$ are used to keep track of the relation between the data values of $b$ and the one of $a_1,\ldots,a_n$ in $\AA$.
Formally, we define $\uP{\udd{p}{i}{j}}=\{b\in A\mid \AA\models \rels{i}{j}{a_p}{b}\}$.
We now define a data function $f:A\to \N$.
We recall for this matter that $\Valuessub{\AA}{\tuple{a}} = \{f_1(a_1),f_2(a_1),\ldots,f_1(a_n),f_2(a_n)\}$ and let $\inj:A\to\N\setminus \Values{\AA}$ be an injection. For every $b \in A$, we set:
\[
	f(b) = \begin{cases}
		f_2(b) \text{ if } f_1(b)\in \Valuessub{\AA}{\tuple{a}} \text{ and } f_2(b)\notin \Valuessub{\AA}{\tuple{a}}\\
		f_1(b) \text{ if } f_1(b)\notin \Valuessub{\AA}{\tuple{a}} \text{ and } f_2(b)\in \Valuessub{\AA}{\tuple{a}}\\
		\inj(b) \text{ otherwise}
    	   \end{cases}
\]
Hence depending if $f_1(b)$ or $f_2(b)$ is in $\Valuessub{\AA}{\tuple{a}}$, it splits the elements of $\AA$ in four categories.
If $f_1(b)$ and $f_2(b)$ are in $\Valuessub{\AA}{\tuple{a}}$, the predicates in $\Udeci{n}$ allow us to retrieve all the data values of $b$. 
Given $j\in\{1,2\}$, if $f_j(b)$ is in $\Valuessub{\AA}{\tuple{a}}$ but $f_{3-j}(b)$ is not, the new predicates will give us the $j$-th data value of $b$ and we have to keep track of the $(3-j)$-th one, so we save it in $f(b)$.
Lastly, if neither $f_1(b)$ nor $f_2(b)$ is in $\Valuessub{\AA}{\tuple{a}}$, we will never be able to see the data values of $b$ in $\phi_{q_f}$ (thanks to Lemma \ref{lem:shape-balls}), so they do not matter to us. Finally, we have  $\AAas = (A, (\uP{\unary})_{\unary\in\Unaryp}, f) $. Figure \ref{fig:abstract-b} provides an example of  $\Valuessub{\AA}{\tuple{a}}$ for the data structures depicted on Figure \ref{fig:abstract-a} and $\tuple{a}=(a)$.
The next lemma formalizes the connection existing between $\AA$ and
$\AAas$ with $\tuple{a} = (a_1,\ldots,a_n)$.

\begin{lemma}\label{lem:r2dv2-semantique}
Let $p\in\{1,\ldots,n\}$ and assume $\vprojr{\AA}{a_p}{2} = (A',(P'_{\unary}),\fp_1,\fp_2)$ (with $A'= \{b \in A \mid (b,i) \in \Ball{2}{a_p}{\AA}\}$). For all $b,c\in A'$ and $j,k\in\{1,2\}$, the following statements hold:

	\begin{enumerate}
  \item $(b,j),(c,k) \in\Ball{2}{a_p}{\AA}$
    \item If $(b,j)\in\Ball{1}{a_p}{\AA}$ and $(c,k)\in\Ball{1}{a_p}{\AA}$ then $\vprojr{\AA}{a_p}{2}\models\rels{j}{k}{b}{c}$ iff there is $i\in\{1,2\}$ s.t. $b \in \uP{\udd{p}{i}{j}}$ and $c \in \uP{\udd{p}{i}{k}}$.
		\item If $(b,j)\in\Ball{2}{a_p}{\AA}\setminus\Ball{1}{a_p}{\AA}$ and $(c,k)\in\Ball{1}{a_p}{\AA}$ then $\vprojr{\AA}{a_p}{2}\nvDash\rels{j}{k}{b}{c}$ 
		\item If $(b,j),(c,k) \in\Ball{2}{a_p}{\AA}\setminus\Ball{1}{a_p}{\AA}$ then $\vprojr{\AA}{a_p}{2}\models\rels{j}{k}{b}{c}$ iff either                $\relsaa{1}{1}{\AAas}{b}{c}$ or there exists $p' \in \{1,\ldots,n\}$ and $\ell \in \{1,2\}$ such that $b \in \uP{\udd{p'}{\ell}{j}}$ and $c \in \uP{\udd{p'}{\ell}{k}}$ .
	\end{enumerate}
\end{lemma}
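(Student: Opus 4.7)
The plan is to prove each item by unfolding three pieces of data: the semantics of the $2$-view, for which $\fp_j(b)=\funct{j}(b)$ whenever $(b,j)\in\Ball{2}{a_p}{\AA}$ and otherwise a fresh element of $\N\setminus\Values{\AA}$ is assigned; the defining property of the auxiliary predicates, namely $b\in\uP{\udd{p}{i}{j}}\iff \funct{i}(a_p)=\funct{j}(b)$; and the case split built into the definition of the data function $f$ of $\AAas$. Throughout, the central translation tool is Lemma~\ref{lem:shape-balls}, which converts the geometric conditions $(b,j)\in\Ball{1}{a_p}{\AA}$ and $(b,j)\in\Ball{2}{a_p}{\AA}$ into the existence of indices witnessing a data equality with $a_p$.

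Item~1 will serve as the common premise in items~2--4, and holds in each of them because $\Ball{1}{a_p}{\AA}\subseteq\Ball{2}{a_p}{\AA}$. Item~2 is then direct: $\vprojr{\AA}{a_p}{2}\models\rels{j}{k}{b}{c}$ rewrites to $\funct{j}(b)=\funct{k}(c)$, and Lemma~\ref{lem:shape-balls}(1) produces/consumes in both directions a common index $i$ satisfying $\funct{i}(a_p)=\funct{j}(b)=\funct{k}(c)$. Item~3 is a short contradiction: if $\funct{j}(b)=\funct{k}(c)$, then Lemma~\ref{lem:shape-balls}(1) applied to $(c,k)\in\Ball{1}{a_p}{\AA}$ gives some $i$ with $\funct{i}(a_p)=\funct{j}(b)$, contradicting $(b,j)\notin\Ball{1}{a_p}{\AA}$.

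The main obstacle is item~4, whose correctness hinges on a subtle interplay between the ``otherwise'' branch of the definition of $f$ and the fact that $\inj$ produces values outside $\Values{\AA}$. I will first record the auxiliary observation that, since $(b,j)\in\Ball{2}{a_p}{\AA}\setminus\Ball{1}{a_p}{\AA}$, Lemma~\ref{lem:shape-balls}(2) forces $\funct{3-j}(b)\in\Valuessub{\AA}{\tuple{a}}$ (and symmetrically for $c$); consequently $f(b)=\funct{j}(b)$ when $\funct{j}(b)\notin\Valuessub{\AA}{\tuple{a}}$, and $f(b)=\inj(b)$ when $\funct{j}(b)\in\Valuessub{\AA}{\tuple{a}}$. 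For the forward direction, $\funct{j}(b)=\funct{k}(c)$ splits on whether this value lies in $\Valuessub{\AA}{\tuple{a}}$: if so, the value is some $\funct{\ell}(a_{p'})$ and gives the desired detour through $\uP{\udd{p'}{\ell}{j}}$ and $\uP{\udd{p'}{\ell}{k}}$; if not, the auxiliary observation yields $f(b)=\funct{j}(b)=\funct{k}(c)=f(c)$, hence $\relsaa{1}{1}{\AAas}{b}{c}$.

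Backwards, a detour $b\in\uP{\udd{p'}{\ell}{j}}$, $c\in\uP{\udd{p'}{\ell}{k}}$ immediately produces $\funct{j}(b)=\funct{\ell}(a_{p'})=\funct{k}(c)$. The $\relsaa{1}{1}{\AAas}{b}{c}$ case is by a short case analysis: if both $\funct{j}(b),\funct{k}(c)\notin\Valuessub{\AA}{\tuple{a}}$, the auxiliary observation gives $f(b)=\funct{j}(b)$ and $f(c)=\funct{k}(c)$, so $f(b)=f(c)$ yields $\funct{j}(b)=\funct{k}(c)$; the mixed cases are excluded because $\inj$ takes values outside $\Values{\AA}$ whereas $\funct{\cdot}(\cdot)$ does not; and the remaining case where both $\funct{j}(b),\funct{k}(c)\in\Valuessub{\AA}{\tuple{a}}$ forces $\inj(b)=\inj(c)$, hence $b=c$, after which Lemma~\ref{lem:shape-balls}(2) applied to $(b,j),(b,k)\in\Ball{2}{a_p}{\AA}\setminus\Ball{1}{a_p}{\AA}$ forces $j=k$, and we conclude $\funct{j}(b)=\funct{k}(c)$ trivially.
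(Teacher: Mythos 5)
Your treatment of items 2--4 follows essentially the same route as the paper's proof: unfold the view semantics so that $\fp_j(b)=\funct{j}(b)$ on $\Ball{2}{a_p}{\AA}$, translate ball membership through Lemma~\ref{lem:shape-balls}, and in item~4 split on whether $\funct{j}(b)$ lies in $\Valuessub{\AA}{\tuple{a}}$. Your backward analysis of the disjunct $\relsaa{1}{1}{\AAas}{b}{c}$ (mixed cases excluded because $\inj$ avoids $\Values{\AA}$; both-in case forcing $b=c$ and then $j=k$) is, if anything, spelled out more carefully than the paper's, which compresses this step into a brief appeal to injectivity.

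The one place where your write-up does not hold up as stated is item~1. It is a standalone claim: for \emph{every} $b\in A'$ and \emph{every} $j\in\{1,2\}$ we have $(b,j)\in\Ball{2}{a_p}{\AA}$, whereas membership of $b$ in $A'$ only guarantees $(b,i)\in\Ball{2}{a_p}{\AA}$ for \emph{some} $i$. Your justification ``because $\Ball{1}{a_p}{\AA}\subseteq\Ball{2}{a_p}{\AA}$'' only covers the uses of this fact inside items 2--4, where the relevant memberships are already hypothesised; it does not prove item~1 itself. The repair is the argument the paper gives (and which you in effect already use in your auxiliary observation): by Lemma~\ref{lem:shape-balls}(2), $(b,i)\in\Ball{2}{a_p}{\AA}$ for some $i$ is equivalent to the existence of $i',k'$ with $\relsaa{i'}{k'}{\AA}{a_p}{b}$, a condition that does not depend on which coordinate of $b$ is queried; applying the same equivalence with $j=1$ and $j=2$ places both $(b,1)$ and $(b,2)$ in $\Ball{2}{a_p}{\AA}$. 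This item is not cosmetic: it is exactly what later licenses guarding quantification by $\phiBdeux(x)$ in the proof of Lemma~\ref{lem:correct}, so it deserves its one-line proof rather than a containment remark.
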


\begin{proof}
	\begin{enumerate}
  \item We show for instance that $(b,1) \in\Ball{2}{a_p}{\AA}$. Since $b$ belongs to $A'$, it means that $(b,2) \in\Ball{2}{a_p}{\AA}$. Using Lemma \ref{lem:shape-balls}.2, we have $(b,1) \in \Ball{1}{a_p}{\AA}$ or $(b,2) \in \Ball{1}{a_p}{\AA}$, hence  $(b,1) \in\Ball{2}{a_p}{\AA}$.
		\item Assume that $(b,j)\in\Ball{1}{a_p}{\AA}$ and $(c,k)\in\Ball{1}{a_p}{\AA}$.
			It implies that $\fp_j(b)=f_j(b)$ and $\fp_k(c)=f_k(c)$.
			Then assume that $\vprojr{\AA}{a_p}{2}\models\rels{j}{k}{b}{c}$. 
			As $(b,j)\in\Ball{1}{a_p}{\AA}$, thanks to Lemma \ref{lem:shape-balls}.1 it means that there is a $i\in\{1,2\}$ such that $\relsaa{i}{j}{\AA}{a_p}{b}$.
			So we have $f_k(c)=\fp_k(c)=\fp_j(b)=f_j(b)=f_i(a_p)$, that is $\relsaa{i}{k}{\AA}{a_p}{c}$. Hence by definition, $b \in \uP{\udd{p}{i}{j}}$ and $c \in \uP{\udd{p}{i}{k}}$.
			Conversely, let $i\in\{1,2\}$ such that $b \in \uP{\udd{p}{i}{j}}$ and $c \in \uP{\udd{p}{i}{k}}$. This means that $\relsaa{i}{j}{\AA}{a_p}{b}$ and $\relsaa{i}{k}{\AA}{a_p}{c}$.
			So $\fp_j(b)=f_j(b)=f_i(a_p)=f_k(c)=\fp_k(c)$, that is $\vprojr{\AA}{a_p}{2}\models\rels{j}{k}{b}{c}$. 
		\item Assume that $(b,j)\in\Ball{2}{a_p}{\AA}\setminus\Ball{1}{a_p}{\AA}$ and $(c,k)\in\Ball{1}{a_p}{\AA}$.
			It implies that $\fp_j(b)=f_j(b)$ and $\fp_k(c)=f_k(c)$.
			Thanks to Lemma \ref{lem:shape-balls}.1, $(c,k)\in\Ball{1}{a_p}{\AA}$ implies that $f_k(c)\in\{f_1(a_p),f_2(a_p)\}$ and $(b,j)\notin\Ball{1}{a_p}{\AA}$ implies that $f_j(b)\notin\{f_1(a_p),f_2(a_p)\}$.
			So  $\vprojr{\AA}{a_p}{2}\not \models\rels{j}{k}{b}{c}$.
		\item Assume that $(b,j), (c,k) \in\Ball{2}{a_p}{\AA}\setminus\Ball{1}{a_p}{\AA}$. As previously, we have that $f_j(b)\notin\{f_1(a_p),f_2(a_p)\}$ and $f_k(c)\notin\{f_1(a_p),f_2(a_p)\}$, and thanks to Lemma \ref{lem:shape-balls}.2, we have $f_{3-j}(b) \in \{f_1(a_p),f_2(a_p)\}$ and $f_{3-k}(b) \in \{f_1(a_p),f_2(a_p)\}$. There is then two cases:
    \begin{itemize}
    \item Suppose there does not exists $p' \in \{1,\ldots,n\}$ such that $f_{j}(b) \in \{f_1(a_{p'}),f_2(a_{p'})\}$ .This allows us to deduce that $\fp_j(b)=f_j(b)=f(b)$ and $\fp_k(c)=f_k(c)$. If $\vprojr{\AA}{a_p}{2}\models\rels{j}{k}{b}{c}$, then necessarily there does not exists $p' \in \{1,\ldots,n\}$ such that $f_{k}(c) \in \{f_1(a_{p'}),f_2(a_{p'})\}$ so we have $\fp_k(c)=f_k(c)=f(c)$ and  $f(b)=f(c)$, consequently $\relsaa{1}{1}{\AAas}{b}{c}$. Similarly assume that  $\relsaa{1}{1}{\AAas}{b}{c}$, this means that $f(b)=f(c)$ and either $b=c$ and $k=j$ or $b \neq c$ and by injectivity of $f$,we have $f_j(b)=f(b)=f_k(c)$. This allows us to deduce that $\vprojr{\AA}{a_p}{2}\models\rels{j}{k}{b}{c}$.
    \item  If there exists $p' \in \{1,\ldots,n\}$ such that $f_{j}(b) = f_\ell(a_{p'})$ for some $\ell \in \{1,2\}$. Then we have $b \in       \uP{\udd{p'}{\ell}{j}}$. Consequently, we have $\vprojr{\AA}{a_p}{2}\models\rels{j}{k}{b}{c}$ iff $c \in \uP{\udd{p'}{\ell}{k}}$. \qedhere
   \end{itemize}
	\end{enumerate}
\end{proof}

We shall now see how we translate the formula
$\phi_{qf}(x_1,\linebreak[0]\ldots,\linebreak[0]x_n)$ into a formula
$\phit{\phi_{qf}}(x_1,\linebreak[0]\ldots,x_n)$ in $\ndFO{1}{\Unaryp,\{\relsaord{1}{1}\}}$ such that $\AA$ satisfies $\phi_{qf}(a_1,\ldots,a_n)$ if, and only if, $\AAas$ satisfies $\phit{\phi_{qf}}(a_1,\ldots,a_n)$. Thanks to the previous lemma we know that if $\vprojr{\AA}{a_p}{2}\models\rels{j}{k}{b}{c}$ then $(b,j)$ and $(c,k)$ must belong to the  same set among $\Ball{1}{a_p}{\AA}$ and $\Ball{2}{a_p}{\AA}\setminus\Ball{1}{a_p}{\AA}$  and we can test in $\AAas$ whether $(b,j)$ is a member of  $\Ball{1}{a_p}{\AA}$ or $\Ball{2}{a_p}{\AA}$.
Indeed, thanks to Lemmas \ref{lem:shape-balls}.1 and \ref{lem:shape-balls}.2, we have $(b,j) \in \Ball{1}{a_p}{\AA}$ iff $b\in\bigcup_{i=1,2}\uP{\udd{p}{i}{j}}$ and $(b,j) \in \Ball{2}{a_p}{\AA}$ iff $b\in\bigcup_{i=1,2}^{j'=1,2} \uP{\udd{p}{i}{j'}}$. This reasoning leads to the  following formulas in $\ndFO{1}{\Unaryp,\{\relsaord{1}{1}\}}$ with $p \in \{1,\ldots,n\}$ and $j \in \{1,2\}$:
\begin{itemize}
\item $\phiBun{j}(y) := \udd{p}{1}{j}(y) \ou \udd{p}{2}{j}(y)$ to test if the $j$-th field of an element belongs to $\Ball{1}{a_p}{\AA}$
\item $\phiBdeux(y) := \phiBun{1}(y) \ou \phiBun{2}(y)$ to test if a field of an element belongs to $\Ball{2}{a_p}{\AA}$
\item $\phiBdsu{j}(y) := \phiBdeux(y) \et \neg\phiBun{j}(y)$ to test that the $j$-th field of an element belongs to $\Ball{2}{a_p}{\AA}\setminus\Ball{1}{a_p}{\AA}$
\end{itemize}
  
We shall now present how we use these formulas to translate atomic formulas of the form  $\rels{j}{k}{y}{z}$ under some $\locformr{-}{x_p}{2}$. For this matter, we rely on the two following formulas of $\ndFO{1}{\Unaryp}$:
\begin{itemize}
\item The first formula asks  for $(y,j)$ and $(z,k)$ to be in $\Ball{1}{a_p}{1}$ (where here we abuse notations, using variables for the elements they represent) and for these two data values to coincide with one data value of $a_p$, it corresponds to Lemma \ref{lem:r2dv2-semantique}.2:
  $$
  \phiun(y,z) := \phiBun{j}(y) \et \phiBun{k}(z) \et \\\Ou_ {i=1,2}\udd{p}{i}{j}(y)\et\udd{p}{i}{k}(z) 
  $$
\item The second formula asks for $(y,j)$ and $(z,k)$ to be in $\Ball{2}{a_p}{\AA}\setminus\Ball{1}{a_p}{\AA}$ and checks either whether the data values of $y$ and $z$ in $\AAas$ are equal or whether there exist $p'$ and $\ell$ such that $y$ belongs to $\udd{p'}{\ell}{j}(y)$ and $z$ belongs to $\udd{p'}{\ell}{k}(z)$, it corresponds to Lemma \ref{lem:r2dv2-semantique}.4:
  \begin{align*}
  \phideux(y,z) := & \phiBdsu{j}(y) \et \phiBdsu{k}(z) \et \\ & \big (y\sim z \ou\big(\Ou^n_{p'=1}\Ou^2_ {\ell=1}\udd{p'}{\ell}{j}(y)\et\udd{p'}{\ell}{k}(z)\big)\big)  
  \end{align*}
\end{itemize}

	Finally, here is the inductive definition of the translation $\T{-}$ which uses sub transformations $\Tp{-}$ in order to remember the centre of the ball and leads to the construction of $\phit{\phi_{qf}}(x_1,\ldots,x_n)$:
\[	\begin{array}{rcl}
		\T{\phi\ou\phi'} &=& \T{\phi} \ou \T{\phi'}\\
		\T{x_p=x_p'}  &=& x_p=x_p'            \\      
		\T{\neg\phi} &=& \neg\T{\phi}       \\   
      \T{\locformr{\psi}{x_p}{2}}  &=& \Tp{\psi}  \\
      \Tp{\rels{j}{k}{y}{z}} &=&\phiun(y,z) \ou \phideux(y,z)\\
      \Tp{\unary(x)} &=& \unary(x)  \\
      \Tp{x=y} &=& x=y \\
      \Tp{\phi\ou\phi'}&=& \Tp{\phi} \ou \Tp{\phi'} \\
      \Tp{\neg\phi} &=& \neg\Tp{\phi}\\
      \Tp{\exists x. \phi} &=& \exists x.\phiBdeux(x) \wedge \Tp{\phi}\\
	\end{array}\]

\begin{lemma} \label{lem:correct}
	We have $\AA\models\phi_{qf}(\tuple{a})$ iff $\AAas\models\phit{\phi_{qf}}(\tuple{a})$.
\end{lemma}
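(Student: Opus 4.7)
The plan is to prove the equivalence by structural induction on $\phi_{qf}$, with an inner induction on the structure of the body $\psi$ in each local modality $\locformr{\psi}{x_p}{2}$. At the outer level the cases are entirely syntactic: the Boolean cases are immediate from the induction hypothesis, the equality case $x_p = x_{p'}$ is preserved verbatim, and the local modality case reduces via the inner induction to showing that $\vprojr{\AA}{a_p}{2} \models_I \psi$ iff $\AAas \models_I \Tp{\psi}$ whenever $I$ maps free variables of $\psi$ to elements of $A'$, the universe of the 2-view.

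For the inner induction, the unary-predicate and equality cases are immediate since $\Unary \subseteq \Unary'$ and $A' \subseteq A$, and the Boolean cases are routine. The existential case $\exists x.\phi$ requires exactly the guard $\phiBdeux(x)$: an element $b$ lies in $A'$ iff some $(b,i) \in \Ball{2}{a_p}{\AA}$, which by Lemma~\ref{lem:shape-balls}(2) is equivalent to the disjunction $\bigvee_{i,i' \in \{1,2\}} b \in \uP{\udd{p}{i}{i'}}$, i.e., to $\AAas \models_I \phiBdeux(x)$. Using this guard together with the induction hypothesis gives the equivalence for $\exists$.

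The main obstacle, and the heart of the argument, is the atomic case $\rels{j}{k}{y}{z}$. Here one needs to argue that, since the fresh values introduced by $\inj$ in the view are pairwise distinct and disjoint from $\Values{\AA}$, the formula can be true in $\vprojr{\AA}{a_p}{2}$ only when both $(y,j)$ and $(z,k)$ belong to $\Ball{2}{a_p}{\AA}$; moreover, by Lemma~\ref{lem:r2dv2-semantique}(3), the two positions must lie in the same ``stratum,'' i.e., both in $\Ball{1}{a_p}{\AA}$ or both in $\Ball{2}{a_p}{\AA} \setminus \Ball{1}{a_p}{\AA}$. The first stratum is captured by $\phiun(y,z)$: by Lemma~\ref{lem:r2dv2-semantique}(2), $\vprojr{\AA}{a_p}{2} \models \rels{j}{k}{y}{z}$ iff some $i$ witnesses $y \in \uP{\udd{p}{i}{j}}$ and $z \in \uP{\udd{p}{i}{k}}$. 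The second stratum is captured by $\phideux(y,z)$: by Lemma~\ref{lem:r2dv2-semantique}(4), the relation in the view holds iff either $\relsaa{1}{1}{\AAas}{y}{z}$ (the case where $f_j(y) = f_k(z)$ is a value outside $\Values{\AA}_{\tuple{a}}$, hence stored in $f$) or the shared value is $f_\ell(a_{p'})$ for some other $p',\ell$ (the case where $f_j(y) = f_k(z) \in \Values{\AA}_{\tuple{a}}$), which is exactly what the disjunction in $\phideux$ expresses. Combining the two sub-cases via disjunction yields the translation, and completeness relies on the fact that the guards $\phiBun{j}$, $\phiBdsu{j}$ are mutually exclusive so that exactly one disjunct can be satisfied.

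Gluing the three ingredients together, $\AA \models \phi_{qf}(\tuple{a})$ iff the boolean combination of local sub-formulas holds, iff each local sub-formula $\locformr{\psi}{x_p}{2}$ is preserved by the inner translation, iff $\AAas \models \phit{\phi_{qf}}(\tuple{a})$. The only delicate point is bookkeeping the two-stratum case analysis in the atomic clause; everything else is mechanical induction.
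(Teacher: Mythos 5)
Your proposal is correct and follows essentially the same route as the paper: reduce the statement to the two non-trivial clauses of the translation (the guarded existential inside the modality, handled via $\phiBdeux$ and Lemma~\ref{lem:shape-balls}/\ref{lem:r2dv2-semantique}.1, and the atomic $\rels{j}{k}{y}{z}$ case, handled by the stratum case analysis of Lemma~\ref{lem:r2dv2-semantique}.2--4). The paper phrases this more tersely than your double induction, but the substance of the argument is the same.
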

\begin{proof}
Since $\phit{\phi_{qf}}(x_1,\ldots,x_n)$ belongs to the quantifier free fragment of $\rndFO{2}{\Unary,\Binary_2}{2}$, the only place where quantifiers can be used are in the local modalities $\locformr{\psi}{x_p}{2}$. As a matter of fact, because of the inductive definition of $\T{\phi}$ and that only the  formulas $\exists x. \phi$ and $\rels{j}{k}{y}{z}$ change, we only have to prove that $\vprojr{\AA}{a_p}{2}\models\exists x. \phi$ iff $\AAas\models \Tp{\exists x. \phi}$ and that   given $b,c\in A$, we have $\vprojr{\AA}{a_p}{2}\models\rels{j}{k}{b}{c}$ iff $\AAas\models \Tp{\rels{j}{k}{y}{z}}(b,c)$.

To show that $\vprojr{\AA}{a_p}{2}\models\exists x. \phi$ iff
$\AAas\models \Tp{\exists x. \phi}$, we recall that $\vprojr{\AA}{a_p}{2}  =
(A',(P'_{\unary}),\f{1}',\ldots,\f{n}')$ with $A'
=\{b \in A \mid (b,i) \in \Ball{r}{a}{\AA}$ for some
$i \in \{1,2\}\}$. Hence when we quantify existentially in $\AAas$, we should only consider the element belonging to $A'$. Thanks to Lemma \ref{lem:r2dv2-semantique}.1, we know that if $b \in A'$ then $(b,1)$ and $(b,2)$ belong to $\Ball{2}{a_p}{\AA}$. Consequently the formula $\phiBdeux(x)$ allows to focus on elements in $A'$.

We now prove that  given $b,c\in A$, we have $\vprojr{\AA}{a_p}{2}\models\rels{j}{k}{b}{c}$ iff $\AAas\models \Tp{\rels{j}{k}{y}{z}}(b,c)$ and we first suppose that $\vprojr{\AA}{a_p}{2}\models\rels{j}{k}{b}{c}$ and assume $\vprojr{\AA}{a_p}{2}  =
(A',(P'_{\unary}),\f{1}',\ldots,\f{n}')$ .
    Using Lemma \ref{lem:r2dv2-semantique},  it implies that $b,c \in A'$ and $(b,j)$ and $(c,k)$ belong to same set between $\Ball{1}{a_p}{\AA}$ and  $\Ball{2}{a_p}{\AA} \setminus \Ball{1}{a_p}{\AA}$ . We proceed by a case analysis.
    \begin{itemize}
	\item If $(b,j),(c,k)\in\Ball{1}{a_p}{\AA}$ then by lemma \ref{lem:r2dv2-semantique}.2 we have that $\AAas\models\phiun(b,c)$ and thus $\AAas\models \Tp{\rels{j}{k}{y}{z}}(b,c)$.
      
	\item If $(b,j),(c,k)\in\Ball{2}{a_p}{\AA} \setminus \Ball{1}{a_p}{\AA}$ then by lemma \ref{lem:r2dv2-semantique}.4 we have that $\AAas\models\phideux(b,c)$ and thus $\AAas\models \Tp{\rels{j}{k}{y}{z}}(b,c)$.
    \end{itemize}

 We now suppose that $\AAas\models \Tp{\rels{j}{k}{y}{z}}(b,c)$.
	It means that $\AAas$ satisfies $\phiun(b,c)$ or $\phideux(b,c)$.
	If $\AAas\models\phiun(b,c)$, it implies that $(b,j)$ and $(c,k)$ are in $\Ball{1}{a_p}{\AA}$, and we can then apply lemma \ref{lem:r2dv2-semantique}.2 to deduce that $\vprojr{\AA}{a_p}{2}\models\rels{j}{k}{b}{c}$.
	If $\AAas\models\phideux(b,c)$, it implies that $(b,j)$ and $(c,k)$ are in $\Ball{2}{a_p}{\AA} \setminus \Ball{1}{a_p}{\AA}$, and we can then apply lemma \ref{lem:r2dv2-semantique}.4 to deduce that $\vprojr{\AA}{a_p}{2}\models\rels{j}{k}{b}{c}$. \qedhere	
  \end{proof}
  
  \medskip

To provide a reduction from  $\nDataSat{\eFOr{2}}{2,\Gamma_2}$ to
$\nDataSat{\ndFOr}{1,\Gamma_1}$, having the formula $\phit{\phi_{qf}}(x_1,\ldots,x_n)$
is not enough because to use the result of the previous
lemma, we need to ensure that there exists a model $\BB$ and a tuple
of elements $(a_1,\ldots,a_n)$ such that $\BB \models\
\phit{\phi_{qf}}(a_1,\ldots,a_n)$ and as well that there exists
$\AA\in \nData{2}{\Unary}$ such that $ \BB = \AAas$. We explain now how we
can ensure this last point.

 Now, we want to characterize the structures of the form $\AAas$.
Given a structure $\BB =
(A,(\uP{\unary})_{\unary\in\Unaryp},f)\in\nData{1}{\Unaryp}$ and
$\tuple{a}\in A$, we say that $(\BB,\tuple{a})$ is \emph{well formed}
iff there exists a structure $\AA\in \nData{2}{\Unary}$ such that $ \BB
= \AAas$. Hence $(\BB,\tuple{a})$ is \emph{well formed} iff there
exist two  functions $f_1,f_2:A\to\N$ such that $\AAas=\sem{(A,(\uP{\unary})_{\unary\in\Unary}, f_1,f_2)}_{\tuple{a}}$.
We state three properties on $(\BB,\tuple{a})$, and we will show that they characterize being well formed.
\begin{enumerate}
	\item (Transitivity) For all $b,c\in A$, $p,q \in\{1,\ldots,n\}$,
      $i,j,k,\ell \in\{1,2\}$ if $b\in\uP{\udd{p}{i}{j}}$, $c\in\uP{\udd{p}{i}{\ell}}$ and  $b\in\uP{\udd{q}{k}{j}}$ then $c\in\uP{\udd{q}{k}{\ell}}$.
	\item (Reflexivity) For all $p$ and $i$, we have $a_p\in\uP{\udd{p}{i}{i}}$
	\item (Uniqueness) For all $b\in A$, if $b\in\bigcap_{j=1,2}\bigcup_{p=1,\ldots,n}^{i=1,2} \uP{\udd{p}{i}{j}}$ or $b\notin\bigcup_{j=1,2}\bigcup_{p=1,\ldots,n}^{i=1,2} \uP{\udd{p}{i}{j}}$ then for any $c\in B$ such that $f(c)=f(b)$ we have $c=b$.
\end{enumerate}
Each property can be expressed by a first order logic formula, which
we respectively name $\phitran$, $\phirefl$ and $\phiuniq$ and  we
denote by $\phiwf$ their conjunction:
$$
  \hspace*{-3pt}\begin{array}{ll}
\phitran &\!= \forall y \forall z.\Et_{p,q=1}^{n}\Et_{i,j,k,\ell=1}^2 \Big(\!\udd{p}{i}{j}(y) \et \udd{p}{i}{\ell}(z) \et \udd{q}{k}{j}(y) \donc \udd{q}{k}{\ell}(z)\!\Big) \\
\phirefl(x_1,\ldots,x_n)  &\!=\Et_{p=1}^n\Et_{i=1}^2	 \udd{p}{i}{i}(x_p) \\
\phiuniq &\!= \forall y. \Big(\Et_{j=1}^2 \Ou^n_{p=1} \Ou_{i=1}^2 \udd{p}{i}{j}(y) \ou \Et_{j=1}^2 \Et^n_{p=1}\Et^2_{i=1} \neg\udd{p}{i}{j}(y)\Big) \donc\\ &\qquad\qquad (\forall z. y\sim z \donc y=z)\\
\phiwf(x_1,\ldots,x_n) &\!=\phitran \et \phirefl(x_1,\ldots,x_n) \et
                         \phiuniq
  \end{array}
  $$

The next lemma expresses that the formula $\phiwf$ allows to
characterise precisely the $1$-data structures in $\nData{1}{\Unaryp}$
which are well-formed.

\begin{lemma}\label{lem:well-formed}
	Let $\BB\in\nData{1}{\Unaryp}$ and $a_1,\ldots,a_n$ elements of
    $\BB$, then $(\BB,\tuple{a})$ is well formed iff $\BB\models\phiwf(\tuple{a})$.
  \end{lemma}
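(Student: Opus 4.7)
Assume $\BB = \AAas$ for some $\AA \in \nData{2}{\Unary}$ and verify each conjunct of $\phiwf$ directly. Reflexivity $\phirefl$ is immediate from $f_i(a_p) = f_i(a_p)$. Transitivity $\phitran$ follows because $b \in P_{\udd{p}{i}{j}}$ is equivalent to $f_j(b) = f_i(a_p)$, so the three hypotheses chain into $f_\ell(c) = f_i(a_p) = f_k(a_q)$, yielding $c \in P_{\udd{q}{k}{\ell}}$. Uniqueness $\phiuniq$ holds because the elements described by its antecedent are precisely those $b$ whose datum in $\AAas$ equals the fresh value $\inj(b) \in \N \setminus \Values{\AA}$; injectivity of $\inj$ then forces any $c$ sharing that datum to coincide with $b$.

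\textbf{Backward direction: equivalence relation.} Suppose $\BB \models \phiwf(\tuple{a})$; the task is to construct $f_1, f_2: A \to \N$ so that $\AAas = \BB$. My plan is to work through an equivalence relation $\sim$ on $A \times \{1,2\}$, defined as the reflexive-symmetric-transitive closure of the pairs $\{((b,j),(a_p,i)) \mid b \in P_{\udd{p}{i}{j}}\}$. The technical core is a short-chain lemma, proved from $\phitran$ and $\phirefl$ by a suitable instantiation (typically $y := a_{p'}$, $z := b$, with reflexivity supplying a third premise): $(b,j) \sim (a_p,i)$ already holds iff $b \in P_{\udd{p}{i}{j}}$, and $(a_q,k) \sim (a_p,i)$ iff $a_q \in P_{\udd{p}{i}{k}}$. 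Consequently the equivalence classes split into \emph{anchored} ones containing at least one pair $(a_p,i)$ and \emph{unanchored} singletons.

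\textbf{Value assignment.} For each anchored class $C$ I would pick a fresh $v_C \in \N$, keeping the $v_C$'s pairwise distinct and disjoint from the image of $\BB$'s data function, and set $f_j(b) := v_C$ whenever $(b, j) \in C$. For unanchored $(b, j)$, I would mirror the four cases in the definition of $\AAas$: if the opposite position $(b, 3-j)$ is anchored, let $f_j(b)$ equal the datum of $b$ in $\BB$; otherwise give $b$ two new fresh values. Inheriting the $\Unary$-predicates from $\BB$ yields $\AA$. The equality $\AAas = \BB$ then reduces to matching the $\udd{p}{i}{j}$-predicates (immediate from the short-chain characterization of $\sim$) and matching the data functions by case analysis over those four cases.

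\textbf{Main obstacle.} The subtle step is that the implicit injection $\inj: A \to \N \setminus \Values{\AA}$ required by the definition of $\AAas$ must coincide with $\BB$'s data function on the elements whose datum in $\AAas$ is $\inj(b)$, namely those lying in $\bigcap_j \bigcup_{p,i} P_{\udd{p}{i}{j}}$ or outside $\bigcup_{j,p,i} P_{\udd{p}{i}{j}}$. Exactly here $\phiuniq$ is essential: it guarantees that the $\BB$-datum of such a $b$ is not shared with any other element, so declaring $\inj(b)$ to be that datum preserves injectivity; extending $\inj$ freshly on the remaining elements, and having chosen all class values and case-D fresh values outside the image of $\BB$'s data function, ensures $\inj$ lands in $\N \setminus \Values{\AA}$ as required.
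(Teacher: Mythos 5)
Your proof is correct and takes essentially the same route as the paper's: choosing one fresh value per anchored $\sim$-class is exactly the paper's function $g:\{1,\ldots,n\}\times\{1,2\}\to\N\setminus\im{f}$, your case split for unanchored positions matches the paper's definition of $f_1,f_2$ (the paper uses a single fresh constant $\dout$ where you use per-element fresh values, which is immaterial), and both arguments use $\phitran\et\phirefl$ for consistency of the shared values and $\phiuniq$ to let $\inj$ coincide with $\BB$'s data function on the doubly-anchored and doubly-unanchored elements. The only presentational difference is that you package the $\phitran$/$\phirefl$ consistency checks as a short-chain lemma about an explicit equivalence relation, which the paper carries out directly when arguing that $g$ and $f_1,f_2$ are well defined.
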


  \begin{proof}
    First, if $(\BB,\tuple{a})$ is well formed, then there there exists
$\AA\in \nData{2}{\Unary}$ such that $ \BB = \AAas$ and by
construction we have $\AAas \models\phiwf(\tuple{a})$. We now suppose
that $\BB=(A,(\uP{\unary})_{\unary\in\Unaryp},f)$ and $\BB\models\phiwf(\tuple{a})$.	
	In order to define the functions $f_1,f_2:A\to\N$, we  need
    to introduce some objects.

    We first define a function $g :
    \{1,\ldots,n\} \times \{1,2\} \to \N\setminus \im{f}$ (where
    $\im{f}$ is the image of $f$ in $\BB$) which
      verifies the following properties:
      \begin{itemize}
        \item for all $p \in \{1,\ldots,n\}$ and $i \in \{1,2\}$, we
          have 
          $a_p \in \uP{\udd{p}{i}{3-i}} $ iff $g(p,1)=g(p,2)$;
        \item for all $p, q \in \{1,\ldots,n\}$ and $i,j \in \{1,2\}$,
          we have  $a_q \in \uP{\udd{p}{i}{j}} $ iff $g(p,i)=g(q,j)$.
        \end{itemize}
We use this function to fix the two data values carried by the
elements in $\{a_1,\ldots,a_m\}$. We now explain why this function is
well founded, it is due to the fact
that $\BB\models\phitran \et \phirefl(a_1,\ldots,a_n)$. In fact, since
$\BB \models \phirefl(a_1,\ldots,a_n)$, we have for all $p \in
\{1,\ldots,n\}$ and $i \in \{1,2\}$, $a_p \in \uP{\udd{p}{i}{i}}
$. Furthermore if $a_p \in \uP{\udd{p}{i}{j}}$ then $a_p \in
\uP{\udd{p}{j}{i}}$ thanks to the formula $\phitran$; indeed since we
have $a_p \in  \uP{\udd{p}{i}{j}}$ and $a_p \in  \uP{\udd{p}{i}{i}}$
and  $a_p \in  \uP{\udd{p}{j}{j}}$, we obtain $a_p \in
\uP{\udd{p}{j}{i}}$. Next, we also have that if $a_q \in
\uP{\udd{p}{i}{j}}$ then $a_p \in
\uP{\udd{q}{j}{i}}$ again thanks to $\phitran$;  indeed since we
have $a_q \in  \uP{\udd{p}{i}{j}}$ and $a_p \in  \uP{\udd{p}{i}{i}}$
and  $a_q \in  \uP{\udd{q}{j}{j}}$, we obtain $a_p \in
\uP{\udd{q}{j}{i}}$.

We also need a natural $\dout$ belonging to $\N\setminus
    (\im{g}\cup\im{f})$. For $j \in
    \{1,2\}$, we define $f_j$ as follows for all $b \in A$:
	\[f_j(b) = \left\{\begin{array}{ll}
		g(p,i) & \text{if for some } p,i \text{ we have } b\in\uP{\udd{p}{i}{j}} \\
		f(b) &\text{if for all $p,i$ we have $b\notin\uP{\udd{p}{i}{j}}$ and for some $p,i$ we have $b\in\uP{\udd{p}{i}{3-j}}$} \\
						 \dout &\text{if for all $p,i,j'$, we have } b\notin\uP{\udd{p}{i}{j'}}
	           \end{array}\right.
	\]

Here again, we can show that since $\BB\models\phitran \et
\phirefl(a_1,\ldots,a_n)$, the functions $f_1$ and $f_2$ are well
founded. Indeed, assume that  $b\in\uP{\udd{p}{i}{j}} \cap
  \uP{\udd{q}{k}{j}}$, then we have necessarily that
  $g(p,i)=g(q,k)$. For this we need to show that $a_p \in
  \udd{q}{k}{i}$ and we use again the formula $\phitran$. This can be
  obtained because we have $b\in\uP{\udd{p}{i}{j}}$ and
  $a_p\in\uP{\udd{p}{i}{i}}$ and $b \in \uP{\udd{q}{k}{j}}$.

 We   then define $\AA$ as the $2$-data-structures
$(A,(P_{\unary})_{\unary \in \Unary},\f{1},\f{2})$. It remains to
prove that $\BB = \AAas$. 

First, note  that for all  $b\in A$, $p \in \{1,\ldots,n\}$ and
$i,j\in\{1,2\}$, we have $b\in\uP{\udd{p}{i}{j}}$ iff
$\relsaa{i}{j}{\AA}{a_p}{b}$. Indeed, we have $b\in\uP{\udd{p}{i}{j}}$,
we have that $f_j(b)=g(p,i)$ and since $a_p \in \uP{\udd{p}{i}{j}}$ we
have as well that $f_i(a_p)=g(p,i)$, as a consequence
$\relsaa{i}{j}{\AA}{a_p}{b}$. In the other direction, if
$\relsaa{i}{j}{\AA}{a_p}{b}$, it means that $f_j(b)=f_i(a_p)=g(p,i)$
and thus  $b\in\uP{\udd{p}{i}{j}}$. Now to have $\BB = \AAas$, one has
only to be careful in the choice of  function $\inj$ 
while building $\AAas$. We recall that this function is injective and is
used to give a value to the elements $b \in A$ such that neither
$f_1(b)\in \Valuessub{\AA}{\tuple{a}} \text{ and } f_2(b)\notin
\Valuessub{\AA}{\tuple{a}}$ nor $ f_1(b)\notin
\Valuessub{\AA}{\tuple{a}} \text{ and } f_2(b)\in
\Valuessub{\AA}{\tuple{a}}$. For these elements, we make $\inj$
matches with the function $f$ and the fact that we define an injection
is guaranteed by the formula $\phiuniq$.
\end{proof}

Using the results of Lemma \ref{lem:correct} and
\ref{lem:well-formed}, we deduce that the formula $\phi\!=\!\exists x_1\linebreak[0]\ldots\exists
x_n.\phi_{qf}(x_1,\linebreak[0]\ldots,x_n)$ of $\eFO{2}{\Unary,\Binary_2}{2}$ is satisfiable
iff the formula $\psi=\exists x_1\ldots\exists
x_n.\phit{\phi_{qf}}(x_1,\ldots,x_n) \linebreak[0]\wedge \phiwf(x_1,\ldots,x_n) $
is satisfiable. Note that $\psi$ can be built in polynomial time from
$\phi$ and that it belongs to $\ndFO{1}{\Unaryp,\Binary_1}$. Hence, thanks to
Theorem \ref{thm:1fo}, we obtain that $\nDataSat{\eFOr{2}}{2}$ is in
\textsc{N2EXP}. 

We can as well obtain a matching lower bound thanks to a
reduction from the problem $\nDataSat{\ndFOr}{1,\Gamma_1}$. For this matter we rely on two
crucial points. First in the formulas of $\eFO{2}{\Unary,\Binary_2}{2}$, there is no
restriction on the use of quantifiers for the formulas located under the scope
of the $\locformr{\cdot}{x}{2}$ modality and consequently we can write
inside this modality a formula of $\ndFO{1}{\Unary}$ without any
modification. Second we can extend a
model of $\ndFO{1}{\Unary,\Binary_1}$ into a $2$-data structure such that all
elements and their values are located in the same radius-$2$-ball by adding everywhere a second
data value equal to $0$. More formally, let $\phi$ be
a formula in $\ndFO{1}{\Unary,\Binary_1}$ and consider the formula $\exists
x.\locformr{\phi}{x}{2}$ where we interpret $\phi$ over $2$-data
structures (this formula simply never mentions the values located in the second
fields). We have then the following lemma.

\begin{lemma} \label{lem:hardness-radius2-2}
 There exists $\AA \in \nData{1}{\Unary}$ such that $\AA \models \phi$
 if and only if there exists $\BB \in  \nData{2}{\Unary}$ such that
 $\BB \models \exists
x.\locformr{\phi}{x}{2}$.
\end{lemma}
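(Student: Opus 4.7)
The plan is to prove both directions by carefully exploiting the fact that the second data value plays no role in $\phi$. For the forward direction, given $\AA=(A,(P_\sigma),f) \in \nData{1}{\Unary}$ such that $\AA \models \phi$, I would build $\BB=(A,(P_\sigma),g_1,g_2) \in \nData{2}{\Unary}$ by setting $g_1 = f$ and $g_2(b)=0$ for every $b \in A$. Since every element has the same second data value, we have $\relsaa{2}{2}{\BB}{a}{b}$ for all $a,b$, and hence by Lemma~\ref{lem:shape-balls}.1, $(b,2) \in \Ball{1}{a}{\BB}$ for every $b \in A$. A directed path $(a,2) \to (b,2) \to (b,1)$ then shows $(b,1) \in \Ball{2}{a}{\BB}$, so $\Ball{2}{a}{\BB} = A \times \{1,2\}$ for any choice of $a$. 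Consequently, the injection $\inj$ used to relabel values outside the ball is vacuous, and $\vprojr{\BB}{a}{2}$ coincides with $\BB$.

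It then remains to argue that $\vprojr{\BB}{a}{2} \models \phi$. Since $\phi \in \ndFO{1}{\Unary,\{\relsaord{1}{1}\}}$ mentions neither $g_2$ nor any relation involving the second coordinate, its interpretation over the 2-data structure $\BB$ depends only on $(A,(P_\sigma),g_1) = (A,(P_\sigma),f) = \AA$. Therefore $\BB \models_{I[x/a]} \locformr{\phi}{x}{2}$ holds for any interpretation $I$, which yields $\BB \models \exists x.\locformr{\phi}{x}{2}$.

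For the converse direction, given $\BB \in \nData{2}{\Unary}$ with $\BB \models \exists x.\locformr{\phi}{x}{2}$, I would pick a witness $a$ and let $\vprojr{\BB}{a}{2} = (A',(P'_\sigma),f'_1,f'_2)$. I would then define $\AA = (A',(P'_\sigma),f'_1) \in \nData{1}{\Unary}$ by simply dropping the second data function. Because $\phi$ uses only the relation symbol $\relsaord{1}{1}$ (i.e., $\sim$ in the $\nbd=1$ notation) and unary predicates, and the first data values in $\vprojr{\BB}{a}{2}$ coincide with $f'_1$, satisfaction of $\phi$ over the 2-data structure $\vprojr{\BB}{a}{2}$ is equivalent to satisfaction over the 1-data structure $\AA$. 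Hence $\AA \models \phi$.

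There is essentially no hard step: the only point requiring a brief check is that $\Ball{2}{a}{\BB} = A \times \{1,2\}$ in the forward construction, which is a direct application of Lemma~\ref{lem:shape-balls}. The content of the lemma lies not in ingenuity but in observing that an existential $2$-local sentence can quantify a single reference point whose $2$-view already encompasses the entire universe once we pad with a constant second coordinate.
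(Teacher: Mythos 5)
Your proposal is correct and follows essentially the same route as the paper: pad each element with a constant second value $0$ so that the $2$-view of any element is the whole structure (the paper states this as "clear" where you spell it out via Lemma~\ref{lem:shape-balls}), and in the converse direction simply forget the second data function of the witness's $2$-view, using that $\phi$ never mentions the second coordinate.
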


\begin{proof}
 Assume that there exists  $\AA=(A,(P_{\unary})_{\unary \in
   \Unary},\f{1})$ in $\nData{1}{\Unary}$ such that  $\AA \models
 \phi$. Consider the $2$-data structure $\BB=(A,(P_{\unary})_{\unary \in
   \Unary},\f{1},\f{2})$ such that $\f{2}(a)=0$ for all $a\in
 A$. Let $a \in A$. It is clear that we have $\vprojr{\BB}{a}{2}=\BB$
 and that $\vprojr{\BB}{a}{2} \models \phi$ (because $\AA \models
 \phi$ and $\phi$ never mentions the second values of the elements
 since it is a formula in $\ndFO{1}{\Unary}$ ). Consequently $\BB \models \exists
 x.\locformr{\phi}{x}{2}$.

 Assume now that there exists $\BB=(A,(P_{\unary}),\f{1},\f{2})$ in $ \nData{2}{\Unary}$ such that  $\BB \models \exists
x.\locformr{\phi}{x}{2}$. Hence there exists $a \in A$ such that
$\vprojr{\BB}{a}{2} \models \phi$, but then by forgetting the second
value in $\vprojr{\BB}{a}{2}$ we obtain a model in $\nData{1}{\Unary}$
which satisfies $\phi$.
\end{proof}
  
Since $\nDataSat{\ndFOr}{1,\Binary_1}$ is
\textsc{N2EXP}-hard (see Theorem \ref{thm:1fo}), we obtain the desired lower bound. 

\begin{theorem}\label{thm:radius2-2}
	The problem $\nDataSat{\eFOr{2}}{2,\Binary_2}$ is \textsc{N2EXP}-complete.
\end{theorem}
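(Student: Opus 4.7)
The plan is to put together the upper and lower bounds that the preceding lemmas have been tailored to deliver.

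\textbf{Upper bound.} Given $\phi \in \eFO{2}{\Unary,\Binary_2}{2}$, I first rewrite it into prenex form $\exists x_1.\cdots\exists x_n.\phi_{qf}$ with $\phi_{qf} \in \qfFO{2}{\Unary,\Binary_2}{2}$; this is always possible with at most polynomial blow-up because the only non-quantifier connectives of the existential fragment are $\wedge$, $\vee$, and the equality literals. I then output
\[
\psi := \exists x_1.\cdots\exists x_n.\bigl(\phit{\phi_{qf}}(x_1,\ldots,x_n) \wedge \phiwf(x_1,\ldots,x_n)\bigr),
\]
a sentence of $\ndFO{1}{\Unaryp,\{\relsaord{1}{1}\}}$ whose size is polynomial in $|\phi|$. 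Lemma~\ref{lem:correct} says that $\phit{\phi_{qf}}$ faithfully encodes $\phi_{qf}$ on every structure of the form $\AAas$, while Lemma~\ref{lem:well-formed} pins down via $\phiwf$ exactly those pairs $(\BB,\tuple{a})$ that actually arise as $\AAas$. Hence $\psi$ is satisfiable iff $\phi$ is, and Theorem~\ref{thm:1fo} then places satisfiability of $\psi$, and therefore of $\phi$, in \textsc{N2EXP}.

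\textbf{Lower bound.} I reduce from $\nDataSat{\ndFOr}{1,\{\relsaord{1}{1}\}}$, which is \textsc{N2EXP}-hard by Theorem~\ref{thm:1fo}. Given $\phi \in \ndFO{1}{\Unary,\{\relsaord{1}{1}\}}$, I map it to the $\eFO{2}{\Unary,\Binary_2}{2}$ sentence $\exists x.\locformr{\phi}{x}{2}$: the grammar of $\eFOr{r}$ imposes no restriction on the formula under a local bracket, so $\phi$ may be reused verbatim; and Lemma~\ref{lem:hardness-radius2-2} furnishes equisatisfiability via the padding trick that assigns second value~$0$ to every element, collapsing the whole universe into a single radius-$2$ ball. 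All the technical content is thus absorbed by Lemmas~\ref{lem:correct}, \ref{lem:well-formed}, and \ref{lem:hardness-radius2-2}, so I foresee no serious obstacle beyond the bookkeeping of verifying polynomial size and correct signatures of the reduction.
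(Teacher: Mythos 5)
Your proposal is correct and follows essentially the same route as the paper: the upper bound is obtained by translating the prenex existential formula via Lemmas~\ref{lem:correct} and \ref{lem:well-formed} into a sentence of $\ndFO{1}{\Unaryp,\{\relsaord{1}{1}\}}$ of polynomial size and invoking Theorem~\ref{thm:1fo}, and the lower bound is the reduction from $\nDataSat{\ndFOr}{1,\Binary_1}$ through Lemma~\ref{lem:hardness-radius2-2}. No substantive differences from the paper's argument.
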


\subsection{Balls of radius 1 and any number of data values}
Let $\nbd \geq 1$. We first show that  $\nDataSat{\eFOr{1}}{\nbd,\Gamma_\nbd}$ is in
\textsc{NEXP} by providing a reduction towards
$\nDataSat{\ndFOr}{0,\emptyset}$. This reduction uses the characterization of
the radius-1-ball provided by Lemma  \ref{lem:shape-balls} and is very
similar to the reduction provided in the previous section.  In fact,
 for an element $b$ located in the radius-1-ball of another
element $a$, we use extra unary predicates to explicitly characterise which are the
values of $b$ that are common with the  values of $a$. We provide here
the main step of this reduction whose proof follows the  same line as
the one of Theorem \ref{thm:radius2-2}.

We consider a formula $\phi=\exists x_1\ldots\exists
x_n.\phi_{qf}(x_1,\ldots,x_n)$ of $\eFO{\nbd}{\Unary,\Gamma_1}{1}$ in prenex normal
form, i.e., such that $\phi_{qf}(x_1,\ldots,x_n)\in\qfFO{\nbd}{\Unary,\Gamma_1}{1}$. We
know that there is a structure $\AA=(A,(P_{\unary}),\linebreak[0]\f{1},\f{2},\ldots,\f{\nbd})$ in
$\nData{\nbd}{\Unary}$ such that $\AA\models\phi$ if and only if there
are $a_1,\ldots,a_n \in A $ such that
$\AA\models\phi_{qf}(a_1,\ldots,a_n)$. Let then $\AA=(A,(P_{\unary}),\f{1},\f{2},\ldots,\f{\nbd})$ in $\nData{\nbd}{\Unary}$ and a tuple $\tuple{a} = (a_1,\ldots,a_n)$ of elements in $A^n$. Let $\Omega_n=\{\udd{p}{i}{j}\mid p\in\{1,\ldots,n\}, i,j\in\{1,\ldots,\nbd\}\}$ be a set of new unary predicates and $\Unaryp = \Unary \cup \Omega_n$.
For every element $b\in A$, the predicates in $\Omega_n$ are used to keep track of the relation between the data values of $b$ and the ones of $a_1,\ldots,a_n$ in $\AA$.
Formally, we have $\uP{\udd{p}{i}{j}}=\{b\in A\mid \AA\models \rels{i}{j}{a_p}{b}\}$.
Finally, we build the $0$-data-structure
$\sem{\AA}'_{\tuple{a}}= (A, (\uP{\unary})_{\unary\in\Unaryp})
$. Similarly to Lemma \ref{lem:r2dv2-semantique}, we have the
following connection between $\AA$ and $\sem{\AA}'_{\tuple{a}}$.

\begin{lemma}\label{lem:r1-semantique}
Let $p\in\{1,\ldots,n\}$ and assume $\vprojr{\AA}{a_p}{1} =(A',(P'_{\unary}),\f{1},\f{2},\ldots,\f{\nbd})$ (with $A'= \{b \in A \mid (b,i) \in \Ball{1}{a_p}{\AA}\}$)
	For all $b,c\in A'$ and $j,k\in\{1,\ldots,\nbd\}$, the following statements hold:
	\begin{enumerate}
    \item If $(b,j)\in\Ball{1}{a_p}{\AA}$ and $(c,k)\in\Ball{1}{a_p}{\AA}$ then $\vprojr{\AA}{a_p}{1}\models\rels{j}{k}{b}{c}$ iff there is $i\in\{1,\ldots,\nbd\}$ s.t. $b \in \uP{\udd{p}{i}{j}}$ and $c \in \uP{\udd{p}{i}{k}}$.
		\item If $(b,j)\notin\Ball{1}{a_p}{\AA}$ and $(c,k)\in\Ball{1}{a_p}{\AA}$ then $\vprojr{\AA}{a_p}{1}\nvDash\rels{j}{k}{b}{c}$
		\item If $(b,j)\notin\Ball{1}{a_p}{\AA}$ and $(c,k)\notin\Ball{1}{a_p}{\AA}$ then $\vprojr{\AA}{a_p}{1}\models\rels{j}{k}{b}{c}$ iff $b=c$ and $j=k$.
	\end{enumerate}
\end{lemma}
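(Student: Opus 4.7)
The plan is to argue exactly as in Lemma~\ref{lem:r2dv2-semantique}, but the reasoning becomes simpler because with radius 1 there are no ``second hop'' elements to worry about. Recall that in $\vprojr{\AA}{a_p}{1}$, the data function $f_j'(b)$ coincides with $f_j(b)$ precisely when $(b,j)\in\Ball{1}{a_p}{\AA}$, and otherwise $f_j'(b)=\inj((b,j))$ with $\inj$ an injection into $\N\setminus\Values{\AA}$. The three cases in the lemma correspond to the three possible combinations of in-ball/out-of-ball status for $(b,j)$ and $(c,k)$, using the characterization from Lemma~\ref{lem:shape-balls}.1 that $(b,j)\in\Ball{1}{a_p}{\AA}$ iff $\relsaa{i}{j}{\AA}{a_p}{b}$ for some $i\in\{1,\ldots,\nbd\}$, i.e., iff $b\in\bigcup_{i}\uP{\udd{p}{i}{j}}$.

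For point~1, assume $(b,j),(c,k)\in\Ball{1}{a_p}{\AA}$. Then $f_j'(b)=f_j(b)$ and $f_k'(c)=f_k(c)$, so $\vprojr{\AA}{a_p}{1}\models\rels{j}{k}{b}{c}$ iff $f_j(b)=f_k(c)$. If additionally $b\in\uP{\udd{p}{i}{j}}$ and $c\in\uP{\udd{p}{i}{k}}$ for some $i$, then $f_j(b)=f_i(a_p)=f_k(c)$, giving the right-to-left direction. Conversely, if $f_j(b)=f_k(c)$, pick any $i$ witnessing $(b,j)\in\Ball{1}{a_p}{\AA}$, so $f_j(b)=f_i(a_p)$; then also $f_k(c)=f_i(a_p)$, hence $b\in\uP{\udd{p}{i}{j}}$ and $c\in\uP{\udd{p}{i}{k}}$.

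For point~2, assume $(b,j)\notin\Ball{1}{a_p}{\AA}$ and $(c,k)\in\Ball{1}{a_p}{\AA}$. Then $f_j'(b)=\inj((b,j))\in\N\setminus\Values{\AA}$, while $f_k'(c)=f_k(c)=f_i(a_p)\in\Values{\AA}$ for some $i$. Since these live in disjoint sets, $f_j'(b)\neq f_k'(c)$, so $\vprojr{\AA}{a_p}{1}\not\models\rels{j}{k}{b}{c}$.

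For point~3, assume both $(b,j),(c,k)\notin\Ball{1}{a_p}{\AA}$. Then $f_j'(b)=\inj((b,j))$ and $f_k'(c)=\inj((c,k))$, and by injectivity of $\inj$, equality of these values is equivalent to $(b,j)=(c,k)$, i.e., $b=c$ and $j=k$. No step here is subtle; the only thing to be careful about is the bookkeeping ensuring that freshly invented $\inj$-values are disjoint from $\Values{\AA}$ and pairwise distinct, which is guaranteed by the definition of the $r$-view.
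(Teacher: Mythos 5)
Your proof is correct and follows essentially the same route the paper intends: the paper gives no separate argument for this lemma and simply defers to the proof of Lemma~\ref{lem:r2dv2-semantique}, and your case analysis (in-ball values kept, out-of-ball values replaced by fresh injective values outside $\Values{\AA}$) is exactly that adaptation, correctly noting the simplification that out-of-ball positions get $\inj$-values rather than original ones. The only cosmetic point is that Lemma~\ref{lem:shape-balls}.1 is stated for two data values, so strictly you are using its obvious generalization to $\nbd$ values, which the paper itself does implicitly.
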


We now translate
$\phi_{qf}(x_1,\ldots,x_n)$ into a formula
$\phit{\phi_{qf}}'(x_1,\ldots,\linebreak[0]x_n)$ in $\ndFO{0}{\Unaryp,\emptyset}$ such that $\AA$
satisfies $\phi_{qf}(a_1,\ldots,\linebreak[0]a_n)$ if, and only if,
$\sem{\AA}'_{\tuple{a}}$ satisfies
$\phit{\phi_{qf}}'(a_1,\ldots,a_n)$. As in the previous section, we
introduce the two following formulas in $\ndFO{0}{\Unaryp,\emptyset}$ with $p \in
\{1,\ldots,n\}$ and $j \in \{1,\ldots,\nbd\}$ to:
\begin{itemize}
\item $\phiBun{j}(y) := \bigvee_{i \in \{1,\ldots,\nbd\}}\udd{p}{i}{j}(y)$ to test if the $j$-th field of an element belongs to $\Ball{1}{a_p}{\AA}$
\item $\phiBunbis(y) :=  \bigvee_{j \in \{1,\ldots,\nbd\}} \phiBun{j}(y)$ to test if a field of an element belongs to  $\Ball{1}{a_p}{\AA}$
\end{itemize}

We now present how we  translate atomic formulas of the form  $\rels{j}{k}{y}{z}$ under some $\locformr{-}{x_p}{1}$. For this matter, we rely on two formulas of $\ndFO{0}{\Unaryp,\emptyset}$ which can be described as follows:
\begin{itemize}
\item The first formula asks  for $(y,j)$ and $(z,k)$ to be in $\Ball{1}{a_p}{1}$ (here we abuse notations, using variables for the elements they represent) and for these two data values to coincide with one data value of $a_p$, it corresponds to Lemma \ref{lem:r1-semantique}.1:
  $$
  \psiun(y,z) := \phiBun{j}(y) \et \phiBun{k}(z) \et \Ou^\nbd_ {i=1}\udd{p}{i}{j}(y)\et\udd{p}{i}{k}(z) 
  $$
\item The second formula asks for $(y,j)$ and $(z,k)$ to not belong to $\Ball{1}{a_p}{\AA}$ and for $y=z$, it corresponds to Lemma \ref{lem:r1-semantique}.3:
  $$
  \psideux(y,z) := \begin{cases}
		                                	\bigwedge^\nbd_{i=1} (\neg
                                            \phiBun{i}(y) \wedge \neg  \phiBun{i}(z))  \et y=z  &\text{ if } j=k \\
																			\bot  &\text{ otherwise}
		                                \end{cases}
  $$
\end{itemize}

	Finally, as before we provide an inductive definition of the
    translation $\Tbis{-}$ which uses subtransformations $\Tpbis{-}$ in
    order to remember the centre of the ball and leads to the
    construction of $\phit{\phi_{qf}}'(x_1,\ldots,x_n)$. We only
    detail the cases:
    \begin{align*}
    \Tpbis{\rels{j}{k}{y}{z}} =& \psiun(y,z) \ou
    \psideux(y,z)\\
    \Tpbis{\exists x. \phi} =& \exists x.\phiBunbis(x)\wedge \Tpbis{\phi}
    \end{align*}  
    as the other cases are identical as for the
    translation $\T{-}$ shown in the previous section. This leads to
    the following lemma (which is the pendant of Lemma
    \ref{lem:correct}).

    \begin{lemma} \label{lem:correct2}
	We have $\AA\models\phi_{qf}(\tuple{a})$ iff $\sem{\AA}'_{\tuple{a}}\models\phit{\phi_{qf}}'(\tuple{a})$.
\end{lemma}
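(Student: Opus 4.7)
The plan is to follow the same scheme as the proof of Lemma~\ref{lem:correct}, adapting it to radius~$1$ and arbitrary data arity. As in that proof, the translations $\Tbis{-}$ and $\Tpbis{-}$ commute with Boolean connectives, equality of variables and unary predicates, so the induction reduces to checking the two essential cases: the atomic formulas $\rels{j}{k}{y}{z}$ located under a local modality $\locformr{-}{x_p}{1}$, and the existential quantifier $\exists x.\phi$ appearing inside such a modality.

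First, I would handle the quantifier case. Recall that $\vprojr{\AA}{a_p}{1} = (A',(P'_\unary),\f{1},\ldots,\f{\nbd})$ has carrier $A' = \{b \in A \mid (b,i) \in \Ball{1}{a_p}{\AA}\text{ for some }i\}$. By Lemma~\ref{lem:shape-balls}.1, we have $b \in A'$ iff there exist $i,j \in \{1,\ldots,\nbd\}$ with $\relsaa{i}{j}{\AA}{a_p}{b}$, which is exactly $b \in \uP{\udd{p}{i}{j}}$ for some $i,j$. Hence $b \in A'$ iff $\sem{\AA}'_{\tuple{a}} \models \phiBunbis(b)$, so the guard $\phiBunbis(x)$ in $\Tpbis{\exists x.\phi}$ correctly restricts quantification to the elements of the $1$-view of $a_p$.

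Next, for $\rels{j}{k}{y}{z}$, interpret $y,z$ as elements $b,c \in A$. By Lemma~\ref{lem:r1-semantique}, there are only three possibilities for whether $\vprojr{\AA}{a_p}{1} \models \rels{j}{k}{b}{c}$: (i)~both $(b,j)$ and $(c,k)$ lie in $\Ball{1}{a_p}{\AA}$, in which case the relation holds iff some shared index $i$ witnesses $b \in \uP{\udd{p}{i}{j}}$ and $c \in \uP{\udd{p}{i}{k}}$, i.e.\ iff $\sem{\AA}'_{\tuple{a}} \models \psiun(b,c)$; (ii)~exactly one of the two sides lies in $\Ball{1}{a_p}{\AA}$, in which case the relation never holds, and correspondingly the guards of both $\psiun$ and $\psideux$ fail; (iii)~neither side lies in $\Ball{1}{a_p}{\AA}$, where the relation holds iff $b=c$ and $j=k$, which is precisely what $\psideux$ encodes (note that the clause is set to $\bot$ when $j \neq k$). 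The disjunction $\psiun \vee \psideux$ thus captures exactly $\vprojr{\AA}{a_p}{1} \models \rels{j}{k}{b}{c}$.

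There is no real obstacle here: all the semantic work was done in Lemma~\ref{lem:r1-semantique} (the radius-$1$ analogue of Lemma~\ref{lem:r2dv2-semantique}), and the only mild point to verify is that in the radius-$1$ setting there is no analogue of the ``both outside the 1-ball but still connected via a distant shared value'' case that complicated the radius-$2$ proof, because outside $\Ball{1}{a_p}{\AA}$ the elements share no value with $a_p$ at all, so two such fields can only be related when they coincide. Combining the three observations by a straightforward induction on the structure of $\phi_{qf}$ yields the equivalence $\AA \models \phi_{qf}(\tuple{a})$ iff $\sem{\AA}'_{\tuple{a}} \models \phit{\phi_{qf}}'(\tuple{a})$.
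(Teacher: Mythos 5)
Your overall route is the one the paper itself takes: it gives no separate argument for this lemma, saying only that it is the pendant of Lemma~\ref{lem:correct}, and your induction with the two key cases (the guard $\phiBunbis$ for quantified variables, Lemma~\ref{lem:r1-semantique} for the atoms) is exactly that analogy spelled out (modulo the harmless point that Lemma~\ref{lem:shape-balls} is stated for two data values and its radius-$1$ half must be re-read for arbitrary $\nbd$).

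The one step you declare routine, however --- that in your case~(iii) the relation ``holds iff $b=c$ and $j=k$, which is precisely what $\psideux$ encodes'' --- is exactly where a check is needed, and with $\psideux$ as printed it fails. The printed formula requires $\bigwedge_{i=1}^{\nbd}\bigl(\neg\phiBun{i}(y)\wedge\neg\phiBun{i}(z)\bigr)$, i.e.\ that \emph{every} field of $y$ and of $z$ lies outside $\Ball{1}{a_p}{\AA}$; but every element of the carrier of $\vprojr{\AA}{a_p}{1}$ has at least one field inside the ball, and the translation of $\exists$ guards variables by $\phiBunbis$, so $\psideux$ is unsatisfiable on the relevant elements and $\Tpbis{\rels{j}{k}{y}{z}}$ collapses to $\psiun$. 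This loses the vacuously true diagonal case: with $\nbd=2$, $\tuple{a}=(a_1)$, $a_1$ carrying values $(1,2)$ and $b$ carrying $(1,3)$, and $\phi_{qf}(x_1)=\locformr{\exists y.\,\neg(y=x_1)\wedge\rels{2}{2}{y}{y}}{x_1}{1}$, the left-hand side holds (the atom is trivially true at $y=b$), whereas the translation demands $\phiBun{2}(b)$, which fails. Your argument (and the lemma) goes through only if $\psideux$ is read as $\neg\phiBun{j}(y)\wedge\neg\phiBun{k}(z)\wedge y=z$ for $j=k$, mirroring the use of $\phiBdsu{j}$ and $\phiBdsu{k}$ in $\phideux$ for radius~$2$; a complete proof must either make that correction explicit or treat separately the case $b=c$, $j=k$ with $(b,j)\notin\Ball{1}{a_p}{\AA}$, which your ``no real obstacle'' remark glosses over.
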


As we had to characterise the well-formed $1$-data structure, a similar trick is necessary here. For  this matter, we use the following
formulas:
$$
  \begin{array}{ll}
\psitran &= \forall y \forall z.\Et_{p,q=1}^{n}\Et_{i,j,k,\ell=1}^D \Big(\udd{p}{i}{j}(y) \et \udd{p}{i}{\ell}(z) \et \udd{q}{k}{j}(y) \donc \udd{q}{k}{\ell}(z)\Big) \\
\psirefl(x_1,\ldots,x_n)  &=\Et_{p=1}^n\Et_{i=1}^D	 \udd{p}{i}{i}(x_p) \\

\psiwf(x_1,\ldots,x_n) &=\psitran \et \psirefl(x_1,\ldots,x_n) 
  \end{array}
  $$

Finally with the same reasoning as the one given in the previous
section, we can show that the formula $\phi=\exists x_1\ldots\exists
x_n.\linebreak[0]\phi_{qf}(x_1,\ldots,x_n)$ of $\eFO{D}{\Unary,\Binary_\nbd}{1}$ is satisfiable
iff the formula $\exists x_1\ldots\exists
x_n.\linebreak[0]\phit{\phi_{qf}}'(x_1,\ldots,x_n) \wedge \psiwf(x_1,\ldots,x_n) $
is satisfiable. Note that this latter formula  can be built in polynomial time from
$\phi$ and that it belongs to $\ndFO{0}{\Unaryp,\emptyset}$. Hence, thanks to
Theorem \ref{thm:0fo}, we obtain that $\nDataSat{\eFOr{1}}{\nbd,\Binary_\nbd}$ is in
\textsc{NEXP}. The matching lower bound is as well obtained the same
way by reducing $\nDataSat{\ndFOr}{0}$ to $\nDataSat{\eFOr{1}}{\nbd,\Gamma_\nbd}$
showing that a formula $\phi$ in $\ndFO{0}{\Unary,\emptyset}$ is satisfiable
iff the formula $\exists
x.\locformr{\phi}{x}{1}$ in $\eFO{\nbd}{\Unary,\Binary_\nbd}{1}$ is satisfiable.
  
\begin{theorem}\label{thm:exradius1}
	For all $\nbd \geq 1$, the problem $\nDataSat{\eFOr{1}}{\nbd,\Binary_\nbd}$ is \textsc{NEXP}-complete.
\end{theorem}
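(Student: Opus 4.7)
The plan is to establish matching \textsc{NEXP} upper and lower bounds. The upper bound comes from a polynomial-time reduction to $\nDataSat{\ndFOr}{0,\emptyset}$ (Theorem~\ref{thm:0fo}), and the lower bound from a converse reduction from that same problem.

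For the upper bound, I would first put $\phi \in \eFO{\nbd}{\Unary,\Binary_\nbd}{1}$ into prenex form $\exists x_1 \ldots \exists x_n.\phi_{qf}$. The key observation, captured by Lemma~\ref{lem:shape-balls}, is that for radius $1$ an element $b$ lies in $\Ball{1}{a}{\AA}$ exactly when some coordinate pair $(i,j)$ satisfies $\relsaa{i}{j}{\AA}{a}{b}$. Consequently, to reason about the $1$-views of witnesses $a_1, \ldots, a_n$, it is enough to record, through the fresh unary predicates $\udd{p}{i}{j}$, which pairs of coordinates are shared between each element and each witness $a_p$. This yields the $0$-data structure $\sem{\AA}'_{\tuple{a}}$ on the same universe. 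The formulas $\psiun$ and $\psideux$ then rewrite every atomic formula $\rels{j}{k}{y}{z}$ appearing under a local modality $\locformr{-}{x_p}{1}$ into a boolean combination over the new unary predicates; this rewriting is semantically faithful by Lemmas~\ref{lem:r1-semantique} and~\ref{lem:correct2}.

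To complete the reduction, one must rule out spurious predicate assignments that do not arise from any actual $\AA$. This is what $\psiwf = \psitran \wedge \psirefl$ enforces. Unlike the analogous $\phiwf$ from Section~\ref{sec:two-data}, no uniqueness clause is needed here because the target logic has no data values left to collide. Putting the pieces together, $\phi$ is satisfiable iff the $\ndFO{0}{\Unaryp,\emptyset}$ sentence $\exists x_1 \ldots \exists x_n.\, \phit{\phi_{qf}}'(x_1,\ldots,x_n) \wedge \psiwf(x_1,\ldots,x_n)$ is satisfiable. The translation has polynomial size, so Theorem~\ref{thm:0fo} yields \textsc{NEXP} membership.

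For the lower bound, I would reduce from $\nDataSat{\ndFOr}{0,\emptyset}$. Given $\phi \in \ndFO{0}{\Unary,\emptyset}$, the sentence $\exists x.\locformr{\phi}{x}{1} \in \eFO{\nbd}{\Unary,\Binary_\nbd}{1}$ is equisatisfiable: any satisfying $0$-data structure for $\phi$ lifts to a $\nbd$-data structure by assigning the constant data value $0$ to every field of every element, so that the radius-$1$ ball around any element is the whole universe; conversely, stripping the data values from the $1$-view of a witness gives a model of $\phi$. The argument is entirely parallel to Lemma~\ref{lem:hardness-radius2-2}. Since $\nDataSat{\ndFOr}{0,\emptyset}$ is \textsc{NEXP}-hard (Theorem~\ref{thm:0fo}), the claim follows. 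The only delicate point in the whole proof is making the upper-bound translation polynomial and verifying that $\psiwf$ precisely characterises the $0$-data structures that arise as some $\sem{\AA}'_{\tuple{a}}$, which is straightforward because only the combinatorial coherence of the unary predicates has to be captured in the absence of data values.
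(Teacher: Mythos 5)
Your proposal is correct and matches the paper's own argument essentially step for step: the same reduction to $\nDataSat{\ndFOr}{0,\emptyset}$ via the predicates $\udd{p}{i}{j}$, the translation using $\psiun$, $\psideux$ and the well-formedness constraint $\psiwf=\psitran\et\psirefl$ (with no uniqueness clause, as in the paper), and the same lower bound via $\exists x.\locformr{\phi}{x}{1}$ with a constant data value making every radius-$1$ ball the whole universe. Nothing to add.
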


\subsection{Undecidability results}
We show here $\nDataSat{\eFOr{3}}{2,\{\relsaord{1}{1},\relsaord{2}{2}\}}$ and  $\nDataSat{\eFOr{2}}{3,\Binary_3}$
are undecidable. To obtain this we provide reductions from
$\nDataSat{\ndFOr}{2,\{\relsaord{1}{1},\relsaord{2}{2}\}}$ and we use the fact that any
2-data structure can be interpreted as a radius-3-ball of a 2-data structure or
respectively  as a radius-2-ball of a 3-data structure.

\subsubsection{Radius 3 and two data values}

In order to reduce $\nDataSat{\ndFOr}{2,\{\relsaord{1}{1},\relsaord{2}{2}\}}$ to the problem
$\nDataSat{\eFOr{3}}{2,\{\relsaord{1}{1},\relsaord{2}{2}\}}$, we show that 
we can transform any $2$-data structure $\AA$ into another
2-data structure $\AAge$ such that $\AAge$ corresponds to the
radius-3-ball of any element of $\AAge$  and this transformation has
some kind of inverse.  To obtain $\AAge$ from $\AA$, we add some
elements and in order to not take them into account for the
satisfaction of the formula, we label them with  a fresh unary  predicate
$\uge$. Furthermore, given a formula $\phi \in
\ndFO{2}{\Unary,\Binary_2}$, we transform it into a formula $T(\phi)$ in
$\eFO{2}{\Unary',\Binary_2}{3}$ such that $\AA$ satisfies $\phi$ iff $\AAge$
satisfies $T(\phi)$. What follows is the formalization of this reasoning.

Let  $\AA=(A,(\uP{\unary})_{\unary},\funi,\funo)$ be a $2$-data
structure in $\nData{2}{\Unary}$ and  $\uge$ be a fresh unary
predicate not in  $\Unary$. From $\AA$ we build the following $2$-data structure
$\AAge=(A',(\uP{\unary}')_{\unary},\funi',\funo')\in\nData{2}{\Unary\cup\{\uge\}}$
such that:
\begin{itemize}
	\item $A' = A \uplus \Values{\AA}\times\Values\AA$, 
	\item for $i\in\{1,2\}$ and $a\in A$, $f_i'(a)=f_i(a)$  and for $(d_1,d_2)\in  \Values{\AA}\times\Values\AA$, $f_i((d_1,d_2))=d_i$,
	\item for $\unary\in\Unary$, $\uP{\unary}'=\uP{\unary}$,
	\item $\uP{\uge}=\Values{\AA}\times\Values\AA$.
    \end{itemize}
Hence to build $\AAge$ from $\AA$ we have added to the elements of
$\AA$ all pairs of data presented in $\AA$ and in order to recognize
these new elements in the structure we use the new unary predicate
$\uge$. We add these extra elements to ensure that all the elements of
the structure are located in   the
radius-3-ball of any element of $\AAge$.
We have then the following property.

\begin{lemma}\label{lem:ge-has-radius-3}
 $\vprojr{\AAge}{a}{3}=\AAge$ for all $a \in A'$.
\end{lemma}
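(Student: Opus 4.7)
The approach is to exploit the fact that by construction $A'$ contains, for every pair $(p,q)\in\Values{\AA}\times\Values{\AA}$, an element $e_{p,q}$ with $\funi'(e_{p,q})=p$ and $\funo'(e_{p,q})=q$. This makes the data graph $\gaifmanish{\AAge}$ highly connected: any two elements can be linked through such a padding element in at most three steps, so every element lies within the radius-3-ball of every other. Once that is established, both the universe and the labelling of $\vprojr{\AAge}{a}{3}$ coincide with those of $\AAge$, and the equality follows directly from the definition of the $r$-view.

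First I would make the basic observation that for every $c\in A'$, both $f_1'(c)$ and $f_2'(c)$ belong to $\Values{\AA}$: this is immediate for $c\in A$ by definition of $\Values{\AA}$, and holds by construction for $c\in\Values{\AA}\times\Values{\AA}$. Next I would fix $a\in A'$ together with an arbitrary target $(b,j)\in A'\times\{1,2\}$ and exhibit a short path. Set $i:=3-j$ and let $e\in\Values{\AA}\times\Values{\AA}\subseteq A'$ be the element defined by $\funi'(e)=f_i'(a)$ and $\funo'(e)=f_j'(b)$ when $i=1$ (resp.\ $\funi'(e)=f_j'(b)$ and $\funo'(e)=f_i'(a)$ when $i=2$); such an $e$ exists precisely because both values lie in $\Values{\AA}$. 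The sequence
\[
(a,i)\;\longrightarrow\;(e,i)\;\longrightarrow\;(e,j)\;\longrightarrow\;(b,j)
\]
is then a path of length $3$ in $\gaifmanish{\AAge}$: the first edge is an $\relsaord{i}{i}$-edge (valid since $f_i'(a)=f_i'(e)$ and $\relsaord{i}{i}\in\Binary$), the middle edge is the intra-element edge (valid since $i\neq j$), and the last edge is an $\relsaord{j}{j}$-edge (valid since $f_j'(e)=f_j'(b)$ and $\relsaord{j}{j}\in\Binary$). Hence $\distaa{(a,i)}{(b,j)}{\AAge}\le 3$ and therefore $(b,j)\in\Ball{3}{a}{\AAge}$.

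Having shown $\Ball{3}{a}{\AAge}=A'\times\{1,2\}$, the universe of $\vprojr{\AAge}{a}{3}$ is $\{b\in A'\mid (b,k)\in\Ball{3}{a}{\AAge}\text{ for some }k\}=A'$, and since every $(b,k)$ lies in $\Ball{3}{a}{\AAge}$, the data functions in the view agree with $\funi',\funo'$ everywhere (no position is replaced by a fresh value through $\inj$). The unary predicates are restricted to $A'$, which changes nothing. Thus $\vprojr{\AAge}{a}{3}=\AAge$.

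The only delicate point is making sure that the directed path really exists under the chosen $\Binary=\{\relsaord{1}{1},\relsaord{2}{2}\}$; this is why I fix $i=3-j$ rather than $i=j$, so that the two inter-element edges the path uses sit in $\Binary$. Everything else is a direct unwinding of the definitions.
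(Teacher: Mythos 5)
Your proof is correct and follows essentially the same route as the paper: every $(b,j)$ is reached from some $(a,i)$ in at most three steps by bridging through the added pair element of $\AAge$ that carries the two relevant values, so the $3$-view of any element is all of $\AAge$. Your explicit choice $i=3-j$ (with the matching ordering of the pair) is in fact slightly more careful than the paper's own path $(a,i),(c,1),(c,2),(b,j)$, whose first and last edges use the symbols $\relsaord{i}{1}$ and $\relsaord{2}{j}$, which belong to $\Binary=\{\relsaord{1}{1},\relsaord{2}{2}\}$ only when $i=1$ and $j=2$; since the ball only requires the bound for some $i$, both arguments yield the lemma, but yours handles the directedness issue you flagged explicitly.
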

\begin{proof}
	Let $b\in A'$ and $i,j \in \{1,2\}$. We show that
    $\distaa{(a,i)}{(b,j)}{\AAge}\leq 3$. i.e. that there is a path of length at most 3 from $(a,i)$ to $(b,j)$ in the data graph $\gaifmanish{\AAge}$.
	By construction of $\AAge$, there is an element $c\in A'$ such that $f_1(c)=f_i(a)$ and $f_2(c)=f_j(b)$.
	So we have the path $(a,i),(c,1),(c,2),(b,j)$ of length at most 3 from $(a,i)$ to $(b,j)$ in $\gaifmanish{\AAge}$.
\end{proof}

Conversely, to $\AA=(A,(\uP{\unary})_{\unary},\funi,\funo)\in\nData{2}{\Unary\cup\{\uge\}}$, we associate $\AAminusge=(A',(\uP{\unary}')_{\unary},\linebreak[0]\funi',\funo')\in\nData{2}{\Unary}$ where:
\begin{itemize}
	\item $A' = A \setminus \uP{\uge}$, 
	\item for $i\in\{1,2\}$ and $a\in A'$, $f_i'(a)=f_i(a)$,
	\item for $\unary\in\Unary$, $\uP{\unary}'=\uP{\unary}'\setminus \uP{\uge}$.
\end{itemize}

Finally we inductively translate any formula
$\phi\in\ndFO{2}{\Unary,\{\relsaord{1}{1},\relsaord{2}{2}\}}$ into
$T(\phi)\in\ndFO{2}{\Unary\cup\{\uge\},\{\relsaord{1}{1},\relsaord{2}{2}\}}$
by making it quantify over elements not labeled with $\uge$:
$T(\unary(x)) = \unary(x)$, $T(\rels{i}{i}{x}{y})=\rels{i}{i}{x}{y}$
for $i \in \{1,2\}$, $T( x=y )= (x=y) $, $T(\exists x.\phi)=\exists x. \neg \uge(x) \wedge T(\phi)$, $T( \vp \vee \vp')=T(\vp) \vee T(\vp')$ and $T(\neg \vp)=\neg T(\vp)$. 

\begin{lemma}\label{lem:ge-vs-without}
	Let $\phi$ be a sentence in $\ndFO{2}{\Unary}$,
    $\AA\in\nData{2}{\Unary}$ and $\BB \in
    \nData{2}{\Unary\cup\{\uge\}}$. The two following properties hold:
 \begin{itemize}
	 \item $\AA\models\phi$ iff $\AAge\models T(\phi)$
  \item $\minusge{\BB} \models\phi$ iff $\BB\models T(\phi)$.
 \end{itemize}
\end{lemma}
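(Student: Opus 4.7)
The plan is to prove both equivalences by a single structural induction on $\phi$, after first generalizing the statement to formulas with free variables and matching interpretation functions. More precisely, I would strengthen the claim as follows: for every formula $\phi(x_1,\ldots,x_k)\in\ndFO{2}{\Unary,\{\relsaord{1}{1},\relsaord{2}{2}\}}$ and every interpretation function $I$ mapping the free variables to elements of the smaller universe (that is, to $A$ in the first case, and to $A\setminus P_\uge$ in the second case), we have $\AA\models_I\phi$ iff $\AAge\models_I T(\phi)$, and $\minusge{\BB}\models_I\phi$ iff $\BB\models_I T(\phi)$. The sentence case is then just the instance where $k=0$.

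First I would dispose of the atomic cases. For $\unary(x_i)$, the equivalence is immediate since $P'_\unary=P_\unary$ in $\AAge$ and $P'_\unary=P_\unary\setminus P_\uge$ in $\minusge{\BB}$; as $I(x_i)$ is forced to lie in the smaller universe, membership is preserved in both directions. For $\rels{i}{i}{x}{y}$ (with $i\in\{1,2\}$), the same observation works because $f_i$ and $f'_i$ agree on elements of the smaller universe by construction of $\addge{\cdot}$ and by definition of $\minusge{\cdot}$. The case $x=y$ is trivial, and the Boolean connectives propagate through the induction.

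The only case that requires care is $\exists x.\psi$, where $T(\exists x.\psi)=\exists x.(\neg\uge(x)\wedge T(\psi))$. Here I would argue as follows: in the first part, if $\AA\models_I\exists x.\psi$, any witness $a\in A$ in particular satisfies $\neg\uge(a)$ in $\AAge$ (since $a\notin P_\uge$), and the induction hypothesis applied to $I[x/a]$ gives the witness for $T(\exists x.\psi)$ in $\AAge$; conversely, any witness $a$ for $T(\exists x.\psi)$ in $\AAge$ satisfies $\neg\uge(a)$, hence lies in $A$, and the induction hypothesis transfers it back. The second part is symmetric: the added conjunct $\neg\uge(x)$ in $T(\exists x.\psi)$ restricts quantification in $\BB$ exactly to $A\setminus P_\uge$, which is the universe of $\minusge{\BB}$.

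The main (minor) obstacle is therefore book-keeping the restriction on the interpretation functions so that the induction hypothesis applies in both directions; once the strengthened statement is set up, everything follows by routine case analysis. No new structural insight beyond the design of the relativization $T$ and the observation in Lemma~\ref{lem:ge-has-radius-3} is needed, but it is worth noting that the two parts of the lemma are in a sense dual, since $\minusge{\AAge}=\AA$ for any $\AA\in\nData{2}{\Unary}$, which could also be used to derive one part from the other (applied to $\BB=\AAge$).
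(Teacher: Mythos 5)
Your proposal is correct and follows essentially the same route as the paper: strengthen to formulas with free variables and interpretation functions ranging over the smaller universe, then induct on $\phi$, with the only nontrivial case being $\exists x.\psi$ where the relativizing conjunct $\neg\uge(x)$ matches quantification over $A\setminus P_\uge$. The paper merely streamlines this by proving only the second item and deducing the first from $\minusge{(\addge{\AA})}=\AA$, exactly the duality you note at the end.
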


\begin{proof}
	As for any $\AA\in\nData{2}{\Unary}$ we have $\minusge{(\addge{\AA})} = \AA$, it is sufficient to prove the second point.
	We reason by induction on $\phi$.
	Let $\AA=(A,(\uP{\unary})_{\unary},\funi,\funo)\in\nData{2}{\Unary\cup\{\uge\}}$ and let $\AAminusge=(A',(\uP{\unary}')_{\unary},\funi',\funo')\in\nData{2}{\Unary}$. 
	The inductive hypothesis is that for any formula $\phi\in\ndFO{2}{\Unary}$ (closed or not) and any context interpretation function $I: \Var \to A'$ we have $\AAminusge \models_I \phi \text{ iff } \AA \models_I T(\phi)$.
	Note that the inductive hypothesis is well founded in the sense that the interpretation $I$ always maps variables to elements of the structures.
	
	We prove two cases: when $\phi$ is a unary predicate and when
    $\phi$ starts by an existential quantification, the other cases
    being similar. First, assume that $\phi = \unary(x)$ where $\unary\in\Unary$.
	$\AAminusge \models_I \unary(x)$ holds iff $I(x)\in\uP{\unary}'$. 
	As $I(x)\in A\setminus \uP{\uge}$, we have $I(x)\in\uP{\unary}'$ iff $I(x)\in\uP{\unary}$, which is equivalent to  $\AA \models_I T(\unary(x))$ .
	Second assume $\phi = \exists x.\phi'$.
	Suppose that $\AAminusge \models_I \exists x.\phi'$.
	Thus, there is a $a\in A'$ such that $\AAminusge \models_\Intrepl{x}{a} \phi'$.
	By inductive hypothesis, we have $\AA\models_\Intrepl{x}{a} T(\phi')$.
	As $a\in A' = A\setminus \uP{\uge}$, we have $\AA\models_\Intrepl{x}{a} \neg\uge(x)$, so $\AA\models_I \exists x. \neg\uge(x)\et T(\phi')$ as desired.
	Conversely, suppose that $\AA \models_I T(\exists x.\phi') $.
	It means that there is a $a\in A$ such that $\AA \models_\Intrepl{x}{a}\neg\uge(x)\et T(\phi')$.
	So we have that $a\in A'=A\setminus \uP{\uge}$, which means that $\Intrepl{x}{a}$ takes values in $A$ and we can apply the inductive hypothesis to get that 
	$\AAminusge \models_\Intrepl{x}{a} \phi'$.
	So we have $\AAminusge \models_I \exists x.\phi'$.
\end{proof}

From Theorem \ref{thm:undec-general}, we know that
$\nDataSat{\ndFOr}{2,\{\relsaord{1}{1},\relsaord{2}{2}\}}$ is undecidable. From  a closed formula
$\phi\in\ndFO{2}{\Unary,\{\relsaord{1}{1},\relsaord{2}{2}\}}$, we build the formula $\exists
x.\locformr{T(\phi)}{x}{3}$ which belongs to $\eFO{2}{\Unary\cup\{\uge\},\{\relsaord{1}{1},\relsaord{2}{2}\}}{3}$. Now if
$\phi$ is satisfiable, it means that there exists $\AA\in
\nData{2}{\Unary}$  such that $\AA\models\phi$. By Lemma
\ref{lem:ge-vs-without}, $\AAge\models T(\phi)$. Let $a$ be an element
of $\AA$, then thanks to Lemma \ref{lem:ge-has-radius-3}, we have $\vprojr{\AAge}{a}{3}\models T(\phi)$.
 Finally by definition of our logic, $\AAge\models\exists x.\locformr{T(\phi)}{x}{3}$.
 So $\exists x.\locformr{T(\phi}{x}{3}$ is satisfiable. Now assume
 that $\exists x.\locformr{T(\phi)}{x}{3}$ is satisfiable. So there
 exist $\AA \in \nData{2}{\Unary\cup\{\uge\}}$ and an element $a$ of
 $\AA$ such that $\vprojr{\AA}{a}{3}\models T(\phi)$.
 Using  Lemma \ref{lem:ge-vs-without}, we obtain
 $(\vprojr{\AA}{a}{3})_{\setminus\uge}\models\phi$. Hence $\phi$ is
 satisfiable. This shows that we can reduce $\nDataSat{\ndFOr}{2,\{\relsaord{1}{1},\relsaord{2}{2}\}}$ to $\nDataSat{\eFOr{3}}{2,\{\relsaord{1}{1},\relsaord{2}{2}\}}$ .

\begin{theorem}\label{thm:undec-existential-r3}
	The problem $\nDataSat{\eFOr{3}}{2,\{\relsaord{1}{1},\relsaord{2}{2}\}}$ is undecidable.
\end{theorem}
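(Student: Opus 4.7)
The plan is to reduce from $\nDataSat{\ndFOr}{2,\{\relsaord{1}{1},\relsaord{2}{2}\}}$, which is undecidable by Theorem~\ref{thm:undec-general}, using exactly the machinery developed immediately above the theorem. Given a closed formula $\phi \in \ndFO{2}{\Unary,\{\relsaord{1}{1},\relsaord{2}{2}\}}$, I would associate to it the formula $\Phi := \exists x. \locformr{T(\phi)}{x}{3}$, which by construction lies in $\eFO{2}{\Unary \cup \{\uge\},\{\relsaord{1}{1},\relsaord{2}{2}\}}{3}$. The reduction is clearly polynomial, so showing it is correct establishes the theorem.

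For the forward direction, suppose $\phi$ is satisfiable and take $\AA \in \nData{2}{\Unary}$ with $\AA \models \phi$. Lemma~\ref{lem:ge-vs-without} then gives $\AAge \models T(\phi)$, and since $\AAge$ has at least one element $a$ (it contains all elements of $A$, and the problem is trivial if $A=\emptyset$ which we can handle separately), Lemma~\ref{lem:ge-has-radius-3} yields $\vprojr{\AAge}{a}{3} = \AAge$, so $\vprojr{\AAge}{a}{3} \models T(\phi)$ and hence $\AAge \models \Phi$.

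For the converse direction, suppose $\Phi$ is satisfiable and take $\BB \in \nData{2}{\Unary \cup \{\uge\}}$ and an element $a$ in its universe such that $\vprojr{\BB}{a}{3} \models T(\phi)$. Applying Lemma~\ref{lem:ge-vs-without} to the $2$-data structure $\vprojr{\BB}{a}{3}$ gives $\minusge{(\vprojr{\BB}{a}{3})} \models \phi$, so $\phi$ is satisfiable in $\nData{2}{\Unary}$.

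The main conceptual obstacle, already resolved by the authors' setup, is that a radius-$3$ local formula can only constrain the ball around its reference point, whereas an arbitrary formula over $2$-data structures constrains the entire universe. The trick is the $\uge$-padding: by adjoining every pair in $\Values{\AA} \times \Values{\AA}$, any two elements can be linked through a $\uge$-witness in at most three edges of the data graph (an $\funi$-step to a bridge, an internal edge of the bridge, then an $\funo$-step), which is exactly what Lemma~\ref{lem:ge-has-radius-3} captures. The translation $T$ is then the standard relativisation that restricts quantifiers to non-$\uge$ elements so that the padding is invisible to $\phi$ itself.
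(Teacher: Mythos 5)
Your proposal is correct and follows essentially the same route as the paper: the same reduction $\phi \mapsto \exists x.\locformr{T(\phi)}{x}{3}$, with the forward direction via Lemmas~\ref{lem:ge-vs-without} and~\ref{lem:ge-has-radius-3} and the converse via Lemma~\ref{lem:ge-vs-without} applied to the $3$-view. (Your aside about the empty universe is unnecessary, since data structures are required to have a nonempty carrier set.)
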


\subsubsection{Radius 2 and three data values}

We give a reduction from $\nDataSat{\ndFOr}{2,\{\relsaord{1}{1},\relsaord{2}{2}\}}$ to
$\nDataSat{\eFOr{2}}{3,\{\relsaord{1}{1},\relsaord{2}{2},\relsaord{3}{3}\}}$. The idea is similar to the one used in the
proof of Lemma \ref{lem:hardness-radius2-2} to show that
$\nDataSat{\eFOr{2}}{2,\Binary_2}$ is \textsc{N2EXP}-hard by reducing
the problem $\nDataSat{\ndFOr}{1,\Binary_1}$. The intuition is that
when we add the same extra value to each node, then all the elements
of the structure are in the ball of radius $2$ of any element. Indeed we have the following lemma.

\begin{lemma} 
 Let $\phi$ be
a formula in $\ndFO{2}{\Unary,\{\relsaord{1}{1},\relsaord{2}{2}\}}$. There exists $\AA \in \nData{2}{\Unary}$ such that $\AA \models \phi$
 if and only if there exists $\BB \in  \nData{3}{\Unary}$ such that
 $\BB \models \exists
x.\locformr{\phi}{x}{2}$.
\end{lemma}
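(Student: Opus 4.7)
The plan is to mimic the construction used for Lemma~\ref{lem:hardness-radius2-2}, but adapted from radius $2$ with $1$ data value to radius $2$ with two extra data values. The key idea is that if we equip every element with a \emph{shared} third data value, then in the corresponding data graph every vertex $(b,j)$ lies at distance at most $2$ from the vertex $(a,3)$ (for any $a$), because there is an edge from $(a,3)$ to $(b,3)$ via $\relsaord{3}{3}$ and then an internal edge from $(b,3)$ to $(b,j)$. Hence the radius-$2$-ball around any element will cover the entire structure.

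Formally, for the forward direction, from $\AA=(A,(P_\sigma),f_1,f_2) \in \nData{2}{\Unary}$ with $\AA\models\phi$, I would build $\BB=(A,(P_\sigma),f_1,f_2,f_3)\in\nData{3}{\Unary}$ by setting $f_3(a) := 0$ for all $a \in A$. By the observation above, $\Ball{2}{a}{\BB}=A\times\{1,2,3\}$ for every $a\in A$, so $\vprojr{\BB}{a}{2}=\BB$ (no value gets replaced by a fresh one). Since $\phi$ only mentions $\relsaord{1}{1}$ and $\relsaord{2}{2}$, its satisfaction on $\BB$ depends only on $f_1,f_2$, which coincide with those of $\AA$; thus $\vprojr{\BB}{a}{2}\models\phi$ and $\BB\models\exists x.\locformr{\phi}{x}{2}$.

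For the reverse direction, suppose $\BB\in\nData{3}{\Unary}$ and $a$ is an element such that $\vprojr{\BB}{a}{2}\models\phi$. Write $\vprojr{\BB}{a}{2}=(A',(P'_\sigma),f_1',f_2',f_3')$ and define $\AA:=(A',(P'_\sigma),f_1',f_2')\in\nData{2}{\Unary}$ by simply forgetting the third data function. Since $\phi\in\ndFO{2}{\Unary,\{\relsaord{1}{1},\relsaord{2}{2}\}}$ never refers to the third coordinate, the two relations it evaluates are identical in $\vprojr{\BB}{a}{2}$ and in $\AA$, so $\AA\models\phi$.

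The whole argument is routine once the constant-third-value trick is in place; there is no real obstacle. Combined with Theorem~\ref{thm:undec-general}, this reduction from $\nDataSat{\ndFOr}{2,\{\relsaord{1}{1},\relsaord{2}{2}\}}$ to $\nDataSat{\eFOr{2}}{3,\{\relsaord{1}{1},\relsaord{2}{2},\relsaord{3}{3}\}}$ yields the desired undecidability.
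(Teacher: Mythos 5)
Your proposal is correct and follows essentially the same argument as the paper: add a constant third data value so that the radius-$2$ ball around any element covers the whole structure, and conversely forget the third value in the $2$-view. Your explicit distance argument via $(a,3)\to(b,3)\to(b,j)$ just spells out what the paper dismisses as clear, so the two proofs coincide in substance.
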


\begin{proof}
 Assume that there exists  $\AA=(A,(P_{\unary})_{\unary \in
   \Unary},\f{1},\f{2})$ in $\nData{2}{\Unary}$ such that  $\AA \models
 \phi$.Consider the $3$-data structure $\BB=(A,(P_{\unary})_{\unary \in
   \Unary},\f{1},\f{2},\f{3})$ such that $\f{3}(a)=0$ for all $a\in
 A$. Let $a \in A$. It is clear that we have $\vprojr{\BB}{a}{2}=\BB$
 and that $\vprojr{\BB}{a}{2} \models \phi$ (because $\AA \models
 \phi$ and $\phi$ never mentions the third values of the elements
 since it is a formula in $\ndFO{2}{\Unary,\{\relsaord{1}{1},\relsaord{2}{2}\}}$). Consequently $\BB
 \models \exists
 x.\locformr{\phi}{x}{2}$.

 Assume now that there exists $\BB=(A,(P_{\unary}),\f{1},\f{2},\f{3})$ in $ \nData{3}{\Unary}$ such that  $\BB \models \exists
x.\locformr{\phi}{x}{2}$. Hence there exists $a \in A$ such that
$\vprojr{\BB}{a}{2} \models \phi$, but then by forgetting the third
value in $\vprojr{\BB}{a}{2}$ we obtain a model in $\nData{3}{\Unary}$
which satisfies $\phi$.
\end{proof}

Using Theorem \ref{thm:undec-general}, we obtain the following result.
\begin{theorem}\label{thm:undec-existential-r2}
	The problem $\nDataSat{\eFOr{2}}{3,\{\relsaord{1}{1},\relsaord{2}{2},\relsaord{3}{3}\}}$ is undecidable.
\end{theorem}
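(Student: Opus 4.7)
The plan is to package the preceding lemma into a reduction from $\nDataSat{\ndFOr}{2,\{\relsaord{1}{1},\relsaord{2}{2}\}}$, which is undecidable by Theorem~\ref{thm:undec-general}. Concretely, given a sentence $\phi \in \ndFO{2}{\Unary,\{\relsaord{1}{1},\relsaord{2}{2}\}}$, I will output the sentence $\phi' = \exists x.\locformr{\phi}{x}{2}$ and observe that $\phi'$ syntactically belongs to $\eFO{3}{\Unary,\{\relsaord{1}{1},\relsaord{2}{2},\relsaord{3}{3}\}}{2}$: the only outermost quantifier is existential, and inside the $\llangle \cdot \rrangle_x^2$ bracket the original formula $\phi$ is allowed to use arbitrary first-order quantification because the definition of the existential local fragment places no restriction on quantifiers appearing below a local modality.

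Next, I will invoke the previous lemma, which states exactly that $\phi$ has a model in $\nData{2}{\Unary}$ iff $\phi'$ has a model in $\nData{3}{\Unary}$. Since $\phi'$ can be computed in linear time from $\phi$, this is a polynomial-time many-one reduction from $\nDataSat{\ndFOr}{2,\{\relsaord{1}{1},\relsaord{2}{2}\}}$ to $\nDataSat{\eFOr{2}}{3,\{\relsaord{1}{1},\relsaord{2}{2},\relsaord{3}{3}\}}$. Combined with Theorem~\ref{thm:undec-general}, undecidability of the target problem follows immediately.

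There is essentially no obstacle here: the heavy lifting has been done in the preceding lemma, whose key insight is that padding every node with a common third data value makes the whole structure collapse into the radius-$2$-ball of each of its elements (so $\vprojr{\BB}{a}{2}=\BB$), while quantification in $\phi$, which only mentions $\relsaord{1}{1}$ and $\relsaord{2}{2}$, is insensitive to that third coordinate. Hence the proof will read as a short two-line corollary argument invoking the lemma and Theorem~\ref{thm:undec-general}.
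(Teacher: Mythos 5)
Your proposal is correct and matches the paper's own argument: the paper likewise derives the theorem as an immediate corollary of the preceding lemma (padding with a constant third value makes every radius-2 view the whole structure), combined with the undecidability of $\nDataSat{\ndFOr}{2,\{\relsaord{1}{1},\relsaord{2}{2}\}}$ from Theorem~\ref{thm:undec-general}. Nothing is missing.
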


\section{Conclusion}
\label{sec:conclusion}

\begin{table}[htbp]
  
  \begin{tabular}{|c|c|c|c||c|}
    \hline
  \textbf{Logic} & $\mathbf{r}$ & $\mathbf{Da}$ & $\mathbf{\Binary}$ &
     \textbf{Decidability status}\\
    \hline
     \hline
     & $-$  & $0$ & $\emptyset$
                                                                     &
                                                                       \textsc{NEXP}-complete\\
    & & & &
    {\small{\it(Thm  \ref{thm:0fo}\cite{borger-classical-springer97,etessami-first-ic02})}}\\
    
  \cline{2-5}
     \multirow{1}{*}{$\ndFO{\nbd}{\Unary,\Binary}$} & $-$  & $1$ & $\{\relsaord{1}{1}\}$
                                                                     &
                                                                       \textsc{N2EXP}-complete\\
    &&&&
    {\small{\it(Thm  \ref{thm:1fo}\cite{Mundhenk09,Fitting12})}}\\
   \cline{2-5}
     & $-$  & $2$ & $\{\relsaord{1}{1},\relsaord{2}{2}\}$
                                                                     &
    Undecidable {\small{\it(Thm  \ref{thm:undec-general}\cite{Janiczak-Undecidability-fm53})}}\\
    \hline
    & $1$ & $2$  &
                                                     $\{\relsaord{1}{1},\relsaord{2}{2},\relsaord{1}{2}\}$ & Decidable \small{\it{(Thm \ref{thm:datasat-rad1})}}\\
    \cline{2-5}
     & $1$ & $2$  &
                                                     $\{\relsaord{1}{1},\relsaord{2}{2},\relsaord{2}{1}\}$
                                                                     &
                                                                       Decidable
                                                                       \small{\it{(Thm
                                                                       \ref{thm:datasat-rad1})}}\\
      \cline{2-5}
    \multirow{2}{*}{$\rndFO{\nbd}{\Unary,\Binary}{r}$} & $1$ & $2$  &
                                                     $\Binary_2$
                                                                     &
                                                                      Open
                                                                       problem\\
    \cline{2-5}
     & $2$ & $2$  &
                                                     $\{\relsaord{1}{1},\relsaord{2}{2}\}$
                                                                     &
                                                                       Open problem\\
    \cline{2-5}
     & $2$ & $2$  &
                                                     $\{\relsaord{1}{1},\relsaord{2}{2},\relsaord{1}{2}\}$ & Undecidable \small{\it{(Thm \ref{theorem:2-loc})}}\\
 \cline{2-5}
     & $2$ & $2$  &
                                                     $\{\relsaord{1}{1},\relsaord{2}{2},\relsaord{2}{1}\}$ & Undecidable \small{\it{(Thm \ref{theorem:2-loc})}}\\
    \cline{2-5}
    & $3$ & $2$  & $\{\relsaord{1}{1},\relsaord{2}{2}\}$ & Undecidable \small{\it{(Thm \ref{theorem:3-loc})}}\\
    \hline
     & $1$ & $\geq 1$ &
                                                                    $\Gamma_\mathbf{Da}$  & \textsc{NEXP}-complete \small{\it{(Thm \ref{thm:exradius1})}}\\
    \cline{2-5}
    \multirow{2}{*}{$\eFO{\nbd}{\linebreak[0]\Unary,\Binary}{r}$} &
                                                                    $2$ & $2$ & $\Binary_2$ & \textsc{N2EXP}-complete  \small{\it{(Thm \ref{thm:radius2-2})}} \\
    \cline{2-5}
    & $3$ & $2$ &
                                                               $\{\relsaord{1}{1},\relsaord{2}{2}\}$
                                                                     &
                                                                       Undecidable
    \small{\it{(Thm \ref{thm:undec-existential-r3})}}\\
     \cline{2-5}
     & $2$ & $3$ &
                                                              $\{\relsaord{1}{1},\relsaord{2}{2},\relsaord{3}{3}\}$
                                                                     &
                                                                       Undecidable
     \small{\it{(Thm \ref{thm:undec-existential-r2})}}\\
    
    \hline
\end{tabular}
\caption{Summary of the results for the satisfiability  problem}
\label{tab:summary}
\end{table}
In this work, we have tried to pursue an exhaustive study in order to
determine when the satisfiability of the local first order logic with
data is decidable. The results we have obtained are provided in Table
\ref{tab:summary}. We observe that even if we consider the existential
fragment, as soon as the view of the elements has  a radius  bigger
than $3$, the satisfiability  problem is undecidable. This table
allows us as well to see what are the missing elements in order to
fully characterize the decidability status of this satisfiability
problem. In particular, we do not know whether the decidability result
of Theorem \ref{thm:datasat-rad1} still holds when considering two
diagonal relations, but our proof technique does not seem to directly
apply. It would be as well very interesting to establish  the relative
expressive power of these logics and to compare them  with some other
well-known fragments as for instance the guarded fragment. Finally,
another research direction would be to see how our logic could be used
to verify distributed algorithms with data (each element in our model
representing a process).

\bibliographystyle{alphaurl}
\bibliography{biblio-lmcs}

\end{document}